\def\showauthornotes{0}
\def\showtableofcontents{1}
\def\showkeys{0}
\def\showdraftbox{0}
\def\showcolorlinks{1}
\def\usemicrotype{1}
\def\showfixme{0}
\newtheorem{theorem}{Theorem}[section]
\newtheorem*{theorem*}{Theorem}
\newtheorem{claim}[theorem]{Claim}
\newtheorem*{claim*}{Claim}
\newtheorem*{proposition*}{Proposition}
\newtheorem{lemma}[theorem]{Lemma}
\newtheorem*{lemma*}{Lemma}
\newtheorem{corollary}[theorem]{Corollary}
\newtheorem*{conjecture*}{Conjecture}
\newtheorem{fact}[theorem]{Fact}
\newtheorem*{fact*}{Fact}
\newtheorem*{hypothesis*}{Hypothesis}
\theoremstyle{definition}
\newtheorem{definition}[theorem]{Definition}
\newtheorem{remark}[theorem]{Remark}
\renewcommand{\mathbb}{\varmathbb}
\newcommand{\savehyperref}[2]{\texorpdfstring{\hyperref[#1]{#2}}{#2}}
\newcommand{\Sref}[1]{\hyperref[#1]{\S\ref*{#1}}}
\newcommand{\half}{\nicefrac12}
\newcommand{\Authornote}[2]{{\sffamily\small\color{red}{[#1: #2]}}}
\newcommand{\Authorcomment}[2]{{\sffamily\small\color{gray}{[#1: #2]}}}
\newcommand{\Authorstartcomment}[1]{\sffamily\small\color{gray}[#1: }
\newcommand{\Authorfnote}[2]{\footnote{\color{red}{#1: #2}}}
\newcommand{\Authorfixme}[1]{\Authornote{#1}{\textbf{??}}}
\newcommand{\Authormarginmark}[1]{\marginpar{\textcolor{red}{\fbox{\Large #1:!}}}}
\newcommand{\Authornote}[2]{}
\newcommand{\Authorcomment}[2]{}
\newcommand{\Authorstartcomment}[1]{}
\newcommand{\Authorfnote}[2]{}
\newcommand{\Authorfixme}[1]{}
\newcommand{\Authormarginmark}[1]{}
\newcommand{\Bnote}{\Authornote{B}}
\newcommand{\paren}[1]{(#1)}
\newcommand{\Paren}[1]{\left(#1\right)}
\newcommand{\Bigparen}[1]{\Big(#1\Big)}
\newcommand{\abs}[1]{\lvert#1\rvert}
\newcommand{\card}[1]{\lvert#1\rvert}
\newcommand{\set}[1]{\{#1\}}
\newcommand{\Set}[1]{\left\{#1\right\}}
\newcommand{\norm}[1]{\lVert#1\rVert}
\newcommand{\Norm}[1]{\left\lVert#1\right\rVert}
\newcommand{\normt}[1]{\norm{#1}_2}
\newcommand{\snorm}[1]{\norm{#1}^2}
\newcommand{\iprod}[1]{\langle#1\rangle}
\newcommand{\Iprod}[1]{\left\langle#1\right\rangle}
\newcommand{\Esymb}{\mathbb{E}}
\newcommand{\Psymb}{\mathbb{P}}
\DeclareMathOperator*{\E}{\Esymb}
\DeclareMathOperator*{\ProbOp}{\Psymb}
\renewcommand{\Pr}{\ProbOp}
\newcommand{\ot}{\otimes}
\newcommand{\textparen}[1]{\text{(#1)}}
\newcommand{\because}[1]{\textparen{because #1}}
\renewcommand{\because}[1]{\textparen{because #1}}
\newcommand{\sge}{\succeq}
\newcommand{\sle}{\preceq}
\newcommand{\defeq}{\stackrel{\mathrm{def}}=}
\newcommand{\seteq}{\mathrel{\mathop:}=}
\newcommand{\from}{\colon}
\newcommand{\Mid}{\;\middle\vert\;}
\newcommand{\mper}{\,.}
\newcommand{\mcom}{\,,}
\newcommand\bdot\bullet
\DeclareMathOperator{\Tr}{Tr}
\DeclareMathOperator{\poly}{poly}
\DeclareMathOperator{\argmax}{argmax}
\DeclareMathOperator{\supp}{supp}
\DeclareMathOperator{\sign}{sign}
\DeclareMathOperator{\rank}{rank}
\DeclareMathOperator{\tsum}{{\textstyle \sum}}
\newcommand{\N}{\mathbb N}
\newcommand{\R}{\mathbb R}
\newcommand{\C}{\mathbb C}
\newcommand{\GF}[1]{\mathbb F_{#1}}
\newcommand{\problemmacro}[1]{\texorpdfstring{\textsc{#1}}{#1}\xspace}
\newcommand{\maxcut}{\problemmacro{Max Cut}}
\newcommand{\cD}{\mathcal D}
\newcommand{\cE}{\mathcal E}
\newcommand{\cI}{\mathcal I}
\newcommand{\cL}{\mathcal L}
\newcommand{\cR}{\mathcal R}
\newcommand{\cU}{\mathcal U}
\newcommand{\cV}{\mathcal V}
\newcommand{\cX}{\mathcal X}
\newcommand{\bbR}{\mathbb R}
\renewcommand{\leq}{\leqslant}
\renewcommand{\le}{\leqslant}
\renewcommand{\geq}{\geqslant}
\renewcommand{\ge}{\geqslant}
\newcommand{\draftbox}{\begin{center}
  \fbox{%
    \begin{minipage}{2in}%
      \begin{center}%
          \Large\textsc{Working Draft}\\%
        Please do not distribute%
      \end{center}%
    \end{minipage}%
  }%
\end{center}
\vspace{0.2cm}}
\newcommand{\draftbox}{}
\let\epsilon=\varepsilon
\numberwithin{equation}{section}
\newcommand{\MYstore}[2]{%
  \global\expandafter \def \csname MYMEMORY #1 \endcsname{#2}%
}
\newcommand{\MYload}[1]{%
  \csname MYMEMORY #1 \endcsname%
}
\newcommand{\MYnewlabel}[1]{%
  \newcommand\MYcurrentlabel{#1}%
  \MYoldlabel{#1}%
}
\newcommand{\MYdummylabel}[1]{}
\newcommand{\torestate}[1]{%
  \let\MYoldlabel\label%
  \let\label\MYnewlabel%
  #1%
  \MYstore{\MYcurrentlabel}{#1}%
  \let\label\MYoldlabel%
}
\newcommand{\restatetheorem}[1]{%
  \let\MYoldlabel\label
  \let\label\MYdummylabel
  \begin{theorem*}[Restatement of \prettyref{#1}]
    \MYload{#1}
  \end{theorem*}
  \let\label\MYoldlabel
}
\newcommand{\restatelemma}[1]{%
  \let\MYoldlabel\label
  \let\label\MYdummylabel
  \begin{lemma*}[Restatement of \prettyref{#1}]
    \MYload{#1}
  \end{lemma*}
  \let\label\MYoldlabel
}
\newcommand{\restateprop}[1]{%
  \let\MYoldlabel\label
  \let\label\MYdummylabel
  \begin{proposition*}[Restatement of \prettyref{#1}]
    \MYload{#1}
  \end{proposition*}
  \let\label\MYoldlabel
}
\newcommand{\restatefact}[1]{%
  \let\MYoldlabel\label
  \let\label\MYdummylabel
  \begin{fact*}[Restatement of \prettyref{#1}]
    \MYload{#1}
  \end{fact*}
  \let\label\MYoldlabel
}
\newcommand{\restate}[1]{%
  \let\MYoldlabel\label
  \let\label\MYdummylabel
  \MYload{#1}
  \let\label\MYoldlabel
}
\newcommand{\addreferencesection}{
  \phantomsection
  \addcontentsline{toc}{section}{References}
}
\newcommand{\sse}{\subseteq}
\newcommand{\ra}{\rightarrow}
\newcommand{\e}{\epsilon}
\let\origparagraph\paragraph
\renewcommand{\paragraph}[1]{\origparagraph{#1.}}
\def\be{\begin{equation}}
\def\ee{\end{equation}}
\def\ba#1\ea{\begin{align}#1\end{align}}
\def\ban{\begin{align*}}
\def\ean{\end{align*}}
\def\benum{\begin{enumerate}}
\def\eenum{\end{enumerate}}
\def\bit{\begin{itemize}}
\def\eit{\end{itemize}}
\newcommand{\ifexp}[1]{}  
\let\pref=\prettyref
\renewcommand{\L}[1]{L_{#1}}
\newcommand{\spectralnorm}[1]{\norm{#1}_{\mathrm{spectral}}}
\newcommand{\sos}{\mathrm{sos}}
\newcommand{\dhell}{d_{H}}
\DeclareMathOperator*{\argmin}{arg\,min}
\newcommand{\ptoqnorm}[3]{\|#1\|_{#2\rightarrow #3}}
\newcommand{\tfnorm}[1]{\ptoqnorm{#1}{2}{4}}
\newcommand{\cp}{\textup{\textsf{cp}}}
\newcommand{\pef}{\mbox{p.e.f.}\xspace}
\newcommand{\pd}{\mbox{p.d.}\xspace}
\DeclareMathOperator*{\pE}{\Tilde{\Esymb}}
\newcommand{\bd}{\Phi}
\newcommand{\Holder}{H\"{o}lder}
\newcommand{\ASVP}{\textup{\textsf{ASVP}}\xspace}
\newcommand{\tsp}{\textup{\textsf{TSP}}\xspace}
\newcommand{\sval}{\nu}
\newcommand{\rval}{\Tilde{\nu}}
\newcommand{\univ}{\cU}
\newcommand{\elem}{\omega}
\newcommand{\topf}{\overline{f}}
\newcommand{\botf}{\underline{f}}
\newcommand{\spectral}{\mathrm{spectral}}
\newcommand{\Span}{\mathrm{span}}
\newcommand{\Sbar}{\overline{S}}
\newcommand{\Brandao}{Brand\~ao\xspace}
\newcommand{\otratio}[1]{\norm{#1}_{2:1}}
\title{Rounding Sum-of-Squares Relaxations}
\author{Boaz Barak\thanks{Microsoft Research.}\and Jonathan Kelner\thanks{Department of Mathematics, Massachusetts Institute of Technology.} \and David Steurer\thanks{Department of Computer Science, Cornell University.}}
\date{\today}
\begin{document}

\maketitle

\draftbox

\begin{abstract}
We present a general approach to rounding semidefinite programming relaxations obtained by the Sum-of-Squares method (Lasserre hierarchy).
Our approach is based on using the connection between these relaxations and the Sum-of-Squares proof system to transform
a \emph{combining algorithm}---an algorithm that maps a distribution over solutions into a (possibly weaker) solution---into
a \emph{rounding algorithm} that maps a solution of the relaxation to a solution of the original problem.

Using this approach, we obtain algorithms that yield improved results for natural variants of three well-known problems:

\begin{enumerate}

\item  We give a quasipolynomial-time algorithm that approximates $\max_{\norm{x}_2=1} P(x)$ within an additive factor
of $\e\norm{P}_{spectral}$ additive approximation, where $\e>0$ is a constant,
$P$ is a  degree $d=O(1)$, $n$-variate polynomial with nonnegative coefficients,
and $\norm{P}_{spectral}$ is the spectral norm of a matrix corresponding to $P$'s coefficients.
Beyond being of interest in its own right, obtaining such an approximation for general polynomials (with possibly negative coefficients) is a long-standing open question in quantum information theory,
and our techniques have already led to improved results in this area (Brand\~{a}o  and Harrow, STOC '13).

\item  We give a polynomial-time algorithm that,
given a subspace $V \subseteq \R^n$ of dimension $d$ that (almost) contains the characteristic function of a set of size $n/k$,
finds a vector $v\in V$ that satisfies $\E_i v_i^4 \geq \Omega(d^{-1/3} k(\E_i v_i^2)^2)$.
This is a natural analytical relaxation of the problem of finding the sparsest element in a subspace, and is also motivated by a connection to the Small Set Expansion problem shown by Barak et al. (STOC 2012).
In particular our results yield an improvement of the previous best known algorithms for small set expansion in a certain range of parameters.

\item We use this notion of $L_4$ vs. $L_2$ sparsity to obtain a polynomial-time algorithm with substantially improved guarantees for recovering a planted sparse vector $v$ in a random $d$-dimensional subspace of $\R^n$.
If $v$ has $\mu n$ nonzero coordinates, we can recover it with high probability whenever $\mu\leq O(\min(1,n/d^2))$.  In particular, when $d\leq \sqrt{n}$, this recovers a planted vector with up to $\Omega(n)$ nonzero coordinates.
When $d\leq n^{2/3}$,
our algorithm improves upon existing methods based on comparing the $L_1$ and $L_\infty$ norms, which intrinsically require $\mu \leq O\left(1/\sqrt{d}\right)$.

\end{enumerate}

\end{abstract}

\thispagestyle{empty}

\clearpage

\ifnum\showtableofcontents=1
{
\setcounter{tocdepth}{2}
\tableofcontents
 }
\fi
\thispagestyle{empty}
\setcounter{page}{0}
\clearpage

\section{Introduction}

\label{sec:intro}

Convex programming is the algorithmic workhorse behind many applications in computer science and other fields.
But its power is far from understood, especially in the case of \emph{hierarchies} of linear programming (LP) and semidefinite programming (SDP) relaxations.
These are systematic approaches to make a convex relaxation tighter by adding to it more constraints.
Various such hierarchies have been proposed independently by researchers from several communities~\cite{Shor87,SheraliA90,LovaszS91,Nesterov00,Parrilo00,Lasserre01}.
In general, these hierarchies are parameterized by a number $\ell$ called their \emph{level}.
For problems on $n$ variables, the hierarchy of the $\ell^{th}$ level can be optimized in $n^{O(\ell)}$ time, where for the typical domains used in CS (such as $\{0,1\}^n$ or the $n$-dimensional unit sphere), $n$ rounds correspond to the exact (or near exact) solution by brute force exponential-time enumeration.

There are several strong \emph{lower bounds} (also known as \emph{integrality gaps}) for these hierarchies,
in particular showing that $\omega(1)$ levels (and often even $n^{\Omega(1)}$ or $\Omega(n)$ levels) of many such hierarchies can't improve  by much on the known polynomial-time approximation guarantees for many NP hard problems,
including \textsf{SAT}, \textsf{Independent-Set}, \textsf{Max-Cut} and more~\cite{Grigoriev01,Grigoriev01b,AroraBLT06,delaVegaK07,Schoenebeck08,Tulsiani09,CharikarMM09,BenabbasGMT12,BhaskaraCVGZ12}.
Unfortunately, there are many fewer \emph{positive} results, and several of them only show that these hierarchies can match the performance of previously known (and often more efficient) algorithms,
rather than using hierarchies to get genuinely new algorithmic results.\footnote{
The book chapter \cite{ChlamtacT10} is a good source for several of the known upper and lower bounds, though it does not contain some of the more recent ones.}
For example, Karlin, Mathieu and Nguyen~\cite{KarlinMN11}
showed that  $\ell$ levels of the Sum of Squares hierarchy can approximate the \textsf{Knapsack} problem up to a factor of $1+1/\ell$, thus approaching the performance of the standard dynamic program.
Guruswami and Sinop~\cite{GuruswamiS11}  and (independently) Barak, Raghavendra, and Steurer~\cite{BarakRS11} showed that some SDP hierarchies can match the performance of the \cite{AroraBS10} algorithm for \textsf{Small Set Expansion} and \textsf{Unique Games},
and their techniques also gave improved results for some other problems (see also \cite{RaghavendraT12,AroraG11,AroraGS13}).
Chlamtac and Singh~\cite{ChlamtacS08} (building on~\cite{Chlamtac07}) used hierarchies to obtain some new approximation guarantees for the independent set problem in $3$-uniform hypergraphs.
Bhaskara, Charikar, Chlamtac, Feige, and Vijayaraghavan~\cite{BhaskaraCCFV10} gave an LP-hierarchy based approximation algorithm for the $k$-densest subgraph problem, although they also showed a purely combinatorial algorithm with the same performance.
The famous algorithm of Arora, Rao and Vazirani~\cite{AroraRV04} for \textsf{Sparsest Cut} can be viewed (in retrospect) as using a constant number of rounds of an SDP hierarchy to improve upon the performance of the basic LP for this problem.
Perhaps the most impressive use of super-constant levels of a  hierarchy to solve a new problem was the work of Brand\~{a}o, Christandl and Yard~\cite{BrandaoCY11}
who used an SDP hierarchy (first proposed by~\cite{DohertyPS04}) to give a quasipolynomial time algorithm for a variant of the \emph{quantum separability problem}
of testing whether a given density matrix corresponds to a separable (i.e., non-entangled) quantum state or is $\e$-far from all such states (see Section~\ref{sec:nonneg:intro}).


One of the reasons for this paucity of positive results is that we have relatively few tools to \emph{round} such convex hierarchies.
A \emph{rounding algorithm} maps a solution to the relaxation to a solution to the original program.\footnote{
While the name derives from the prototypical case of relaxing an integer program to a linear program by allowing the variables to take non-integer values, we use ``rounding algorithm'' for any mapping from relaxation solutions to actual solutions, even in cases where the actual solutions are themselves non-integer.}
In the case of a hierarchy, the relaxation solution satisfies more constraints, but we do not always know how to take advantage of this when rounding.
For example, \cite{AroraRV04}  used a very sophisticated analysis to get better rounding when the solution to a \textsf{Sparsest Cut} relaxation satisfies  a constraint known as triangle inequalities, but we have no general tools to use the additional constraints that come from higher levels of the hierarchies, nor do we know if these can help in rounding or not.
This lack of rounding techniques is particularly true for the \emph{Sum of Squares} (SOS, also known as \emph{Lasserre}) hierarchy~\cite{Parrilo00,Lasserre01}.\footnote{
  While it is common in the TCS community to use \emph{Lasserre} to describe the primal version of this SDP, and \emph{Sum of Squares (SOS)} to describe the dual, in this paper we use the more descriptive SOS name for both programs.
  We note that in all the applications we consider, strong duality holds, and so these programs are equivalent. }
This is the strongest variant of the canonical semidefinite programming hierarchies, and has recently shown promise to achieve tasks beyond the reach of weaker hierarchies~\cite{BarakBHKSZ12}.
But there are essentially no general rounding tools that take full advantage of its power.\footnote{
  The closest general tool we are aware of is the repeated conditioning methods of  \cite{BarakRS11,GuruswamiS11}, though these can be implemented in weaker hierarchies too and so do not seem  to use the full power of the SOS hierarchy.
  However, this technique does play a role in this work as well.}

\medskip 
In this work we propose a general approach to rounding SOS hierarchies, and instantiate this approach in two cases, giving new algorithms making progress on natural variants of two longstanding problems.
Our approach is based on the intimate connection between the SOS hierarchy and the ``Positivstellensatz''/``Sum of Squares'' proof  system.
This connection was used in previous work for either negative results~\cite{Grigoriev01,Grigoriev01b,Schoenebeck08}, or positive results for  specific instances~\cite{BarakBHKSZ12,ODonnellZ13,KauersOTZ14},
translating proofs of a bound on the actual value of these instances into proofs of bounds on the relaxation value.
In contrast, we use this connection to give explicit rounding algorithms for general instances of certain computational problems.

\subsection{The Sum of Squares hierarchy} \label{sec:sos:intro}

Our work uses the \emph{Sum of Squares (SOS)} semidefinite programming hierarchy and in particular its relationship with the Sum of Squares (or Positivstellensatz) proof system.
We now briefly review both the hierarchy and proof system.
See the introduction of~\cite{ODonnellZ13} and the monograph~\cite{Laurent09} for a more in depth discussion of these concepts and their history.
Underlying both the SDP and proof system is the natural approach to prove that a real polynomial $P$ is nonnegative via showing that it equals a \emph{sum of squares}: $P = \sum_{i=1}^k Q_i^2$ for some polynomials $Q_1,\ldots,Q_k$.
The question of when a nonnegative polynomial has such a ``certificate of non-negativity'' was studied by Hilbert  who realized this doesn't always hold and asked (as his $17{th}$ problem) whether a nonnegative polynomial is always a sum of squares of \emph{rational} functions.
This was proven to be the case by Artin, and also follows from the more general \emph{Positivstellensatz} (or ``Positive Locus Theorem'')~\cite{Krivine64,Stengle74}.

The Positivstellensatz/SOS proof system of Grigoriev and Vorobjov~\cite{GrigorievV01} is based on the Positivstellensatz as a way to refute the assertion that a certain set of polynomial equations
\begin{equation}
  P_1(x_1,\ldots,x_n)=\ldots=P_k(x_1,\ldots,x_n)=0 \label{eq:poly-eq}
\end{equation}
 can be satisfied by showing that there exists some polynomials $Q_1,\ldots,Q_k$ and a sum of squares polynomial $S$ such that
\begin{equation}
  \sum P_i Q_i = 1 + S \mper \label{eq:refutation}
\end{equation}
(\cite{GrigorievV01} considered inequalities as well, although in our context one can always restrict to equalities without loss of generality.)
One natural measure for the complexity of such proof is the \emph{degree} of the polynomials $P_1Q_1,\ldots,P_kQ_k$ and $S$.

The sum of squares semidefinite program was proposed independently by several authors~\cite{Shor87,Parrilo00,Nesterov00, Lasserre01}
One way to describe it is as follows.
If the set of equalities (\ref{eq:poly-eq}) is satisfiable then in particular there exists some random variable $X$ over $\R^n$ such that
\begin{equation}
  \E P_1(X_1,\ldots,X_n)^2 =\ldots= \E P_k(X_1,\ldots,X_n)^2=0 \mper \label{eq:poly-eq-ex}
\end{equation}
That is, $X$ is some distribution over the non-empty set of solutions to (\ref{eq:poly-eq}).

For every degree $\ell$, we can consider the linear operator $\cL = \cL_{\ell}$ that maps a polynomial $P$ of degree at most $\ell$ into the number $\E P(X_1,\ldots,X_n)$.
Note that by choosing the monomial basis, this operator can be described by a vector of length $n^{\ell}$, or equivalently, by an $n^{\ell/2} \times n^{\ell/2}$ matrix.
This operator satisfies the following conditions:
\begin{description}
\item[Normalization] If $P$ is the constant polynomial $1$ then $\cL P = 1$
\item[Linearity] $\cL (P+Q) = \cL P + \cL Q$ for every $P,Q$ of degree $\leq \ell$.
\item[Positivity] $\cL P^2 \geq 0$ for every $P$ of degree $\leq \ell/2$.
\end{description}
Following~\cite{BarakBHKSZ12}, we call a linear operator satisfying the above conditions \emph{a level $\ell$ pseudoexpectation function}, or $\ell$-\pef, and use the suggestive notation $\pE P(X)$  to denote $\cL P$.
Correspondingly we will sometimes talk about a \emph{level $\ell$ pseudodistribution} (or $\ell$-\pd ) $X$, by which we mean that there is an associated level $\ell$ pseudoexpectation operator.
Given the representation of $\cL$ as an $n^{\ell}$ dimension vector it is possible to efficiently check that it satisfies the above conditions efficiently,
and in particular the positivity condition corresponds to the fact that, when viewed as a matrix, $\cL$ is positive semidefinite.
Thus it is also possible to optimize over the set of operators satisfying these conditions in time $n^{O(\ell)}$,
and this optimization procedure is known as the SOS SDP hierarchy.
Clearly, as $\ell$ grows, the conditions become stricter.
In Appendix~\ref{app:toolkit} we collect some useful properties of these pseudoexpectations.
In particular one can show (see Corollary~\ref{cor:pef-constraint}) that if $\pE P^2(X)=0$ then  $\pE P(X)Q(X) = 0$ for every polynomial $Q$ (as long as $Q,P$ have degrees at most $\ell/2$).
Thus, if there is a refutation to (\ref{eq:poly-eq}) of the form (\ref{eq:refutation})  where all polynomials involved have degree at most $\ell$ then there would not exist a level $2\ell$ pseudoexpectation operator satisfying (\ref{eq:poly-eq-ex}).
This connection goes both ways,  establishing an equivalence between the degree of Positivstellensatz proofs and the level of the corresponding SOS relaxation.

Until recently, this relation was mostly used for \emph{negative} results, translating proof complexity lower bounds into integrality gap results for the SOS hierarchy~\cite{BarakBHKSZ12,ODonnellZ13,KauersOTZ14}.
However, in 2012 Barak, Brand\~{a}o, Harrow, Kelner, Steurer and Zhou~\cite{BarakBHKSZ12} used this relation for \emph{positive} results, showing that the SOS hierarchy can in fact solve some interesting instances of the \textsf{Unique Games} maximization problem that fool weaker hierarchies.
Their idea was to use the analysis of the previous works that proved these integrality gaps for weaker hierarchies.
Such proofs work by showing that (a) the weaker hierarchy outputs a large value on this particular instance but (b) the true value is actually small.
\cite{BarakBHKSZ12}'s insight was that oftentimes the proof of (b) only uses arguments that can be captured by the SOS/Positivstellensatz  proof system, and hence inadvertently shows that the SOS SDP value is actually small as well.
Some follow up works~\cite{ODonnellZ13,KauersOTZ14}  extended this to other instances, but all these results held for very specific instances which have been proven before to have small objective value.

\medskip In this work we use this relation to get some guarantees on the performance of the SOS SDP on \emph{general} instances.
We give a more detailed overview of our approach in Section~\ref{sec:overview}, but the high level idea is as follows.
For particular optimization problems, we design a ``rounding algorithm'' that on input the moment matrix of a distribution on \emph{actual solutions} achieving a certain value $\sval$,
outputs a solution with some value $\rval$ which is a function of $\sval$.
We call such an algorithm a \emph{combining algorithm}, since it ``combines'' a distribution over solutions into a single one.
(Note that the solution output by the combining algorithm need not be in the support of the distribution, and generally, when $\rval \neq \sval$, it won't be.)
We then ``lift'' the analysis of this combining algorithm into the SOS framework, by showing that all the arguments can be captured in this proof system.
This in turns implies that the algorithm would still achieve the value $\rval$ even if it is only given a \emph{pseudoexpectation} of the distribution of sufficiently high level $\ell$, and hence in fact this combining algorithm is a rounding algorithm for the level $\ell$ SOS hierarchy.
We apply this idea to obtain new results for two applications--- optimizing polynomials with nonnegative coefficients over the unit sphere, and finding ``analytically sparse'' vectors inside a subspace.

\begin{remark}[Relation to the Unique Games Conjecture.]
While the SOS hierarchy is relevant to many algorithmic applications, some recent work focused on its relation to Khot's Unique Games Conjecture (UGC)~\cite{Khot02}.
On a high level, the UGC implies that the basic semidefinite program is an optimal efficient algorithm for many problems, and hence in particular using additional constant or polylogarithmic levels of the SOS hierarchy will not help.
More concretely, as discussed in Section~\ref{sec:asvp:intro} below, the UGC is closely related to the question of how hard it is to find sparse (or ``analytically sparse'') vectors in a given subspace.
Our work shows how the SOS hierarchy can be useful in general, and in particular gives strong average-case results and
nontrivial worst-case results for finding sparse vectors in subspaces.
Therefore, it can be considered as giving some (far from conclusive) evidence that the UGC might be false.
\end{remark}




\subsection{Optimizing polynomials with nonnegative coefficients bounded spectral norm}
\label{sec:nonneg:intro}

Our first result yields an \emph{additive} approximation to this optimization problem for polynomials with nonnegative coefficients, when the value is scaled by the spectral norm of an associated matrix.
If $P$ is an $n$-variate degree-$t$ homogeneous polynomial with nonnegative coefficient, then it can be represented by a tensor $M \in \R^{n^t}$ such that $P(x) = M \cdot x^{\otimes t}$ for every $x\in\R^n$.
It is convenient to state our result in terms of this tensor representation:


\begin{theorem} \label{thm:nonneg:intro} There is an algorithm $A$, based on $O(t \log n/\e^2)$ levels of the SOS hierarchy, such that for every even\footnote{
The algorithm easily generalizes to polynomials of odd degree $t$ and to non-homogenous polynomials, see Remark~\ref{rem:odd-degree}.}
$t$ and nonnegative $M\in \R^{n^t}$,
\[
\max_{\norm{x}=1} M \cdot x^{\otimes t} \leq A(M) \leq  \max_{\norm{x}=1} M \cdot x^{\otimes t}  + \e\norm{M}_{spectral} \mcom
\]
where $\cdot$ denotes the standard dot product,  and $\norm{M}_{spectral}$ denotes the spectral norm of $M$, when considered as an $n^{t/2}\times n^{t/2}$ matrix.
\end{theorem}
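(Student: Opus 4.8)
The plan is to exhibit a polynomial-time \emph{combining algorithm} — one taking the moment matrix of a genuine distribution over unit vectors to a single unit vector — whose analysis lives entirely inside the low-degree Sum-of-Squares proof system, so that by the dictionary of \pref{sec:sos:intro} it automatically upgrades to a rounding algorithm for the level-$\ell$ SOS relaxation with $\ell = O(t\log n/\e^2)$. The left inequality $\max_{\norm{x}=1} M\cdot x^{\otimes t}\le A(M)$ is the standard fact that the SOS value upper bounds the true optimum (an actual maximizer $x_0$ induces the $\ell$-\pef $P\mapsto P(x_0)$). The content is the right inequality, and for it it suffices to show: from any degree-$\ell$ pseudo-distribution $\pE$ over $\set{x:\norm{x}^2=1}$ with $\pE[M\cdot x^{\otimes t}]=\sval$ one can extract, in $\poly(n^\ell)$ time, an honest unit vector $x^\ast$ with $M\cdot (x^\ast)^{\otimes t}\ge \sval-\e\norm{M}_{\spectral}$.

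Two preliminary reductions. First, since $M$ has nonnegative entries, $M\cdot y^{\otimes t}\le M\cdot\abs{y}^{\otimes t}$ and $\norm{\abs{y}}_2=\norm{y}_2$ for every $y$, so once we produce any vector $u$ with $\norm{u}_2\le 1$ we may instead output $x^\ast=\abs{u}/\norm{\abs{u}}_2$: as $\norm{\abs{u}}_2\le 1$ and $M\cdot\abs{u}^{\otimes t}\ge 0$, this only increases the objective. Second, writing $t=2s$ and regarding $M$ as an $n^s\times n^s$ matrix, after a harmless symmetrization (replacing $M$ by $(M+M^\dagger)/2$, which leaves the form $\iprod{y^{\otimes s},M y^{\otimes s}}$ unchanged and does not increase the spectral norm) the operator inequality $\norm{M}_{\spectral}\cdot I\sge M$ gives the sum-of-squares identity $\norm{M}_{\spectral}-M\cdot y^{\otimes t}=\iprod{y^{\otimes s},(\norm{M}_{\spectral}I-M)y^{\otimes s}}\ge 0$ in $y$, of degree $t$; hence $\pE[M\cdot x^{\otimes t}]\le\norm{M}_{\spectral}$ for \emph{every} pseudo-distribution. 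This is the quantity against which the additive error is measured, and it is what makes the later error bounds dimensionally correct.

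The heart is a \emph{global correlation rounding} argument adapted to the sphere, in the spirit of \cite{BarakRS11,GuruswamiS11}. Regard $(x_i^2)_i$ as a probability vector of Shannon entropy at most $\log n$; this $O(\log n)$ quantity is the total budget. Conditioning the pseudo-distribution on the values of a few coordinates (chosen with probability $\propto x_i^2$) decreases this potential in expectation, and a standard accounting shows that after conditioning on $O(t\log n/\e^2)$ coordinates in total one reaches a pseudo-distribution $\pE'$ (still of degree $\ge t$) whose conditional covariances across tensor slots are small on average; moreover $\pE'[M\cdot x^{\otimes t}]$ equals $\sval$ in expectation over the conditioning, so a Markov-type averaging argument produces a single conditioning that simultaneously keeps the objective within $\e\norm{M}_{\spectral}$ of $\sval$ and has small correlation. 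Writing $u_i=\pE'[x_i]$ — a vector with $\norm{u}_2^2\le\pE'\norm{x}_2^2=1$ by pseudo–Cauchy–Schwarz — a hybrid/telescoping argument over the $t$ tensor slots bounds $\pE'[M\cdot x^{\otimes t}]-M\cdot u^{\otimes t}$ by a sum of $t$ terms, each of the form $\iprod{M,D_k}$ with $D_k$ a matrix assembled from the small conditional covariances; since $\norm{D_k}_{\spectral}\le\norm{D_k}_F$ and the Frobenius norm is controlled by the average correlation, each term is at most $\norm{M}_{\spectral}\cdot\sqrt{\text{(correlation)}}$, and taking the correlation target $\poly(\e/t)$ makes the total at most $\e\norm{M}_{\spectral}$. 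Every ingredient — pseudo–Cauchy–Schwarz, the SOS identity for $\norm{M}_{\spectral}I-M$, the telescoping, and $\norm{D_k}_{\spectral}\le\norm{D_k}_F$ — is a low-degree SOS derivation, and the degree incurred over all conditionings and hybrids is $O(t\log n/\e^2)$; this is exactly what lets us feed the algorithm the SOS-SDP optimum instead of a genuine distribution. Outputting $x^\ast=\abs{u}/\norm{\abs{u}}_2$ then gives $M\cdot (x^\ast)^{\otimes t}\ge M\cdot u^{\otimes t}\ge \pE'[M\cdot x^{\otimes t}]-\e\norm{M}_{\spectral}\ge \sval-O(\e)\norm{M}_{\spectral}$, and rescaling $\e$ finishes; the running time is $n^{O(\ell)}=n^{O(t\log n/\e^2)}$, quasipolynomial for constant $t,\e$.

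I expect the main obstacle to be the third step: pinning down the right potential on the sphere and proving that conditioning drives down precisely the spectral-norm-dual notion of correlation that surfaces in the telescoping error — and, crucially, verifying that this entire correlation-reduction argument is a low-degree SOS proof, since the delicate point throughout is that operations like ``condition on a coordinate'' and ``average over a random subset'' must be phrased so that they manipulate pseudo-expectations legitimately. Nonnegativity of $M$ enters only in the two easy reductions, but it is indispensable there: without it one cannot pass to $\abs{u}$, and indeed for general $M$ obtaining this additive approximation is the open problem mentioned in the introduction.
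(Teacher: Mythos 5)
Your high-level architecture (combining algorithm upgraded to a rounding algorithm, an entropy potential bounded by $\log n$, reweighting by products of $x_{i_j}^2$, termination after logarithmically many rounds) matches the paper's, but the core of your rounding step is wrong and the error analysis built on it would fail. You round with the first moments $u_i=\pE'[x_i]$ and propose to control $\pE'[M\cdot x^{\otimes t}]-M\cdot u^{\otimes t}$ by conditional covariances of the $x_i$'s. This cannot work: the problem is invariant under $x\mapsto -x$, and for any feasible pseudodistribution its symmetrization under this map is again feasible with the same objective value (both the objective and the constraint $\norm{x}^2=1$ are even polynomials, $t$ being even), has all odd moments equal to zero, and stays symmetric after reweighting by any product of squares. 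Hence $u=0$ can persist through every conditioning step, the covariances $\pE[x_ix_j]-\pE[x_i]\pE[x_j]$ are never driven down, and the telescoping bound is vacuous (and $\abs{u}/\norm{\abs{u}}$ is undefined). The paper instead rounds with $x^*_i=(\pE x_i^2)^{1/2}$, the square roots of the \emph{second} moments; this is where the sign ambiguity is resolved, and it is resolvable only because $M\ge 0$ entrywise makes the all-nonnegative representative at least as good.

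Relatedly, your closing claim that nonnegativity of $M$ enters only in the two "easy reductions" is incorrect; it is essential at the central step of the analysis. The paper's Lemma~\ref{lem:direct-rounding} first shows $\sval=\pE M(x)\le Q(y)$ where $y_{i_1\cdots i_t}=(\pE x_{i_1}^2\cdots x_{i_t}^2)^{1/2}$, by applying Cauchy--Schwarz termwise ($\pE x^\alpha x^\beta\le y_\alpha y_\beta$) and summing against the coefficients of $M$ --- a summation that is only legitimate because those coefficients are nonnegative. It then bounds $Q(y)-M(x^*)\le 2\spectralnorm{M}\,\norm{y-{x^*}^{\ot t}}$ and identifies $\norm{y-{x^*}^{\ot t}}$ with $\sqrt{2}$ times the Hellinger distance between the joint distribution of $(A_1,\ldots,A_t)$, where $\Pr[(A_1,\ldots,A_t)=(i_1,\ldots,i_t)]=\pE x_{i_1}^2\cdots x_{i_t}^2$, and the product of its marginals. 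The quantity your conditioning must shrink is therefore this Hellinger distance of the \emph{squared} variables, and the potential decrement comes from a hybrid argument together with $2\dhell^2\le I$ (Lemmas~\ref{lem:hellinger} and~\ref{lem:making-progress}); note that $A_1,\ldots,A_t$ are genuine random variables even when $\cX$ is only a pseudodistribution, which is what lets the information-theoretic part go through verbatim, with only the Cauchy--Schwarz step needing an SOS justification. Your proposal is missing exactly these ingredients: the second-moment rounding vector, the nonnegativity-powered Cauchy--Schwarz comparison, and the identification of the rounding error with a Hellinger distance that conditioning provably reduces.
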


Note that the algorithm of Theorem~\ref{thm:nonneg:intro} only uses a logarithmic number of levels, and thus it shows that this fairly natural polynomial optimization problem can be solved in quasipolynomial time, as opposed to the exponential time needed for optimizing over general polynomials of degree $>2$.
Indeed, previous work on the convergence of the Lasserre hierarchy for general polynomials~\cite{doherty2012convergence} can be described in our language here as trying to isolate a solution in the support of the distribution, and this generally requires a linear number of levels.
Obtaining the logarithmic bound here relies crucially on constructing a ``combined'' solution that is not necessarily in the support.
The algorithm is also relatively simple, and so serves as a good demonstration of our general approach.

\medskip 

\noindent \emph{Relation to quantum information theory.} An equivalent way to state this result is that we get an $\e$ additive approximation in the case that $\norm{M}_{spectral} \leq 1$, in which case the value $\max_{\norm{x}=1} M \cdot x^{\otimes t}$ is in the interval $[0,1]$.
This phrasing is particularly natural in the context of quantum information theory.
A general (potentially mixed) quantum state on $2\ell$-qubits is represented by a an $n^2 \times n^2$ \emph{density matrix} $\rho$ for $n=2^{\ell}$; $\rho$ is a positive semidefinite matrix and has trace $1$.
If $\rho$ is \emph{separable}, which means that there is no entanglement between the first $\ell$ qubits and the second $\ell$ qubits,
then $\rho = \E xx^* \otimes yy^*$ for some distribution over $x,y \in \C^n$, where $v^*$ denotes the complex adjoint operation.
If we further restrict the amplitudes of the states to be real, and enforce symmetry on the two halves, then this would mean that $\rho = \E x^{\otimes 4}$.
(All our results should generalize to states without those restrictions to symmetry and real amplitudes, which we make just to simplify the statement of the problem and the algorithm.)
A quantum \emph{measurement operator} over this space is an $n^2\times n^2$ matrix $M$ of spectral norm $\leq 1$.
The probability that the measurement accepts a state $\rho$ is $\Tr(M\rho)$.
Finding an algorithm that, given a measurement $M$, finds the separable state $\rho$ that maximizes this probability is an important question in quantum information theory
which amounts to finding a classical upper bound for the complexity class \textbf{QMA}$(2)$ of Quantum Merlin Arthur proofs with two independent provers~\cite{HarrowM13}.
Note that if we consider symmetric real states then this is the same as finding $\argmax_{\norm{x}=1} M \cdot x^{\otimes 4}$, and hence dropping the non-negativity constraint in our result would resolve this longstanding open problem.
There is a closely related dual form of this question, known as the \emph{quantum separability problem}, where one is given a quantum state $\rho$ and wants to find the test $M$ that maximizes
\begin{equation}
\Tr(M\rho) - \max_{\rho' \text{ separable }} \Tr(M\rho') \label{eq:quantum-seperability}
\end{equation}
or to simply distinguish between the case that this quantity is at least $\e$ and the case that $\rho$ is separable.
The best result known in this area is the paper \cite{BrandaoCY11} mentioned above, which solved the distinguishing variant of quantum separability problem in the case that measurements are restricted to so-called \emph{Local Operations and one-way classical communication} (one-way LOCC) operators.
However, they did not have an rounding algorithm, and in particular did not solve the problem of actually finding a separable state that maximizes the probability of acceptance of a given one-way LOCC  measurement.
The techniques of this work were used by Brand\~{a}o and Harrow~\cite{BrandaoH13} to solve the latter problem, and also greatly simplify the proof of \cite{BrandaoCY11}'s result,
which originally involved relations between several measures of entanglement proved in several papers.\footnote{The paper~\cite{BrandaoH13} was based on a previous version of this work~\cite{BarakKS12}
that contained only the results for nonnegative tensors. }

For completeness, in Appendix~\ref{sec:locc} we give a short proof of this result, specialized to the case of real vectors and polynomials of degree four (corresponding to quantum states of two systems, or two prover QMA proofs).
We also show in Appendix~\ref{sec:lowrank} that in the case the measurement satisfies the stronger condition of having its $\ell_2$ (i.e., Frobenius) norm be at most $1$,
there is a simpler and more efficient algorithm for estimating the maximum probability the measurement accepts a separable state,
giving an $\e$ additive approximation in $\poly(n)\exp(\poly(1/\e))$ time.
In contrast, \cite{BrandaoCY11}'s algorithm took quasipolynomial time even in this case.

\medskip

\noindent \emph{Relation to small set expansion.} Nonnegative tensors also arise naturally in some applications, and in particular in the setting of small set expansion for Cayley graphs over the cube, which was our original motivation to study them.
In particular, one corollary of our result is:

\begin{corollary}[Informally stated] \label{cor:cayley-sse}
There is an algorithm $A$, based on $\poly(K(G))\log n$ levels of the SOS hierarchy, that solves the \textsf{Small Set Expansion} problem on Cayley graphs $G$ over $\GF2^{\ell}$ (where $\ell = \log n$)
where $K(G)$ is a parameter bounding the spectral norm of an operator related to $G$'s top eigenspace.

\end{corollary}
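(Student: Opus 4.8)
The plan is to reduce the decision (``gap'') version of \textsf{Small Set Expansion} on a Cayley graph $G$ over $\GF2^\ell$ to estimating the largest $L_4$-versus-$L_2$ ratio attained by a vector in the top eigenspace of $G$, and then to observe that this last quantity is the maximum over the unit sphere of a degree-$4$ polynomial with \emph{nonnegative} coefficients, so that \pref{thm:nonneg:intro} applies directly. For the reduction, fix an expansion threshold $\eta$, let $\cT = \set{\alpha \in \GF2^\ell : \lambda_\alpha \geq 1-\eta}$ be the set of ``top'' frequencies of $G$ (recall that the eigenvectors of a Cayley graph over $\GF2^\ell$ are the characters $\chi_\alpha$, with eigenvalue $\lambda_\alpha = \E_s\chi_\alpha(s)$ over the generating multiset), and put $V = \Span\set{\chi_\alpha : \alpha \in \cT}$, so that $d \seteq \dim V = \abs{\cT} \leq n$. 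I would then invoke the relationship between small-set expansion and the $L_4/L_2$ ratio of subspaces from Barak \etal: if $G$ has a set $A$ of measure $\mu$ and expansion $\leq \eta$ then $V$ contains a vector $v$ with $\E_i v_i^4 \geq \Omega(1/\mu)(\E_i v_i^2)^2$ (concretely $v = P_V \mathbf 1_A$; two applications of Cauchy--Schwarz give $\E v^4 \geq (\E v^2\mathbf 1_A)^2/\mu \geq (\E v\mathbf 1_A)^4/\mu^3$, while $\E v\mathbf 1_A = \Norm{P_V\mathbf 1_A}_2^2 = \Omega(\mu)$ since $A$ does not expand and $(\E v^2)^2 \leq \mu^2$), whereas if every $v\in V$ satisfies $\E_i v_i^4 \leq \tau(\E_i v_i^2)^2$ then every set of measure at most $1/\tau$ expands (for a somewhat larger $\eta$). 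It therefore suffices to estimate $\Lambda \seteq \max_{v\in V}\E_i v_i^4/(\E_i v_i^2)^2$ to within less than the resulting YES/NO gap.

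Writing $v = \sum_{\alpha\in\cT}c_\alpha\chi_\alpha$ and using orthonormality of the characters together with $\chi_\alpha\chi_\beta = \chi_{\alpha\oplus\beta}$ over $\GF2^\ell$, one has $\E_i v_i^2 = \norm{c}^2$ and
\[
  \E_i v_i^4 \;=\; \sum_{\substack{\alpha,\beta,\gamma,\delta\in\cT \\ \alpha\oplus\beta\oplus\gamma\oplus\delta = 0}} c_\alpha c_\beta c_\gamma c_\delta \;=\; M \cdot c^{\otimes 4}\mcom
\]
where $M\in\R^{d^4}$ is the tensor $M_{\alpha\beta\gamma\delta} = \mathbf 1\brac{\alpha\oplus\beta\oplus\gamma\oplus\delta = 0}$. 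Hence $\Lambda = \max_{\norm{c}=1}M\cdot c^{\otimes 4}$, and since $M\cdot c^{\otimes 4}$ arises by expanding $(\sum_\alpha c_\alpha\chi_\alpha(i))^4$, each of its monomial coefficients is a positive multinomial coefficient times such an indicator, hence nonnegative --- so this is exactly the optimization problem of \pref{thm:nonneg:intro}.

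To identify the relevant spectral norm, view $M$ as a $d^2\times d^2$ matrix indexed by pairs: $M_{(\alpha,\beta),(\gamma,\delta)} = \mathbf 1\brac{\alpha\oplus\beta = \gamma\oplus\delta}$, so $M = \sum_{\sigma\in\GF2^\ell}u_\sigma u_\sigma^\dagger$, where $u_\sigma\in\set{0,1}^{\cT\times\cT}$ is the indicator of $\set{(\alpha,\beta)\in\cT^2 : \alpha\oplus\beta = \sigma}$. The vectors $u_\sigma$ have pairwise disjoint supports, so
\[
  \norm{M}_{\mathrm{spectral}} \;=\; \max_\sigma\norm{u_\sigma}^2 \;=\; \max_{\sigma\in\GF2^\ell}\bigabs{\set{(\alpha,\beta)\in\cT^2 : \alpha\oplus\beta = \sigma}} \;=\; K(G)\mcom
\]
an additive-energy-type parameter of the top frequency set --- precisely the operator ``related to $G$'s top eigenspace'' named in the statement. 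Now run the algorithm of \pref{thm:nonneg:intro} with $t = 4$ and accuracy parameter $\e$: it returns $A(M)$ with $\Lambda \leq A(M) \leq \Lambda + \e K(G)$ using $O(\log d/\e^2)$ levels of the SOS hierarchy. Choosing $\e$ to be a small enough inverse polynomial in $K(G)$ (and in the target measure/expansion parameters) makes the additive error $\e K(G)$ smaller than the gap from the first paragraph, so thresholding $A(M)$ decides \textsf{Small Set Expansion} on $G$; the number of levels is then $O(\log d/\e^2) = \poly(K(G))\log d \leq \poly(K(G))\log n$. (To exhibit an actual small non-expanding set rather than just decide, feed the pseudo-moment matrix to the combining algorithm underlying \pref{thm:nonneg:intro} to obtain a candidate $c$, hence $v\in V$, and read off a vertex set by thresholding $v$.)

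The one genuinely nontrivial ingredient is the ``NO'' direction of the reduction --- that a uniformly small $L_4/L_2$ ratio on the top eigenspace forces \emph{every} small set to expand --- which I would import wholesale from Barak \etal; it is also where the polynomial loss relating $K(G)$, the set size $1/\tau$, and the expansion enters. Everything else is routine: the passage to a quartic is plain Fourier bookkeeping, the spectral-norm computation is just the observation that $M$ is a direct sum of rank-one blocks, and the nonnegativity of the coefficients --- which is exactly what lets \pref{thm:nonneg:intro} use a \emph{logarithmic} rather than linear number of levels --- is automatic from expanding a square.
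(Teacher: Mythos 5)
Your proposal is correct and follows essentially the same route as the paper: expand $\norm{\Pi_{\ge\lambda}f}_4^4$ in the character basis, observe the coefficients are indicators of $\alpha\oplus\beta\oplus\gamma\oplus\delta=0$ and hence nonnegative, apply \pref{thm:nonneg:intro}, and combine with the \cite{BarakBHKSZ12} equivalence between small-set expansion and the $L_4/L_2$ ratio on the top eigenspace. Your explicit identification of the spectral norm as the additive energy $\max_\sigma\abs{\set{(\alpha,\beta)\in\cT^2:\alpha\oplus\beta=\sigma}}$ is a nice concretization of the paper's abstractly defined $K_\lambda(G)$ (an upper bound on it, since the paper minimizes over matrix representations), and your two-Cauchy--Schwarz argument for the \textbf{Yes} case is a slightly more elementary substitute for the paper's truncation lemma; neither difference changes the substance.
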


We discuss the derivation and the meaning of this corollary in Section~\ref{sec:SSE} but note that the condition of having small value $K(G)$ seems reasonable.
Having $K(G)=O(1)$ implies that the graph is a small set expander, and in particular the known natural examples of Cayley graphs that are
small set expanders,  such as the noisy Boolean hypercube and the ``short code'' graph of~\cite{BarakGHMRS12} have $K(G)=O(1)$.
Thus a priori one might have thought that a graph that is hard to distinguish from small set expanders would have a small value of $K(G)$.


\subsection{Optimizing hypercontractive norms and finding analytically sparse vectors}
\label{sec:asvp:intro}

Finding a sparse nonzero vector inside a $d$ dimensional linear subspace $V\subseteq \R^n$ is a natural task arising in many applications in machine learning and optimization (e.g., see \cite{DemanetH13} and the references therein).
Related problems are known under many names including the  ``sparse null space'', ``dictionary learning'', ``blind source separation'', ``min unsatisfy'', and ``certifying restricted isometry property''  problems.
(These problems all have the same general flavor but differ on various details such as worst-case vs. average case, affine vs. linear subspaces, finding a single vector vs. a basis, and more.)
Problems of this type are often NP-hard, with some hardness of approximation results known, and conjectured average-case hardness (e.g., see~\cite{AroraBSS97,KoiranZ12,GottliebN10} and the references therein).

We consider a natural relaxation of this problem, which we call the \emph{analytically sparse vector}  problem ($\ASVP$), which assumes the input subspace (almost) contains an actually sparse $0/1$ vector,
but allows the algorithm to find a vector $v\in V$ that is only ``analytically sparse'' in the sense that $\norm{v}_4/\norm{v}_2$ is large.
More formally, for $q>p$ and $\mu>0$, we say that a vector $v$ is \emph{$\mu$ $L_q/L_p$-sparse} if $(\E_i v_i^q)^{1/q}/(E_i v_i^p)^{1/p} \geq \mu^{1/q-1/p}$.
That is, a vector is $\mu$ $L_q/L_p$-sparse if it has the same $q$-norm vs $p$-norm ratio as a $0/1$ vector of measure at most $\mu$.

This is a natural relaxation, and similar conditions have been considered in the past.
For example, Spielman, Wang, and Wright~\cite{SpielmanWW12} used in their work on dictionary learning a subroutine finds a vector $v$ in a subspace that maximizes the ratio $\norm{v}_{\infty}/\norm{v}_1$ (which can be done efficiently via $n$ linear programs).
However, because any subspace of dimension $d$ contains an $O(1/\sqrt{d})$ $L_{\infty}/L_1$-sparse vector, this relaxation can only detect the existence of vectors that are supported on less than $O(n/\sqrt{d})$ coordinates.
Some works have observed that the  $L_2/L_1$ ratio is a much better proxy for sparsity~\cite{ZibulevskyP01,DemanetH13}, but computing it is a non-convex optimization problem  for which no efficient algorithm is known.
Similarly, the $L_4/L_2$ ratio is a good proxy for sparsity for subspaces of small dimension (say $d =O(\sqrt{n})$) but it is non-convex, and it is not known how to efficiently optimize it.\footnote{
It seems that what makes our relaxation different from the original problem is not so much the qualitative issue of considering analytically sparse vectors as opposed to actually sparse vectors,
but the particular choice of the $L_4/L_2$ ratio, which on one hand seems easier (even if not truly easy) to optimize over than the $L_2/L_1$ ratio, but provides better guarantees than the $L_{\infty}/L_1$ ratio.
However, this choice does force us to restrict our attention to subspaces of low dimension, while in some applications such as certifying the restricted isometry property, the subspace in question
is often the kernel of a ``short and fat'' matrix, and hence is almost full dimensional. Nonetheless, we believe it should be possible to extend our results to handle subspaces of higher dimension, perhaps at the some mild cost
in the number of rounds.}

Nevertheless, because $\norm{v}_4^4$ is a degree $4$ polynomial, the problem of maximizing it for $v\in V$ of unit norm amounts to a polynomial maximization problem over the sphere, that has a natural SOS program.
Indeed, \cite{BarakBHKSZ12} showed that this program does in fact yield a good approximation of this ratio for random subspaces.
As we show in Section~\ref{sec:planted}, we can use this to improve upon the results of \cite{DemanetH13} and find planted sparse vectors in random subspaces that are of not too large a dimension:

\begin{theorem}\label{thm:planted:intro} There is a constant $c >0$ and an algorithm $A$,
based on $O(1)$-rounds of the SOS program, such that for every vector $v_0 \in \R^n$ supported on at most $cn\min(1,n/d^2)$ coordinates, if $v_1,\ldots,v_d$ are chosen independently at random from the Gaussian distribution
on $\R^n$, then given any basis for $V = \Span\{ v_0,\ldots, v_d\}$ as input, $A$ outputs an $\e$-approximation of $v_0$ in $\poly(n,\log(1/\e))$ time.
\end{theorem}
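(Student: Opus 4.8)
The plan is to apply the combining‑algorithm framework to the program $\max\{\norm{v}_4^4 : v\in V,\ \norm{v}_2^2=1\}$ and its degree‑$\ell$ SOS relaxation for a suitable constant $\ell$. Write $\mu = \card{\supp(v_0)}/n \leq c\min(1,n/d^2)$ and $\bar v_0 = v_0/\norm{v_0}_2$. By the power‑mean inequality $\norm{v_0}_2^2 \leq \sqrt{\card{\supp(v_0)}}\,\norm{v_0}_4^2$, the feasible point $\bar v_0$ has $\norm{\bar v_0}_4^4 \geq 1/(\mu n)$, so any degree‑$\ell$ pseudoexpectation $\pE$ maximizing $\pE\norm{v}_4^4$ satisfies $\pE\norm{v}_4^4 \geq \norm{\bar v_0}_4^4 \geq 1/(\mu n)$. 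I want a combining algorithm that, from such a $\pE$, outputs a unit vector $\eps$‑close to $\pm\bar v_0$.

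The analytic heart is a certified estimate on the geometry of $V$ coming from the randomness of $v_1,\dots,v_d$: with high probability there is a degree‑$4$ SOS proof, using only $v\in V$, that every $w$ in $W_0 := V\cap v_0^\perp$ --- equivalently the orthogonal projection onto $v_0^\perp$ of the random $d$‑dimensional subspace $\Span\{v_1,\dots,v_d\}$, which lies within a $(1+o(1))$ factor of a uniformly random $d$‑dimensional subspace --- satisfies $\norm{w}_4^4 \leq \delta\,\norm{w}_2^4$ with $\delta = O(1/n + d^2/n^2)$. This is precisely the kind of certified hypercontractivity bound established in \cite{BarakBHKSZ12}, obtained by showing concentration of the $d^2\times d^2$ matrix $\sum_i \mathrm{vec}(g_ig_i^\top)\mathrm{vec}(g_ig_i^\top)^\top$ about its expectation, where $g_i$ is the $i$‑th coordinate functional restricted to $W_0$. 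The hypothesis $\mu \leq c\min(1,n/d^2)$ is exactly what guarantees $\delta \leq O(c)\cdot\norm{\bar v_0}_4^4$, i.e. $\delta$ is negligible compared with $\norm{\bar v_0}_4^4 \geq 1/(\mu n)$ once $c$ is a small enough constant.

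Granting this, I decompose $v = \langle v,\bar v_0\rangle\bar v_0 + w$ with $w = v - \langle v,\bar v_0\rangle\bar v_0 \in W_0$, expand $\norm{v}_4^4$ into $\langle v,\bar v_0\rangle^4\norm{\bar v_0}_4^4$, four cross terms, and $\norm{w}_4^4$, and bound the cross terms using Cauchy--Schwarz inequalities such as $\big(\sum_i(\bar v_0)_i^2 w_i^2\big)^2 \leq \norm{\bar v_0}_4^4\,\norm{w}_4^4$ together with Young's inequality in SOS form, $\tfrac34 a^4 - a^3 b + \tfrac14 b^4 = \tfrac14(a-b)^2\big(2a^2+(a+b)^2\big)$. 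Feeding in $\norm{w}_4^4 \leq \delta\norm{w}_2^4$, $\norm{w}_2^2 \leq 1$, $\langle v,\bar v_0\rangle^2 \leq 1$ and $\delta = o(\norm{\bar v_0}_4^4)$, this yields a degree‑$O(1)$ SOS proof that $\norm{v}_4^4 \leq (\langle v,\bar v_0\rangle^2 + o(1))\norm{\bar v_0}_4^4$ for unit $v\in V$, where the $o(1)$ is a constant tending to $0$ as $c\to 0$. Applying $\pE$ and comparing with $\pE\norm{v}_4^4 \geq \norm{\bar v_0}_4^4$ forces $\pE\langle v,\bar v_0\rangle^2 \geq 1-o(1)$; hence the genuine PSD moment matrix $M := \pE[vv^\top]$ has $\Tr M = 1$, range contained in $V$, and $\bar v_0^\top M\bar v_0 \geq 1-o(1)$, so its top eigenvector $\hat v^{(0)}\in V$ satisfies $\langle\hat v^{(0)},\bar v_0\rangle^2 \geq 1-o(1)$. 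A purely linear‑algebraic clean‑up then boosts this constant‑accuracy solution to $\eps$‑accuracy: repeatedly restrict the current estimate to its $\card{\supp(v_0)}$ largest coordinates, reproject onto $V$, and renormalize. The resulting error is supported on $O(\mu n)$ coordinates and $\bar v_0 = \Pi_V\bar v_0$, so one step shrinks the distance to $\pm\bar v_0$ by a factor bounded away from $1$ (via a restricted‑isometry bound for the random subspace $W_0$, valid when $d \leq (1-\Omega(1))n$; when $d$ is within a constant factor of $n$, $v_0$ is $O(1)$‑sparse and is recovered exactly by a brute‑force search over its support). Iterating $O(\log(1/\eps))$ times --- and absorbing the SDP solver's finite precision into $\eps$ --- gives the claimed $\poly(n,\log(1/\eps))$‑time algorithm.

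Finally, every step of the combining algorithm's analysis --- the power‑mean bound, the certified hypercontractivity estimate of \cite{BarakBHKSZ12}, and the Cauchy--Schwarz/Young manipulations --- is a constant‑degree SOS derivation from the constraints $v\in V$, $\norm{v}_2^2 = 1$, so the argument goes through verbatim when $\pE$ is only a degree‑$\ell$ pseudoexpectation with $\ell = O(1)$; the sole operations performed directly on $\pE$ are forming $\pE[vv^\top]$ and extracting its top eigenvector, and the clean‑up runs on actual vectors. The step I expect to be the main obstacle is the certified random‑subspace estimate $\norm{w}_4^4 \leq O(1/n + d^2/n^2)\norm{w}_2^4$ with a constant‑degree SOS proof and this essentially optimal dependence on $d$: this dependence is exactly what controls the attainable sparsity $\mu$, and everything else (the expansion, the spectral rounding, the clean‑up) is comparatively routine once that estimate is in place.
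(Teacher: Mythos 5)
Your first stage is essentially the paper's proof of Theorem~\ref{thm:approx-recovery} (via Lemma~\ref{lem:rec-sparse}): the same relaxation $\max\norm{v}_4^4$ over unit $v\in V$, the same decomposition $v=\iprod{v,\bar v_0}\bar v_0+w$ with $w$ in the (essentially random) complement, the same import of the certified hypercontractivity bound of \cite[Theorem 7.1]{BarakBHKSZ12} (whose two regimes $c=O(1)$ for $d\le\sqrt n$ and $c=O(d^{1/2}/n^{1/4})$ for $d\ge\sqrt n$ are exactly your $\delta=O(1/n+d^2/n^2)$ after converting to counting norms), and the same Cauchy--Schwarz/H\"older handling of the cross terms to force $\pE\iprod{v,\bar v_0}^2\ge 1-o(1)$; extracting the top eigenvector of $\pE[vv^\top]$ rather than sampling a Gaussian matching the first two moments is an immaterial variation. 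You also correctly identify the certified random-subspace estimate as the load-bearing ingredient --- the paper does not reprove it either, but cites it. Where you genuinely diverge is the second stage. The paper boosts the constant-accuracy estimate $f$ to exact recovery by a \emph{single} linear program, $\min\{\norm{y}_1: y\in V,\ \iprod{y,f}=1\}$ (Theorem~\ref{thm:exact-recovery}), whose analysis splits coordinates on and off $\supp(v_0)$ and invokes the Kashin/Figiel--Lindenstrauss--Milman almost-spherical-sections theorem to bound $\norm{\fp}_2/\norm{\fp}_1$ on the random part; this works for all $d$ up to a constant fraction of $n$ in one shot. Your iterative hard-thresholding clean-up (threshold to the top $\mu n$ coordinates, reproject onto $V$, renormalize) instead needs a restricted-isometry bound $\norm{\Pi_{W_0}z}_2\le\rho\norm{z}_2$ for $O(\mu n)$-sparse $z$ with $\rho$ bounded below $1/2$, which degrades as $d/n$ grows and forces your separate brute-force fallback when $d=\Omega(n)$ (harmless there, since the support is then $O(1)$); it also has a few glossed details (the factor-$2$ loss from thresholding, the union bound over supports). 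Both routes are valid and give $\poly(n,\log(1/\e))$ time; the paper's LP is cleaner and more robust in the large-$d$ regime, while yours avoids solving an LP and is the more standard compressed-sensing-style argument.
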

In particular, we note that  this recovers a planted vector with up to $\Omega(n)$ nonzero coordinates when $d\leq \sqrt{n}$, and it
can recover vectors with more than the $O(n/\sqrt{d})$ nonzero coordinates that are necessary for existing techniques whenever $d\ll n^{2/3}$.

\medskip\noindent   Perhaps more significantly, we prove the following nontrivial \emph{worst-case} bound for this problem:


\begin{theorem}\label{thm:ASVP:intro} There is a polynomial-time algorithm $A$, based on $O(1)$ levels of the SOS hierarchy, that on input a $d$-dimensional subspace $V\subseteq \R^n$ such that there is
a $0/1$-vector $v\in V$ with at most $\mu n$ nonzero coordinates, $A(V)$ outputs an $O(\mu d^{1/3})$ $L_4/L_2$-sparse vector in $V$.

Moreover, this holds even if $v$ is not completely inside $V$ but only satisfies $\norm{\Pi_V v}_2^2 \geq (1-\e)\norm{v}_2^2$, for some absolute constant $\e>0$, where $\Pi_V$ is the projector to $V$.
\end{theorem}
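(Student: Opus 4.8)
The plan is to instantiate the combining-algorithm paradigm of \pref{sec:overview} on the SOS relaxation of the quartic optimization problem $\max\{\,\snorm{x}_4^4 : \snorm{x}_2 = 1,\ \Pi_V x = x\,\}$, run at $\ell = O(1)$ levels; here $\Pi_V x = x$ is a set of linear constraints, and $\snorm{x}_4^4 = \sum_i x_i^4$ is a degree-$4$ objective. The first step is feasibility: rescaling the promised $0/1$ vector $v$ to $\widehat v := v/\norm{v}_2$ yields a genuine feasible point with $\norm{\widehat v}_4^4 = 1/(\mu n)$, so a level-$\ell$ pseudo-distribution maximizing $\pE\norm{x}_4^4$ must satisfy $\pE\norm{x}_4^4 \geq 1/(\mu n) =: \sigma$. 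The goal of the rounding is then to output a unit $w \in V$ with $\norm{w}_4^4 \geq \Omega(\sigma/d^{1/3})$, since $\norm{w}_4^4/\norm{w}_2^4 \geq \Omega(1/(\mu d^{1/3} n))$ is precisely the statement that $w$ is $O(\mu d^{1/3})$ $L_4/L_2$-sparse. Thus the whole content is a rounding routine that is allowed to lose a factor $d^{1/3}$ relative to the relaxation value.

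The heart is a combining algorithm: given the low-degree (pseudo-)moments of a distribution over unit vectors in $V$ with $\pE\norm{x}_4^4 \geq \sigma$, produce a single unit $w \in V$ with $\norm{w}_4^4 \geq \Omega(\sigma/d^{1/3})$, using only manipulations that are captured by low-degree SOS proofs. I would drive this by the leverage scores $\operatorname{lev}_i := \norm{\Pi_V e_i}_2^2$ (which satisfy $\sum_i \operatorname{lev}_i = d$, and, as SOS consequences of the constraints, $x_i^2 \leq \operatorname{lev}_i$ and hence $x_i^4 \leq \operatorname{lev}_i x_i^2$, so $\sigma \leq \sum_i \operatorname{lev}_i p_i$ where $p_i := \pE x_i^2$). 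The case analysis is: \textbf{(i)} if some coordinate $i^\star$ has large leverage, then $\Pi_V e_{i^\star}/\norm{\Pi_V e_{i^\star}}_2$ is already analytically sparse — its fourth power at $i^\star$ is $\operatorname{lev}_{i^\star}^2$ — and we output it (note one coordinate in the support of $\widehat v$ always has leverage $\geq \sigma$, which already handles the regime $\mu n \leq d^{1/3}$); \textbf{(ii)} otherwise all leverage scores are below a threshold $\theta$, and I would extract $w$ from a Gaussian rounding $g \sim \cN(0, M')$, where $M'$ is the second-moment matrix $\pE[xx^\top]$ after a conditioning/reweighting step (multiplying the pseudo-expectation by an appropriate square) designed to concentrate its diagonal; the expected objective $\E_g \norm{g}_4^4 / \E_g \norm{g}_2^4$ is controlled below by $\norm{\operatorname{diag}(M')}_2^2$ and the small-leverage part is controlled above using $x_i^4 \leq \operatorname{lev}_i x_i^2 \leq \theta x_i^2$. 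Balancing the threshold $\theta$ against $\sigma$ and $\sum_i \operatorname{lev}_i^2$ via a \Holder/Cauchy--Schwarz estimate is where the exponent $1/3$ should appear.

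Once the combining algorithm and its analysis are written out, the lifting step is routine in spirit but must be checked line by line: each inequality invoked — Cauchy--Schwarz for pseudo-expectations, the SOS inequalities $x_i^2 \leq \operatorname{lev}_i$, positivity of the conditioned functional, and the fourth-moment identity underlying the Gaussian rounding — holds for a degree-$\ell$ pseudo-distribution, which promotes the combining algorithm to a rounding algorithm for the level-$\ell$ SOS relaxation with $\ell = O(1)$, hence a polynomial-time algorithm. For the ``moreover'' clause, when $v$ is only $(1-\e)$-contained in $V$ I would replace $\widehat v$ by $\widehat{\Pi_V v}$: since $\norm{\Pi_V v}_2^2 \geq (1-\e)\norm{v}_2^2$ and $v$ is $0/1$ on a set $T$ of size $\mu n$, a short computation shows $\Pi_V v$ is at least $\tfrac12$ on a $(1-O(\e))$-fraction of $T$, so $\norm{\Pi_V v}_4^4 = \Omega(\mu n)$ and $\norm{\Pi_V v}_2^2 = \Theta(\mu n)$; thus $\widehat{\Pi_V v}$ is a genuine feasible point with $\norm{\widehat{\Pi_V v}}_4^4 = \Omega(1/(\mu n))$, the relaxation value is still $\Omega(\sigma)$, and the same rounding applies (losing only constants, for $\e$ a small enough absolute constant).

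The step I expect to be the main obstacle is case \textbf{(ii)} of the combining algorithm: extracting a genuinely analytically-sparse vector when the pseudo-distribution — equivalently $M = \pE[xx^\top]$ — is spread over a high-dimensional part of $V$. Plain Gaussian rounding of $M$ only certifies $\norm{w}_4^4 \geq \Omega(\norm{\operatorname{diag}(M)}_2^2)$, which can be as small as $1/n^2$ and is far short of $\sigma/d^{1/3}$; making the conditioning step extract enough concentration while keeping the analysis entirely inside the SOS proof system, and squeezing the dimension dependence down to $d^{1/3}$ rather than $d$, is the delicate part, and is presumably where the argument must use more than one or two levels of the hierarchy.
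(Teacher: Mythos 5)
Your high-level architecture --- coordinate projections as candidate outputs, Gaussian rounding of reweighted second moments, a conditioning step justified by a pseudo-expectation H\"older inequality, and a line-by-line SOS lift --- matches the paper's. But there are two genuine gaps, and together they mean the plan as written does not go through.

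First, you relax the wrong program. You take the SOS relaxation of $\max\{\norm{x}_4^4 : \norm{x}_2=1,\ x\in V\}$ and use the planted vector only to certify the relaxation value. The paper's relaxation is instead over the planted object itself: the pseudo-distribution is over $0/1$-valued functions $f$ of measure $\mu$ with $\norm{\Pi f}_2^2\ge(1-\e)\norm{f}_2^2$, together with auxiliary variables $\topf$ for a truncation of $\Pi f$ (truncation is not a low-degree polynomial, so its defining properties are imposed as constraints rather than derived). This Booleanity is not cosmetic: the quantitative chain passes back and forth between first-power correlations $\iprod{\topf,\topf'}$ and squared correlations $\iprod{(\Pi f)^2,(\Pi f')^2}$ via the pointwise bound $(\Pi f)(\elem)^2\ge\Omega(|\topf(\elem)|)$, which comes from $f$ being nearly $0/1$; and the Gaussian-rounding identity $\E_t\norm{\Pi t}_4^4 = 3\,\E_{f,f'}\iprod{(\Pi f)^2,(\Pi f')^2}$ connects to the rest of the argument only through that bound. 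For a pseudo-distribution over arbitrary unit vectors in $V$, a large $\E\iprod{x,x'}^4$ does not transfer to a large $\E\iprod{x^2,(x')^2}$, so your case (ii) has no way to get started; rounding the plain $2\to 4$-norm relaxation over a general subspace with only a $d^{1/3}$ loss would be a strictly stronger statement than the theorem and is not what the paper proves.

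Second, the derivation of the exponent $1/3$ is exactly the part you defer. The paper obtains it by playing off four failures: (a) failure of rounding by a \emph{standard} Gaussian in $V$ --- a step absent from your plan, since your case (i) only inspects the maximum leverage score --- forces $\E_\elem\norm{\Pi\delta_\elem}_2^4\le O(\gamma)(\rank\Pi)^2$; (b) failure of coordinate projections plus Cauchy--Schwarz forces $(\E_f\norm{\Pi\topf}_4^4)^2\le O(\gamma)\,\E_{f,f'}\iprod{\topf,\topf'}^4\cdot\E_\elem\norm{\Pi\delta_\elem}_2^4$; (c) conditioning on four points $\elem_1,\dots,\elem_4$ via H\"older boosts $\E_{\cD'}\iprod{(\Pi f)^2,(\Pi f')^2}$ to at least the fourth root of its fourth moment under $\cD$; (d) failure of Gaussian rounding on $\cD'$ caps that same quantity by $O(\gamma)\bigparen{\E_{\cD'}\norm{\Pi f}_2^2}^2=O(\gamma\mu^2)$. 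Chaining these and using $\E\norm{\Pi\topf}_4^4\ge\Omega(\mu)$ gives $\mu\le O(\gamma^3\,\rank\Pi\cdot\mu^4)$, hence $\gamma\ge\Omega\bigparen{1/(\mu(\rank\Pi)^{1/3})}$ --- all at a fixed constant degree, not with a growing number of levels as you speculate at the end. Without the Boolean constraints and without step (a), this chain cannot be reproduced from your relaxation.
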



The condition that the vector is $0/1$ can be significantly relaxed, see Remark~\ref{rem:non-boolean}.
Theorem~\ref{thm:ASVP} is also motivated by the \textsf{Small Set Expansion} problem.
The current best known algorithms for \textsf{Small Set Expansion} and \textsf{Unique Games}~\cite{AroraBS10}
reduce these problems into the task of finding a sparse vector in a subspace, and then find this vector using brute force enumeration.
This enumeration is the main bottleneck in improving the algorithms' performance.\footnote{
This is the only step that takes super-polynomial time in \cite{AroraBS10}'s algorithm for \textsf{Small Set Expansion}.
Their algorithm for \textsf{Unique Games} has an additional divide and conquer step that takes subexponential time, but, in our opinion, seems less inherently necessary.
Thus we conjecture that if the sparse-vector finding step could be sped up then it would be possible to speed up the algorithm for both problems.
}
\cite{BarakBHKSZ12} showed that, at least for the \textsf{Small Set Expansion} question, finding an $L_4/L_2$ \emph{analytically sparse} vector would be good enough.
Using their work we obtain the following corollary of Theorem~\ref{thm:ASVP:intro}:

\begin{corollary}[Informally stated] \label{cor:sse-approx}
There is an algorithm that given an $n$-vertex graph $G$ that contains a set $S$ of size $o(n/d^{1/3})$ with expansion at most $\e$, outputs a set $S'$ of measure $\delta=o(1)$ with expansion bounded away from $1$, i.e., $\Phi(S)\le 1-\Omega(1)$,
where $d$ is the dimension of the eigenspace of $G$'s random walk matrix corresponding to eigenvalues larger than $1-O(\e)$.
\end{corollary}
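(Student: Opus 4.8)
The plan is to combine \pref{thm:ASVP:intro} with the reduction from the \textsf{Small Set Expansion} problem to the analytically-sparse-vector problem established in \cite{BarakBHKSZ12}. Let $M$ be the random walk matrix of $G$, let $\e_0>0$ be the absolute constant appearing in the ``moreover'' part of \pref{thm:ASVP:intro}, and set $\eta = \e/\e_0$. Let $V$ be the span of the eigenvectors of $M$ with eigenvalue larger than $1-\eta$; by the hypothesis of the corollary $\dim V = d$, and a basis for $V$ is computable from $G$ in polynomial time. The algorithm will run the algorithm of \pref{thm:ASVP:intro} on $V$ and then round its output.

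First I would check that $V$ meets the hypothesis of \pref{thm:ASVP:intro} with respect to the $0/1$ vector $v = \mathbf{1}_S$. Since $\Phi(S)\le\e$, the Rayleigh quotient of $v$ is large: $\langle v, Mv\rangle \ge (1-\e)\|v\|_2^2$. Writing $v = u + u^{\perp}$ with $u = \Pi_V v$ and $u^{\perp}\perp V$, and using that every eigenvalue of $M$ on $V^{\perp}$ is at most $1-\eta$, we get $(1-\e)\|v\|_2^2 \le \|u\|_2^2 + (1-\eta)\|u^{\perp}\|_2^2 = \|v\|_2^2 - \eta\|u^{\perp}\|_2^2$, hence $\|u^{\perp}\|_2^2 \le (\e/\eta)\|v\|_2^2 = \e_0\|v\|_2^2$. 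Thus $\|\Pi_V v\|_2^2 \ge (1-\e_0)\|v\|_2^2$, and $v$ has $\mu n = |S| = o(n/d^{1/3})$ nonzero coordinates, so the hypothesis of \pref{thm:ASVP:intro} holds.

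Applying \pref{thm:ASVP:intro} to $V$ then yields, in polynomial time, a vector $w\in V$ that is $O(\mu d^{1/3})$-$L_4/L_2$-sparse, i.e., $\E_i w_i^4 \ge \frac{1}{O(\mu d^{1/3})}(\E_i w_i^2)^2$. Since $|S|=o(n/d^{1/3})$ means $\mu = o(1/d^{1/3})$, we have $\mu\,d^{1/3} = o(1)$, so $w$ is $\delta$-$L_4/L_2$-sparse for some $\delta=o(1)$; in particular $\E_i w_i^4 \ge \omega(1)\cdot(\E_i w_i^2)^2$. Finally, feed $w$ to the rounding step of \cite{BarakBHKSZ12}: given a vector in the near-top eigenspace $V$ with such an $L_4/L_2$ ratio, thresholding $w$ at a level sampled from the distribution placing mass proportional to $w_i^2$ on coordinate $i$ produces a set $S'$ of measure $O(\delta)=o(1)$ with $\Phi(S')\le 1-\Omega(1)$; the last inequality uses that $w$ lies, up to an $o(1)$ fraction of its $\ell_2$ mass, in the span of eigenvectors of eigenvalue close to $1$ and is analytically concentrated on $S'$. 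Stringing the three steps together gives the algorithm claimed in the corollary.

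The step requiring the most care is the last one: one must extract from \cite{BarakBHKSZ12} the precise quantitative statement that an $o(1)$-$L_4/L_2$-sparse vector in the $(1-\eta)$-eigenspace rounds to a set of measure $o(1)$ whose expansion is bounded away from $1$ by a \emph{universal} constant---not merely by $1-o(1)$---and to verify that the two sources of approximation error (the vector $v$ being only $(1-\e_0)$-inside $V$, and $w$ being only $(1-o(1))$-inside the top eigenspace as produced by the SOS-based algorithm) can be absorbed into that bound when $\eta$ is a sufficiently small constant multiple of $\e$. The first two steps are routine: Step~1 is the standard spectral-gap/Rayleigh-quotient computation, and Step~2 is a direct invocation of \pref{thm:ASVP:intro} together with the arithmetic $\mu\,d^{1/3}=o(1)$.
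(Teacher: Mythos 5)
Your proposal is correct and follows essentially the same route as the paper: the Rayleigh-quotient argument showing $\|\Pi_V \mathbf{1}_S\|_2^2 \ge (1-\e_0)\|\mathbf{1}_S\|_2^2$ for $\lambda = 1-O(\e)$, a direct invocation of \pref{thm:ASVP:intro} to get an $o(1)$-$L_4/L_2$-sparse $w\in V$, and then the converse direction of the hypercontractivity/small-set-expansion equivalence of \cite{BarakBHKSZ12} (reproduced in \pref{app:sse-vs-norm}, whose ``moreover'' clause supplies exactly the algorithmic, quantitatively universal rounding step you flag as needing care). The only superfluous worry is your second ``source of approximation error'': \pref{thm:ASVP:intro} outputs a vector lying exactly in $V$, so no such slack needs to be absorbed.
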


The derivation and meaning of this result is discussed in Section~\ref{sec:SSE}.
We note that this is the first result that gives an approximation of this type to the small set expansion in terms of the dimension of the top eigenspace, as opposed to an approximation that is polynomial in the number of vertices.

\subsection{Related work} \label{sec:relwork}

Our paper follows the work of \cite{BarakBHKSZ12}, that used the language of pseudoexpectation to argue that the SOS hierarchy can solve specific interesting instances of \textsf{Unique Games},
and perhaps more importantly, how it is often possible to almost mechanically ``lift'' arguments about actual distributions to the more general setting of pseudodistribution.
In this work we show how the same general approach be used to obtain positive results for general instances.

The fact that LP/SDP solutions can be viewed as expectations of distributions is well known,
and several rounding algorithms can be considered as trying to ``reverse engineer'' a relaxation solution to get a good distribution over actual solutions.

Techniques such as randomized rounding, the hyperplane rounding of~\cite{GoemansW95}, and the rounding for TSP~\cite{GharanSS11,AnKS12} can all be viewed in this way.
One way to summarize the conceptual difference between our techniques and those approaches is that these previous algorithms often considered the relaxation solution as giving moments of an \emph{actual} distribution on \emph{``fake''} solutions.
For example, in \cite{GoemansW95}'s \maxcut algorithm, where actual solutions are modeled as vectors in $\{\pm 1\}^n$,
the SDP solution is treated as the moment matrix of a Gaussian distribution over real vectors that are not necessarily $\pm 1$-valued.
Similarly in the \tsp setting one often considers the LP solution to yield moments of a distribution over spanning trees that are not necessarily \tsp tours.
In contrast, in our setting we view the solution as providing moments of a \emph{``fake''} distribution on  \emph{actual} solutions.

Treating solutions explicitly as ``fake distributions'' is prevalent in the literature on \emph{negative results} (i.e., integrality gaps) for LP/SDP hierarchies.
For hierarchies weaker than SOS, the notion of ``fake'' is different, and means that there is a collection of local distributions,
one for every small subset of the  variables, that are consistent with one another but do not necessarily correspond to any global distribution.
Fake distributions are also used in some positive results for hierarchies, such as~\cite{BarakRS11,GuruswamiS11},
but we make this more explicit, and, crucially, make much heavier use of the tools afforded by the Sum of Squares relaxation.

The notion of a ``combining algorithm'' is related to the notion of \emph{polymorphisms}~\cite{BulatovJK05} in the study of constraint satisfaction problems.
A polymorphism is a way to combine a number of satisfying assignments of a CSP into a different satisfying assignments, and some relations between
polymorphism, their generalization to approximation problems, rounding SDP's are known (e.g., see the talk~\cite{Raghavendra10}).
The main difference is polymorphisms operate on each bit of the assignment independently, while we consider here combining algorithms that can be very global.

In a follow up (yet unpublished) work, we used the techniques of this paper to obtain improved results for the \emph{sparse dictionary learning} problem,  recovering a set of vectors $x_1,\ldots,x_m \in \R^n$ from random samples of $\mu$-sparse linear combinations of them for any $\mu = o(1)$, improving upon previous results that required $\mu \ll 1/\sqrt{n}$~\cite{SpielmanWW12,AroraGM13,AgarwalA0NT13}.


\subsection{Organization of this paper}

In Section~\ref{sec:overview} we give a high level overview of our general approach, as well as proof sketches for (special cases of) our  main results.
Section~\ref{sec:non-neg} contains the proof of Theorem~\ref{thm:nonneg:intro}--- a quasipolynomial time algorithm to optimize polynomials with  nonnegative coefficients over the sphere.
Section~\ref{sec:asvp} contains the proof of Theorem~\ref{thm:ASVP:intro}--- a polynomial time algorithm for an $O(d^{1/3})$-approximation of the ``analytical sparsest vector in a subspace'' problem.
In Section~\ref{sec:planted} we show how to use the notion of analytical sparsity to solve the question of finding a ``planted'' sparse vector in a random subspace.
Section~\ref{sec:SSE} contains the proofs of Corollaries~\ref{cor:cayley-sse} and~\ref{cor:sse-approx} of our results to the small set expansion problem.
Appendix~\ref{app:toolkit} contains certain technical lemmas showing that pseudoexpectation operators obey certain inequalities that are true for actual expectations.
Appendix~\ref{sec:locc} contains a short proof (written in classical notation, and specialized to the real symmetric setting) of \cite{BrandaoCY11,BrandaoH13}'s result that the SOS hierarchy yields a good
approximation to the acceptance probability of QMA(2) verifiers / measurement operators that have bounded one-way LOCC norm. 
Appendix~\ref{sec:lowrank} shows a simpler algorithm for the case that the verifier satisfies the stronger condition of a bounded $L2$ (Frobenius) norm.
For the sake of completeness, Appendix~\ref{app:sse-vs-norm} reproduces the proof from \cite{BarakBHKSZ12} of the relation between hypercontractive norms and small set expansion.
Our papers raises many more questions than it answers, and some discussion of those appears in Section~\ref{sec:discussion}.

\subsection{Notation}

\begin{small} 
\paragraph{Norms and inner products} We will use linear subspaces of the form $V=R^\univ$ where $\univ$ is a finite set with an associated measure $\mu:\univ\rightarrow [0,\infty]$.
The $p$-norm of a vector $v\in V$ is defined as $\norm{v}_p = \left( \sum_{\elem\in\univ} \mu(\elem) |v_\elem|^p \right)^{1/p}$.
Similarly, the inner product of $v,w \in V$ is defined as $\iprod{u,v} = \sum_{\elem \in \univ}  \mu(\elem) u_{\elem}v_{\elem}$.
We will only use two measures in this work: the \emph{counting measure}, where $\mu(\elem)=1$ for every $\elem \in \univ$, and the \emph{uniform measure}, where $\mu(\elem)=1/|\univ|$ for all $\elem\in\univ$.
(The norms corresponding to this measure are often known as the \emph{expectation} norms.)

We will  use vector notation (i.e., letters such as $u,v$, and indexing of the form $u_i$) for elements of subspaces with the counting measure, and function notation (i.e., letters such as $f,g$ and indexing of the form $f(x)$) for elements of subspaces with the uniform measure. The dot product notation $u\cdot v$ will be used exclusively for the inner product with the counting measure.

\paragraph{Pseudoexpectations} We use the notion of \emph{pseudoexpectations} from~\cite{BarakBHKSZ12}.
A \emph{level $\ell$ pseudoexpectation function} ($\ell$-\pef) $\pE_{\cX}$ is an operator mapping a polynomial $P$ of degree at most $\ell$ into a number denoted by $\pE_{x\sim\cX} P(x)$ and satisfying the linearity, normalization, and positivity conditions as stated in Section~\ref{sec:sos:intro}.
We sometimes refer to $\cX$ as a \emph{level $\ell$ pseudodistribution} ($\ell$-\pd ) by which we mean that there exists an associated pseudoexpectation operator.\footnote{
In the paper~\cite{BarakBHKSZ12} we used the name \emph{level $\ell$ fictitious random variable} for $\cX$, but we think the name pseudodistribution is better as it is more analogous to the name pseudoexpectation. The name ``pseudo random variable'' would of course be much too confusing.}
If $P,Q$ are polynomials of degree at most $\ell/2$, and $\pE_{\cX}$ is an $\ell$-\pef,
we say that $\pE_{\cX}$ is \emph{consistent} with the constraint $P(x) \equiv 0$ if it satisfies $\pE_{x\sim\cX} P(x)^2 = 0$.
We say that it is consistent with the constraint $Q(x) \geq 0$, if it consistent with the constraint $Q(x) - S(x) \equiv 0$ for some polynomial $S$ of degree $\leq \ell/2$ which is a \emph{sum of squares}. 

(In the context of optimization, to enforce the inequality constraint $Q(x) \geq 0$, it is always possible to add an auxiliary variable $y$ and then enforce the equality constraint $Q(x)-y^2 \equiv 0$.)
Appendix~\ref{app:toolkit} contains several useful facts about pseudoexpectations.

\end{small}

\section{Overview of our techniques} \label{sec:overview}


Traditionally to design a mathematical-programming based approximation algorithm for some optimization problem $O$, one first decides what the relaxation is---
i.e., whether it is a linear program, semidefinite program, or some other convex program, and what constraints to put in.
Then, to demonstrate that the value of the program is not too far from the actual value,
one designs a \emph{rounding algorithm} that maps a solution of the convex program into a solution of the original problem of approximately the same value.
Our approach is conceptually different--- we design the rounding algorithm first, analyze it, and only then come up with the relaxation.

Initially, this does not seem to make much sense--- how can you design an algorithm to round solutions of a relaxation when you don't know what the relaxation is?
We do this by considering an idealized version of a rounding algorithm which we call a \emph{combining algorithm}.
Below we discuss this in more detail but roughly speaking, a combining algorithm maps a distribution over \emph{actual solutions} of $O$ into a single solution (that may or may not be part of the support of this distribution).
This is a potentially much easier task than rounding relaxation solutions, and every rounding algorithm yields a combining algorithm.
In the other direction, every combining algorithm yields a rounding algorithm for \emph{some} convex programming relaxation, but in general that relaxation could be of exponential size.
Nevertheless, we show that in several interesting cases, it is possible to transform a combining algorithm into a rounding algorithm for a not too large relaxation that we can efficiently optimize over,
thus obtaining a feasible approximation algorithm.
The main tool we use for that is the \emph{Sum of Squares} proof system, which allows to lift certain arguments from the realm of combining algorithms to the realm of rounding algorithms.

We now explain more precisely the general approach, and then give an overview of how we use this approach for our two applications---
finding ``analytically sparse'' vectors in subspaces, and optimizing polynomials with nonnegative coefficients over the sphere.

\medskip

Consider a general optimization problem of minimizing some objective function in some set $S$, such as the $n$ dimensional Boolean hypercube or the unit sphere.
A \emph{convex relaxation} for this problem consists of an embedding that maps elements in $S$ into elements in some convex domain,
and a suitable way to generalize the objective function to  a convex function on this domain.
For example, in linear programming relaxations we typically embed $\{0,1\}^n$ into the set $[0,1]^n$,
while in semidefinite programming relaxations we might embed $\{0,1\}^n$ into the set of $n \times n$ positive semidefinite matrices using the map $x \mapsto X$ where $X_{i,j}=x_ix_k$.
Given this embedding, we can use convex programming to find the element in the convex domain that maximizes the objective,
and then use a \emph{rounding algorithm} to map this element back into the domain $S$ in a way that approximately preserves the objective value.

A \emph{combining algorithm} $C$ takes as input a \emph{distribution} $\cX$ over solutions in $S$ and maps it into a single element $C(\cX)$ of $S$,
such that the objective value of $C(\cX)$ is approximately close to the expected objective value of a random element in $\cX$.
Every rounding algorithm $R$  yields a combining algorithm $C$.
The reason is that if there is some embedding $f$ mapping elements in $S$ into some convex domain $T$, then for every distribution $\cX$ over $S$,
we can define  $y_{\cX}$ to be $\E_{x\in\cX} f(x)$.
By convexity, $y_{\cX}$ will be in $T$ and its objective value will be at most the average objective value of an element in $\cX$.
Thus if we define $C(\cX)$ to output $R(y_{\cX})$ then $C$ will be a combining algorithm with approximation guarantees at least as good as $R$'s.

In the other direction, because the set of distributions over $S$ is convex and can be optimized over by an $O(|S|)$-sized linear program, every combining algorithm can be viewed as a rounding algorithm for this program.
However, $|S|$ is typically exponential in the bit description of the input, and hence this is not a very useful program.
In general, we cannot improve upon this, because there is always a trivially lossless combining algorithm that ``combines'' a distribution $\cX$ into a single solution $x$ of the same expected value
by simply sampling $x$ from $\cX$ at random.
Thus for problems where getting an exact value is exponentially hard, this combining algorithm cannot be turned into a rounding algorithm for a subexponential-sized efficiently-optimizable convex program.
However it turns out that at least in some cases, \emph{nontrivial} combining algorithms can be turned into a rounding algorithm for an \emph{efficient} convex program.
A nontrivial combining algorithm $C$ has the form $C(\cX) = C'(M(\cX))$  where $C'$ is an efficient (say polynomial or quasipolynomial time) algorithm
and $M(\cX)$ is a short (say polynomial or quasipolynomial size)  \emph{digest} of the distribution $\cX$.
In all the cases we consider, $M(\cX)$ will consist of all the moments up to some level $\ell$ of the random variable $\cX$, or some simple functions of it.
That is, typically $M(\cX)$ is a vector in $\R^{m^{\ell}}$ such that for every $i_1,\ldots,i_{\ell} \in [m]$, $M_{i_1,\ldots,i_{\ell}} = \E_{x\sim\cX} x_{i_1}\cdots x_{i_{\ell}}$.
We do not have a general theorem showing that any nontrivial combining algorithm can be transformed into a rounding algorithm for an efficient relaxation.
However, we do have a fairly general ``recipe'' to use  the \emph{analysis} of nontrivial combining algorithms to transform them into rounding algorithms.
The key insight is that many of the tools used in such analyses, such as the Cauchy--Schwarz and \Holder\  inequalities, and other properties of distributions,
fall under the ``Sum of Squares'' proof framework, and hence can be shown to hold even when the algorithm is applied not to actual moments but to so-called ``pseudoexpectations''
that arise from the SOS semidefinite programming hierarchy.


\medskip

We now turn to giving a high level overview of our  results.
For the sake of presentations, we focus on certain special cases of these two applications, and even for these cases omit many of the proof details and only provide rough sketches of the proofs.
The full details can be found in Sections~\ref{sec:planted}, \ref{sec:asvp} and~\ref{sec:non-neg}.

\subsection{Finding a planted sparse vector in a random low-dimensional subspace}

We consider the following natural problem, which was also studied by Demanet and Hand~\cite{DemanetH13}.
Let $f_0 \in \R^\univ$  be a sparse function over some universe $\univ$ of size $n$.
That is, $f_0$ is supported on at most $\mu n$ coordinates for some $\mu = o(1)$.
Let $V$ be the subspace spanned by $f_0$ and $d$ random (say Gaussian) functions $f_1,\ldots,f_d \in \R^\univ$.
Can we recover $f_0$ from any basis for $V$?

Demanet and Hand showed that if $\mu$ is very small, specifically $\mu \ll 1/\sqrt{d}$, then $f_0$ would be the most $L_{\infty}/L_1$-sparse function in $V$, and hence (as mentioned above) can be recovered
efficiently by running $n$ linear programs.
The SOS framework yields a natural and easy to describe algorithm for recovering $f_0$ as long as $\mu$ is a sufficiently small constant and the dimension $d$ is at most $O(\sqrt{n})$.
The algorithm uses the SOS program for finding the most $L_4/L_2$-sparse function in $V$, which, as mentioned above, is simply the polynomial optimization problem of maximizing $\norm{f}_4^4$ over $f$
in the intersection of $V$ and  the unit Euclidean sphere.

Since $f_0$ itself is in particular $\mu$ $L_4/L_2$-sparse , the optimum for the program is at least $\mu$.
Thus a combining algorithm would get as input a distribution $\cD$ over functions $f\in V$ satisfying $\norm{f}_2=1$ and $\norm{f}_4^4 \geq 1/\mu$, and need to output a vector closely correlated with $f_0$.\footnote{
Such a closely correlated vector can be corrected to output $f_0$ exactly, see Section~\ref{sec:planted}.}
(We use here the \emph{expectation} norms, namely $\norm{f}_p^p = \E_\elem |f(\elem)|^p$.)
For simplicity, assume that the $f_i$'s are orthogonal to $f_0$ (they are nearly orthogonal, and so everything we say below will still hold up to a sufficiently good approximation, see Section~\ref{sec:planted}).
In this case, we can write every $f$ in the support of $\cD$ as  $f = \iprod{f_0,f} f_0 + f'$ where  $f' \in V' = \Span\{ f_1,\ldots, f_d \}$.
It is not hard to show using standard concentration of measure results (see e.g., \cite[Theorem 7.1]{BarakBHKSZ12}) that if $d=O(\sqrt{n})$ then every $f'\in V'$ satisfies
\begin{equation}
\norm{f'}_4 \leq C \norm{f'}_2 \mcom \label{eq:2to4random}
\end{equation}
for some constant $C$.
Therefore using triangle inequality, and using the fact that $\norm{f'}_2 \leq \norm{f}_2=1$, it must hold that
\begin{equation}
\mu^{-1/4} \leq \norm{f}_4 \leq \iprod{f,f_0}\mu^{-1/4} + C \label{eq:planted-random1}
\end{equation}
or
\begin{equation}
\iprod{f,f_0} \geq 1 - C\mu^{1/4} = 1 - o(1)  \label{eq:planted-random2}
\end{equation}
for $\mu = o(1)$.

In particular this implies that if we apply a singular value decomposition (SVD) to the second moment matrix $D$ of $\cD$ (i.e., $D = \E_{f\in\cD} f^{\otimes 2}$) then
the top eigenvector will have  $1-o(1)$ correlation with $f_0$, and hence we can simply output it as our solution.

To make this combining algorithm into a rounding algorithm we use the result of \cite{BarakBHKSZ12} that showed that (\ref{eq:2to4random}) can actually be proven
via a sum of squares argument. Namely they showed that there is a degree $4$ sum of squares polynomial $S$ such that
\begin{equation}
\norm{\Pi' f}_4^4 + S(f) = C^4\norm{f}_2^4 \mper \label{eq:sos-2to4random}
\end{equation}

(\ref{eq:sos-2to4random}) implies that even if $\cD$ is merely a \emph{pseudodistribution} then it must satisfy (\ref{eq:2to4random}).
(When the latter is raised to the fourth power to make it a polynomial inequality.)
We can then essentially follow the argument, proving a version of (\ref{eq:planted-random1}) raised to the $4^\text{th}$ power by appealing to the
fact that pseudodistributions satisfy \Holder 's inequality, (Corollary~\ref{cor:pef-holder-inside}) and hence deriving that $\cD$ will satisfy (\ref{eq:planted-random2}), with possibly slightly worse constants,
even when it is only a pseudodistribution.

In Section~\ref{sec:planted}, we make this precise and extend the argument to obtain nontrivial (but weaker) guarantees when $d\geq \sqrt{n}$.  We then show how to use an additional correction step to recover the original function $f_0$ up to arbitrary accuracy, thus boosting our approximation of $f_0$ into an essentially exact one.

\subsection{Finding ``analytically sparse'' vectors in general subspaces}
\label{sec:overview:asvp}

We now outline the ideas behind the proof of Theorem~\ref{thm:ASVP}--- finding analytically sparse vectors in \emph{general} (as opposed to random) subspaces.
This is a much more challenging setting than random subspaces, and indeed our algorithm and its analysis is more complicated (though still only uses a constant number of SOS levels),
and at the moment, the approximation guarantee we can prove is quantitatively weaker.
This is the most technically involved result in this paper,
and so the reader may want to skip ahead to Section~\ref{sec:overview:nonnegative} where we give an overview of the simpler result of optimizing over polynomials with nonnegative coefficients.

We consider the special case of Theorem~\ref{thm:ASVP} where we try to distinguish between a YES case where there is a $0/1$ valued $o(d^{-1/3})$-sparse function that is completely contained in the input subspace,
and a NO case where every function in the subspace has its four norm bounded by a constant times its two norm.
That is, we suppose that we are given some subspace $V\subseteq \R^\univ$ of dimension $d$ and a distribution $\cD$ over functions $f:\univ\rightarrow \{0,1\}$ in $V$ such that
$\Pr_{\elem\in\univ }[ f(\elem)=1] = \mu$ for every $f$ in the support of $\cD$, and $\mu = o(d^{-1/3})$.
The goal of our combining algorithm to output some function $g\in V$ such that $\norm{g}_4^4 = \E_\elem g(\elem)^4 \gg (\E_\elem g(\elem)^2 )^2 = \norm{g}_2^4$.
(Once again, we use the \emph{expectation} inner product and norms, with uniform measure over $\univ$.)

Since the $f$'s correspond to sets of measure $\mu$, we would expect the inner product $\iprod{f,f'}$ of a typical pair $f,f'$ (which equals the measure of the intersection of the corresponding sets) to be roughly $\mu^2$.
Indeed, one can show that if the average inner product $\iprod{f,f'}$ is $\omega(\mu^2)$ then it's easy to find such a desired function $g$.
Intuitively, this is because in this case the distribution $\cD$ of sets does not have an equal chance to contain all the elements in $\univ$, but rather there is some set $I$ of  $o(|\univ|)$ coordinates which is favored by $\cD$.
Roughly speaking, that would mean that a random linear combination $g$ of these functions would have most of its mass concentrated inside this small set $I$, and hence satisfy $\norm{g}_4 \gg \norm{g}_2$.
But it turns out that letting $g$ be a random gaussian function matching the first two moments of $\cD$ is equivalent to taking such a random linear combination, and so our combining algorithm can obtain this $g$ using moment information alone.

Our combining algorithm will also try all $n$ \emph{coordinate projection} functions.
That is, let $\delta_\elem$ be the function such that $\delta_\elem(\elem')$ equals $n=|\univ|$ if $\elem=\elem'$ and equals $0$ otherwise, (and hence under our expectation inner product $f(\elem) = \iprod{f,\delta_\elem}$).
The algorithms will try all functions of the form $\Pi \delta_u$ where $\Pi$ is the projector to the subspace $V$.
Fairly straightforward calculations show that $2$-norm squared of such a function is expected to be $(d/n)\norm{\delta_\elem}_2^2 = d$, and it turns out in our setting we can assume that the norm is well concentrated around this expectation (or else we'd be able to find a good solution in some other way).
Thus, if coordinate projection fails then it must hold that
\begin{equation}
  O(d^2) = O(\E_\elem \norm{\Pi \delta_\elem}_2^4) \geq \E_\elem \norm{\Pi \delta_\elem}_4^4 = \E_{\elem,\elem'} \iprod{\Pi \delta_\elem,\delta_{\elem'}}^4 \label{eq:asvp:1} \mper
\end{equation}
It turns out that (\ref{eq:asvp:1}) implies some nontrivial constraints on the distribution $\cD$.
Specifically we know that
\[
\mu = \E_{f \sim \cD} \norm{f}_4^4 = \E_{f\sim \cD, \elem\in \univ} \iprod{f,\delta_\elem}^4 \mper
\]
But since $f = \Pi f$ and $\Pi$ is symmetric, the RHS is equal to
\[
\E_{f\sim \cD, \elem\in \univ} \iprod{f,\Pi \delta_\elem}^4 = \iprod{ \E_{f \sim \cD} f^{\otimes 4} , \E_{\elem\in \univ} (\Pi \delta_\elem)^{\otimes 4}} \leq \norm{\E_{f \sim \cD} f^{\otimes 4}}_2 \norm{\E_{\elem\in\univ} (\Pi \delta_\elem)^{\otimes 4}}_2  \mcom
\]
where the last inequality uses Cauchy--Schwarz.
If we square this inequality we get that
\[
\mu^2 \leq \iprod{\E_{f \sim \cD} f^{\otimes 4},\E_{f \sim \cD} f^{\otimes 4}}\iprod{\E_{\elem\in\univ]} (\Pi \delta_\elem)^{\otimes 4},\E_{\elem\in \univ} (\Pi \delta_\elem)^{\otimes 4}} = \left(\E_{f,f' \sim \cD} \iprod{f,f'}^4\right)\left(\E_{\elem,\elem'} \iprod{\Pi \delta_\elem,\Pi \delta_{\elem'}}^4 \right) \mper
\]
But since is a projector satisfying $\Pi = \Pi^2$, we can use (\ref{eq:asvp:1}) and obtain that
\[
\Omega(\mu^2/d^2) \leq \E_{f,f' \sim \cD} \iprod{f,f'}^4 \mper
\]
Since $d = o(\mu^{-3})$ this means that
\begin{equation}
\E_{f,f' \sim \cD} \iprod{f,f'}^4 \gg \mu^8  \mper \label{eq:asvp:2}
\end{equation}

Equation~(\ref{eq:asvp:2}), contrasted with the fact that $\E_{f,f' \sim \cD} \iprod{f,f'} = O(\mu^2)$,
means that the inner product of two random functions in $\cD$ is somewhat ``surprisingly unconcentrated'', which seems to be a nontrivial piece of information about $\cD$.\footnote{
Interestingly, this part of the argument does not require $\mu$ to be $o(d^{-1/3})$, and some analogous ``non-concentration'' property of $\cD$ can be shown to hold for a hard to round $\cD$ for any $\mu=o(1)$.
However, we currently know how to take advantage of this property to obtain a combining algorithm only in the case that $\mu \ll d^{-1/3}$.}
Indeed, because the $f$'s are nonnegative functions, if we pick a random $u$ and consider the distribution $\cD_u$ where the probability of every function is reweighed proportionally to $f(u)$, then intuitively that should increase the probability of pairs with large inner products.
Indeed, as we show in Lemma~\ref{lem:pseudo-holder}, one can use \Holder 's inequality to prove that there exist $\elem_1,\ldots,\elem_4$ such that
under the distribution $\cD'$ where every element $f$ is reweighed proportionally to $f(\elem_1)\cdots f(\elem_4)$,
it holds that
\begin{equation}
  \E_{f,f' \sim \cD'} \iprod{f,f'} \geq \left( \E_{f,f' \sim \cD} \iprod{f,f'}^4 \right)^{1/4} \mper  \label{eq:asvp:3}
\end{equation}
(\ref{eq:asvp:3}) and (\ref{eq:asvp:2}) together imply that $\cE_{f,f'\sim \cD'} \iprod{f,f'} \gg \mu^2$, which, as mentioned above, means that we can find a function  $g$ satisfying $\norm{g}_4 \gg \norm{g}_2$ by taking a gaussian function matching the first two moments of $\cD'$.

Once again, this combining algorithm can be turned into an algorithm that uses $O(1)$ levels of the SOS hierarchy.
The main technical obstacle (which is still not very hard) is to prove another appropriate generalization of \Holder 's inequality for pseudoexpectations (see Lemma~\ref{lem:pseudo-holder}).
Generalizing to the setting that in the YES case the function is only approximately in the vector space is a bit more cumbersome.
We need to consider apart from $f$ the function $\topf$ that is obtained by first projecting $f$ to the subspace and then ``truncating'' it by rounding each coordinate where $f$ is too small to zero.
Because this truncation operation is not a low degree polynomial,  we include the variables corresponding to $\topf$ as part of the relaxation,
and so our pseudoexpectation operator also contains the moments of these functions as well.

\subsection{Optimizing polynomials with nonnegative coefficients}
\label{sec:overview:nonnegative}

We now consider the task of maximizing a polynomial with nonnegative coefficients over the sphere, namely proving Theorem~\ref{thm:nonneg}.
We consider the special case of Theorem~\ref{thm:nonneg}  where the polynomial is of degree $4$.
That is, we are given a parameter  $\e>0$ and an $n^2\times n^2$ nonnegative matrix $M$ with spectral norm at most $1$ and want to find an $\e$ additive approximation to the maximum of
\begin{equation}
\sum_{i,j,k,l} M_{i,j,k,l} x_ix_jx_kx_l \mcom \label{eq:nonneg:obj}
\end{equation}
over all $x\in R^n$ with $\norm{x}=1$, where in this section we let $\norm{x}$ be the standard (counting) Euclidean norm $\norm{x} = \sqrt{\sum_i x_i^2}$.

One can get some intuition for this problem by considering the case where $M$ is $0/1$ valued and $x$ is $0/k^{-1/2}$ valued for some $k$.
In this case one can think of $M$ is a $4$-uniform hypergraph on $n$ vertices and $x$ as a subset $S\subseteq [n]$ that maximizes the number of edges inside $S$ divided by $|S|^2$,
and so this problem is related to some type of a densest subgraph problem on a hypergraph.\footnote{
The condition of maximizing $|E(S)|/|S|^2$ is related to the \emph{log density} condition used by~\cite{BhaskaraCCFV10} in their work on the densest subgraph problem,
since, assuming that the set $[n]$ of all vertices is not the best solution, the set $S$ satisfies that $\log_{|S|}|E(S)| > \log_n |E|$.
However, we do not know how to use their algorithm to solve this problem.
Beyond the fact that we consider the hypergraph setting, their algorithm manages to find a set of nontrivial density under the assumption that there is a ``log dense'' subset,
but it is not guaranteed to find the ``log dense'' subset itself.
}

Let's assume that we are given a distribution $\cX$ over unit vectors that achieve some value $\sval$ in (\ref{eq:nonneg:obj}). This is a non convex problem, and so generally the average of these vectors would not
be a good solution. However, it turns out that the vector $x^*$ defined such that $x^*_i = \sqrt{\E_{x\sim \cX} x_i^2}$ can sometimes be a good solution for this problem. Specifically, we will show that if it fails to give a solution of value at least $c-\e$, then we can find a new distribution $\cX'$ obtained by reweighing elements $\cX$ that is in some sense ``simpler'' than $\cX$.
More precisely, we will define some nonnegative potential function $\Psi$ such that $\Psi(\cX) \leq \log n$ for all $\cX$ and $\Psi(\cX') \leq \Psi(\cX) - \Omega(\e^2)$ under the above conditions.
This will show that we will need to use this reweighing step at most logarithmically many times.

Indeed, suppose that
\begin{equation}
  \sum_{i,j,k,l} M_{i,j,k,l} x^*_ix^*_jx^*_kx^*_l  = ({x^*}^{\otimes 2})^T M{x^*}^{\otimes 2} \leq \sval - \e \mper \label{eq:nonneg:sketch:1}
\end{equation}

We claim that in contrast
\begin{equation}
y^T My \geq \sval \mcom \label{eq:nonneg:sketch:2}
\end{equation}
where $y$ is the $n^2$-dimensional vector defined by $y_{i,j} = \sqrt{\E_{x\sim \cX} x_i^2x_j^2}$.
Indeed, (\ref{eq:nonneg:sketch:2}) follows from the non-negativity of $M$ and the Cauchy--Schwarz inequality since
\[
\sval = \sum_{i,j,k,l} M_{i,j,k,l} \E_{x\in\cX} x_ix_jx_kx_l \leq \sum_{i,j,k,l} M_{i,j,k,l} \sqrt{\E_{x\sim \cX} x_i^2x_j^2}\sqrt{\E_{x\sim \cX} x_k^2x_l^2} = y^T My
\]

Note that since $\cX$ is a distribution over unit vectors, both $x^*$ and $y$ are unit vectors, and hence (\ref{eq:nonneg:sketch:1}) and (\ref{eq:nonneg:sketch:2})  together with the fact that $M$ has bounded spectral norm imply that
\begin{multline}
\e \leq y^T My - ({x^*}^{\otimes 2})^T M{x^*}^{\otimes 2} = (y-{x^*}^{\otimes 2})^T M(y+{x^*}^{\otimes 2}) \\
\leq \norm{y-{x^*}^{\otimes 2}}\cdot \norm{y+{x^*}^{\otimes 2}} \leq 2\norm{y-{x^*}^{\otimes 2}} \mper \label{eq:nonneg:sketch:3}
\end{multline}

However, it turns out that $\norm{y-{x^*}^{\otimes 2}}$ equals $\sqrt{2}$ times the \emph{Hellinger distance} of the two distributions $D,D^*$ over $[n]\times [n]$ defined as follows:
$\Pr[D=(i,j)] = \E x_i^2x_j^2$ while $\Pr[ D^* = (i,j) ] = (\E x_i^2)(\E x_j^2)$ (see Section~\ref{sec:non-neg}).
At this point we can use standard information theoretic inequalities to derive from (\ref{eq:nonneg:sketch:3}) that there is $\Omega(\e^2)$  \emph{mutual information} between the two parts of $D$.
Another way to say this is that the entropy of the second part of $\cD$ drops on average by $\Omega(\e^2)$ if we condition on the value of the first part.
To say the same thing mathematically, if we define $D(\cX)$ to be the distribution $(\E_{x\sim \cX} x_1^2 , \ldots, \E_{x\sim \cX} x_n^2)$ over $[n]$ and $D(\cX | i)$ to be the distribution $\tfrac{1}{\E_{x\sim\cX} x_i^2}(\E_{x\sim \cX} x_i^2x_1^2 , \ldots, \E_{x\sim \cX} x_i^2x_n^2)$ then
\[
\E_{i\sim D(x)} H(\cX | i) \leq H(\cX) - \Omega(\e^2) \mper
\]
But one can verify that $D(\cX | i) = D(\cX_i)$ where $\cX_i$ is the distribution over $x$'s such that $\Pr[\cX_i = x] = x_i^2\Pr[\cX = x ] / \E_{\cX} x_i^2$, which means that if we define $\Psi(\cX) = H(D(\cX))$ then we get that
\[
\E_{i\sim D(x)} \Psi(\cX_i) \leq \Psi(\cX) - \Omega(\e^2)
\]
and hence $\Psi$ is exactly the potential function we were looking for.

To summarize our combining algorithm will do the following for $t = O(\log n/\e^2)$ steps:
given the first moments of the distribution $\cX$, define the vector $x^*$ as above and test if it yields an objective value of at least $\sval-\e$.
Otherwise, pick $i$ with probability $\E_{x\sim \cX} x_i^2$ and move to the distribution $\cX_i$.
Note that given $d$ level moments for $\cX$, we can compute the $d-1$ level moments of $\cX_i$, and hence the whole algorithm can be carried out with only access to level $O(\log n / \e^2)$ moments  of $\cX$.
We then see that the only properties of the moments used in this proof are linearity, the fact that $\sum x_i^2$ can always be replaced with $1$ in  any expression, and the Cauchy--Schwarz  inequality used for obtaining (\ref{eq:nonneg:sketch:2}).
It turns out that all these properties hold even if we are not given access to the moments of a true distribution $\cX$ but are only given access to a level $d$ \emph{pseudoexpectation} operator $\pE$
for $d$ equalling some constant times  $\log n / \e^2$.
Such pseudoexpectations operators can be optimized over in $d$ levels of the SOS hierarchy, and hence this combining algorithm is in fact a rounding algorithm.

\section{Approximation for nonnegative tensor maximization}
\label{sec:non-neg}

In this section we prove Theorem~\ref{thm:nonneg:intro}, giving an approximation algorithm for the maximum over the sphere of a polynomial with nonnegative coefficients.
We will work in the space $R^n$ endowed with the \emph{counting} measure for norms and inner products.
We will define the \emph{spectral norm} of a degree-$2t$ homogeneous polynomial $M$ in $x=x(x_1,\ldots,x_n)$, denoted by $\spectralnorm{M}$, to be the minimum of the spectral norm of  $Q$ taken over all
quadratic forms $Q$ over $(\R^n)^{\ot t}$ such that $Q(x^{\ot t})= M(x)$ for every $x$. Note that we can compute the spectral norm of an homogeneous polynomial in polynomial time using semidefinite programming.
Thus we can restate our main theorem of this section as:

\begin{theorem}[Theorem~\ref{thm:nonneg:intro}, restated] \label{thm:nonneg}
Let $M$ be a degree-$2t$ homogeneous polynomial in $x=(x_1,\ldots,x_n)$ with nonnegative coefficients.
Then, there is an algorithm, based on $O(t^2 \log n / \e^2)$ levels of the SOS hierarchy, that finds
a unit vector $x^* \in \R^n$ such that
\[
M(x^*) \geq \max_{x\in \R^n, \norm{x}=1} M(x) - \e\spectralnorm{M} \mper
\]
\end{theorem}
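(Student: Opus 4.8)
\noindent The plan is to design a \emph{combining algorithm}---one that maps the low-degree moments of a distribution over unit vectors to a single good unit vector---whose analysis uses only manipulations valid for pseudoexpectations, and then read it off as a rounding algorithm for the SOS relaxation. Rescale so that $\spectralnorm{M}=1$, and let $Q$ be the canonical symmetric representation of $M$ as a quadratic form on $(\R^n)^{\otimes t}$, namely the reshaping of the symmetric $2t$-tensor of coefficients of $M$. Since $M$ has nonnegative coefficients $Q$ has nonnegative entries, and since $Q$ is supported on the symmetric subspace of $(\R^n)^{\otimes t}\otimes(\R^n)^{\otimes t}$---on which every representation of $M$ restricts to the same operator---while a symmetric matrix has spectral norm at least that of any principal submatrix, we get $\spectralnorm{Q}=\spectralnorm{M}=1$. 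We then solve the level-$\ell$ SOS relaxation, with $\ell=O(t^2\log n/\e^2)$, maximizing $\pE M(x)$ over level-$\ell$ pseudoexpectation operators $\pE$ consistent with $\norm x^2\equiv1$, and let $\sval\ge\OPT$ be its value. It suffices to produce a unit vector $x^*$ with $M(x^*)\ge\sval-\e$, since then $\OPT\ge M(x^*)\ge\sval-\e\ge\OPT-\e$.

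\emph{The combining step.} Given a level-$\ell'$ pseudoexpectation $\pE_\cX$ consistent with $\norm x^2\equiv1$, with $\ell'\ge2t+2$, set $x^*_i:=\sqrt{\pE_\cX x_i^2}$; this is a unit vector because $\sum_i\pE_\cX x_i^2=\pE_\cX\norm x^2=1$. If $M(x^*)\ge\pE_\cX M(x)-\e$, output $x^*$. Otherwise, let $z\in(\R^n)^{\otimes t}$ be the vector $z_I=\sqrt{\pE_\cX(x^{\otimes t})_I^2}$, which is a unit vector since $\sum_I\pE_\cX(x^{\otimes t})_I^2=\pE_\cX(\norm x^2)^t=1$, and set $w:=(x^*)^{\otimes t}$, also a unit vector. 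By Cauchy--Schwarz for pseudoexpectations (Appendix~\ref{app:toolkit}), $z_Iz_J\ge\pE_\cX[(x^{\otimes t})_I(x^{\otimes t})_J]$ for all $I,J$; as $Q$ has nonnegative entries, summing gives $z^T Q z\ge\pE_\cX Q(x^{\otimes t})=\pE_\cX M(x)$, while $w^T Q w=M(x^*)$. Hence, using $\spectralnorm{Q}=1$ and $\norm z=\norm w=1$,
\[
\e\ \le\ z^T Q z-w^T Q w\ =\ (z-w)^T Q(z+w)\ \le\ 2\,\norm{z-w}\mcom
\]
so $\norm{z-w}\ge\e/2$. Let $\cP$ be the distribution on $[n]^t$ with $\cP(I)=z_I^2$ and $\mu$ the distribution on $[n]$ with $\mu(i)=\pE_\cX x_i^2$. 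Then $w_I=\sqrt{\prod_{i\in I}\mu(i)}$, i.e.\ $w$ is the entrywise square root of $\mu^{\otimes t}$, so $\norm{z-w}^2=2\,\dhell(\cP,\mu^{\otimes t})^2$ and therefore $\dhell(\cP,\mu^{\otimes t})^2\ge\e^2/8$. Every marginal of $\cP$ equals $\mu$, so a short computation identifies $\mathrm{KL}(\cP\|\mu^{\otimes t})$ with the total correlation $\sum_{j=2}^{t}I_\cP(X_j;X_{<j})$; combined with $\mathrm{KL}\ge2\dhell^2$ this yields some $j^*\le t$ with $I_\cP(X_{j^*};X_{<j^*})\ge\e^2/(4(t-1))$. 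In this branch the algorithm \emph{reweights}: it picks $i$ with probability $\mu(i)$ and passes to $\cX_i$, the reweighting of $\cX$ by $x_i^2$, which is a valid level-$(\ell'-2)$ pseudoexpectation.

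\emph{The potential argument.} Let $\Psi(\cX):=H(\mu)\in[0,\log n]$ be the entropy of the degree-$2$ moment distribution $\mu=\mu_\cX$. Since $\mu_\cX=\E_{i\sim\mu}\mu_{\cX_i}$, concavity of entropy makes $\Psi$ a supermartingale under the reweighting $\cX\mapsto\cX_i$, $i\sim\mu$. Unfolding $I_\cP(X_{j^*};X_{<j^*})$, the conditional distributions appearing there are exactly the degree-$2$ moment distributions of the iterated reweightings, so along the random reweighting process $\E[\Psi(\cX^{(j^*-1)})]\le\Psi(\cX^{(0)})-\e^2/(4(t-1))$, and hence also $\E[\Psi(\cX^{(t-1)})]\le\Psi(\cX^{(0)})-\e^2/(4(t-1))$ by the supermartingale property. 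Thus every $t-1$ reweightings either produce a good $x^*$ or decrease $\E[\Psi]$ by $\Omega(\e^2/t)$; since $\Psi\le\log n$, after $T=O(t^2\log n/\e^2)$ reweightings the algorithm outputs some $x^*$ with $M(x^*)\ge\pE_{\cX^{(k)}}M(x)-\e$. The combining algorithm uses only linearity, the substitution $\norm x^2\equiv1$, termwise comparison, and Cauchy--Schwarz, all valid for level-$\ell$ pseudoexpectations by Appendix~\ref{app:toolkit}, so it is in fact a rounding algorithm for the level-$\ell$ SOS relaxation; each reweighting costs two levels, and we use $O(t^2\log n/\e^2)$ of them, matching the claimed bound. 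To make it deterministic, observe that reweighting by $x_i^2$ commutes, so $\cX^{(k)}$ depends only on the multiset of chosen coordinates; there are only $n^{O(T)}$, i.e.\ quasipolynomially many, such pseudodistributions, and enumerating them, computing each candidate $x^*$, and returning the best one derandomizes the algorithm. Odd $t$ and non-homogeneous $M$ reduce to this case as in Remark~\ref{rem:odd-degree}.

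\emph{The main obstacle.} The delicate point is the final accounting: a good $x^*$ found at $\cX^{(k)}$ only satisfies $M(x^*)\ge\pE_{\cX^{(k)}}M(x)-\e$, whereas the theorem requires $M(x^*)\ge\sval-\e$, and reweighting could a priori erode the relaxation value. The fix uses that, by the substitution $\norm x^2\equiv1$, the value $\pE_{\cX^{(k)}}M(x)$ is a \emph{martingale} under the random reweighting, since $\E_{i\sim\mu}\pE_{\cX_i}M(x)=\pE_\cX M(x)$. Tracking this value together with the potential $\Psi$, the reweightings drive $\Psi$ down while preserving the value in expectation, which guarantees a reweighting sequence reaching a pseudodistribution $\cX^{(k)}$ with $\pE_{\cX^{(k)}}M(x)\ge\sval-O(\e)$ on which the test succeeds. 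Everything else---the Cauchy--Schwarz and entropy inequalities, the level bookkeeping, the identity $\spectralnorm{Q}=\spectralnorm{M}$---is routine; the real content is the ``combine, then lift to pseudoexpectations'' paradigm rather than any single estimate.
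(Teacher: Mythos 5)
Your combining algorithm and its analysis are, at their core, the same as the paper's proof of this theorem: the direct-rounding vector $x^*_i=\sqrt{\pE x_i^2}$, the Cauchy--Schwarz comparison of $\pE M(x)$ with the value of a nonnegative quadratic form at the entrywise square root $z$ of the degree-$2t$ moments, the spectral-norm/Hellinger bound $\e\le 2\norm{z-w}$, and the entropy potential $H(\mu)\le\log n$ driven down by conditioning on squares of coordinates all correspond to Lemmas~\ref{lem:direct-rounding} and~\ref{lem:making-progress}. Your two local variations are fine and arguably cleaner: using the chain rule for KL divergence (total correlation) in place of the hybrid argument over Hellinger distances, and taking $Q$ to be the canonical symmetric reshaping of the coefficient tensor, which simultaneously has nonnegative entries and attains $\spectralnorm{M}$.

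The genuine gap is precisely the point you flag as ``the main obstacle,'' and your sketch does not close it. Knowing that $v_k=\pE_{\cX^{(k)}}M(x)$ is a martingale and that $\Psi$ drops in expectation whenever the test fails does not by itself ``guarantee a reweighting sequence reaching a pseudodistribution with value $\ge\sval-O(\e)$ on which the test succeeds'': the optional-stopping identity $\E[v_\tau]=\sval$ is perfectly consistent with \emph{every} success node having value far below $\sval$, provided the paths that never succeed (whose values can be as large as $\spectralnorm{M}=1$) carry enough probability mass. To rule this out you must additionally bound $q=\Pr[\text{no success within }T\text{ steps}]$; the potential argument does give $q\le \log n/(\delta m)$ after $m$ blocks of $t-1$ steps with $\delta=\Theta(\e^2/t)$, but the optional-stopping computation forces $q=O(\e)$, hence $m=\Omega(t\log n/\e^3)$ and $T=\Omega(t^2\log n/\e^3)$ --- an extra factor of $1/\e$ beyond the claimed level bound, and in any case an argument you have asserted rather than written. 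The paper sidesteps the issue entirely by putting $M(x)=c$ into the relaxation as an equality constraint and maximizing over $c$: consistency with a polynomial equation is preserved under reweighting by squares (Corollary~\ref{cor:pef-constraint}), so $\pE_{\cX^{(k)}}M(x)=\sval$ holds \emph{exactly} at every node and the direct-rounding guarantee $M(x^*)\ge\pE M(x)-O(\e)\spectralnorm{M}$ is automatically relative to $\sval$. I recommend adopting that formulation; with it, the rest of your write-up goes through.
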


To prove Theorem~\ref{thm:nonneg} we first come up with a \emph{combining algorithm},  namely an algorithm that  takes (the moment matrix of) a distribution $\cX$ over unit vectors
$x\in R^n$  such that $M(x) \geq \sval$ and find a unit vector $x^*$ such that $M(x^*) \geq \sval -\e$.
We then show that the algorithm will succeed even if $\cX$ is merely a level $O(t \log n /\e^2)$ \emph{pseudo distribution}; that  is, the moment matrix is a pseudoexpectation operator.
The combining algorithm is very simple:

\medskip

\noindent \emph{\large Combining algorithm for polynomials with nonnegative coefficients:}

\medskip

\noindent \textbf{Input:} distribution $\cX$ over unit $x\in\R^n$ such that $M(x)=\sval$.

\medskip
\noindent \textbf{Operation:} Do the following for $t^2 \log n/\e^2$ steps:

\begin{description}

\item[\it Direct rounding:] For $i\in [n]$, let $x^*_i = \sqrt{\E_{x\sim \cX} x_i^2}$. If $M(x^*) \geq \sval-4\e$ then output $x^*$ and quit.

\item[\it Conditioning:] Try to find $i_1,\ldots, i_{t-1} \in [n]$ such that the distribution $\cX_{i_1,\ldots,i_{t-1}}$ satisfies $\Psi(\cX_{i_1,\ldots,i_{t-1}}) \leq \Psi(\cX) - \e^2/t^2$, and set $\cX= \cX_{i_1,\ldots,i_{t-1}}$, where:
\begin{itemize}
\item  $\cX_{i_1,\ldots,i_{t-1}}$ is defined by letting  $\Pr[ \cX_{i_1,\ldots,i_{t-1}}=x ]$ be proportional to $\Pr[\cX=x]\cdot \prod_{j=1}^{t-1}x_{i_j}^2$ for every $x\in\R^n$.

\item $\Psi(\cX)$ is defined to be $H(A(\cX))$ where $H(\cdot)$ is the Shannon entropy function and $A(\cX)$ is the distribution over $[n]$ obtained by letting $\Pr[A(\cX)=i] = \E_{x\sim\cX} x_i^2$ for every
$i\in [n]$.
\end{itemize}
\end{description}

Clearly $\Psi(\cX)$ is always in $[0,\log n]$, and hence if we can show that we always succeed in at least one of the steps, then eventually the algorithm will output a good $x^*$.
We now show that if the direct rounding step fails, then the conditioning step must succeed. We do the proof under the assumption that $\cX$ is an actual distribution. Almost of all of this analysis holds verbatim when $\cX$ is a pseudodistribution of level at least $2t^2 \log n / \e^2$, and we note the one step where the extension requires using a nontrivial (though easy to prove) property of pseudoexpectations, namely that they satisfy the Cauchy--Schwarz inequality.

\paragraph{Some information theory facts}
We recall some standard relations between various entropy and distance measures.
Let $X$ and $Y$ be two jointly distributed random variables.
We denote the joint distribution of $X$ and $Y$ by $\set{XY}$,
and their marginal distributions by $\set{X}$ and $\set{Y}$.
We let $\set{X}\set{Y}$ denote the product of the distributions $\set{X}$ and $\set{Y}$ (corresponding to sampling $X$ and $Y$ independently from their marginal distribution).
Recall that the \emph{Shannon entropy} of $X$, denoted by $H(X)$, is defined to be $\sum_{x\in \mathrm{Support}(X)}\Pr[X=x]\log(-\Pr[X=x])$. The \emph{mutual information} of $X$ and $Y$ is defined as
$I(X,Y) \defeq H(X) - H(X\mid Y)$, where $H(X\mid Y)$ is \emph{conditional entropy} of $X$ with respect to $Y$, defined as $\E_{y\sim \set{Y}}H(X\mid Y=y)$.
The \emph{Hellinger distance} between two distributions $p$ and $q$ is defined by
$\dhell(p,q) \defeq \Paren{1-\sum_i \sqrt{p_iq_i}}^{1/2}$.
(In particular, $\dhell(p,q)$ equals $1/\sqrt{2}$ times the Euclidean distance of the unit vectors $\sqrt p$ and $\sqrt q$.) The following inequality (whose proof follows by combining standard relations
between the Hellinger distance,  Kullback–--Leibler divergence, and mutual information) would be useful for us

\begin{lemma} 
\label{lem:hellinger}
  For any two jointly-distributed random variables $X$ and $Y$,
  \begin{displaymath}
    2\dhell\Bigparen{\set{X Y},\set{X}\set{Y}}^2\le I(X,Y)
  \end{displaymath}
\end{lemma}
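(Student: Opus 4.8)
The plan is to derive the bound from two standard facts: the identity expressing mutual information as a Kullback--Leibler divergence, and the comparison between Hellinger distance and KL divergence. First I would recall that the definition $I(X,Y) = H(X) - H(X\mid Y)$ can be rewritten as
\[
  I(X,Y) \;=\; \mathrm{D}\Bigl(\set{XY}\,\Big\|\,\set{X}\set{Y}\Bigr)\mcom \qquad \text{where } \mathrm{D}(p\|q) \defeq \sum_i p_i \log\tfrac{p_i}{q_i}
\]
is the Kullback--Leibler divergence; this is a one-line calculation expanding the definitions of the Shannon and conditional entropies and using $\log\tfrac{\Pr[X=x,Y=y]}{\Pr[X=x]\Pr[Y=y]} = \log\Pr[X=x\mid Y=y] - \log\Pr[X=x]$. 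Hence it suffices to prove the purely analytic statement that for any two probability distributions $p,q$ on a common finite set,
\[
  2\,\dhell(p,q)^2 \;\le\; \mathrm{D}(p\|q)\mper
\]

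For this last inequality I would use the elementary bound $\log t \le t-1$ for all $t>0$ (with the natural logarithm; if $\log$ denotes another base the constant rescales harmlessly on both sides). Writing $\mathrm{D}(p\|q) = -2\sum_i p_i \log\sqrt{q_i/p_i}$ and applying $\log t \le t-1$ with $t = \sqrt{q_i/p_i}$ termwise gives
\[
  \mathrm{D}(p\|q) \;\ge\; -2\sum_i p_i\Bigl(\sqrt{q_i/p_i}-1\Bigr) \;=\; 2\sum_i p_i - 2\sum_i\sqrt{p_iq_i} \;=\; 2\Bigl(1-\sum_i \sqrt{p_iq_i}\Bigr) \;=\; 2\,\dhell(p,q)^2\mcom
\]
using $\sum_i p_i = 1$ and the normalization $\dhell(p,q)^2 = 1-\sum_i\sqrt{p_iq_i}$ fixed just before the lemma. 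Combining the three displays with $p=\set{XY}$ and $q=\set{X}\set{Y}$ yields the claim.

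There is no real obstacle here: the proof is a short combination of textbook inequalities. The only things to be careful about are bookkeeping details — keeping the logarithm base consistent between the entropy/divergence definitions and the inequality $\log t\le t-1$, and making sure the chosen normalization of $\dhell$ (in particular the factor relating it to the Euclidean distance of $\sqrt p$ and $\sqrt q$) matches the one stated in the excerpt so that the constant $2$ comes out exactly right.
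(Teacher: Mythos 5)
Your proof is correct and is precisely the argument the paper has in mind: the paper gives no details, remarking only that the lemma ``follows by combining standard relations between the Hellinger distance, Kullback--Leibler divergence, and mutual information,'' and your two steps (the identity $I(X,Y)=\mathrm{D}\bigl(\set{XY}\,\big\|\,\set{X}\set{Y}\bigr)$ followed by $2\dhell(p,q)^2\le \mathrm{D}(p\|q)$ via $\log t\le t-1$) are exactly those relations. The base-of-logarithm issue you flag is indeed harmless: the Hellinger side does not depend on the base, and switching from natural to base-$2$ logarithms only multiplies the mutual information by $1/\ln 2>1$, so the inequality is preserved.
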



\subsection{Direct Rounding}
\label{sec:simple-rounding}

Given $\cX$, we define the following correlated random variables $A_1,\ldots,A_t$ over $[n]$: the probability that $(A_1,\ldots,A_t) = (i_1,\ldots,i_t)$ is equal to $\E_{x\sim \cX} x_{i_1}^2\cdots x_{i_t}^2$. Note that for every $i$, the random variable $A_i$ is distributed according to $A(\cX)$.  (Note that even if $\cX$ is only a pseudodistribution, $A_1,\ldots,A_t$ are actual random variables.)
The following lemma gives a sufficient condition for our direct rounding step to succeed:

\begin{lemma}
  \label{lem:direct-rounding}
  Let $M,\cX$ be as above.
  If $d_H(\set{A_1\cdots A_t},\set{A_1}\cdots\set{A_t})\le \e$, then the unit vector $x^*$ with $x^*_i=(\E_{x\sim\cX} x_i^2)^{1/2}$ satisfies $M(x^*)\ge \sval-4\e\spectralnorm{M}$.
  Moreover, this holds even if $\cX$ is a level $\ell \geq 2t$ pseudodistribution.
\end{lemma}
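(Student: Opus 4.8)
The plan is to follow the degree-four sketch of Section~\ref{sec:overview:nonnegative} essentially verbatim, but keeping careful track of the tensor bookkeeping for general $t$ and then checking that every step survives the passage to pseudodistributions. First a preliminary: I would fix a \emph{symmetric} quadratic form $\tilde M$ on $(\R^n)^{\ot t}$ with $\tilde M(x^{\ot t})=M(x)$ for all $x$, with nonnegative entries, and with $\spectralnorm{\tilde M}=\spectralnorm{M}$; that the minimum spectral norm over all representations of a polynomial with nonnegative coefficients is attained by a nonnegative (and, after averaging with its transpose, symmetric) representation is the first of the two places where nonnegativity of $M$'s coefficients is used. Then I would identify the objects in the statement with square roots of distributions on $[n]^t$: writing $x_I=x_{i_1}\cdots x_{i_t}$ for a multi-index $I=(i_1,\dots,i_t)$, the joint law $p$ of $(A_1,\dots,A_t)$ has $p_I=\pE_{\cX} x_I^2$ and the product of marginals $q$ has $q_I=\prod_k \pE_{\cX} x_{i_k}^2=\prod_k (x^*_{i_k})^2$; both sum to $1$ since $\pE_{\cX}\norm{x}^{2t}=\pE_{\cX}1=1$ using consistency with $\norm{x}^2=1$. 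Hence the unit vectors $\sqrt p$ and $\sqrt q$ are exactly $y$ (with $y_I=(\pE_{\cX}x_I^2)^{1/2}$) and $x^{*\ot t}$, so the hypothesis $\dhell(p,q)\le\e$ reads $\norm{y-x^{*\ot t}}_2=\sqrt2\,\dhell(p,q)\le\sqrt2\,\e$.

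Next the two estimates. For the lower bound I would expand $\sval=\pE_{\cX}M(x)=\sum_{I,J}\tilde M_{I,J}\,\pE_{\cX}[x_Ix_J]$; by Cauchy--Schwarz $\pE_{\cX}[x_Ix_J]\le (\pE_{\cX}x_I^2)^{1/2}(\pE_{\cX}x_J^2)^{1/2}=y_Iy_J$, and since every $\tilde M_{I,J}\ge 0$ (the second use of nonnegativity) this gives $\sval\le\sum_{I,J}\tilde M_{I,J}y_Iy_J=y^{T}\tilde M y$. For the gap I would use $M(x^*)=(x^{*\ot t})^{T}\tilde M(x^{*\ot t})$ and symmetry of $\tilde M$ to write $y^{T}\tilde M y-M(x^*)=(y-x^{*\ot t})^{T}\tilde M(y+x^{*\ot t})\le\spectralnorm{\tilde M}\,\norm{y-x^{*\ot t}}_2\,\norm{y+x^{*\ot t}}_2\le\spectralnorm{M}\cdot\sqrt2\,\e\cdot 2$, using $\norm{y}_2=\norm{x^{*\ot t}}_2=1$ (the latter from $\norm{x^*}_2^2=\sum_i\pE_{\cX}x_i^2=\pE_{\cX}\norm{x}^2=1$). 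Combining the two bounds yields $M(x^*)\ge\sval-2\sqrt2\,\e\,\spectralnorm{M}\ge\sval-4\e\,\spectralnorm{M}$, as claimed.

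For the pseudodistribution case I would observe that nothing above used anything about $\cX$ beyond linearity, positivity, and consistency with $\norm{x}^2=1$, \emph{except} the single step $\pE_{\cX}[x_Ix_J]\le(\pE_{\cX}x_I^2)^{1/2}(\pE_{\cX}x_J^2)^{1/2}$, which is precisely the Cauchy--Schwarz inequality for pseudoexpectations from Appendix~\ref{app:toolkit}. The degrees all fit at level $2t$: $x_I,x_J$ have degree $t$, so $x_Ix_J$ and $x_I^2$ have degree $2t$; $p_I=\pE_{\cX}(x_I)^2\ge0$ and $q_I\ge0$ by positivity; and $\sum_I p_I=\pE_{\cX}\norm{x}^{2t}=1$ follows by applying consistency with $\norm{x}^2-1\equiv0$ against $\norm{x}^{2k-2}$ for $k=t,\dots,1$ (degree $2k-2\le\ell-2$, worst case $\ell\ge 2t$). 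So $p$ and $q$ remain genuine distributions and every step goes through, giving the conclusion for any $\ell\ge 2t$.

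The computation itself is short; the only genuinely load-bearing external input is the pseudoexpectation Cauchy--Schwarz inequality, which is already available in the toolkit. The one slightly delicate preliminary — and the step I would spend the most care on — is justifying that $\spectralnorm{M}$, defined as a minimum over \emph{all} quadratic-form representations, is attained by a nonnegative symmetric one when $M$ has nonnegative coefficients; after that, the main thing to be careful about is keeping the degree accounting exactly at level $2t$ rather than $2t+O(1)$, which the bookkeeping above is designed to do.
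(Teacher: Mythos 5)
Your argument is, step for step, the argument the paper gives: the same identification of $y_I=(\pE x_I^2)^{1/2}$ and $x^{*\ot t}$ with the square roots of the joint law of $(A_1,\dots,A_t)$ and of the product of its marginals, the same use of pseudoexpectation Cauchy--Schwarz to get $\sval\le \iprod{\hat M, y\ot y}$, the same factorization $(y-x^{*\ot t})^{T}(\cdot)(y+x^{*\ot t})$ bounded by spectral norm times $2\cdot\sqrt2\,\dhell\le 4\e$, and the same observation that only the Cauchy--Schwarz step needs to be re-justified for pseudodistributions. The degree bookkeeping and the normalization $\pE\norm{x}^{2t}=1$ are handled correctly.

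The one place you diverge is exactly the step you flag as delicate, and there it is a genuine gap rather than a routine verification. You need a \emph{single} form $\tilde M$ that is simultaneously entrywise nonnegative (for the Cauchy--Schwarz step) and achieves $\norm{\tilde M}=\spectralnorm{M}$ (for the spectral step), and you assert without proof that the minimizer of the spectral norm over the affine space of representations of $M$ can be so chosen. This is not obvious and I do not see why it should be true: the minimization is an SDP over an affine slice, and its optimum has no a priori reason to be entrywise nonnegative even when the coefficient tensor is (symmetry, by contrast, really is free by averaging with the transpose, as you say). Without that claim, your argument only proves the lemma with $\norm{\hat M}$ --- the spectral norm of the nonnegative coefficient tensor itself --- in place of $\spectralnorm{M}$, which is a strictly weaker statement since $\hat M$ is just one admissible representation. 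The paper takes a different route around this tension: it uses the nonnegative coefficient tensor $\hat M$ only for the inequality $\sval\le\iprod{\hat M,y\ot y}$, uses the spectral-norm-minimizing form $Q$ only in the difference bound, and bridges the two by asserting $\iprod{\hat M,y\ot y}=Q(y)$ on the grounds that $y$ is a symmetric tensor. That bridging step is itself the delicate point of the paper's proof (two forms agreeing on all $x^{\ot t}$ need not agree on all symmetric tensors), so you have correctly located where the difficulty lives; but your proposed resolution replaces it with a different unproven claim rather than discharging it. To complete the proof you must either prove your nonnegative-minimizer claim or justify the paper's identity relating $\hat M$ and $Q$ on the specific vectors $y$ and $x^{*\ot t}$.
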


\begin{proof}
  Let $Q$ be a quadratic form with $Q(x^{\ot t})=M(x)$.
  Let $y\in (\R^{n})^{\ot t}$ be the vector $y_{i_1\cdots i_t}=\paren{\pE_{x\sim cX} x_{i_1}^2\cdots x_{i_t}^2}^{1/2}$.
  Then,
  \begin{equation}
    \label{eq:1}
    \pE M(x^*) = \iprod{\hat M,\pE {x^*}^{\ot 2t}}\le \iprod{\hat M,y \ot y} = Q(y)
  \end{equation}

  Here, the vector $\hat M\in (\R^{n})^{\ot 2t}$ contains the coefficients of $M$. In particular, $\hat M\ge 0$ entry-wise.
  The inequality in \pref{eq:1} uses Cauchy--Schwarz; namely that $\pE x^\alpha x^\beta \le \paren{\pE (x^\alpha)^2\cdot \pE (x^\beta)^2}^{1/2}=y_\alpha y_\beta$.
  The final equality in \pref{eq:1} uses that $y$ is symmetric.

  Next, we bound the difference between $Q(y)$ and $M(x^*)$
  \begin{equation}
    \label{eq:2}
    Q(y)-M(x^*) = Q(y)-Q({x^*}^{\ot t})
    =\iprod{y+{x^*}^{\ot t},Q(y-{x^*}^{\ot t})}
    \le \norm{Q} \cdot \norm{y+{x^*}^{\ot t}} \cdot \norm{y-{x^*}^{\ot t}}\mper
  \end{equation}
  (Here, $\iprod{\cdot,Q~\cdot}$ denotes the symmetric bilinear form corresponding to $Q$.)

  Since both ${x^*}^{\ot t}$ and $y$ are unit vectors, $\norm{y+{x^*}^{\ot t}}\le 2$.
  By construction, the vector~$y$ corresponds to the distribution $\set{A_1\cdots A_t}$ and ${x^*}^{\ot t}$ corresponds to the distribution $\set{A_1}\cdots\set{A_t}$.
  In particular, $\dhell(\set{A_1\cdots A_t},\set{A_1}\cdots\set{A_t})=\tfrac{1}{\sqrt{2}}\norm{y-{x^*}^{\ot t}}$.
  Together with the bounds \pref{eq:1} and \pref{eq:2},
  \begin{displaymath}
    M(x^*)\ge \pE M(x) - 4\norm{Q}\cdot \dhell(\set{A_1\cdots A_t},\set{A_1}\cdots\set{A_t})\mper\qedhere
  \end{displaymath}

To verify this carries over when $\cX$ is a pseudodistribution, we just need to use the fact that Cauchy--Schwarz holds for pseudoexpectations (Lemma~\ref{lem:pef-cauchy-schwarz}).
\end{proof}

\subsection{Making Progress}
\label{sec:making-progress}

The following lemma shows that if the sufficient condition above is violated, then on expectation we can always make progress. (Because $A_1,\ldots,A_t$ are actual
random variables, it automatically holds regardless of whether $\cX$ is an actual distribution or a pseudodistribution.)

\begin{lemma}
\label{lem:making-progress}
If $\dhell(\set{A_1\cdots A_t},\set{A_1}\cdots\set{A_t})\ge \e$, then $H(A_t\mid A_1\cdots A_{t-1})\le H(A)-2\e^2/t^2$
\end{lemma}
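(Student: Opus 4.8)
The plan is to rephrase the desired inequality as a lower bound on mutual information and then obtain it via a hybrid argument that reduces the ``global'' statement (the Hellinger distance between the full joint $\set{A_1\cdots A_t}$ and the full product $\set{A_1}\cdots\set{A_t}$) to the ``local'' quantity $I(A_t; A_1\cdots A_{t-1})$. Since each $A_i$ has the law $A(\cX) = A$, we have $H(A_t) = H(A)$ and hence $H(A_t\mid A_1\cdots A_{t-1}) = H(A) - I(A_t; A_1\cdots A_{t-1})$; so it suffices to prove $I(A_t; A_1\cdots A_{t-1}) \ge 2\e^2/t^2$. Note this lemma needs no pseudodistribution version, since $A_1,\dots,A_t$ are honest random variables.

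First I would record that the joint law of $(A_1,\dots,A_t)$ is exchangeable: by definition $\Pr[(A_1,\dots,A_t)=(i_1,\dots,i_t)] = \E_{x\sim\cX}x_{i_1}^2\cdots x_{i_t}^2$ is symmetric in the $i_j$, and because $\cX$ is supported on unit vectors, marginalizing out any coordinates introduces a harmless factor $(\sum_j x_j^2)^{\,\cdot}=1$, so the marginal on any $k$ coordinates is exactly $\set{A_1\cdots A_k}$. Then I would set up hybrids $H_0,H_1,\dots,H_{t-1}$, where $H_j$ is the law of a tuple whose first $t-j$ coordinates are drawn jointly from $\set{A_1\cdots A_{t-j}}$ and whose remaining $j$ coordinates are i.i.d.\ copies of $A$, independent of the first block. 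Thus $H_0 = \set{A_1\cdots A_t}$ and $H_{t-1} = \set{A_1}\cdots\set{A_t}$, so the hypothesis reads $\dhell(H_0,H_{t-1}) \ge \e$.

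The heart of the argument is to bound each step $\dhell(H_j,H_{j+1})$. Passing from $H_j$ to $H_{j+1}$ only changes whether coordinate $t-j$ is coupled to the first $t-j-1$ coordinates or is an independent copy of $A$; the last $j$ coordinates form a common independent product of copies of $A$ in both hybrids. Since $\dhell(p,q)$ equals $\tfrac1{\sqrt2}$ times the Euclidean distance between the unit vectors $\sqrt p$ and $\sqrt q$, and $\sqrt{p\otimes r}=\sqrt p\otimes\sqrt r$ with $\norm{\sqrt r}_2=1$, appending a common independent factor leaves the Hellinger distance unchanged. Hence $\dhell(H_j,H_{j+1}) = \dhell(\set{A_1\cdots A_{t-j}}, \set{A_1\cdots A_{t-j-1}}\set{A_{t-j}})$, and Lemma~\ref{lem:hellinger} gives $2\dhell(H_j,H_{j+1})^2 \le I(A_{t-j}; A_1\cdots A_{t-j-1})$. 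By exchangeability I can relabel coordinates so this equals $I(A_t; A_1\cdots A_{t-j-1})$, and by the chain rule for mutual information $I(A_t; A_1\cdots A_{t-j-1}) \le I(A_t; A_1\cdots A_{t-1})$. So every step obeys $\dhell(H_j,H_{j+1}) \le \sqrt{\tfrac12\, I(A_t; A_1\cdots A_{t-1})}$. Finally, since $\dhell$ is a genuine metric, the triangle inequality over the $t-1$ steps yields $\e \le \dhell(H_0,H_{t-1}) \le (t-1)\sqrt{\tfrac12\, I(A_t;A_1\cdots A_{t-1})} \le t\sqrt{\tfrac12\, I(A_t;A_1\cdots A_{t-1})}$, i.e.\ $I(A_t;A_1\cdots A_{t-1}) \ge 2\e^2/t^2$, and therefore $H(A_t\mid A_1\cdots A_{t-1}) = H(A) - I(A_t;A_1\cdots A_{t-1}) \le H(A) - 2\e^2/t^2$.

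The main obstacle is precisely this bridging step: Lemma~\ref{lem:hellinger} only controls the split of a joint distribution into two pieces, while the hypothesis concerns the full $t$-fold product, so one must interpolate. Two features make the hybrid argument go through cleanly: (i) exchangeability of $(A_1,\dots,A_t)$, which lets each per-step mutual information be identified with one involving the fixed pair $\bigl(A_t,(A_1,\dots,A_{t-1})\bigr)$; and (ii) that Hellinger distance is a metric insensitive to common independent factors, so the $t-1$ steps combine by the triangle inequality with only a linear loss---which is exactly what produces the $t^2$ in the denominator. Everything else is routine bookkeeping.
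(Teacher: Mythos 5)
Your proof is correct and follows essentially the same route as the paper: the same family of hybrid distributions interpolating between $\set{A_1\cdots A_t}$ and $\set{A_1}\cdots\set{A_t}$, the tensorization/triangle-inequality properties of Hellinger distance, Lemma~\ref{lem:hellinger} applied to a single step, and monotonicity under conditioning. The only cosmetic difference is that you bound every hybrid step by the target mutual information and then sum, whereas the paper pigeonholes to a single step with Hellinger distance at least $\e/t$; the ingredients and constants are identical.
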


\begin{proof}
  The bound follows by combining a hybrid argument with Lemma~\ref{lem:hellinger}.

  Let $A'_1,\ldots,A'_t$ be independent copies of $A_1,\ldots,A_t$ so that
  \begin{displaymath}
    \set{A_1\cdots A_t\cdots A'_1\cdots A'_t} = \set{A_1\cdots A_t} \set{A_1}\cdots \set{A_t}\mper
  \end{displaymath}

  We consider the sequence of distributions $D_0,\ldots,D_t$ with
  \begin{displaymath}
    D_i = \set{A_1\cdot A_i \cdots A'_{i+1}\cdots A'_t}\mper
  \end{displaymath}

  By assumption, $\dhell(D_0,D_t)\ge \e$.
  Therefore, there exists an index $i$ such that $\dhell(D_{i-1},D_{i})\ge \e/t$.
  Let $X=A_1\cdots A_{i-1}$ and $Y=A_{i}A'_{i+1}\cdots A'_t$.
  Then, $D_i=\set{X Y}$ and $D_{i-1}=\set{X}\set{Y}$.
  By \pref{lem:hellinger},
  \begin{displaymath}
    H(Y)-H(Y\mid X)=I(X,Y)\ge 2\dhell(\set{X Y},\set{X}\set{Y})\ge 2 \e^2/t^2\mper
  \end{displaymath}
  Since $A'_{i+1},\ldots,A'_t$ are independent of $A_1,\ldots,A_{i}$,
  \begin{displaymath}
    H(Y)-H(Y\mid X) = H(A_i) - H(A_i\mid A_1\cdots A_{i-1})\mper
  \end{displaymath}
  By symmetry and the monotonicity of entropy under conditioning, we conclude
  \begin{displaymath}
    H(A_t\mid A_1\cdots A_{t-1}) \le H(A)-2\e^2/t^2\mper\qedhere
  \end{displaymath}
\end{proof}

Lemma~\ref{lem:making-progress} implies that if our direct rounding fails then the expectation of $H(A_1)$ conditioned on $A_2,\ldots,A_t$ is at most $H(A) - 2\e^2/t^2$,
but in particular  this means there exist $i_1,\ldots,i_{t-1}$ so that $H(A_t|A_1=i_1,\ldots,A_{t-1}=i_{t-1}) \leq H(A) - 2\e^2/t^2$.
The probability of $i$ under this distribution $A_t|A_1=i_1,\ldots,A_{t-1}=i_{t-1}$  is proportional to $\E_{x\sim \cX} x_i^2 \cdot \prod_{j=1}^{t-1} x_{i_j}^2$,
which means that it exactly equals the distribution $A(\cX_{i_1,\ldots,i_{t-1}})$. Thus we see that $\Psi(\cX_{i_1,\ldots,i_{t-1}}) \leq \Psi(\cX) - 2\e^2/t^2$.
This concludes the proof of Theorem~\ref{thm:nonneg}. \qed

\begin{remark}[Handling odd degrees and non homogenous polynomials] \label{rem:odd-degree}
If the polynomial is not homogenous but only has monomials of even degree, we can homogenize it by
multiplying every monomial with an appropriate power of $(\sum x_i^2)$ which is identically equal to  $1$ on the sphere.
To handle odd degree monomials we can introduce a new variable $x_0$ and set a constraint that it must be identically equal to $1/2$.
This way we can represent all odd degree monomials by even degree monomials with a blowup of $2$ in the coefficients.
Note that if the pseudoexpectation operator is consistent with this constraint then our rounding algorithm will in fact output a vector that satisfies it.
\end{remark}
\section{Finding an ``analytically sparse'' vector in a subspace}
\label{sec:asvp}

In this section we prove  Theorem~\ref{thm:ASVP:intro}.
We let $\univ$ be a universe of size $n$ and  $L_2(\univ)$ be the vector space of real-valued functions~$f\from \univ\to \bbR$.
The measure on the set $\univ$ is the uniform probability distribution and hence we will use
the inner product $\iprod{f,g}=\E_{\elem} f(\elem)g(\elem)$ and norm
$\norm{f}_p =\left( \E_{\elem} f(\elem)^p\right)^{1/p}$ for $f,g\from \univ\to \R$ and $p \geq 1$.

\begin{theorem}[Theorem~\ref{thm:ASVP:intro}, restated] \label{thm:ASVP} There is a constant $\e>0$ and a polynomial-time algorithm $A$, based on $O(1)$ levels of the SOS hierarchy,
that on input a projector operator $\Pi$ such that there exists a $\mu$-sparse Boolean function $f$ satisfying $\norm{\Pi f}_2^2 \geq (1-\e)\norm{f}_2^2$,  outputs a function $g \in \textrm{Image}(\Pi)$ such that
\[
\norm{g}_4^4 \geq \Omega\left(\tfrac{\norm{g}_2^4}{\mu(\rank \Pi)^{1/3}}\right) \mper
\]
\end{theorem}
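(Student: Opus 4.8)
The plan is to follow the ``design a combining algorithm, then lift it to the SOS hierarchy'' strategy of \pref{sec:overview:asvp}. Since a pseudoexpectation operator can be conditioned and reweighted exactly like an honest distribution (as exploited in \pref{sec:non-neg}), it suffices to give a polynomial-time \emph{combining algorithm} that, given the $O(1)$-degree moments of a distribution $\cX$ supported on $\mu$-sparse Boolean functions $f$ with $\norm{\Pi f}_2^2\ge(1-\e)\norm{f}_2^2$, produces a $g\in\mathrm{Image}(\Pi)$ with $\norm{g}_4^4\ge\Omega\bigparen{\norm{g}_2^4/(\mu d^{1/3})}$, where $d=\rank\Pi$, and then to verify that every step of its analysis uses only linearity, Cauchy--Schwarz, and \Holder\ --- all of which hold for pseudoexpectations (\pref{lem:pef-cauchy-schwarz}, \pref{lem:pseudo-holder}) --- so that the algorithm runs verbatim on a level-$O(1)$ pseudoexpectation, i.e.\ on a level-$O(1)$ SOS solution. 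For the approximate case, where $f$ is only $(1-\e)$-inside $V$, I would add to the program auxiliary variables for the function $\topf$ obtained by projecting $f$ onto $V$ and then zeroing the coordinates on which $\Pi f$ is small; this $\topf$ lies genuinely in $V$, is $O(\e)$-close to $f$, and is approximately Boolean and approximately $\mu$-sparse, so the whole argument below is carried out with $\topf$ in place of $f$ and an additive $O(\e)$ slack absorbed into the constants.

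Recalling $\delta_\elem\in L_2(\univ)$ with $\delta_\elem(\elem')=n[\elem=\elem']$, so that $f(\elem)=\iprod{f,\delta_\elem}$ and $\{\delta_\elem/\sqrt n\}_\elem$ is orthonormal, the combining algorithm outputs whichever of the following candidates meets the target: (a)~the projected coordinate functions $\Pi\delta_\elem$, $\elem\in\univ$; (b)~a Gaussian $g$ whose covariance matches $\Sigma=\pE_{f\sim\cX}f\ot f$; (c)~for each tuple $\vec\elem$ of $O(1)$ elements, the Gaussian matching the degree-$2$ moments of the reweighting $\cX_{\vec\elem}$ defined by $\Pr[\cX_{\vec\elem}=f]\propto\Pr[\cX=f]\prod_i f(\elem_i)$. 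One may assume $\norm{\Pi\delta_\elem}_2^2$ is concentrated about its mean $\E_\elem\norm{\Pi\delta_\elem}_2^2=\tr\Pi=d$, for otherwise a separate argument on the input already produces a good $g$. If no type-(a) candidate works, then $\norm{\Pi\delta_\elem}_4^4=O(\norm{\Pi\delta_\elem}_2^4)=O(d^2)$ for every $\elem$, hence $\E_{\elem,\elem'}\iprod{\Pi\delta_\elem,\delta_{\elem'}}^4=O(d^2)$; combining this with $\pE_{f\sim\cX}\norm{f}_4^4=\mu$ (from the Boolean constraint), $f=\Pi f$, $\Pi=\Pi^2$, and Cauchy--Schwarz applied to the tensors $\pE_{f\sim\cX}f^{\ot 4}$ and $\E_\elem(\Pi\delta_\elem)^{\ot 4}$ exactly as in \pref{eq:asvp:1}--\pref{eq:asvp:2} yields $\pE_{f,f'\sim\cX}\iprod{f,f'}^4\ge\Omega(\mu^2/d^2)$, which is $\gg\mu^8$ once $\mu\le c\,d^{-1/3}$.

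Next, a pseudo-\Holder\ argument in the spirit of \pref{eq:asvp:3} (using that $\iprod{f,f'}$ is a sum of squares modulo the Boolean ideal, together with \pref{lem:pseudo-holder}) produces a tuple $\vec\elem$ for which $\cX'=\cX_{\vec\elem}$ satisfies $\pE_{f,f'\sim\cX'}\iprod{f,f'}\ge\bigparen{\pE_{f,f'\sim\cX}\iprod{f,f'}^4}^{1/4}\ge\Omega(\sqrt{\mu/d})$, which is $\gg\mu^2$ in the allowed range of $\mu$. The distribution $\cX'$ is again supported on $\mu$-sparse Boolean functions in $V$, and for its Gaussian $g$ (candidate~(c)) one computes, with $h(\elem)=\pE_{f\sim\cX'}f(\elem)$ and $\norm{h}_2^2=\pE_{f,f'\sim\cX'}\iprod{f,f'}$,
\[
\E_g\norm{g}_2^2=\mu,\qquad \E_g\norm{g}_4^4=3\norm{h}_2^2,\qquad \E_g\norm{g}_2^4\le\mu^2+2\mu\norm{h}_2^2 ,
\]
the last bound using $\iprod{f,f'}^2\le\mu\iprod{f,f'}$, again SOS modulo the Boolean ideal. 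Feeding $\norm{h}_2^2\ge\Omega(\sqrt{\mu/d})$ into $\E_g\norm{g}_4^4/\E_g\norm{g}_2^4\ge 3\norm{h}_2^2/(\mu^2+2\mu\norm{h}_2^2)$ and distinguishing according to whether $\norm{h}_2^2$ exceeds $\mu$ gives $\E_g\norm{g}_4^4\ge\Omega\bigparen{\E_g\norm{g}_2^4/(\mu d^{1/3})}$ exactly when $\mu\le c\,d^{-1/3}$; a routine Paley--Zygmund-type concentration estimate for the low-degree Gaussian polynomials $\norm{g}_2^2$ and $\norm{g}_4^4$ then shows a single random draw of $g$ is a valid output with constant probability. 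For the lift, observe that the only inputs the argument takes from $\cX$ are linearity of $\pE$, consistency with the constraints $f=\Pi f$, $f(\elem)^2=f(\elem)$ and $\E_\elem f(\elem)=\mu$ (whence $\pE\norm{f}_4^4=\mu$ and the SOS-mod-ideal facts $\iprod{f,f'}\ge 0$ and $\iprod{f,f'}^2\le\mu\iprod{f,f'}$), Cauchy--Schwarz for pseudoexpectations (\pref{lem:pef-cauchy-schwarz}), and pseudo-\Holder\ (\pref{lem:pseudo-holder}); reweighting by $\prod_i f(\elem_i)$ is legitimate because this polynomial is a square modulo the Boolean ideal and lowers the level by only $O(1)$; and the Gaussian candidates touch only degree-$2$ pseudo-moments. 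Hence the whole procedure is executed by an SOS solution of level $O(1)$.

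The step I expect to be the main obstacle is the approximate-membership reduction: threading the truncated auxiliary variables $\topf$ through the entire chain --- the Cauchy--Schwarz step, the reweighting, and the Gaussian rounding --- while keeping the accumulated error at $O(\e)$, and, relatedly, disposing of the degenerate case in which $\norm{\Pi\delta_\elem}_2^2$ is very unevenly spread over $\elem\in\univ$, which is the one place the argument branches on the input. By contrast, the algebraic heart of the proof --- the chain \pref{eq:asvp:1}--\pref{eq:asvp:3} followed by Gaussian rounding, and the observation that each inequality in it is of Cauchy--Schwarz or \Holder\ type --- is essentially forced once one commits to the $L_4/L_2$ relaxation.
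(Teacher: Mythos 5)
Your proposal follows the paper's own route: the same menu of candidate outputs, the same Cauchy--Schwarz chain relating $\pE\norm{f}_4^4=\mu$ to $\pE_{f,f'}\iprod{f,f'}^4$ via the coordinate projections, the same H\"older-based reweighting, the same Gaussian second-moment rounding, and the same plan (auxiliary variables for $\topf$, pseudo-Cauchy--Schwarz and pseudo-H\"older) for the lift. The two steps you defer as "obstacles," however, are exactly the two self-contained lemmas the paper supplies, so they are not optional, and one of your descriptions of them is wrong as stated.

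First, the case where $\norm{\Pi\delta_\elem}_2^2$ is unevenly spread is not handled by a side "concentration" assumption but by a fourth candidate: $g=\Pi t$ for a \emph{standard} Gaussian $t$ independent of the distribution. A direct computation gives $\E_t\norm{\Pi t}_4^4=3\E_\elem\norm{\Pi\delta_\elem}_2^4$ while $\E_t\norm{\Pi t}_2^4\le O((\rank\Pi)^2)$ (chi-squared moments), so failure of this candidate at threshold $\gamma$ yields $\E_\elem\norm{\Pi\delta_\elem}_2^4\le O(\gamma)(\rank\Pi)^2$ (\pref{lem:randfunc-4norm}, \pref{lem:randfunc-2norm}) --- which is precisely the input your chain needs. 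Relatedly, failure of candidate (a) gives $\norm{\Pi\delta_\elem}_4^4\le\gamma\norm{\Pi\delta_\elem}_2^4$, not $O(\norm{\Pi\delta_\elem}_2^4)$; these $\gamma$ factors must be carried through the chain, and doing so one finds the exponents still close to give $\gamma\ge\Omega\bigparen{1/(\mu(\rank\Pi)^{1/3})}$, but only after the bookkeeping is done. Second, your truncated function $\topf$ (project, then zero the small coordinates) does \emph{not} lie in $V$ --- coordinate truncation destroys subspace membership --- and controlling $\norm{\Pi\topf-\topf}_2$ is part of the content of \pref{lem:trunc}. The two conclusions of that lemma that the argument actually consumes are $\norm{\Pi\topf}_4^4\ge\Omega(\mu)$ and the pointwise domination $(\Pi f)(\elem)^2\ge\Omega(|\topf(\elem)|)$; the latter is what converts the fourth-moment bound on $\iprod{\topf,\topf'}$ coming from the coordinate-projection step into a bound on $\iprod{(\Pi f)^2,(\Pi f')^2}$, which is the quantity the reweighting and Gaussian steps control. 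In the SOS relaxation these two conclusions are imposed as constraints on the auxiliary variables, as you propose. With those two lemmas filled in, your argument is the paper's.
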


We will prove Theorem~\ref{thm:ASVP} by first showing a combining algorithm and then transforming it into a rounding algorithm. Note that the description of the combining algorithm is
independent of the actual relaxation used, since it assumes a true distribution on the solutions, and so we first describe the algorithm before specifying the relaxation.
In our actual relaxation we will use some auxiliary variables that will make the analysis of the algorithm simpler.

\medskip

\noindent \emph{\large Combining algorithm for finding an analytically sparse vector:}

\medskip
\noindent\textbf{Input:} Distribution $\cD$ over Boolean (i.e., $0/1$ valued) functions $f \in L_2(\univ)$ that satisfy:
\begin{itemize}

\item $\mu(f) = \Pr[ f(\elem)=1] = 1/\lambda$.

\item $\norm{\Pi f}_2^2 \geq (1-\e)\norm{f}_2^2$.

\end{itemize}

\noindent\textbf{Goal:} Output $g$ such that
\begin{equation}
\norm{g}_4^4 \geq \gamma \norm{g}_2^2 \text{ where } \gamma = \Omega(1/\mu(\rank \Pi)^{1/3}) \label{eq:goal}
\end{equation}

\noindent\textbf{Operation:} Do the following:

\begin{description}

\item[\it Coordinate projection rounding:]
For $\elem\in \univ$, let $\delta_\elem\from \univ\to \R$ be the function that satisfies
$\iprod{f,\delta_\elem}=f(\elem)$ for all $f\in L_2(\univ)$.
Go over all vectors of the form $g_\elem = \Pi \delta_\elem$ for $\elem\in \univ$ and if there is one that satisfies (\ref{eq:goal}) then output it.
Note that the output of this procedure is independent of the distribution $\cD$.

\item[\it Random function rounding:]  Choose a random gaussian vector $t \in L_2(\univ)$ and output $g=\Pi t$ if it satisfies (\ref{eq:goal}).
(Note that this is also independent of the distribution $\cD$.)

\item[\it Conditioning:] Go over all choices for $\elem_1,\ldots,\elem_4 \in \univ$ and modify the distribution $\cD$
to the distribution $\cD_{\elem_1,\ldots,\elem_4}$ defined such that $\Pr_{\cD_{\elem_1,\ldots,\elem_4}}[f]$ is proportional to $\Pr_{\cD}[f]\prod_{j=1}^4 f(\elem_j)^2$ for every $f$.

\item[\it Gaussian rounding:] For every one of these choices, let $t$ to be a random Gaussian that matches the first two moments of the distribution $\cD$, and output $g=\Pi t$ if it satisfies (\ref{eq:goal}).

\end{description}


Because we will make use of this fact later, we will note when certain properties hold not just for expectations of actual probability distributions but for \emph{pseudoexpectations} as well.
The extension to pseudoexpectations is typically not deep, but can be cumbersome, and so the reader might want to initially restrict attention to the case of a combining algorithm, where we only deal with actual expectations.
We show the consequences for each of the steps failing, and then combine them together to get a contradiction.

\subsection{Random function rounding}

We start by analyzing the random function rounding step.
Let $e_1,\ldots,e_n$ be an orthonormal basis for the space of functions $L_2(\univ)$.
Let $t$ be a standard Gaussian function in $L_2(\univ)$, i.e., $t=\xi_1 e_1 + \ldots + \xi_n e_n$ for independent standard normal variable $\xi_1,\ldots,\xi_n$ (each with mean $0$ and variance $1$).
The following lemmas combined show what are the consequences if $\norm{\Pi t}_4$ is not much bigger than $\norm{\Pi t}_2$.

\begin{lemma}
  For any $f,g\from \univ\to \R$,
  \begin{displaymath}
    \E_t \iprod{f,t}\iprod{g,t}
    = \iprod{f,g}
    \mper
  \end{displaymath}
\end{lemma}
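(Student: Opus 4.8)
The plan is to expand everything in the given orthonormal basis $e_1,\dots,e_n$ and thereby reduce the claim to the elementary fact that the $\xi_i$ are uncorrelated with unit variance. First I would write $f=\sum_{i} \iprod{f,e_i} e_i$ and $g=\sum_j \iprod{g,e_j} e_j$, which is legitimate because $\{e_i\}$ is an orthonormal basis of $L_2(\univ)$ with respect to the expectation inner product $\iprod{f,g}=\E_\elem f(\elem)g(\elem)$. By linearity of the inner product in its second argument and $\iprod{e_k,e_\ell}=\delta_{k\ell}$, together with $t=\sum_k \xi_k e_k$, this gives $\iprod{f,t}=\sum_i \iprod{f,e_i}\,\xi_i$ and likewise $\iprod{g,t}=\sum_j \iprod{g,e_j}\,\xi_j$.

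Then I would multiply these two expressions and take the expectation over the Gaussian coefficients:
\[
\E_t \iprod{f,t}\iprod{g,t} = \sum_{i,j}\iprod{f,e_i}\iprod{g,e_j}\,\E[\xi_i\xi_j]\mper
\]
Since the $\xi_i$ are independent standard normal variables, $\E[\xi_i\xi_j]=\delta_{ij}$, so only the diagonal terms survive and the right-hand side equals $\sum_i \iprod{f,e_i}\iprod{g,e_i}$. By Parseval's identity in the orthonormal basis $\{e_i\}$ this is exactly $\iprod{f,g}$, which completes the proof.

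There is essentially no obstacle here — this is a routine second-moment computation — and the only point requiring a moment's care is that the inner product used throughout is the expectation inner product, with respect to which $\{e_i\}$ is orthonormal; all the steps above are just basis expansion and linearity and are independent of the chosen orthonormal basis. Equivalently, one could phrase the argument intrinsically: the covariance operator of the Gaussian function $t$ is the identity on $L_2(\univ)$ with respect to this inner product, so $\E_t \iprod{f,t}\iprod{g,t}$ is precisely the symmetric bilinear form associated with the identity operator, namely $\iprod{f,g}$.
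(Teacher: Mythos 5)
Your proof is correct and is essentially identical to the paper's: both expand $f$, $g$, and $t$ in the orthonormal basis, use $\E[\xi_i\xi_j]=\delta_{ij}$ to kill the off-diagonal terms, and conclude by Parseval. Nothing to add.
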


\begin{proof}
  In the $\{e_1,\ldots,e_n\}$ basis, $f=\sum_i a_i e_i$ and $g=\sum_j b_j e_j$.
  Then, $\iprod{f,g}=\sum_i a_ib_i$ and $\iprod{f,t}\iprod{g,t} = \sum_{ij} a_ib_j \xi_i\xi_j$, which has expectation $\sum_i a_ib_i$.
  Hence, the left-hand side is the same as the right-hand side.
\end{proof}

\begin{lemma} \label{lem:randfunc-4norm}
  The $4^{th}$ moment of $\norm{\Pi t}_4$ satisfies
  \begin{displaymath}
    \E_t \norm{\Pi t}_4^4
    \geq \E_\elem \norm{\Pi \delta_\elem}_2^4
    \mper
  \end{displaymath}
\end{lemma}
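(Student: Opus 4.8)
The plan is to reduce the whole statement to a one-dimensional Gaussian estimate, computed coordinate by coordinate. First I would unfold $\norm{\Pi t}_4^4$ using the definition of the expectation $4$-norm and the defining property $\iprod{f,\delta_\elem}=f(\elem)$ of the coordinate functions. Since $\Pi$ is a self-adjoint projector, for every $\elem\in\univ$ we have $(\Pi t)(\elem)=\iprod{\Pi t,\delta_\elem}=\iprod{t,\Pi\delta_\elem}$, and therefore
\[
\E_t \norm{\Pi t}_4^4 \;=\; \E_t\,\E_\elem\, (\Pi t)(\elem)^4 \;=\; \E_\elem\,\E_t\, \iprod{t,\Pi\delta_\elem}^4\mcom
\]
where swapping the (finite) average over $\elem$ with the Gaussian expectation over $t$ is harmless.

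Next I would fix $\elem$, set $v=\Pi\delta_\elem$, and control $\E_t\iprod{t,v}^4$. By the lemma just proved (applied with $f=g=v$), $\E_t\iprod{t,v}^2=\iprod{v,v}=\norm{v}_2^2$; equivalently, $\iprod{t,v}$ is a centered real Gaussian of variance $\norm{v}_2^2$. Then Jensen's inequality applied to the convex function $z\mapsto z^2$ gives
\[
\E_t \iprod{t,v}^4 \;=\; \E_t\bigparen{\iprod{t,v}^2}^2 \;\geq\; \bigparen{\E_t \iprod{t,v}^2}^2 \;=\; \norm{v}_2^4\mper
\]
(One could instead use the exact Gaussian fourth-moment identity $\E_t\iprod{t,v}^4=3\norm{v}_2^4$, which yields an extra factor of $3$, but the crude Jensen bound already suffices for the claim.)

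Finally, averaging $\E_t\iprod{t,\Pi\delta_\elem}^4\geq\norm{\Pi\delta_\elem}_2^4$ over $\elem\in\univ$ and combining with the first display yields $\E_t\norm{\Pi t}_4^4\geq\E_\elem\norm{\Pi\delta_\elem}_2^4$. I do not expect any genuine obstacle here: the only points needing a moment's care are the normalization of the delta functions $\delta_\elem$ under the expectation inner product and the self-adjointness of $\Pi$, both routine. I would also note in passing that the two ingredients used, namely $\E_t\iprod{t,\cdot}\iprod{t,\cdot}=\iprod{\cdot,\cdot}$ and positivity of squares (Jensen), are exactly the sort of facts that carry over to pseudoexpectations, which will matter when the analogous bound is needed inside the SOS relaxation later in this section.
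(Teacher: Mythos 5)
Your proof is correct and follows essentially the same route as the paper: swap the expectations over $t$ and $\elem$, identify $(\Pi t)(\elem)=\iprod{t,\Pi\delta_\elem}$ as a centered Gaussian with variance $\norm{\Pi\delta_\elem}_2^2$, and bound its fourth moment. The only cosmetic difference is that the paper uses the exact Gaussian identity $\E X^4=3(\E X^2)^2$ (yielding the constant $3$, then discarding it), whereas you use Jensen to get the factor-$1$ bound directly; both suffice.
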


\begin{proof}
  By the previous lemma, the Gaussian variable $\Pi t(\elem)=\iprod{\Pi \delta_\elem, t}$ has variance $\norm{\Pi \delta_\elem}_2^2$.
  Therefore,
  \begin{align*}
    \E_t \norm{\Pi t}_4^4
    & = \E_t \E_\elem \Pi t(\elem)_4^4
    = \E_\elem \E_t \iprod{\delta_\elem,\Pi t}^4\\
    & = 3 \E_\elem \Paren{\E_t \iprod{\Pi \delta_\elem,t}^2}^4
    = 3 \E_\elem \norm{\Pi \delta_\elem}_2^4
    \mcom
  \end{align*}
  since $3 = \E_{X\sim N(0,1)} X^4$.
\end{proof}

\begin{lemma} \label{lem:randfunc-2norm}
  The $4^{th}$  moment of $\norm{\Pi t}_2$ satisfies
  \begin{displaymath}
    \E_t \norm{\Pi t}_2^4
    \le 10 \cdot  (\rank \Pi)^2
    \mper
  \end{displaymath}
\end{lemma}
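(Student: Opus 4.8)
The plan is to identify $\norm{\Pi t}_2^2$ with a standard $\chi^2$ random variable of $d := \rank\Pi$ degrees of freedom, after which the claimed bound is a one-line moment computation.

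First I would fix an orthonormal basis $u_1,\dots,u_d$ of $\mathrm{Image}(\Pi)$ (orthonormal with respect to the expectation inner product $\iprod{f,g} = \E_\elem f(\elem)g(\elem)$), and set $\eta_j := \iprod{u_j, t}$, so that $\Pi t = \sum_{j=1}^d \eta_j u_j$ by the formula for orthogonal projection onto the span of an orthonormal set. Since $t = \sum_i \xi_i e_i$ is a Gaussian vector, each $\eta_j$ is a Gaussian linear functional of $t$, and by the identity $\E_t \iprod{f,t}\iprod{g,t} = \iprod{f,g}$ from the first lemma of this subsection we get $\E_t \eta_j\eta_k = \iprod{u_j,u_k} = \delta_{jk}$; being jointly Gaussian and uncorrelated, $\eta_1,\dots,\eta_d$ are therefore independent standard normal variables. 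Consequently $\norm{\Pi t}_2^2 = \iprod{\Pi t,\Pi t} = \sum_{j=1}^d \eta_j^2$.

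It then remains to compute
\[
\E_t \norm{\Pi t}_2^4 = \E\Bigparen{\sum_{j=1}^d \eta_j^2}^2 = \sum_{j=1}^d \E\eta_j^4 + \sum_{j\neq k}\E\eta_j^2\,\E\eta_k^2 = 3d + d(d-1) = d^2 + 2d \le 10 d^2 \mcom
\]
using $\E_{X\sim N(0,1)}X^4 = 3$ and $d\ge 1$. There is no real obstacle here; the only point worth flagging is that the reduction to i.i.d.\ standard normals in the second step relies precisely on the preceding lemma (equivalently, on $t$ being Gaussian relative to an orthonormal basis for the expectation inner product). Alternatively, one could avoid bases altogether: expand $\norm{\Pi t}_2^4 = \E_{\elem,\elem'}\Pi t(\elem)^2\Pi t(\elem')^2$, apply Isserlis' formula to the jointly Gaussian pair $(\Pi t(\elem),\Pi t(\elem'))$ to obtain $\norm{\Pi\delta_\elem}_2^2\norm{\Pi\delta_{\elem'}}_2^2 + 2\iprod{\Pi\delta_\elem,\Pi\delta_{\elem'}}^2$, and then use $\E_\elem\norm{\Pi\delta_\elem}_2^2 = \Tr\Pi = d$ together with $\E_{\elem,\elem'}\iprod{\Pi\delta_\elem,\Pi\delta_{\elem'}}^2 = \Tr(\Pi^2) = d$, both of which follow from $\Pi = \Pi^2 = \Pi^*$.
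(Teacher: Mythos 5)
Your proof is correct and follows essentially the same route as the paper: identify $\norm{\Pi t}_2^2$ with a $\chi^2$ variable with $d=\rank\Pi$ degrees of freedom and compute its second moment $d^2+2d\le 10d^2$. In fact your write-up is more careful than the paper's (which misstates the $\chi^2$ variable as $\norm{\Pi t}_2$ rather than its square and has a typo $(\rank\Pi)^4$ in the final displayed bound), so nothing further is needed.
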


\begin{proof}
  The random variable $\norm{\Pi t}_2$ has a $\chi^2$-distribution with $k=\rank \Pi$ degrees of freedom.
  The mean of this distribution is $k$ and the variance is $2k$.
  It follows that $\E_t \norm{\Pi t}_2^4 \le 10(\rank \Pi)^4$.
\end{proof}

\subsection{Coordinate projection rounding}
\label{sec:coordinate}

We now turn to showing the implications of the failure of projection rounding.
We start by noting the following technical lemma, that holds for both the expectation and counting inner products:

\begin{lemma}
  Let $x$ and $y$ be two independent, vector-valued random variables.
  Then,
  \begin{displaymath}
    \E\iprod{x,y}^4 \le \Paren{\E\iprod{x,x'}^4}^{1/2} \cdot \Paren{\E\iprod{y,y'}^4}^{1/2}\mper
  \end{displaymath}
  Moreover, this holds even if $x,y$ come from a level $\ell \geq 10$ pseudodistribution. 
\end{lemma}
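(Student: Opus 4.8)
The plan is to linearize $\iprod{x,y}^4$ into an inner product of fourth-moment tensors and then apply an ordinary Cauchy--Schwarz inequality to honest vectors. For $v$ in the relevant inner product space write $v^{\ot 4}$ for the $4$-fold tensor power, viewed as a vector indexed by $4$-tuples, and extend $\iprod{\cdot,\cdot}$ to such tensors in the obvious way (for the expectation inner product, $\iprod{S,T}=\E_{\elem_1,\dots,\elem_4} S_{\elem_1\cdots\elem_4}T_{\elem_1\cdots\elem_4}$, and analogously for the counting measure). Two facts suffice: this tensor pairing is a genuine inner product on a finite-dimensional real space (so it obeys Cauchy--Schwarz), and it satisfies the identity $\iprod{x,y}^4=\iprod{x^{\ot 4},y^{\ot 4}}$; both are immediate from the definition and make the argument oblivious to which of the two measures is used.

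First I would use independence of $x$ and $y$, together with linearity of expectation, to write $\E\iprod{x,y}^4=\iprod{\E\, x^{\ot 4},\,\E\, y^{\ot 4}}$; likewise, since $x'$ is an independent copy of $x$ (so $\E\, x'^{\ot 4}=\E\, x^{\ot 4}$), $\E\iprod{x,x'}^4=\norm{\E\, x^{\ot 4}}_2^2$, and similarly $\E\iprod{y,y'}^4=\norm{\E\, y^{\ot 4}}_2^2$. Since $\E\, x^{\ot 4}$ and $\E\, y^{\ot 4}$ are ordinary vectors, Cauchy--Schwarz gives $\iprod{\E\, x^{\ot 4},\E\, y^{\ot 4}}\le\norm{\E\, x^{\ot 4}}_2\cdot\norm{\E\, y^{\ot 4}}_2$, which is exactly the claimed inequality.

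For the pseudodistribution version, the same three lines go through verbatim with $\E$ replaced by $\pE$. The key point is that $\pE\, x^{\ot 4}$ is still an honest vector in $\R^{n^4}$ --- its entries are the well-defined numbers $\pE\, x_{i_1}x_{i_2}x_{i_3}x_{i_4}$ --- so the Cauchy--Schwarz step is applied to actual vectors and needs no pseudo-analogue (one may instead route it through Lemma~\ref{lem:pef-cauchy-schwarz}, but this is unnecessary). The two uses of ``independence'', namely $\pE\iprod{x,y}^4=\iprod{\pE\, x^{\ot 4},\pE\, y^{\ot 4}}$ and $\pE\iprod{x,x'}^4=\norm{\pE\, x^{\ot 4}}_2^2$, are just the statements that the relevant degree-$8$ moments factor as products of degree-$4$ moments, which is part of the meaning of independence for the joint pseudodistribution (and automatic when one of $x,y$ is a genuine random variable, such as a uniformly random coordinate projection $\delta_\elem$). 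The hypothesis $\ell\ge 10$ comfortably ensures every monomial appearing --- all of degree at most $8$ --- lies in the domain of $\pE$. I do not anticipate a real obstacle: once $\iprod{x,y}^4$ is recognized as $\iprod{x^{\ot 4},y^{\ot 4}}$ the lemma is a one-line Cauchy--Schwarz, and the only care required is the moment-factorization bookkeeping in the pseudodistribution case.
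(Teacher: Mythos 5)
Your proposal is correct and is essentially identical to the paper's own proof: both rewrite $\iprod{x,y}^4$ as $\iprod{x^{\ot 4},y^{\ot 4}}$, use independence (via Lemma~\ref{lem:pef:independent} in the pseudodistribution case) to factor the relevant degree-$8$ moments, and then apply the ordinary Cauchy--Schwarz inequality to the honest vectors $\pE\, x^{\ot 4}$ and $\pE\, y^{\ot 4}$. No gaps.
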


\begin{proof} By Cauchy--Schwarz,
  \begin{multline*}
    \pE_{x,y} \iprod{x,y}^4
     = \iprod{ \pE\nolimits_x x^{\ot 4}, \pE\nolimits_y y^{\ot 4}}
     \\
     \le \norm{\pE\nolimits_xx^{\ot 4}}_2
     \cdot \norm{ \pE\nolimits_y y^{\ot 4}}_2
     = \Paren{\pE\nolimits_{x,x'}\iprod{x,x'}^4}^{1/2}
     \cdot \Paren{\pE\nolimits_{y,y'}\iprod{y,y'}^4}^{1/2}
     \mper
  \end{multline*}

We now consider the case of pseudodistributions.
In this case the pseudoexpectation over two independent $x$ and $x'$ is obtained using Lemma~\ref{lem:pef:independent}.
Let $X$ and $Y$ be the $n^4$-dimensional vectors $\pE x^{\ot 4}$ and  $\pE y^{\ot 4}$ respectively.

We can use the standard Cauchy--Schwarz to argue that $X\cdot Y \leq \norm{X}_2\cdot\norm{Y}_2$, and so what is left is to argue that
  $\norm{X}_2^2 = \pE_{x,x'}\iprod{x,x'}^4$, and similarly for $Y$.
This holds by linearity for the same reason this is true for actual expectations, but for the sake of completeness, we do this calculation.
We use the counting inner product for convenience.
Because the lemma's statement is scale free, this will imply it also for the expectation norm.
\[
\pE_{x,x'} \iprod{x,x'}^4 = \pE_{x,x'} \sum_{i,j,k,l} \pE x_ix_jx_kx_lx'_ix'_jx'_kx'_l = \sum_{i,j,k,l}(\pE_x x_ix_jx_kx_l)(\pE_{x'} x'_ix'_jx'_kx'_l) \mcom
\]
where the last equality holds by independence. But this is simply equal to
\[
\sum_{i,j,k,l}(\pE_x x_ix_jx_kx_l)^2 = \norm{X}_2^2
\]
\end{proof}

The following lemma shows a nontrivial consequence  for $\norm{\Pi \delta_{\elem}}_4^4$ being small:

\begin{lemma}[Coordinate projection rounding] \label{lem:coor-proj}
  For any distribution $\cD$ over $L_2(\univ)$,
  \begin{displaymath}
    \E_{f\sim \cD}\norm{\Pi f}_4^4 \leq
    \Paren{
      \E\nolimits_{f,f'\sim \cD} \iprod{f,\Pi f'}^4
    }^{1/2}
    \cdot\Paren{
      \E\nolimits_{\elem}\norm{\Pi \delta_{\elem}}^4_4
    }^{1/2}
    \mper
  \end{displaymath}
  Moreover, this holds even if $\cD$ is a level $\ell \geq 104$ pseudodistribution.
  (Note that $\elem$ is simply the uniform distribution over $\univ$, and hence the last term of the right hand side always denotes an actual expectation.)
\end{lemma}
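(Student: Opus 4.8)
The plan is to reduce the claim to Cauchy--Schwarz applied in $(\bbR^\univ)^{\ot 4}$ — or, equivalently, to the preceding lemma of this subsection (the one asserting $\E\iprod{x,y}^4\le(\E\iprod{x,x'}^4)^{1/2}(\E\iprod{y,y'}^4)^{1/2}$) — once both sides of the target inequality have been rewritten in a symmetric form using two elementary identities about the orthogonal projector $\Pi$.

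First I would record those identities. Since $\Pi$ is self-adjoint with $\Pi^2=\Pi$, and the evaluation functionals satisfy $\iprod{h,\delta_\elem}=h(\elem)$ for every $h\in L_2(\univ)$, we have for all $h,h'\in L_2(\univ)$ and all $\elem\in\univ$
\[
(\Pi h)(\elem)=\iprod{\Pi h,\delta_\elem}=\iprod{h,\Pi\delta_\elem}=\iprod{\Pi h,\Pi\delta_\elem},\qquad \iprod{\Pi h,\Pi h'}=\iprod{h,\Pi h'}.
\]
Applying the first identity with $h=f$ gives $\norm{\Pi f}_4^4=\E_\elem(\Pi f)(\elem)^4=\E_\elem\iprod{\Pi f,\Pi\delta_\elem}^4$, hence $\E_{f\sim\cD}\norm{\Pi f}_4^4=\E_{f\sim\cD,\elem}\iprod{\Pi f,\Pi\delta_\elem}^4$; applying it with $h=\delta_\elem$ gives $\norm{\Pi\delta_\elem}_4^4=\E_{\elem'}\iprod{\Pi\delta_\elem,\Pi\delta_{\elem'}}^4$, hence $\E_\elem\norm{\Pi\delta_\elem}_4^4=\E_{\elem,\elem'}\iprod{\Pi\delta_\elem,\Pi\delta_{\elem'}}^4$.

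Next I would invoke the preceding lemma with $x$ distributed as $\Pi f$ for $f\sim\cD$ (independent copy $x'=\Pi f'$) and $y$ distributed as $\Pi\delta_\elem$ for $\elem$ uniform on $\univ$ (independent copy $y'=\Pi\delta_{\elem'}$); these are independent vector-valued random variables, so
\[
\E_{f\sim\cD,\elem}\iprod{\Pi f,\Pi\delta_\elem}^4\le\Paren{\E_{f,f'\sim\cD}\iprod{\Pi f,\Pi f'}^4}^{1/2}\cdot\Paren{\E_{\elem,\elem'}\iprod{\Pi\delta_\elem,\Pi\delta_{\elem'}}^4}^{1/2}.
\]
Plugging in the two rewritings above together with $\iprod{\Pi f,\Pi f'}=\iprod{f,\Pi f'}$ yields exactly the asserted inequality. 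Equivalently, one can skip the preceding lemma and argue in place: by multilinearity $\E_{f,\elem}\iprod{\Pi f,\Pi\delta_\elem}^4=\iprod{\E_f(\Pi f)^{\ot 4},\E_\elem(\Pi\delta_\elem)^{\ot 4}}$, then apply Cauchy--Schwarz in $(\bbR^\univ)^{\ot 4}$ and recognize $\norm{\E_f(\Pi f)^{\ot 4}}_2^2=\E_{f,f'}\iprod{\Pi f,\Pi f'}^4$ and $\norm{\E_\elem(\Pi\delta_\elem)^{\ot 4}}_2^2=\E_{\elem,\elem'}\iprod{\Pi\delta_\elem,\Pi\delta_{\elem'}}^4$.

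For the pseudodistribution case, the only place $\cD$ enters is through $x=\Pi f$. Replacing $\E_{f\sim\cD}$ by $\pE_{f\sim\cD}$ throughout, the projector identities above still hold verbatim — they are linear in the argument, so they survive under $\pE$ — and the inequality step is precisely the pseudodistribution form of the preceding lemma, with the product over the independent pair $(f,f')$ interpreted via \pref{lem:pef:independent}; the degree-$4$ tensor $\pE_f(\Pi f)^{\ot 4}$ and the quantity $\pE_{f,f'}\iprod{\Pi f,\Pi f'}^4$ are well defined once $\cD$ has level at least the stated constant. I do not expect a genuine obstacle here: the entire content is Cauchy--Schwarz plus bookkeeping with $\Pi$, and the only point that needs a moment of care is making sure every use of $\Pi^2=\Pi$ and self-adjointness is applied to the fixed operator $\Pi$ and never to the pseudo-random $f$, so the argument lifts to pseudoexpectations unchanged; the level $\ell\ge 104$ in the statement is generous slack that also absorbs composition with the other rounding steps.
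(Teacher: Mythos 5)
Your proof is correct and is essentially identical to the paper's: both rewrite $\E_{f}\norm{\Pi f}_4^4$ as $\E_{f,\elem}\iprod{\Pi f,\Pi\delta_\elem}^4$ using self-adjointness and $\Pi^2=\Pi$, apply the preceding Cauchy--Schwarz lemma to the independent random vectors $\Pi f$ and $\Pi\delta_\elem$, and then recognize $\E_{\elem,\elem'}\iprod{\Pi\delta_\elem,\Pi\delta_{\elem'}}^4=\E_\elem\norm{\Pi\delta_\elem}_4^4$. The pseudodistribution extension is handled the same way, by noting that only $f$ is pseudo-random and invoking the pseudoexpectation version of the preceding lemma.
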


\begin{proof} By the previous lemma,
  \begin{align*}
    \pE_{f\sim \cD}\norm{\Pi f}_4^4
    & = \E_{f\sim \cD} \E_\elem \iprod{\delta_\elem,\Pi f}^4
    \le\Paren{
      \pE\nolimits_{f,f'\sim \cD}\iprod{f,\Pi f'}^4
    }^{1/2}
    \cdot \Paren{
      \E\nolimits_{\elem,\elem'}\iprod{\delta_\elem,\Pi \delta_{\elem'}}^4
    }^{1/2}
    \\
    & =\Paren{
      \pE\nolimits_{f,f'\sim \cD} \iprod{f,\Pi f'}^4
    }^{1/2}
    \cdot\Paren{
      \E\nolimits_{\elem}\norm{\Pi \delta_{\elem}}^4_4
    }^{1/2}
    \mper\qedhere
  \end{align*}
\end{proof}

\subsection{Gaussian Rounding}
\label{sec:gaussian}

In this subsection we analyze the gaussian rounding step.
Let $t$ be a random function with the Gaussian distribution that matches the first two moments of a distribution $\cD$ over $L_2(\univ)$.

\begin{lemma}
  The $4^{th}$ moment of $\norm{\Pi t}_4$ satisfies
  \begin{displaymath}
    \E_t \norm{\Pi t}^4_4
    = 3 \E_{f,f'\sim \cD} \Iprod{{(\Pi f)^2},{(\Pi f')^2}}\mper
  \end{displaymath}
  Moreover, this holds even if $\cD$ is a level $\ell \geq 100$ pseudodistribution.
  (Note that even in this case  $t$ is still an actual distribution.)
\end{lemma}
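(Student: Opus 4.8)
The plan is to exploit the only two features that separate $t$ from a generic random function: it is a centered Gaussian, and its second-moment operator agrees with that of $\cD$. Write $D_2 \defeq \pE_{f\sim\cD} f\ot f$, an honest expectation when $\cD$ is a genuine distribution. When $\cD$ is a level-$\ell$ pseudodistribution with $\ell\geq 2$, the positivity axiom applied to linear forms shows $D_2$ is symmetric positive semidefinite, so in either case there is a bona fide centered Gaussian random function $t$ on $L_2(\univ)$ with $\E_t\, t\ot t = D_2$; this is the $t$ of the Gaussian rounding step, and it is an ordinary random variable irrespective of the nature of $\cD$. The first step is to record, for each $\elem\in\univ$, that $(\Pi t)(\elem) = \iprod{\Pi\delta_\elem,\,t}$ is a centered real Gaussian whose variance is, by the matching of second moments together with linearity of $\pE$ and self-adjointness of $\Pi$,
\begin{equation}
\E_t (\Pi t)(\elem)^2 \;=\; \E_t \iprod{\Pi\delta_\elem,t}^2 \;=\; \pE_{f\sim\cD}\iprod{\Pi\delta_\elem,f}^2 \;=\; \pE_{f\sim\cD}(\Pi f)(\elem)^2 \mper
\label{eq:gr-var}
\end{equation}

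Next I would apply the scalar fourth-moment identity for a centered Gaussian, $\E X^4 = 3(\E X^2)^2$, to $X = (\Pi t)(\elem)$ for each $\elem$, then average over $\elem$ under the uniform measure (everything is finite-dimensional with finite moments, so swapping $\E_t$ and $\E_\elem$ is harmless). Using $\norm{\Pi t}_4^4 = \E_\elem (\Pi t)(\elem)^4$ and then \eqref{eq:gr-var},
\begin{equation}
\E_t \norm{\Pi t}_4^4 \;=\; \E_\elem \E_t (\Pi t)(\elem)^4 \;=\; 3\,\E_\elem \Paren{\E_t (\Pi t)(\elem)^2}^2 \;=\; 3\,\E_\elem \Paren{\pE_{f\sim\cD}(\Pi f)(\elem)^2}^2 \mper
\label{eq:gr-main}
\end{equation}

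The last step is to rewrite the right-hand side of \eqref{eq:gr-main} as $3\,\pE_{f,f'\sim\cD}\Iprod{(\Pi f)^2,(\Pi f')^2}$. For an actual distribution this is just passage to a product measure followed by Fubini, since $\Iprod{(\Pi f)^2,(\Pi f')^2} = \E_\elem (\Pi f)(\elem)^2(\Pi f')(\elem)^2$ in the uniform-measure inner product. For a level-$\ell$ pseudodistribution the square $\Paren{\pE_{f\sim\cD}(\Pi f)(\elem)^2}^2$ is handled by the independent-copies lemma (Lemma~\ref{lem:pef:independent}): it produces a level-$\ell$ pseudoexpectation over the pair $(f,f')$ that factorizes on products $P(f)Q(f')$ whenever $\deg P + \deg Q \le \ell$, and here $(\Pi f)(\elem)^2$ and $(\Pi f')(\elem)^2$ have degree $2$ in $f$ and $f'$ respectively (total degree $4 \le \ell$ since $\ell\ge 100$), so
\begin{equation}
\Paren{\pE_{f\sim\cD}(\Pi f)(\elem)^2}^2 \;=\; \pE_{f,f'\sim\cD}\,(\Pi f)(\elem)^2(\Pi f')(\elem)^2 \mper
\label{eq:gr-indep}
\end{equation}
Averaging \eqref{eq:gr-indep} over $\elem$ and pulling the finite average $\E_\elem$ inside the (linear) pseudoexpectation identifies $\E_\elem\Paren{\pE_{f\sim\cD}(\Pi f)(\elem)^2}^2$ with $\pE_{f,f'\sim\cD}\Iprod{(\Pi f)^2,(\Pi f')^2}$, which together with \eqref{eq:gr-main} proves the lemma.

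I expect the only points needing care to be the two uses of the pseudodistribution hypothesis: that $\pE_{f\sim\cD} f\ot f$ is positive semidefinite (needed so the Gaussian $t$ exists, and requiring only level $2$), and the factorization \eqref{eq:gr-indep} of a square into a pseudoexpectation over two independent copies (where the remaining levels are spent, comfortably within $\ell\ge 100$, via Lemma~\ref{lem:pef:independent}). Everything else is the scalar Gaussian moment identity plus linearity of $\pE$ and self-adjointness of $\Pi$, so I do not anticipate any real obstacle beyond keeping track of the degrees.
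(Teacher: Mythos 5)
Your proof is correct and follows essentially the same route as the paper's: the pointwise Gaussian fourth-moment identity $\E X^4 = 3(\E X^2)^2$, matching second moments between $t$ and $\cD$, and then the independent-copies factorization to turn $\E_\elem\bigl(\pE_f(\Pi f)(\elem)^2\bigr)^2$ into $\pE_{f,f'}\iprod{(\Pi f)^2,(\Pi f')^2}$. You are merely more explicit than the paper about why the Gaussian $t$ exists in the pseudodistribution case and about the invocation of Lemma~\ref{lem:pef:independent}, which is a welcome clarification rather than a divergence.
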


\begin{proof}
    \begin{multline*}
    \E_t \norm{\Pi t}^4_4
    = \E_t \E_\elem \Pi t(\elem)^4
    = 3 \E_\elem \Paren{\E\nolimits_t \Pi t(\elem)^2}^{2}\\
    = 3 \pE_\elem \Paren{\E\nolimits_{f\sim \cD} \Pi f(\elem)^2}^{2}
    = 3 \pE_{f,f'\sim \cD} \Iprod{(\Pi f)^2,(\Pi f')^2}
    \mper\qedhere
  \end{multline*}

\end{proof}




\begin{fact}
  If $\{A,B,C,D\}$ have Gaussian distribution, then
  \begin{displaymath}
    \E ABCD = \E AB\cdot \E CD~ + ~\E AC\cdot \E BD ~+~ \E BC\cdot \E AD
    \mper
  \end{displaymath}
\end{fact}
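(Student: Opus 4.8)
The plan is to reduce the statement to the case of independent standard normal variables and then finish with an elementary moment computation; this is the fourth-order instance of Wick's/Isserlis' formula, in the centered form that is needed here. First I would use that any jointly Gaussian tuple $(A,B,C,D)$ with mean zero can be written as $A = \sum_{i=1}^m a_i \xi_i$, $B = \sum_i b_i\xi_i$, $C=\sum_i c_i\xi_i$, $D = \sum_i d_i\xi_i$ for real coefficients $a_i,b_i,c_i,d_i$ and \iid standard normals $\xi_1,\dots,\xi_m$ --- for instance by letting $\xi_1,\dots,\xi_m$ be an orthonormal basis of $\Span\{A,B,C,D\}\subseteq L_2$, exactly as in the random function rounding step above. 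Expanding by multilinearity of expectation then gives
\[
\E ABCD \;=\; \sum_{i,j,k,l} a_i b_j c_k d_l \, \E[\xi_i \xi_j \xi_k \xi_l]\mper
\]

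The key computational step is the scalar identity
\[
\E[\xi_i \xi_j \xi_k \xi_l] \;=\; [i=j][k=l] + [i=k][j=l] + [i=l][j=k]\mcom
\]
with $[\cdot]$ the $0/1$ indicator. Because the $\xi$'s are independent, the left-hand side factors over the distinct indices occurring among $i,j,k,l$, and by $\E\xi = \E\xi^3 = 0$, $\E\xi^2 = 1$, $\E\xi^4 = 3$ it is nonzero only when the four indices split into two equal pairs (value $1$) or all four coincide (value $3$); one then checks that the right-hand side evaluates to $0$, $1$, and $1+1+1=3$ in exactly those same cases.

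Substituting this back and grouping the three indicator terms separately, the triple sum collapses to
\[
\E ABCD \;=\; \Bigparen{\tsum_i a_i b_i}\Bigparen{\tsum_k c_k d_k} + \Bigparen{\tsum_i a_i c_i}\Bigparen{\tsum_j b_j d_j} + \Bigparen{\tsum_i a_i d_i}\Bigparen{\tsum_j b_j c_j}\mper
\]
Since $\E XY = \sum_i x_i y_i$ whenever $X = \sum_i x_i\xi_i$ and $Y = \sum_i y_i\xi_i$, the three products are precisely $\E AB\cdot\E CD$, $\E AC\cdot\E BD$, and $\E AD\cdot\E BC$, which is the claimed formula. I do not expect a genuine obstacle: the only points needing care are that the opening reduction uses that a Gaussian vector is a linear image of an \iid standard Gaussian vector (valid even when the covariance is degenerate), and that the identity is the centered one --- it fails if $A,B,C,D$ have nonzero means. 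As an alternative, one could differentiate the moment generating function $\E\exp(sA+tB+uC+vD) = \exp\bigl(\tfrac12\Var(sA+tB+uC+vD)\bigr)$ four times and evaluate at the origin, but the reduction above is shorter.
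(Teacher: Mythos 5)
Your proof is correct. The paper states this as a bare \texttt{Fact} with no proof at all (it is the degree-four case of the Isserlis/Wick formula for centered jointly Gaussian variables), so there is nothing to compare against; your argument --- reduce to \iid standard normals via an orthonormal basis of $\Span\{A,B,C,D\}$ in $L_2$ (uncorrelated jointly Gaussian variables are independent), verify $\E[\xi_i\xi_j\xi_k\xi_l]=[i=j][k=l]+[i=k][j=l]+[i=l][j=k]$ case by case, and collapse the sum --- is the standard one and is complete. Your caveat that the identity is the centered version is also the right one to flag, and it is harmless here since the fact is applied to mean-zero Gaussians $\Pi t(\elem)$.
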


\begin{lemma}
  The fourth moment of $\norm{\Pi t}_2$ satisfies
  \begin{displaymath}
    \E_t\norm{\Pi t}_2^4 \le 3 \Paren{\E_{f\sim \cD}\norm{\Pi f}_2^2}^2
    \mper
  \end{displaymath}
  Moreover, this holds even if $\cD$ is a level $\ell \geq 4$ pseudodistribution.
\end{lemma}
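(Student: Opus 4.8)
The plan is to expand $\norm{\Pi t}_2^4$ as a double average over $\univ$ and reduce everything to the Gaussian fourth--moment (Wick) identity. Writing $(\Pi t)(\elem) = \iprod{\Pi t,\delta_\elem} = \iprod{t,\Pi\delta_\elem}$ and $\norm{\Pi t}_2^2 = \E_\elem \iprod{t,\Pi\delta_\elem}^2$, squaring gives $\norm{\Pi t}_2^4 = \E_{\elem,\elem'}\iprod{t,\Pi\delta_\elem}^2\iprod{t,\Pi\delta_{\elem'}}^2$. The process $u\mapsto \iprod{t,u}$ is a centered Gaussian with $\E_t\iprod{t,u}\iprod{t,v} = \pE_{f\sim\cD}\iprod{f,u}\iprod{f,v}$ --- this is precisely what ``matching the first two moments'' means --- so applying the Fact above with $A=C=\iprod{t,\Pi\delta_\elem}$ and $B=D=\iprod{t,\Pi\delta_{\elem'}}$, i.e. $\E A^2B^2 = \E A^2\cdot \E B^2 + 2(\E AB)^2$, I would obtain
\[
\E_t\norm{\Pi t}_2^4 = \E_{\elem,\elem'}\Bigbrac{\bigparen{\pE_f (\Pi f)(\elem)^2}\bigparen{\pE_f (\Pi f)(\elem')^2} + 2\bigparen{\pE_f (\Pi f)(\elem)(\Pi f)(\elem')}^2}.
\]

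Next I would treat the two terms. The first equals $\bigparen{\E_\elem \pE_f (\Pi f)(\elem)^2}^2 = \bigparen{\pE_f \norm{\Pi f}_2^2}^2$ by linearity. For the second, I introduce an independent copy $f'$ of $f$ (via \pref{lem:pef:independent} in the pseudodistribution case, which is what the ``$\pE_{f,f'}$'' notation stands for), so that $\bigparen{\pE_f (\Pi f)(\elem)(\Pi f)(\elem')}^2 = \pE_{f,f'}\bigbrac{(\Pi f)(\elem)(\Pi f)(\elem')(\Pi f')(\elem)(\Pi f')(\elem')}$, and average over $\elem,\elem'$ to get $\pE_{f,f'}\iprod{\Pi f,\Pi f'}^2$. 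Finally I bound $\iprod{\Pi f,\Pi f'}^2 \le \norm{\Pi f}_2^2\norm{\Pi f'}_2^2$: the difference equals $\tfrac{1}{2}\E_{\elem,\elem'}\bigparen{(\Pi f)(\elem)(\Pi f')(\elem') - (\Pi f)(\elem')(\Pi f')(\elem)}^2$, a sum of squares, hence has nonnegative pseudoexpectation (this is Cauchy--Schwarz for pseudoexpectations, cf. \pref{lem:pef-cauchy-schwarz}), and by independence $\pE_{f,f'}\norm{\Pi f}_2^2\norm{\Pi f'}_2^2 = \bigparen{\pE_f\norm{\Pi f}_2^2}^2$. Adding the two contributions gives $\E_t\norm{\Pi t}_2^4 \le 3\bigparen{\pE_f\norm{\Pi f}_2^2}^2$.

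For the pseudodistribution statement, the only point to check is that nothing requires $\cD$ to be a genuine distribution: $t$ is an honest Gaussian (it depends only on the first two pseudomoments, which are well defined already for $\ell\ge 2$), so the Wick identity is applied to an actual random variable; and the facts used about $\cD$ are just linearity of $\pE$, the independent-copy construction of \pref{lem:pef:independent}, and the sum-of-squares Lagrange identity, all involving polynomials of degree at most $4$ in the combined variables $(f,f')$, so level $\ell\ge 4$ suffices. I expect no real obstacle here --- the Gaussian moment computation is routine, and the only mild care needed is the degree bookkeeping around the independent copy $f'$ and verifying that it stays within degree $4$.
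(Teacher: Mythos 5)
Your proof is correct and follows essentially the same route as the paper's: expand $\norm{\Pi t}_2^4$ as a double average over $\elem,\elem'$, apply the Gaussian fourth-moment Fact, and convert second moments of $t$ into pseudomoments of $\cD$. You are in fact slightly more careful than the paper, whose one-line computation writes the final step as an equality; the cross term $2\E_{\elem,\elem'}\bigparen{\pE_f \Pi f(\elem)\Pi f(\elem')}^2$ is in general only \emph{at most} $2\bigparen{\pE_f\norm{\Pi f}_2^2}^2$, and your independent-copy plus sum-of-squares Cauchy--Schwarz argument is exactly the missing justification for that bound.
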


\begin{proof} By the previous fact,
  \begin{multline*}
    \E_t\norm{\Pi t}_2^4
    =\E_{\elem,\elem'} \E_t \Pi t(\elem)^2\cdot \Pi t(\elem')^2\\
    = \E_{\elem,\elem'} \pE_{f} \Pi f(\elem)^2\cdot \pE_f \Pi f(\elem')^2 + 2 \Paren{\pE_f \Pi
      f(\elem)\Pi f(\elem')}^2
    = 3 \Paren{\pE_{f}\norm{\Pi f}_2^2}^2
    \mper\qedhere
  \end{multline*}
\end{proof}




\subsection{Conditioning}
\label{sec:conditioning}

We now show the sense in which conditioning can make progress.
Let $\cD$ be a distribution over $L_2(\univ)$.
For $\elem\in \univ$, let $\cD_\elem$ be the distribution $\cD$ reweighed by $f(\elem)^2$ for $f\sim \cD$.
That is, $\Pr_{\cD_\elem}\{f\} \propto f(\elem)^2\cdot \Pr_{\cD}\{f\}$, or in other words, for every function $P(\cdot)$, $\E_{f\sim \cD_\elem} P(f) = (\E_{f sim \cD} f(\elem)^2 P(f) )/ ( \E_{f\sim \cD} f(\elem)^2 )$.
Similarly, we write $\cD_{\elem_1,\ldots, \elem_r}$ for the distribution $\cD$ reweighed by $f(\elem_1)^2 \cdots f(\elem_r)^2$.

\begin{lemma}[Conditioning] \label{lem:conditioning}
For every even $r\in\N$, there are points $\elem_1,\ldots,\elem_r\in \univ$ such that the reweighed distribution $\cD'=\cD_{\elem_1,\ldots,\elem_r}$ satisfies
  \begin{displaymath}
    \E_{f,g\sim \cD'} \Iprod{f^2,g^2}\ge \Paren{\E\nolimits_{f,g\sim \cD}\Iprod{f^2,g^2}^r}^{1/r}
  \end{displaymath}

  Moreover, this holds even if $\cD$ is a level $\ell \geq r+4$ pseudodistribution.
\end{lemma}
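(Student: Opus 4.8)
The plan is to prove the bound first for an honest distribution $\cD$ by a short \Holder-type computation, and then check that every step survives the passage to pseudodistributions. Throughout, write $\bar\elem=(\elem_1,\dots,\elem_r)$ for an $r$-tuple of points of $\univ$, set $Z_{\bar\elem}=\E_{f\sim\cD}\prod_{j=1}^r f(\elem_j)^2$ (this is exactly the normalizing constant of $\cD_{\bar\elem}:=\cD_{\elem_1,\dots,\elem_r}$, and $Z_{\bar\elem}\ge 0$), and set $N_{\bar\elem}=\E_{f,g\sim\cD}\Bigbrac{\iprod{f^2,g^2}\prod_{j=1}^r f(\elem_j)^2 g(\elem_j)^2}$, so that $\E_{f,g\sim\cD_{\bar\elem}}\iprod{f^2,g^2}=N_{\bar\elem}/Z_{\bar\elem}^2$ whenever $Z_{\bar\elem}>0$. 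The starting observation is that, since $\iprod{f^2,g^2}=\E_\elem f(\elem)^2 g(\elem)^2$, expanding the $r$-th power turns $\iprod{f^2,g^2}^r$ into $\E_{\bar\elem}\prod_{j=1}^r f(\elem_j)^2 g(\elem_j)^2$ with $\bar\elem$ uniform over $\univ^r$; taking $\E_{f,g\sim\cD}$ and using that $f$ and $g$ are independent gives
\[
\E_{f,g\sim\cD}\iprod{f^2,g^2}^r=\E_{\bar\elem}Z_{\bar\elem}^2\mcom\qquad
\E_{f,g\sim\cD}\iprod{f^2,g^2}^{r+1}=\E_{\bar\elem}N_{\bar\elem}\mper
\]

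Next I would choose $\bar\elem$ at random with probability proportional to $Z_{\bar\elem}^2$; tuples with $Z_{\bar\elem}=0$ can be discarded, since for those $N_{\bar\elem}=0$ as well (the offending square factor kills the whole expectation, by Cauchy--Schwarz). By a first-moment argument there is then a choice of $\bar\elem$ with
\[
\frac{N_{\bar\elem}}{Z_{\bar\elem}^2}\ \ge\ \frac{\E_{\bar\elem}N_{\bar\elem}}{\E_{\bar\elem}Z_{\bar\elem}^2}
\ =\ \frac{\E_{f,g\sim\cD}\iprod{f^2,g^2}^{r+1}}{\E_{f,g\sim\cD}\iprod{f^2,g^2}^{r}}\mper
\]
Finally, since $\iprod{f^2,g^2}\ge 0$, monotonicity of $L_p$-norms in $p$ (a special case of \Holder's inequality, with exponents $\tfrac{r+1}{r}$ and $r+1$) gives $\E\iprod{f^2,g^2}^{r+1}\ge\paren{\E\iprod{f^2,g^2}^{r}}^{(r+1)/r}$, so combining the last two displays yields $\E_{f,g\sim\cD_{\bar\elem}}\iprod{f^2,g^2}\ge\paren{\E_{f,g\sim\cD}\iprod{f^2,g^2}^{r}}^{1/r}$, which is the claim.

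For the pseudodistribution version, the only ingredients are: linearity of $\pE$; the fact that $\prod_j f(\elem_j)^2$ is a sum of squares, so reweighting a pseudoexpectation by it again produces a valid pseudoexpectation of an appropriately lower level, with $\pE\prod_j f(\elem_j)^2=0$ forcing $N_{\bar\elem}=0$ (via \pref{cor:pef-constraint}); the independent-copies construction for pseudoexpectations over the pair $(f,g)$ (\pref{lem:pef:independent}); Cauchy--Schwarz for pseudoexpectations (\pref{lem:pef-cauchy-schwarz}); and the pseudoexpectation form of \Holder's inequality (\pref{lem:pseudo-holder}, \pref{cor:pef-holder-inside}) for the final moment-monotonicity step — this is where it matters that $r$ is even, so that $\iprod{f^2,g^2}^r$ is literally a square. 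The one point that needs care, and which I expect to be the main (if routine) obstacle, is the degree/level bookkeeping: each conditioning point contributes degree $2$ (degree $1$ once one uses that $\pE$ is consistent with the Booleanity $f(\elem)^2\equiv f(\elem)$ of $f$, which collapses $\iprod{f^2,g^2}$ to $\iprod{f,g}$), and one must verify that the reweighting, the formation of the independent copies, and the $(r+1)$-st moment appearing in the \Holder step all stay within the allotted level; carrying this out confirms that a level-$(r+4)$ pseudodistribution suffices.
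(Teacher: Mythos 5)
Your proof is correct and follows essentially the same route as the paper's: expand $\iprod{f^2,g^2}^r$ as an average over $r$-tuples $\bar\elem$, use the ratio-of-averages (mediant) bound to find a tuple with $N_{\bar\elem}/Z_{\bar\elem}^2 \ge \E_{f,g}\iprod{f^2,g^2}^{r+1}/\E_{f,g}\iprod{f^2,g^2}^r$, and finish with the pseudoexpectation \Holder\ inequality (Lemma~\ref{lem:pseudo-holder}) to get the $(r+1)/r$-power lower bound on the numerator. Your treatment is in fact slightly more careful than the paper's (explicitly discarding tuples with $Z_{\bar\elem}=0$ and phrasing the averaging step probabilistically), and the only quibble is minor: the role of $r$ being even is tied to the hypotheses of Lemma~\ref{lem:pseudo-holder} and the level bookkeeping rather than to $\iprod{f^2,g^2}^r$ being ``literally a square.''
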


\begin{proof} We have that
  \[
    \max_{\elem_1,\ldots,\elem_r}\pE_{f,g\sim \cD_{\elem_1,\ldots, \elem_r}}\Iprod{f^2,g^2} = \max_{\elem_1,\ldots,\elem_r}\left( \frac{\pE f(\elem_1)^2\cdots f(\elem_r)^2 \cdot g(\elem_1)^2 \cdots g(\elem_r)^2 \Iprod{f^2,g^2}}{\left(\pE_f f(\elem_1)^2 \cdots f(\elem_r)^2 \right)\left( \pE_g g(\elem_1)^2 \cdots g(\elem_r)^2 \right) } \right)
  \]
  but using $\E (X/Y) \leq (\max X)/(\max Y)$ and $\E_{\elem_1,\ldots,\elem_r} f(\elem_1)^2\cdots f(\elem_r)^2 g(\elem_1)^2\cdots g(\elem_r)^2 = \Iprod{g^2,f^2}^r$, the RHS is lower bounded by
  \[
\frac{\E_{\elem_1,\ldots,\elem_r} \pE f(\elem_1)^2\cdots f(\elem_r)^2 \cdot g(\elem_1)^2 \cdots g(\elem_r)^2 \Iprod{f^2,g^2}}{\E_{\elem_1,\ldots,\elem_r}\left(\pE_f f(\elem_1)^2 \cdots f(\elem_r)^2 \right)\left( \pE_g g(\elem_1)^2 \cdots g(\elem_r)^2 \right)} = \frac{\E_{f,g \sim \cD} \Iprod{f^2,g^2}^{r+1}}{\pE_{f,g\sim \cD} \Iprod{f^2,g^2}^r}
  \]

  Now, if $\cD$ was an actual expectation, then we could use \Holder's inequality to lower bound the numerator of the RHS by  $\left( \E_{f,g \sim \cD} \Iprod{f^2,g^2}^r\right)^{(r+1)/r}$ which would lower bound the RHS by $\left( \E_{f,g \sim \cD} \Iprod{f^2,g^2}^r\right)^{1/r}$.
For pseudoexpectations this follows by appealing to Lemma~\ref{lem:pseudo-holder}.
\end{proof}

\subsection{Truncating functions} \label{sec:truncating}

The following observation would be useful for us for analyzing the case that the distribution is over functions that are not completely inside the subspace.
Note that if the function $f$ is inside the subspace, we can just take $\topf =f $ in Lemma~\ref{lem:trunc}, and so the reader may want to skip this section in a first reading and
just pretend that $\topf = f$ below.

\begin{lemma}\label{lem:trunc} Let $\e<1/400$, $\Pi$ be a projector on $\R^\univ$ and
suppose that $f\from \univ \to \{0,1\}$ satisfies that $\Pr[f(\elem)=1]=\mu$ and $\norm{\Pi f}_2^2 \geq (1-\e)\mu$.
Then there exists a function $\topf:\R^\univ\rightarrow\R$ such that:
\begin{enumerate}
\item $\norm{\Pi \topf}_4^4 \geq \Omega(\mu)$.

\item For every $\elem\in\univ$, $\Pi f(\elem)^2 \geq \Omega(|\topf(\elem)|)$.
\end{enumerate}
\end{lemma}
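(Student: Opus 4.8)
The plan is to obtain $\topf$ by first projecting $f$ into the image of $\Pi$ and then thresholding at a fixed constant. Write $g := \Pi f$. Since $\Pi$ is an orthogonal projector we have $\iprod{f,g}=\iprod{\Pi f,\Pi f}=\norm{g}_2^2$, so the Pythagorean identity gives $\norm{f-g}_2^2 = \norm{f}_2^2-\norm{g}_2^2$; because $f$ is $0/1$ with $\Pr_\elem[f(\elem)=1]=\mu$ we have $\norm{f}_2^2=\mu$, and the hypothesis $\norm{g}_2^2\ge(1-\e)\mu$ then yields $\norm{f-g}_2^2\le\e\mu$. I would then set $T:=\set{\elem\in\univ : g(\elem)\ge \tfrac12}$ and define $\topf:=\Ind_T$, the $0/1$ indicator of $T$ (the variant $\topf(\elem)=g(\elem)\cdot\Ind[\tfrac12\le g(\elem)\le\tfrac32]$ would work just as well).

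The second conclusion is then immediate: if $\topf(\elem)=0$ there is nothing to check, and if $\topf(\elem)=1$ then $\elem\in T$, so $\Pi f(\elem)^2=g(\elem)^2\ge\tfrac14=\tfrac14\abs{\topf(\elem)}$, giving the bound with constant $\tfrac14$. For the first conclusion, the first step is to argue that $T$ is close to the support $S:=\set{\elem:f(\elem)=1}$ of $f$. Let $B:=\set{\elem:(f(\elem)-g(\elem))^2\ge\tfrac14}$; since $\E_\elem(f(\elem)-g(\elem))^2\le\e\mu$, Markov's inequality gives $\abs B/n\le 4\e\mu$. Both $S\setminus T$ and $T\setminus S$ are contained in $B$ — on $S$ one has $f=1$, so $\elem\notin T$ forces $\abs{f-g}\ge\tfrac12$; off $S$ one has $f=0$, so $\elem\in T$ forces $g\ge\tfrac12$, i.e. $\abs{f-g}\ge\tfrac12$ — hence $\norm{\topf-f}_2^2=\abs{S\symdiff T}/n\le 4\e\mu$.

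Now set $h:=\Pi\topf$. Using that $\Pi$ is a contraction, $\norm{h-g}_2=\norm{\Pi(\topf-f)}_2\le\norm{\topf-f}_2\le 2\sqrt{\e\mu}$, so $\norm{h-f}_2\le 3\sqrt{\e\mu}$ and therefore $\sum_{\elem\in S}(h(\elem)-1)^2\le 9\e\mu n$. A final application of Markov shows that $h(\elem)>\tfrac12$, and hence $h(\elem)^4>\tfrac1{16}$, for all but at most $36\e\mu n$ of the $\mu n$ coordinates in $S$, so
\[
\norm{\Pi\topf}_4^4=\E_\elem h(\elem)^4\;\ge\;\tfrac1{16}(1-36\e)\mu\;=\;\Omega(\mu)\mcom
\]
where we used $\e<1/400$. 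This proves the first conclusion.

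I do not expect a genuine obstacle: the whole argument is bookkeeping with $L_2$-approximation and Markov's inequality, together with the Pythagorean identity for the projector. The one point that needs a little care is the choice of threshold, which must be a universal constant bounded away from $0$ and $1$ so that simultaneously (a) $g(\elem)^2$ dominates $\topf(\elem)$ by a universal constant on $T$, as needed for the pointwise bound, and (b) the possibly very large values that $g$ — and $h=\Pi\topf$ — may take on a few coordinates are irrelevant: the first conclusion only asks for a \emph{lower} bound on $\norm{\Pi\topf}_4^4$, which is already supplied by the $(1-O(\e))$-fraction of $S$ on which $h$ sits near $1$. The hypothesis $\e<1/400$ is used only to keep all the exceptional sets ($B$, and the coordinates of $S$ where $h$ strays from $1$) a small fraction of $S$.
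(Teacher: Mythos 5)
Your proof is correct and follows essentially the same route as the paper's: project $f$, threshold at a universal constant, show the result is $L_2$-close to $f$ (so that $\Pi\topf$ stays near $1$ on all but an $O(\e)$-fraction of the support), and finish with Markov. The only cosmetic difference is that you take $\topf$ to be the $0/1$ indicator of the thresholded set, whereas the paper keeps the truncated projection $\Pi f\cdot\Ind[\abs{\Pi f}\ge\tau]$; both satisfy the two required properties with the same bookkeeping.
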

\begin{proof}
Fix $\tau>0$ to be some sufficiently small constant (e.g., $\tau=1/2$ will do).  Let $f'=\Pi f$.
We define $\topf = f'\cdot 1_{|f'|\geq \tau}$  (i.e., $\topf(\elem)=f'(\elem)$ if $|f'(\elem)|\geq \tau$ and $\topf(\elem)=0$ otherwise) and define $\botf = f'\cdot 1_{|f'|<\tau}$.
Clearly $f'(\elem)^2 \geq \tau |\topf(\elem)|$  for every $\elem\in\univ$.

Since $\botf(x) \neq 0$ if and only if $f'(x) \in (0,\tau)$, clearly $|\botf(x)| \leq |f(x)-f'(x)|$ and hence $\norm{\botf}_2^2 \leq \e \mu$.
Using $f' = \topf + \botf$, we see that $\Pi \topf = f + (f'-f) - \botf + (\Pi \topf - \topf)$.
Now since $f'$ is in the subspace, $\norm{\Pi \topf - \topf}_2 \leq \norm{f'-\topf}_2  = \norm{\botf}_2$ and hence for
$g= (f'-f) - \botf + (\Pi \topf - \topf)$, $\norm{g}_2 \leq 3\sqrt{\e\mu}$.
Therefore the probability that $g(\elem) \geq 10\sqrt{\e}$ is at most $\mu/2$.
This means that with probability at least $\mu/2$ it holds that $f(\elem)=1$ and $g(\elem) \leq 10\sqrt{\e}$, in which case $\topf(\elem) \geq 1 - 10\sqrt{\e} \geq 1/2$.
In particular, we get that $\E \topf(\elem)^4 \geq \Omega(\mu)$.
\end{proof}

\begin{remark}[Non-Boolean functions] \label{rem:non-boolean}
The proof of Lemma~\ref{lem:trunc} establishes much more than its statement.
In particular note that we did not make use of the fact that $f$ is nonnegative, and a function $f$ into $\{0,\pm 1\}$ with $\Pr[ f(\elem)\neq 0]=\mu$
would work just the same.
We also did not need the nonzero values to have magnitude exactly one, since the proof would easily extend to the case where they
are in $[1/c,c]$ for some constant $c$.
One can also allow some nonzero values of the function to be outside that range, as long as their total contribution to the 2-norm squared is much
smaller than $\mu$.
\end{remark}


\subsection{Putting things together}
\label{sec:together}

We now show how the above analysis yields a combining algorithm, and we then discuss the changes needed to extend this argument to pseudodistributions, and hence obtain a rounding algorithm.

Let $\cD$ be a distribution over Boolean functions $f\from \univ\to \{0,1\}$ with $\norm{f}_2^2=\mu$ and $\norm{\Pi f}_2^2\ge 0.99\norm f_2^2$.
The goal is to compute a function $t\from \univ\to \R$ with $\norm{\Pi t}_4 \gg \norm{t}_2^2$, given the low-degree moments of $\cD$.

\medskip

Suppose that random-function rounding and coordinate-projection rounding fail to produce a function $t$ with $\norm{\Pi t}^4_4 \geq \gamma \norm{t}^4_2$.
Then, $\E_\elem \norm{\Pi \delta_\elem}_2^4\le O(\gamma)\cdot (\rank \Pi)^2$ (from failure of random-function rounding and Lemmas~\ref{lem:randfunc-4norm} and~\ref{lem:randfunc-2norm}).
By the failure of coordinate-projection rounding (and using Lemma~\ref{lem:coor-proj} applied to the distribution over $\topf$) we get that
\begin{displaymath}
  \Paren{\E_{f\sim \cD} \norm{\Pi \topf}_4^4}^2
  \le O(\gamma)\cdot \E_{f,f'\sim \cD}\Iprod{\topf, \topf'}^4 \cdot \E_\elem \norm{\Pi \delta_\elem}_2^4.
\end{displaymath}
Combining the two bounds, we get
\begin{displaymath}
  \E_{f,f'\sim \cD}\Iprod{\topf, \topf'}^4
  \ge \Omega(1/(\gamma \rank \Pi)^2)   \Paren{\E_{f\sim \cD} \norm{\Pi \topf}_4^4}^2
\end{displaymath}
Since (by Lemma~\ref{lem:trunc}), $(\Pi f)(\elem)^2 \ge \Omega(|\topf(\elem)|)$ for every $\elem\in\univ$ and $f$ in the support of $\cD$,
we have $\iprod{(\Pi f)^2,(\Pi f')^2}\ge \Omega(\iprod{\topf, \topf'})$ for all $f,f'$ in the support.
Thus,
\begin{displaymath}
  \E_{f,f'\sim \cD}\Iprod{(\Pi f)^2, (\Pi f')^2}^4
  \ge \Omega(1/(\gamma \rank \Pi)^2)   \Paren{\E_{f\sim \cD} \norm{\Pi \topf}_4^4}^2
\end{displaymath}
By the reweighing lemma, there exists $\elem_1,\ldots,\elem_4\in \univ$ such that the reweighted distribution $\cD'=\cD_{\elem_1,\ldots,\elem_4}$ satisfies
\begin{displaymath}
  \E_{f,f'\sim \cD'}\Iprod{(\Pi f)^2, (\Pi f')^2}
  \ge \Paren{\E_{f,f'\sim \cD}\Iprod{(\Pi f)^2, (\Pi f')^2}^4}^{1/4}
  \ge \Omega(1/(\gamma \rank \Pi))^{1/2}\Paren{\E_{f\sim \cD} \norm{\Pi \topf]}_4^4}^{1/2}
\end{displaymath}
The failure of Gaussian rounding (applied to $\cD'$) implies
\begin{displaymath}
    \E_{f,f'\sim \cD'}\Iprod{(\Pi f)^2, (\Pi f')^2}
    \le O(\gamma) \Paren{\E_{f\sim \cD'} \norm{\Pi f}_2^2}^2
    \mper
\end{displaymath}
Combining these two bounds, we get
\begin{displaymath}
       \E_{f\sim \cD} \norm{\Pi \topf}_4^4
  \le O(\gamma^3 \rank \Pi)\cdot \Paren{\E_{f\sim \cD'} \norm{\Pi f}_2^2}^4
\end{displaymath}
By the properties of $\cD$ and Lemma~\ref{lem:trunc}, the  left-hand side is $\Omega(\mu)$ and the
right-hand side is $O(\gamma^3 \rank \Pi  \mu^4)$.
Therefore, we get
\[
\gamma \geq \Omega\left( \tfrac{1}{(\rank \Pi)^{1/3}\mu}\right)
\]
\qedhere

\paragraph{Extending to pseudodistributions}
We now consider the case that $\cD$ is a pseudodistribution of level $\ell \geq 10$.
Most of the statements above just go through as is, given that the analysis of all individual steps does extend
(as noted) for pseudoexpectations.
One issue is that the truncation operation used to obtain $\topf$ is not a low degree polynomial.
While it may be possible to approximate it with such a polynomial, we sidestep the issue by simply adding $\topf$ as additional
auxiliary variables to our program, and enforcing the conclusions of Lemma~\ref{lem:trunc} as constraints that the
pseudoexpectation operator must be consistent with.
This is another example of how we design our relaxation to fit the rounding/combining algorithm, rather than the other way around.
With this step, we can replace statements such as ``(*) holds for all functions in the support of $\cD$'' (where (*) is some equality or
inequality constraint in the variables $f,\topf$) with the statement ``$\cD$ is consistent with (*)'' and thus complete the proof. \qed

\section{Finding planted sparse vectors} \label{sec:planted}

\newcommand{\fz}{\ensuremath{f_0}}
\newcommand{\f}{\ensuremath{f}}
\newcommand{\fp}{\ensuremath{f'}}
\newcommand{\Vp}{\ensuremath{V'}}
\newcommand{\V}{\ensuremath{V}}
\newcommand{\nonzeros}{\ensuremath{S}}
\newcommand{\zeros}{\ensuremath{{\overline{S}}}}
\newcommand{\g}{\ensuremath{g}}
\newcommand{\muVp}{\ensuremath{\mu_{2,1}(\Vp)}}
\newcommand{\mufz}{\ensuremath{\mu_{2,1}(\fz)}}
\newcommand{\PR}{{{\sc PlantedRecovery$(\mu,d,|\univ|,\epsilon)$ } }}

As an application of our work, we show how we can find sparse (or analytically sparse) vectors inside a sufficiently generic subspace.
In particular, this improves upon a recent result of Demanet and Hand~\cite{DemanetH13}
who used the $L_{\infty}/L_1$ optimization procedure of Spielman et al.~\cite{SpielmanWW12} to show one can recover a $\mu$-sparse vector planted
in a random $d$-dimension subspace $\Vp\subseteq \R^n$ when $\mu \ll 1/\sqrt{d}$.
Our result, combined with the bound on the SDP value of the $2\to 4$ norm of a random subspace from \cite{BarakBHKSZ12},
implies that if $d = O(\sqrt{n})$ then we can in fact recover such a vector as long as $\mu \ll 1$.
\begin{minipage}{\textwidth}
  \begin{description}
    	\item[Problem:]\PR
  \item[Input:]
  An arbitrary basis for a linear subspace $V=\mathrm{span}\left(\Vp\cup \{\fz\}\right)$, where:
 \begin{itemize}
 	\item $\Vp\sse \R^{\univ}$ is a \textbf{random $d$-dimensional subspace}, chosen as the span of $d$  vectors drawn independently from the standard Gaussian distribution on $\R^{\univ}$, and
 	\item $\fz$ is an arbitrary \textbf{$\mu$-sparse vector}, i.e.,   $S=\supp(\fz)$ has $|S|\leq\mu |\univ|$.
 \end{itemize}
  \item[Goal:]
Find a vector $f\in V$ with $\iprod{f,\fz}^2\ge (1-\epsilon)\Norm{f}_2 \Norm{\fz}_2$.
 \end{description}
\vspace{0in}
 \end{minipage}

The goal here should be thought of as recovering $\fz$ to arbitrarily high precision (``exactly''), and thus the running time of an algorithm should be logarithmic in $1/\epsilon$.
We note that $\fz$ is not required to be random, and it may be chosen adversarially based on the choice of $\Vp$.  We will prove the following theorem, which is this section's main result:


\begin{theorem} (Theorem~\ref{thm:planted:intro}, restated)
	\label{thm:planted:main}
For some absolute constant $K>0$, there is an algorithm that solves \PR with high probability in time $\poly(|\univ|,\log(1/\epsilon))$ for any $\mu<K\mu_0(d)$,
where
\[
	\mu_0(d)=
	\begin{cases}
	1 & \text{if $d\leq \sqrt{\univ}$, and}\\
	n/d^2 & \text{if $d\geq \sqrt{\univ}$}.
	\end{cases}
\]
\end{theorem}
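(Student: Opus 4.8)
The plan is to combine the ``$L_4$ vs.\ $L_2$'' combining-algorithm idea sketched in Section~\ref{sec:overview} with the sum-of-squares $2\to4$ bound for random subspaces from \cite{BarakBHKSZ12}, and then to add a correction step that turns a constant-accuracy estimate of $f_0$ into an arbitrarily accurate one. Concretely, given a basis for $V=\Span(V'\cup\{f_0\})$, I would form the degree-$O(1)$ SOS relaxation that seeks a level-$\ell$ pseudodistribution $\cD$ over $f\in V$ with $\norm f_2^2=1$ maximizing $\pE_{f\sim\cD}\norm f_4^4$. The point mass on $\hat f_0\defeq f_0/\norm{f_0}_2$ is a genuine distribution, and any $\mu$-sparse vector satisfies $\norm{f_0}_4^4\ge\tfrac1\mu\norm{f_0}_2^4$ (Cauchy--Schwarz on $\supp(f_0)$), so the relaxation is feasible with objective value $\ge1/\mu$; fix a near-optimal pseudodistribution $\cD$. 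Unlike the other two applications, no conditioning/reweighting is needed here: the rounding step simply outputs the top eigenvector of the degree-$2$ pseudo-moment matrix $M=\pE_{f\sim\cD}f^{\otimes2}$, followed by the correction step.

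The core estimate is that $M$ is dominated by $\hat f_0\hat f_0^{T}$. First, since $V'$ is a random $d$-dimensional subspace and $d$ is not too large, a union bound over the $\binom{n}{\mu n}$ possible supports shows $\norm{\Pi_{V'}f_0}_2^2\le\e\norm{f_0}_2^2$ for a small constant $\e$ throughout the parameter range of the theorem, so (up to error terms that we track) every $f\in V$ decomposes as $f=a\,\hat f_0+f'$ with $f'\in V'$ and $a=\iprod{f,\hat f_0}$. Second, by \cite{BarakBHKSZ12} (and the extension of their certificate to $d\ge\sqrt n$) there is a degree-$4$ SOS proof that every $f'\in V'$ obeys $\norm{f'}_4^4\le B\norm{f'}_2^4$ with $B=O(\max(1,d^2/n))=O(1/\mu_0(d))$. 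Combining the triangle inequality for the $4$-norm, the constraint $\norm{f'}_2\le\norm f_2=1$, and \Holder's inequality for pseudoexpectations (\prettyref{cor:pef-holder-inside}), one obtains
\begin{displaymath}
  \tfrac1\mu\ \le\ \pE_{f\sim\cD}\norm f_4^4\ \le\ \pE_{f\sim\cD}\Paren{\,\abs a\,\mu^{-1/4}+B^{1/4}\,}^4 ,
\end{displaymath}
and since $\abs a\le1$ (note $a^2(1-a^2)=a^2\norm{f'}_2^2$ is a sum of squares under the constraints), expanding the right-hand side and dividing by $1/\mu$ gives $\pE a^4\ge1-O\bigl((\mu/\mu_0(d))^{1/4}\bigr)$, hence $\pE\iprod{f,\hat f_0}^2=\pE a^2\ge1-o(1)$ as soon as $\mu<K\mu_0(d)$ with $K$ a small enough absolute constant.

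It remains to go from moments to a vector and then to boost to exact recovery. Since $M\succeq0$, $\Tr M=\pE\norm f_2^2=1$, and $\hat f_0^{T}M\hat f_0\ge1-\delta$ with $\delta=o(1)$, the top eigenvalue of $M$ is $\ge1-\delta$, the remaining eigenvalues sum to $\le\delta$, and a short computation shows the top eigenvector $g$ (which lies in $V$, as $\mathrm{range}(M)\subseteq V$) has $\iprod{g,\hat f_0}^2\ge1-2\delta$. After fixing a sign, $\norm{g-\hat f_0}_2\le\sqrt{2\delta}$, a constant-accuracy estimate. The correction step then iterates: hard-threshold the current estimate to its $\approx\mu n$ largest coordinates and re-normalize its projection onto $V$; because w.h.p.\ $\Pi_V$ restricted to any coordinate subspace of size $\le\mu n$ orthogonal to $\hat f_0$ has operator norm bounded away from $1$ (using that $V'$ is random and $d,\mu n\ll n$), each iteration contracts the error by a constant factor, so $O(\log(1/\epsilon))$ iterations yield $f\in V$ with $\iprod{f,f_0}^2\ge(1-\epsilon)\norm f_2^2\norm{f_0}_2^2$; the whole algorithm runs in $\poly(\abs\univ,\log(1/\epsilon))$ time, which establishes \prettyref{thm:planted:main}.

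I expect the main obstacle to be the correction step together with making the ``nearly orthogonal'' reduction quantitative for an \emph{adversarial} $\mu$-sparse $f_0$: the pseudoexpectation argument only certifies constant-accuracy recovery, and boosting it must cope with an $f_0$ whose nonzero entries can have very different magnitudes, so that naive thresholding may confuse small genuine coordinates with noise; one also has to propagate the $\sqrt n$-vs-$d$ and $d$-vs-$n$ error terms carefully through $f=a\hat f_0+f'$, especially in the weaker $d\ge\sqrt n$ regime. By contrast, the SOS ``lifting'' itself is routine given the degree-$4$ certificate of \cite{BarakBHKSZ12} and the pseudoexpectation \Holder/Cauchy--Schwarz inequalities collected in \prettyref{app:toolkit}.
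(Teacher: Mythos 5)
Your first stage is essentially the paper's: the paper also sets up a constant-degree SOS relaxation maximizing $\norm{f}_4^4$ over unit $f\in V$, decomposes $f=\alpha f_0+f'$ with $f'\in V'$, invokes the degree-$4$ certificate of \cite{BarakBHKSZ12} together with the pseudoexpectation Cauchy--Schwarz/\Holder\ inequalities to get $\pE\alpha^4\ge 1-O(c/C)$ and hence $\pE\iprod{f,f_0}^2\ge 1-O(c/C)$ (\prettyref{lem:rec-sparse}), and then extracts a correlated vector from the second pseudo-moments (the paper samples a Gaussian matching them; your top-eigenvector extraction is an equivalent and slightly cleaner variant). The quantitative bookkeeping you flag for the near-orthogonality of $f_0$ and $V'$ and for the $d\ge\sqrt{n}$ regime is handled in the paper exactly as you anticipate, via $c=O(\mu_0(d)^{-1/4})$ and $C=\mu^{-1/4}$.

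The genuine divergence, and the gap, is in the correction step. The paper does not iterate: it runs a single linear program $\argmin_{y\in V}\norm{y}_1$ subject to $\iprod{y,f}=1$, and proves (\prettyref{thm:exact-recovery}) that $f_0/\iprod{f_0,f}$ is its \emph{unique} optimum whenever $V'$ has bounded $L_2/L_1$ ratio (which holds w.h.p.\ by the Kashin/Figiel--Lindenstrauss--Milman almost-spherical-sections theorem, \prettyref{lem:Kashin}, since one may assume $d\le\delta n$) and $f$ is constant-correlated with $f_0$. Crucially, that analysis uses only $\norm{f_0}_1\le\sqrt{\mu}\,\norm{f_0}_2$ and a split of $\norm{tf_0+f'}_1$ over $S=\supp(f_0)$ versus $\overline S$; it is completely insensitive to the relative magnitudes of $f_0$'s nonzero entries, which is exactly the issue you identify as the obstacle to your scheme. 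Your iterative hard-thresholding step, by contrast, rests on an unproven contraction claim: with an adversarial $f_0$ whose genuine coordinates can be arbitrarily small, thresholding a constant-error estimate to its top $\mu n$ coordinates can permanently discard part of $\supp(f_0)$ and retain noise coordinates instead, and the restricted-operator-norm property of $\Pi_V$ that you invoke does not by itself show that the $L_2$ error to $\hat f_0$ decreases at all, let alone geometrically. So as written the boosting stage is a sketch with a real hole; replacing it with the one-shot $\ell_1$ minimization (or supplying an RIP-style analysis of the thresholding iteration) is needed to complete the proof.
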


Our algorithm will work in two stages. It will first solve a constant-degree sum-of-squares relaxation to find a somewhat noisy approximate solution.
It will then solve an auxiliary linear program that converts any sufficiently good approximate solution into an exact one.

The first stage is based on the following theorem (proven in Section~\ref{sub:approx_recovery}), which shows that we can approximately recover a vector
when it is planted in a subspace consisting of vectors with substantially smaller $L_4/L_2$ ratio, provided that we can certify this property of the subspace using a low-degree sum-of-squares proof.
To avoid unnecessary notation, we will use a degree 4 certificate in the statement and proof of the theorem; the proof goes through in greater generality, but this suffices for our application.

\begin{theorem}\label{thm:approx-recovery}
Let $V=\mathrm{span}\left(\Vp\cup \{\fz\}\right)$, where
$\fz\in \R^\univ$ is a vector with $\norm{\fz}_4/\norm{\fz}_2\ge C$, and
 $V'\sse \R^\univ$ is a linear subspace with
\begin{equation}\label{eq:l4l2ratio}
\max_{0\neq f\in \Vp} \tfrac{\norm{f}_4}{\norm{f}_2} \leq c.
\end{equation}
Furthermore, assume that~\eqref{eq:l4l2ratio} has a degree 4 sum-of-squares proof, i.e., that
\begin{equation}
\Norm{\Pi_{\Vp} f}_4^4 = c^4\Norm{\Pi_{\Vp} f}_2^4 - S,	 \label{eq:l4l2ratio:SOS}
\end{equation}
where $\Pi_{\Vp}$ is the orthogonal projection onto $\Vp$, and $S$ is a degree 4 sum of squares.

There is a polynomial-time algorithm
 based on a constant-degree sum-of-squares relaxation
  that returns a vector $f\in V$ with
  $\iprod{f,f_0}^2\ge \left(1-(c/C)^{\Omega(1)}\right)\Norm{\fz}_2 \Norm{\f}_2$.
\end{theorem}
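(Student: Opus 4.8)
The plan is to turn the combining-algorithm sketch of the introduction into an honest rounding algorithm for a constant-degree sum-of-squares relaxation. The relaxation is the obvious one: find a constant-degree pseudo-distribution $\{f\}$ that is consistent with the constraints $\Pi_V f = f$ and $\norm{f}_2^2 = 1$ and maximizes $\pE \norm{f}_4^4$ (degree four suffices when the certificate \eqref{eq:l4l2ratio:SOS} has degree four). Since the single point $f_0/\norm{f_0}_2$ is a feasible integral solution with objective value $R^4$, where $R := \norm{f_0}_4/\norm{f_0}_2 \ge C$, the optimum is at least $R^4$. Given a (near-)optimal pseudo-distribution, the rounding step simply outputs the top eigenvector $u$ (with respect to the expectation inner product) of the degree-two moment operator $M_2$ defined by $\iprod{a, M_2 b} = \pE \iprod{f,a}\iprod{f,b}$; note that $M_2 \succeq 0$, that $\tr M_2 = \pE \norm{f}_2^2 = 1$, and that $\mathrm{range}(M_2) \subseteq V$, so $u \in V$.

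To see why this works, first run the argument for an honest distribution $\cD$ over unit vectors $f\in V$ with $\Ex[f\sim\cD]{\norm{f}_4^4} \ge R^4$. Decompose $V = \Vp \oplus \R \hat f_0$ orthogonally, where $\hat f_0 = \Pi_{\Vp^{\perp}} f_0 / \norm{\Pi_{\Vp^{\perp}} f_0}_2$, so that each $f\in V$ splits as $f = \alpha \hat f_0 + g$ with $\alpha = \iprod{f,\hat f_0}$, $g = \Pi_{\Vp} f$, and $\alpha^2 + \norm{g}_2^2 = \norm{f}_2^2 = 1$. Crucially, $1-\alpha^2 = \norm{g}_2^2 = f^{\top}\Pi_{\Vp} f$ is a sum of squares, and by the hypothesis \eqref{eq:l4l2ratio:SOS} so is $c^4 \norm{g}_2^4 - \norm{g}_4^4$, whence $\pE \norm{g}_4^4 \le c^4 \pE \norm{g}_2^4 \le c^4$ (using that $\norm{g}_2^4 \le \norm{g}_2^2 \le 1$ after substituting $\norm{f}_2^2=1$). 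Now expand $\norm{f}_4^4 = \E_i (\alpha \hat f_{0,i} + g_i)^4$ monomial-by-monomial and apply $\pE$: the leading term is $\pE[\alpha^4]\cdot \norm{\hat f_0}_4^4$, and each of the four mixed terms is bounded by a quantity of the form $O\big(\norm{\hat f_0}_4^{\,j}\,(\pE \norm{g}_4^4)^{(4-j)/4}\big)$ using the sum-of-squares inequalities $\alpha^2 \le 1$ and $\alpha^2 g_i^2 \le g_i^2$ (both differences being products of sums of squares), the Cauchy--Schwarz inequality for pseudo-expectations (\pref{lem:pef-cauchy-schwarz}) together with its consequence $(\pE g_i^2)^2 \le \pE g_i^4$, and \Holder\ over the coordinate $i$. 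Summing, $R^4 \le \pE \norm{f}_4^4 \le \pE[\alpha^4]\, \norm{\hat f_0}_4^4 + \big((\norm{\hat f_0}_4 + c)^4 - \norm{\hat f_0}_4^4\big)$; combined with the fact that $\norm{\hat f_0}_4 = (1 \pm o(1)) R$ (which follows from the genericity estimate below), rearranging yields $\pE \alpha^2 \ge \pE \alpha^4 \ge 1 - O(c/R) \ge 1 - (c/C)^{\Omega(1)}$.

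The same computation goes through verbatim when $\cD$ is only a pseudo-distribution of sufficiently large (but constant) level, since every tool used---linearity, substitution of $\norm{f}_2^2 = 1$, positivity of $\pE$ on squares, Cauchy--Schwarz, \Holder, and the certificate \eqref{eq:l4l2ratio:SOS}---is available there (the relevant facts are in Appendix~\ref{app:toolkit}). Thus $\iprod{\hat f_0, M_2 \hat f_0} = \pE \iprod{f,\hat f_0}^2 = \pE \alpha^2 \ge 1 - (c/C)^{\Omega(1)}$, and since $M_2 \succeq 0$ with $\tr M_2 = 1$, a standard spectral argument (if $\sum_j \lambda_j = 1$, $\lambda_j \ge 0$, and $\sum_j \lambda_j \iprod{\hat f_0, u_j}^2 \ge 1-\epsilon$, then $\lambda_1 \ge 1 - O(\epsilon)$ and $\iprod{\hat f_0, u_1}^2 \ge 1 - O(\epsilon)$) shows the top eigenvector $u$ satisfies $\iprod{u, \hat f_0}^2 \ge 1 - (c/C)^{\Omega(1)}$. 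It remains to pass from $\hat f_0$ to $f_0$, i.e., to bound $\norm{\Pi_{\Vp} f_0}_2$: since $\Pi_{\Vp} f_0 \in \Vp$ it has $\norm{\Pi_{\Vp} f_0}_4 \le c \norm{\Pi_{\Vp} f_0}_2$, while $\iprod{\Pi_{\Vp} f_0, f_0} = \norm{\Pi_{\Vp} f_0}_2^2$ forces it to be correlated with $f_0$; a short \Holder\ estimate (using, in the application, that $f_0$ is genuinely $\mu$-sparse) then gives $\norm{\Pi_{\Vp} f_0}_2 \le (c/C)^{\Omega(1)}$, so $\hat f_0$ is itself $(1-(c/C)^{\Omega(1)})$-correlated with $f_0/\norm{f_0}_2$. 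Chaining the two correlations (and paying the usual Cauchy--Schwarz slack) yields $\iprod{u, f_0}^2 \ge (1 - (c/C)^{\Omega(1)}) \norm{f_0}_2 \norm{u}_2$, as claimed.

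The step I expect to be the real work is the cross-term bookkeeping in the second paragraph. The informal combining-algorithm argument freely uses Minkowski's inequality $\norm{a+b}_4 \le \norm{a}_4 + \norm{b}_4$, which is \emph{not} a bounded-degree sum-of-squares identity and therefore cannot simply be ``lifted''; instead one must expand the fourth power and dominate every mixed monomial using only the genuinely SOS-available inequalities, all the while tracking the powers of $\norm{\hat f_0}_4 \approx R$ precisely enough that the accumulated error is $(c/C)^{\Omega(1)}$ rather than, say, $c/C^{1/4}$. The genericity estimate for $\norm{\Pi_{\Vp} f_0}_2$ is easy once $C > c$, but getting the right quantitative dependence, and handling the case where the $L_4$-versus-$L_2$ certificate for $\Vp$ has degree larger than four (the argument is identical but the arithmetic is heavier), both require some care.
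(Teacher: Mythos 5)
Your proposal is correct and follows essentially the same route as the paper's proof (which goes through Lemma~\ref{lem:rec-sparse}): the same relaxation maximizing $\pE\norm{f}_4^4$ over unit-norm $f\in V$, the same decomposition $f=\alpha \hat f_0+g$ with $g\in\Vp$, the same monomial-by-monomial expansion of $\norm{f}_4^4$ with cross terms controlled by pseudoexpectation Cauchy--Schwarz/H\"older and the degree-$4$ certificate bounding $\pE\norm{g}_4^4\le c^4$, and the same passage from $\pE\alpha^4\ge 1-O(c/C)$ to $\pE\alpha^2\ge 1-O(c/C)$. The only cosmetic differences are that the paper rounds by sampling a Gaussian matching the first two moments rather than taking the top eigenvector of the second-moment matrix, and that it sidesteps the $\Pi_{\Vp}f_0$ issue by assuming $f_0\in\Vp^{\perp}$ in the lemma, whereas you address it explicitly.
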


If $\Vp$ is a random subspace of dimension $d$,
 \cite[Theorem~7.1]{BarakBHKSZ12} showed that~\eqref{eq:l4l2ratio} has a degree 4 sum-of-squares proof with high probability for $c=O(1)$ when $d\leq \sqrt{|\univ|}$,
 and for $c=O\left(d^{1/2}/|\univ|^{1/4}\right)$ when $d\geq \sqrt{|\univ|}$.\footnote{Their proof actually directly shows that the polynomial $P$ in the RHS of (\ref{eq:l4l2ratio:SOS})
 has $\spectralnorm{P} \leq c^4$, which corresponds to a degree $4$ SOS proof via Lemma~\ref{lem:spectral norm}.}
We  can concisely write these two cases together in our present notation as $c=O\left(\mu_0(d)^{-1/4}\right)$.

Since $\fz$ is $\mu$-sparse, we know that $\Norm{\fz}_4\geq \mu^{-1/4} \Norm{\fz}_2$, so we can take $C=\mu^{-1/4}$.
We can thus solve a constant-degree sum-of-squares program to obtain a vector $\f$ with  $\iprod{\f,\fz}=(1-O(1))\Norm{\fz}_2 \Norm{\f}_2$
whenever
$c\ll O(\mu^{-1/4})$, i.e., when
\begin{equation}\label{eq:mu_bound}
		\mu\ll O\left(\frac{1}{c^4}\right)=O\left(\mu_0(d)\right).
\end{equation}


For the second stage, we will consider the following
 linear program, which can be thought of as searching for  a sparse vector in $V$ with  a large inner product with $\f$:
 \begin{equation}\label{eq:LP}
 \argmin_{y\in V } \|y\|_1 \text{ such that } \langle y,\f\rangle=1.
 \end{equation}

In Section~\ref{sub:exact-recovery}, we will prove the following theorem, which provides conditions under which the linear program will exactly recover $\fz$ from any $\f$ that is reasonably correlated to it:

\begin{theorem}\label{thm:exact-recovery}
	
Let $V=\mathrm{span}\left(\Vp\cup \{\fz\}\right)$, and suppose that the following conditions hold:
 \begin{itemize}
 		\item $\supp(\fz)=S$, $|S|=\mu n$  \hfill \emph{[$\fz$ is a $\mu$-sparse vector]}\quad \phantom{a}
 	\item  $\otratio{\Vp} \leq \alpha$ where $\otratio{\Vp} = \max \norm{\fp}_2/\norm{\fp}_1$ for all $0\neq \fp\in \Vp$)  \hfill \emph{[$\Vp$ doesn't contain any $1/\alpha^2$ $L_2/L_1$-sparse vectors]}\quad \phantom{a}
 	\item $\iprod{\fz,\f} \ge (1-\epsilon)\Norm{\fz}_2 \Norm{\f}_2$  \hfill \emph{[$\f$ is correlated with $\fz$]}\quad \phantom{a}
 	\item  $\iprod{\fp,\f} \leq \eta \Norm{\fp}_2 \Norm{\f}_2$ for all $\fp\in \Vp$ \hfill \emph{[$\f$ is not very correlated with anything in $\Vp$]}.\quad \phantom{a}
 \end{itemize}
 If
 $$\frac{\eta}{1-\epsilon} <
  \frac{1}{\alpha\sqrt{\mu}} -2,$$
  then $\fz/\iprod{\fz,\f}$  is the unique optimal solution to~\eqref{eq:LP}.
\end{theorem}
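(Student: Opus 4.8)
The plan is to run the classical $\ell_1$-minimization uniqueness argument (a ``nullspace property'' computation), adapted to the expectation-norm conventions of this section. Set $y^\star = \fz/\iprod{\fz,\f}$; since $\iprod{\fz,\f}\ge(1-\epsilon)\norm{\fz}_2\norm{\f}_2>0$ this is well defined, and $\iprod{y^\star,\f}=1$, so $y^\star$ is feasible for~\eqref{eq:LP}. Every feasible point lies in $V$ and has inner product $1$ with $\f$, so the feasible set is exactly $\Set{y^\star+v \suchthat v\in V,\ \iprod{v,\f}=0}$. Writing $h_T$ for the restriction of $h\in\R^\univ$ to a coordinate set $T\subseteq\univ$ (zero off $T$), and using that $y^\star$ is supported on $S=\supp(\fz)$, the coordinatewise triangle inequality gives $\norm{y^\star+v}_1 \ge \norm{y^\star}_1-\norm{v_S}_1+\norm{v_{\overline S}}_1$. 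Hence it suffices to prove the strict inequality $\norm{v_{\overline S}}_1>\norm{v_S}_1$ for every nonzero $v\in V$ with $\iprod{v,\f}=0$; this makes $y^\star$ beat every other feasible point, giving both optimality and uniqueness.

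To prove this nullspace inequality I would fix a decomposition $v=c\fz+\fp$ with $c\in\R$, $\fp\in\Vp$ (one exists since $V=\Span(\Vp\cup\{\fz\})$). The three estimates are: (i) from $\iprod{v,\f}=0$ and the correlation hypotheses, $|c|\,\iprod{\fz,\f}=\abs{\iprod{\fp,\f}}\le\eta\norm{\fp}_2\norm{\f}_2$, so dividing by $\iprod{\fz,\f}\ge(1-\epsilon)\norm{\fz}_2\norm{\f}_2$ yields $|c|\le\tfrac{\eta}{1-\epsilon}\cdot\tfrac{\norm{\fp}_2}{\norm{\fz}_2}$; (ii) Cauchy--Schwarz over the $|S|=\mu n$ coordinates of $S$ shows that in the expectation norm $\norm{h_S}_1\le\sqrt{\mu}\,\norm{h}_2$ for any $h$, which applied to $h=\fz$ gives $\norm{\fz}_1\le\sqrt\mu\norm{\fz}_2$ and applied to $h=\fp$ gives $\norm{\fp_S}_1\le\sqrt\mu\norm{\fp}_2$; and (iii) $\otratio{\Vp}\le\alpha$ gives $\norm{\fp}_1\ge\norm{\fp}_2/\alpha$. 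Since $\fz$ vanishes off $S$, we have $v_{\overline S}=\fp_{\overline S}$ and $v_S=c\fz+\fp_S$, and combining (i)--(iii),
\[
\norm{v_S}_1\le|c|\norm{\fz}_1+\norm{\fp_S}_1\le\Paren{\tfrac{\eta}{1-\epsilon}+1}\sqrt\mu\,\norm{\fp}_2,\qquad
\norm{v_{\overline S}}_1=\norm{\fp}_1-\norm{\fp_S}_1\ge\Paren{\tfrac1\alpha-\sqrt\mu}\norm{\fp}_2.
\]
Here $v\ne0$ forces $\fp\ne0$ (if $\fp=0$ then $v=c\fz$, and $\iprod{v,\f}=0$ then forces $c=0$), so $\norm{\fp}_2>0$ and $\norm{v_{\overline S}}_1>\norm{v_S}_1$ is equivalent to $\tfrac1\alpha-\sqrt\mu>\paren{\tfrac{\eta}{1-\epsilon}+1}\sqrt\mu$, i.e. to $\tfrac{\eta}{1-\epsilon}<\tfrac1{\alpha\sqrt\mu}-2$ — precisely the hypothesis (which also gives $\tfrac1\alpha>2\sqrt\mu$, so the lower bound on $\norm{v_{\overline S}}_1$ is genuinely positive).

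There is no deep obstacle: this is a short, standard dual-certificate/nullspace computation, and the work is bookkeeping with the expectation norms. The only point needing care is making the two incoherence hypotheses pull in the right directions — the spread bound $\otratio{\Vp}\le\alpha$ is what keeps $\norm{\fp_{\overline S}}_1$ large relative to its $L_2$ norm, whereas $\iprod{\fp,\f}\le\eta\norm{\fp}_2\norm{\f}_2$ is used solely to bound the $\fz$-component $c$ of $v$ (the term $|c|\norm{\fz}_1$ is the ``leakage'' of $v$ onto $S$ that the $-2$ in the hypothesis pays for). One should also note that $v\ne0$ with $\iprod{v,\f}=0$ automatically rules out the potential degenerate cases ($\fz\in\Vp$, or $v$ supported inside $S$), so no separate non-degeneracy assumption is needed.
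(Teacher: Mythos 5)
Your proof is correct and arrives at exactly the theorem's condition, but it is organized differently from the paper's argument. The paper works directly with the objective ratio: it decomposes an arbitrary feasible $y$ as $t\fz+\fp$, splits the $1$-norm over $S$ and $\overline{S}$, and applies the mediant inequality $\frac{A+B}{C+D}\ge\min\set{A/C,\,B/D}$ to reduce everything to the single inequality $\frac{\norm{\fz}_1}{\iprod{\fz,\f}} < \frac{\norm{\fp_{\overline S}}_1-\norm{\fp_S}_1}{\iprod{\fp,\f}}$, in which the hypothesis $\iprod{\fp,\f}\le\eta\norm{\fp}_2\norm{\f}_2$ is used to bound the \emph{denominator} of the right-hand ratio. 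You instead translate the feasible set to $y^\star+\set{v\in V:\iprod{v,\f}=0}$ and prove a nullspace property $\norm{v_{\overline S}}_1>\norm{v_S}_1$; in your version the $\eta$-hypothesis is used for a different purpose, namely to bound the coefficient $c$ of the $\fz$-component of $v$ that is forced by the constraint $\iprod{v,\f}=0$. The supporting estimates are identical in both proofs ($\norm{h_S}_1\le\sqrt{\mu}\,\norm{h}_2$ by Cauchy--Schwarz over $S$ in the expectation norm, and $\norm{\fp}_1\ge\alpha^{-1}\norm{\fp}_2$ from $\otratio{\Vp}\le\alpha$), and both routes land on precisely $\frac{\eta}{1-\epsilon}<\frac{1}{\alpha\sqrt{\mu}}-2$. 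Your organization has the minor advantage of sidestepping the sign and positivity issues implicit in the paper's mediant step (where $t$ and $\iprod{\fp,\f}$ are tacitly treated as nonnegative), since you carry $|c|$ and $\abs{\iprod{\fp,\f}}$ throughout; the only nitpick is that your final ``is equivalent to'' should read ``is implied by,'' since the displayed bounds give a sufficient condition rather than an equivalence.
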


\begin{remark} Because we believe the result might be useful elsewhere, we state the theorem in much more generality than needed for our application.
In particular in our application we only need the trivial bound $\eta \leq 1$. Also, a bound on $\otratio{\Vp}$ can also be derived using the relations
between the $4$ norm and the $2$ norm on vectors in $\Vp$.
\end{remark}

To prove Theorem~\ref{thm:planted:main},
we take $\f$ to be the vector with $\iprod{\fp,\f}\geq \left(1-(c/C)^{\Omega(1)}\right)\Norm{\fz}_2 \Norm{\f}_2$ given by Theorem~\ref{thm:approx-recovery} and solve the linear program from Equation~\eqref{eq:LP}.
The theorem is vacuous for $d>\sqrt{K}n$, so we may assume that $d$ is less than any fixed constant times $n$.  In this case,
the following classic result on almost-spherical sections of the $\ell_1$ ball~(\cite{kashin1977diameters,figiel1977dimension}, as stated in~\cite{DemanetH13})
 guarantees that $\otratio{\Vp} \leq O(1)$:
\begin{lemma}\label{lem:Kashin}
Fix $\delta\in (0,1)$, let $d\leq \delta n$, and let $W\sse \R^\univ$ be a random $d$-dimensional subspace given by the span of $d$ independent standard Gaussians. There exists a constant $C_\delta>0$ and absolute constants $\gamma_1,\gamma_2>0$ such that
\[
C_\delta \Norm{\fp}_2^2\leq \Norm{\fp}_1^2\leq\Norm{\fp}_2^2
\]
for all $\fp \in W$ with probability $1-\gamma_1 e^{-\gamma_2 n}$.
\end{lemma}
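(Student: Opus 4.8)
The plan is to observe that one half of the lemma is trivial and the other half is exactly the classical ``almost-Euclidean section of $\ell_1$'' phenomenon, and then to sketch the standard probabilistic proof in the quantitative form we need. Since the measure on $\univ$ is a probability measure, Cauchy--Schwarz (equivalently Jensen) gives $\norm{f'}_1 = \E_\elem \abs{f'(\elem)} \le (\E_\elem f'(\elem)^2)^{1/2} = \norm{f'}_2$ for \emph{every} $f' \in \R^\univ$, so the right-hand inequality needs no randomness at all. The content is the left-hand inequality $\norm{f'}_1 \ge \sqrt{C_\delta}\,\norm{f'}_2$ for all $f' \in W$, i.e.\ that a random $d$-dimensional $W$ is a good Euclidean section of $\ell_1^\univ$ --- the theorem of Kashin and of Figiel--Lindenstrauss--Milman \cite{kashin1977diameters,figiel1977dimension}.

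First I would realize $W$ as the column space of an $n \times d$ matrix $G$ with i.i.d.\ $N(0,1)$ entries (this is precisely the span of $d$ independent standard Gaussians, and by rotational invariance it is a uniformly random $d$-subspace). By homogeneity it then suffices to control $\norm{Gx}_1$ from below and $\norm{Gx}_2$ from above, uniformly over unit vectors $x \in \R^d$ (Euclidean norm). For the latter, $\norm{Gx}_2 = (\E_\elem (Gx)_\elem^2)^{1/2}$ equals $n^{-1/2}$ times the Euclidean length of $Gx$, hence is at most $n^{-1/2}\sigma_{\max}(G)$; the standard bound $\sigma_{\max}(G) \le \sqrt n + \sqrt d + t$ with probability $\ge 1 - e^{-t^2/2}$, together with $d \le n$ and $t = \sqrt n$, gives $\sup_{\norm{x}_2 = 1}\norm{Gx}_2 \le 3$ outside an event of probability $e^{-n/2}$.

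For the lower bound I would run a net argument. For a \emph{fixed} unit $x$, $Gx \sim N(0,I_n)$, so $\norm{Gx}_1 = \tfrac{1}{n}\sum_\elem \abs{(Gx)_\elem}$ is an average of $n$ i.i.d.\ copies of $\abs{Z}$, $Z \sim N(0,1)$, with mean $\sqrt{2/\pi}$; since $\abs{Z}$ is subgaussian, a Bernstein estimate yields $\Pr[\norm{Gx}_1 \le \tfrac{1}{2}\sqrt{2/\pi}] \le e^{-cn}$ for an absolute $c>0$. Taking a $\rho$-net $N$ of the unit sphere of $\R^d$ with $\abs{N} \le (3/\rho)^d$ and union-bounding, $\norm{Gx}_1 \ge \tfrac{1}{2}\sqrt{2/\pi}$ for all $x \in N$ except with probability $(3/\rho)^d e^{-cn} \le e^{-(c-\delta\ln(3/\rho))n}$. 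Transferring to an arbitrary unit $y$ by choosing $x \in N$ with $\norm{y-x}_2 \le \rho$ and using $\norm{Gy}_1 \ge \norm{Gx}_1 - \norm{G(y-x)}_1 \ge \norm{Gx}_1 - \norm{G(y-x)}_2 \ge \tfrac{1}{2}\sqrt{2/\pi} - 3\rho$ on the intersection of the two good events, and then picking $\rho$ a small constant so that $3\rho \le \tfrac{1}{4}\sqrt{2/\pi}$ and $\delta\ln(3/\rho) \le c/2$, I would conclude that with probability $\ge 1 - e^{-n/2} - e^{-(c/2)n}$ every $0 \ne f' \in W$ satisfies $\norm{f'}_1 \ge \tfrac{1}{12}\sqrt{2/\pi}\,\norm{f'}_2$, which is the lemma with an explicit constant $C_\delta$ and absolute $\gamma_1,\gamma_2$.

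The one genuinely delicate point is the regime where $\delta$ is not small: there the requirement $\delta\ln(3/\rho) \le c/2$ forces $\rho$ too large for the transfer step $\tfrac{1}{2}\sqrt{2/\pi} - 3\rho$ to remain positive, so this crude argument only covers $\delta$ below an absolute threshold. Getting the statement for the whole range $\delta \in (0,1)$ (distortion depending on $\delta$) needs the sharper classical arguments --- Kashin's iterated splitting, or the Dvoretzky-type volume / Gaussian-width estimate --- and I would simply invoke them from \cite{kashin1977diameters,figiel1977dimension}. For the use made of the lemma in Theorem~\ref{thm:planted:main} this distinction is moot: as observed there, the statement is vacuous unless $d$ is at most a sufficiently small constant times $n$, so the self-contained net argument above already suffices.
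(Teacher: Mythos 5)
Your proposal is correct, and it actually does more than the paper does: the paper offers no proof of Lemma~\ref{lem:Kashin} at all, citing it as a classical result of Kashin and Figiel--Lindenstrauss--Milman (as restated in \cite{DemanetH13}). Your observation that the upper bound $\Norm{\fp}_1\le\Norm{\fp}_2$ is just Jensen for the expectation norms, and your net argument for the lower bound (fixed-direction concentration of $\tfrac1n\sum_\elem|(Gx)_\elem|$ around $\sqrt{2/\pi}$, a $(3/\rho)^d$-net, the operator-norm bound $\sigma_{\max}(G)\le 3\sqrt n$ to control the transfer error $\norm{G(y-x)}_1\le\norm{G(y-x)}_2\le 3\rho$) are all sound and give exactly the claimed $1-\gamma_1e^{-\gamma_2 n}$ failure probability. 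You are also right to flag the one real limitation: the single-step net transfer forces $\delta\ln(3/\rho)\lesssim c$ while also needing $3\rho<\tfrac12\sqrt{2/\pi}$, so it only covers $\delta$ below an absolute threshold; the full range $\delta\in(0,1)$ requires the iterated-approximation or Gaussian-width arguments of the cited works. Since the paper invokes the lemma only after reducing to $d$ at most a small constant times $n$ (the planted-vector theorem is vacuous otherwise), your self-contained argument suffices for the application, and falling back on the citation for large $\delta$ matches what the paper itself does for the entire lemma.
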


By Cauchy-Schwarz, we have $\iprod{\fp,\f} \leq  \Norm{\fp}_2 \Norm{\f}_2$, so Theorem~\ref{thm:exact-recovery} implies
 that we recover $\fz$ exactly
as long as\footnote{
We note that the last two bounds were somewhat weak: Lemma~\ref{lem:Kashin} holds for subspaces of linear dimension, but we only applied it to a subspace with $d\leq\sqrt{|\univ|}$; and the application of Cauchy-Schwarz could have been tightened using a better analysis.  However, these were sufficient to prove Theorem~\ref{thm:planted:main}.
}
\begin{equation}\label{eq:constraint_for_recovery}
	\frac{1}{\left(1-O(c/C)\right)} <
	  \sqrt{\frac{C_\delta}{\mu}}-2.
\end{equation}

$C_\delta$ is a constant for any fixed $\delta$, so, by taking $K$ sufficiently small in the statement of the theorem, we may assume that $\mu<C_\delta/16$, and thus that the right-hand side of~\pref{eq:constraint_for_recovery} is at least 2.
In this case, we can recover $\fz$ as long as $c\ll O(C)$.
Combining this with Equation~\pref{eq:mu_bound} and choosing $K$ appropriately thus completes the proof of Theorem~\ref{thm:planted:main}.  \qed

It thus suffices to prove Theorems~\ref{thm:approx-recovery} and~\ref{thm:exact-recovery}, which we will do in sections~\ref{sub:approx_recovery} and~\ref{sub:exact-recovery}, respectively.
We note that Theorems~\ref{thm:approx-recovery} and~\ref{thm:exact-recovery}
hold for any $\Vp$ that meets certain norm requirements, and they do not
 require $\Vp$ to be a uniformly random subspace.
  As such, the results of this section hold in a broader context.  (For example, they immediately generalize to other distributions of subspaces that meet the norm bounds.)
We hope that the technical results of this section will find other
uses, so we have stated them in a somewhat general way to
facilitate their application in other settings.

\subsection{Recovering $\fz$ approximately (Proof of Theorem~\ref{thm:approx-recovery})} 
\label{sub:approx_recovery}
In this section, we prove~\pref{thm:approx-recovery}, which allows us to recover a vector that is reasonably well-correlated with $\fz$.  The basic idea is that $\fz$ has a much larger $L_4/L_2$ ratio than anything in $\Vp$, so maximizing the the $L_4/L_2$ ratio should give a vector near $\fz$.

 The key ingredient of the theorem is the following lemma about (pseudo-)distributions supported on $L_4/L_2$-sparse functions in $\Vp$.
 Note that this lemma does not need the space to be random, but only that it can be certified to have no  $L_4/L_2$ sparse vectors by the SOS SDP.

 \begin{lemma} \label{lem:rec-sparse}
 Let $\Vp \subseteq \R^{\univ}$ be a linear subspace such that
 \begin{equation}
 \max_{0\neq f\in \Vp} \tfrac{\norm{f}_4}{\norm{f}_2} \leq c \label{eq:norm-bound-subspace}
 \end{equation}
 Let $\fz$ be a unit function  in $\Vp^{\perp}$ with $\norm{f_0}_4 = C > 100c$,
 and let $\cX$ be a distribution over $\R^{\univ}$ over unit functions  $f\in \mathrm{Span} \Vp \cup \{ f_0\}$ satisfying
 $\norm{f}_4 \geq C$.  Then
 \[
 \E \iprod{x,f_0}^2 \geq 1 - O(c/C) \mper
 \]
 Moreover this holds even if $\cX$ is a pseudodistribution of level $\ell\geq 8$, as long as (\ref{eq:norm-bound-subspace}) has a degree 4 sum-of-squares proof.
 \end{lemma}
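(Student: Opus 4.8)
The plan is to decompose each $f$ in (the support of) $\cX$ along $f_0$ and $V'$, and to show that the constraint $\norm{f}_4\ge C$ forces the $f_0$-component of $f$ to carry almost all of its mass. Write $\alpha=\iprod{f,f_0}$, a linear form in $f$, and $g=\Pi_{V'}f=f-\alpha f_0$, where the second equality uses $f_0\perp V'$ together with the fact that $\cX$ is supported on $V=\Span(V'\cup\{f_0\})$. Since $\iprod{g,f_0}=0$ and $\norm{f_0}_2=1$, we have the polynomial identity $\norm{f}_2^2=\alpha^2+\norm{g}_2^2$, so the normalization $\norm{f}_2^2=1$ gives $\alpha^2=1-\norm{g}_2^2$ and hence $\pE\iprod{f,f_0}^2=1-\pE\norm{g}_2^2$; it therefore suffices to show $\pE\norm{g}_2^2=O(c/C)$. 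For an honest distribution this is immediate from Minkowski's inequality and~\eqref{eq:norm-bound-subspace}: $\norm{f}_4\le|\alpha|\,\norm{f_0}_4+\norm{g}_4\le|\alpha|C+c\norm{g}_2\le|\alpha|C+c$, so $\norm{f}_4\ge C$ forces $|\alpha|\ge 1-c/C$ and $\norm{g}_2^2=1-\alpha^2\le 2c/C$, pointwise on the support.

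The real task is to carry this out inside the SOS proof system, where the fourth root in Minkowski's inequality is unavailable. I would replace the triangle inequality by the pointwise weighted-convexity bound
\[
(a+b)^4\ \le\ \tfrac{1}{\lambda^3}\,a^4+\tfrac{1}{(1-\lambda)^3}\,b^4\qquad(0<\lambda<1),
\]
which holds for all real $a,b$ (substitute $a=\lambda x$, $b=(1-\lambda)y$ and use convexity of $t\mapsto t^4$); its two sides differ by a nonnegative homogeneous quartic in $(a,b)$, hence by a sum of squares. Substituting $a=\alpha f_0(\elem)$ and $b=g(\elem)$ and averaging over $\elem\in\univ$ yields the degree-$4$ (in $f$) SOS inequality $\norm{f}_4^4\le\tfrac{1}{\lambda^3}\alpha^4C^4+\tfrac{1}{(1-\lambda)^3}\norm{g}_4^4$. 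I then combine this with three more SOS facts, each of degree at most $4$ in $f$: the hypothesis that~\eqref{eq:norm-bound-subspace} has a degree-$4$ SOS proof gives $\norm{g}_4^4\le c^4\norm{g}_2^4$; and, modulo the constraint $\norm{f}_2^2=1$, the identity $\alpha^2=1-\norm{g}_2^2$ gives $\alpha^2-\alpha^4=\alpha^2\norm{g}_2^2\ge 0$ and $\norm{g}_2^2-\norm{g}_2^4=\alpha^2\norm{g}_2^2\ge 0$, each a product of squares. Applying $\pE$, using positivity together with consistency with $\norm{f}_2^2=1$ and $\norm{f}_4^4\ge C^4$ (so $\pE\norm{f}_4^4\ge C^4$), and writing $s=\pE\norm{g}_2^2$, we obtain
\[
C^4\ \le\ \tfrac{C^4}{\lambda^3}\,(1-s)+\tfrac{c^4}{(1-\lambda)^3}\,s .
\]
Setting $\lambda=1-c/C$ and using $c/C<1/100$ so that $(1-c/C)^{-3}\le 1+4c/C$, together with $\tfrac{c^4}{(1-\lambda)^3}=cC^3$, this rearranges to $s\,(C-c)\le 4c$, i.e. $s\le 4c/(C-c)=O(c/C)$, which is exactly the claimed bound $\pE\iprod{f,f_0}^2=1-s\ge 1-O(c/C)$.

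I expect the only delicate point — the main obstacle — to be precisely the substitute for Minkowski: the crude SOS triangle bound $(a+b)^4\le 8(a^4+b^4)$ loses a factor of $8$ and would only yield a lower bound on $\alpha^2$ of roughly $1/\sqrt{8}$, bounded away from $1$ and hence useless, so one must use the sharp weighted form and verify that the multiplicative loss $\lambda^{-3}$ can be driven to $1+O(c/C)$, negligible against the gap between $C^4$ and $c^4$. Everything else is routine: one checks that every polynomial and SOS certificate appearing has degree at most $4$ in $f$, so that a pseudoexpectation of level $8$ suffices (matching the hypothesis $\ell\ge 8$), and one tracks the auxiliary ideal generated by $\norm{f}_2^2-1$ when substituting $1-\alpha^2=\norm{g}_2^2$. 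Notably, no pseudoexpectation Cauchy--Schwarz or H\"older inequality is needed for this lemma; only linearity, positivity, and consistency with the defining constraints of $\cX$ enter.
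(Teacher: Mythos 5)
Your proof is correct, but the way you lift the argument to pseudoexpectations is genuinely different from the paper's. Both proofs start from the same decomposition $f=\alpha f_0+f'$ and the same triangle-inequality argument for honest distributions. To make this degree-$4$, the paper expands $\norm{f}_4^4=\pE\,\E_\elem(\alpha f_0(\elem)+f'(\elem))^4$ into the five multinomial terms and controls the three cross terms using the pseudoexpectation Cauchy--Schwarz and H\"older inequalities (Lemma~\ref{lem:pef-cauchy-Schwarz-inside} and Corollary~\ref{cor:pef-holder-inside}), obtaining $C^4\le C^4\,\pE\alpha^4+O(C^3c)$; it then needs one more Cauchy--Schwarz step ($\pE\alpha^4\le\sqrt{\pE\alpha^2}\sqrt{\pE\alpha^6}$) to pass from $\pE\alpha^4$ to $\pE\alpha^2$. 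You instead absorb all cross terms at once into the sharp pointwise binary SOS inequality $(a+b)^4\le\lambda^{-3}a^4+(1-\lambda)^{-3}b^4$ (valid as SOS since nonnegative binary forms are sums of squares), and you convert directly to $\pE\norm{f'}_2^2$ via the Pythagorean identity $\alpha^2+\norm{f'}_2^2=\norm{f}_2^2$ modulo the unit-norm constraint, so that only linearity, positivity, and consistency are invoked. Your approach buys a more self-contained certificate that makes the quantitative mechanism transparent — your observation that the crude bound $(a+b)^4\le 8(a^4+b^4)$ is useless and that the weighted form with $\lambda=1-c/C$ is what drives the loss down to $1+O(c/C)$ is exactly the point the paper's H\"older step handles implicitly — at the cost of having to know that the relevant nonnegative binary quartic is SOS; the paper's route reuses general-purpose lemmas from its toolkit that are needed elsewhere anyway. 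The degree bookkeeping in both cases lands at level $8$, matching the hypothesis. (Two cosmetic remarks: your final rearrangement $s(C-c)\le 4c$ is right, and your claim that no pseudo-Cauchy--Schwarz is needed is fair in spirit, though imposing consistency with $\norm{f}_2^2=1$ on products already relies on Corollary~\ref{cor:pef-constraint}, which is itself proved via Lemma~\ref{lem:pef-cauchy-schwarz}.)
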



We can obtain a pseudodistribution $\cX$ meeting the requirements in Lemma~\ref{lem:rec-sparse} by solving a degree 8 sum-of-squares program
that maximizes $\Norm{f}_4^4$ over $f\in V$ with $\Norm{f}_2^2=1$.
 If we sample a random Gaussian consistent with the first two moments of $\cX$, then
 we will obtain a vector $g$ whose expected $2$-norm squared is $1$ and whose expected inner product with $f_0$ is $(1-o(1))\Norm{\fz}$, so
  \pref{lem:rec-sparse} therefore implies \pref{thm:approx-recovery}.
 \Bnote{add here note on how to get exact recovery if we have time}
 \Bnote{I didn't claim that we can also plug in our result of the previous section because there is a subtlety I don't have time to think about ---
 we used auxiliary variables, and I'm not sure that our proof immediately implies that any pseudoexpectation of large enough level must be
 consistent with the constraint that the 4 norm of vectors in the subspace must be small.
 I think the eigenvalue bound clearly does this}

\begin{proof}[Proof of~\pref{lem:rec-sparse}]
Write every vector $f$ in the support of $\cX$ in the form $f = \alpha f_0 + f'$ where $f'\in \Vp$ and $\alpha = \iprod{f,f_0}$.
We know that
\begin{equation}
C  = \norm{f}_4 \leq \alpha\norm{f_0}_4 + \norm{f'}_4  \leq \alpha C + c\norm{f'}_2  \leq \alpha C + c \label{eq:iprod-f0-bound},
\end{equation}
so
\[
\alpha \geq 1 - c/C  \mper
\]
This concludes the proof for actual expectations.
To argue about pseudoexpectations, we need to use only constraints involving polynomials, and therefore we use
\[
C^4 = \norm{f}_4^4 = \pE_f \E_\elem ( \alpha f_0(\elem)+ f'(\elem))^4
\]
which equals
\[
\pE_f \alpha^4 \norm{f_0}_4^4  + 4\pE \alpha^3 \iprod{f_0^3,f'} + 6\pE \alpha^2 \iprod{f_0^2,f'^2} +  4\pE \alpha \iprod{f_0,f'^3} + \pE \norm{f'}_4^4 \mper
\]
The existence of a degree 4 sum-of-squares proof of~\eqref{eq:iprod-f0-bound} implies that $\cX$ must be consistent with the constraint $\norm{f}_4^4 \leq c^4$.
We can thus use Cauchy--Schwarz and \Holder's inequality (Lemma~\ref{lem:pef-cauchy-Schwarz-inside} and Corollary~\ref{cor:pef-holder-inside}), to bound all of the terms except the first one by a constant
times $|\alpha|^3 C^3c$, and so we get
\[
C^4 \leq \pE \alpha^4C^4 + 15|\alpha|^3C^3c.
\]
Using the fact that the expectation is consistent with the constraint $|\alpha| \leq 1$, we obtain
\[
\pE \alpha^4 \geq 1 - 15c/C \mper
\]
Since we satisfy $|\alpha| \leq 1$, we know that $\pE \alpha^6 \leq 1$, so we can apply Cauchy-Schwarz to show that
\[
\pE \alpha^4 \leq \sqrt{\pE \alpha^2}{\sqrt{\pE \alpha^6}} \leq \sqrt{\pE \alpha^2} \mcom
\]
which allows us to conclude that
\[
\pE \alpha^2 \geq 1-30c/C. \qedhere
\]
\end{proof}

\subsection{Recovering $\fz$ exactly (Proof of Theorem~\ref{thm:exact-recovery})} 
\label{sub:exact-recovery}
In this section, we prove~\pref{thm:exact-recovery}, which allows us to use a vector near $\fz$ to recover $\fz$ exactly (up to the precision used when solving the linear program).
Intuitively, this is relies on the same
tendency towards sparsity of vectors with minimal $1$-norm
 that underlies the earlier works that are based on $L_\infty/L_1$-sparsity.
Minimizing the $L_\infty/L_1$-sparsity amounts to solving the
 linear program in~\eqref{eq:LP} with $y$ equal to each of the unit basis vectors, and then taking the best of the $|\univ|$ solutions.
When $\fz$ is sparse enough, it will have at least one fairly large coefficient, and $\fz$ will then be sufficiently correlated with the corresponding unit basis vector for the linear program to find it.  This breaks down when $\mu= \Omega(1/\sqrt{d})$, at which point
any one basis vector is expected to be more correlated with some vector in $\Vp$ than it is with $\fz$.  Here, instead of using the unit basis vectors, we use a vector $y$ that shares many coordinates with $\fz$, which then lets us handle a much broader range of $\mu$.

\begin{proof}[Proof of~\pref{thm:exact-recovery}]
	To analyze the optimum of~\eqref{eq:LP}, we  decompose  $y\in V$  as $y=t\fz+ \fp$
	 for $t\in \R$ and $\fp\in V'$.
   	We will show that
	\begin{equation}\label{eq:exact_rec_goal}
\frac{\norm{f_0}_1}{\iprod{f_0,\f}}
\leq	\frac{\norm{y}_1}{\iprod {y,\f}} =
\frac{\Norm{t\fz +\fp}_1}{t\iprod{\fz,\f}+\iprod{\fp,\f}}
	\end{equation}
	for all $y\in V$, with equality only if $\fp=0$, which immediately implies  Theorem~\ref{thm:exact-recovery}.

 Let $\fp_\nonzeros$ and $\fp_\zeros$ be the vectors obtained from $\fp$ by zeroing out the
 	 coordinates outside $\nonzeros$ and $\zeros$, respectively, so that $\fp=\fp_\nonzeros+\fp_\zeros.$
Since $\fz$ is zero outside of $S$, we have
\begin{equation}\label{eq:split_coords}
	\Norm{t\fz +\fp}_1
  = \Norm{t\fz- \fp_S}_1 + \Norm{\fp_{\Sbar}}_1
  \geq t \Norm{\fz}_1- \Norm{\fp_S}_1 + \Norm{\fp_{\Sbar}}_1.
\end{equation}
Equation~\eqref{eq:split_coords} and the inequality
\[
	\frac{A+B}{C+D}\geq \min\left\{\frac{A}{B},\frac{C}{D}\right\}
\]
give
	\begin{equation}\label{eq:frac_to_min}
\frac{\Norm{t\fz +\fp}_1}{t\iprod{\fz,\f}+\iprod{\fp,\f}}
\geq \frac{t \Norm{\fz}_1- \Norm{\fp_S}_1 + \Norm{\fp_{\Sbar}}_1}{t\iprod{\fz,\f}+\iprod{\fp,\f}}
\geq \min\left\{\frac{\Norm{\fz}_1}{\iprod{\fz,\f}},
	\frac{ \Norm{\fp_{\Sbar}}_1- \Norm{\fp_S}_1}{\iprod{\fp,\f}}
	\right\},
	\end{equation}
	 where the second inequality in~\eqref{eq:frac_to_min} is strict unless the two terms inside the $\min$ are equal.
To prove the inequality asserted in~\eqref{eq:exact_rec_goal}, and thus Theorem~\ref{thm:exact-recovery},
it therefore suffices to show that
\begin{equation}\label{eq:exact-rec-ineq}
\frac{\Norm{\fz}_1}{\iprod{\fz,\f}}
<
\frac{ \Norm{\fp_{\Sbar}}_1- \Norm{\fp_S}_1}{\iprod{\fp,\f}}
\end{equation}	
for all $0\neq \fp\in \Vp.$

We can bound the left-hand side of~\eqref{eq:exact-rec-ineq} using the assumptions that $\iprod{\fz,f}\geq 1-\epsilon$ and that
$\fz$ is $\mu$-sparse:
\[
	\frac{\Norm{\fz}_1}{\iprod{\fz,\f}}\leq \frac{\Norm{\fz}_1}{(1-\epsilon)\Norm{\fz}_2\Norm{\f}_2}
	\leq \frac{\sqrt{\mu}\Norm{\fz}_2}{(1-\epsilon)\Norm{\fz}_2\Norm{\f}_2}
	= \frac{\sqrt{\mu}}{(1-\epsilon)\Norm{\f}_2}.
\]

To bound the numerator of the right-hand side of~\eqref{eq:exact-rec-ineq}, we need to show
that $\fp$ cannot have too large a fraction of its 1-norm concentrated in the coordinates in
$S$.
We first note that, if this occurred, it would lead to a large contribution to the  $2$-norm:
\begin{align*}
 \Norm{\fp_\nonzeros}_1
	= \mu\E_{i\in \nonzeros}|\fp(i)|
     \leq \mu \sqrt{ \E_{i\in \nonzeros}\fp(i)^2 }
	 \leq \mu\sqrt{\frac{1}{\mu}\E_i \fp(i)^2}
	 =\sqrt{\mu}\norm{\fp}_2.
\end{align*}
Combining this with our assumption that $\otratio{\Vp} \leq \alpha$, 
gives
 \begin{align*}
		\Norm{\fp_\zeros}_1-\Norm{\fp_\nonzeros}_1&=\Norm{\fp}_1-2 \Norm{\fp_\nonzeros}_1
		\geq  \alpha^{-1} \Norm{\fp}_2-2 \sqrt{\mu}\Norm{\fp}_2
		=\left(\alpha^{-1}-2\sqrt{\mu}\right) \Norm{\fp}_2,
	\end{align*}
and thus
\begin{align*}
\frac{ \Norm{\fp_{\Sbar}}_1- \Norm{\fp_S}_1}{\iprod{\fp,\f}}
	\geq \frac{\left(\alpha^{-1}-2\sqrt{\mu}\right) \Norm{\fp}_2 }{\eta \Norm{\fp}_2 \Norm{\f}_2}
	= \frac{\left(\alpha^{-1}-2\sqrt{\mu}\right) }{\eta \Norm{\f}_2}.
\end{align*}
If
$\frac{\eta}{1-\epsilon}<
 (\alpha\sqrt{\mu})^{-1}-2,$
this implies that
 \[
 \frac{ \Norm{\fp_{\Sbar}}_1- \Norm{\fp_S}_1}{\iprod{\fp,\f}}
 	>  \frac{\left(\alpha^{-1}-2\sqrt{\mu}\right) }{(1-\epsilon)
\left  ((\alpha\sqrt{\mu})^{-1}-2\right)\Norm{\f}_2}
  = \frac{\sqrt{\mu} }
	{(1-\epsilon)\Norm{\f}_2}
\geq \frac{\Norm{\fz}_1}{\iprod{\fz,\f}},
 \]
from which our desired result follows.
\end{proof}	

\section{Results for Small Set Expansion}
\label{sec:SSE}

As stated in Corollaries~\ref{cor:cayley-sse} and~\ref{cor:sse-approx}, our results imply two consequences for the \textsf{Small Set Expansion} problem of~\cite{RaghavendraS10}.
This is the problem of deciding, given an input graph $G$ and parameters $\delta,\e$, whether there is a measure-$\delta$ subset $S$ of $G$'s vertices where all but an $\e$ fraction of $S$'s edges stay inside it,
or that $G$ is a \emph{small set expander} in the sense that every sufficiently small set has almost all its edges leaving it.
Beyond being a natural problem in its own right, \textsf{Small Set Expansion} is also closely related  to the \textsf{Unique Games} problem whose conjectured hardness is known as Khot's ``Unique Games Conjecture'' \cite{Khot02}.
\cite{RaghavendraS10} gave a reduction from \textsf{Small Set Expansion} to \textsf{Unique Games}.
While a reduction in the other direction is not known, all currently known algorithmic and integrality gap results apply to both problems equally well (e.g., \cite{AroraBS10,RaghavendraST10,BarakGHMRS12,BarakBHKSZ12}),
and thus they are likely to be computationally equivalent.

We give an algorithm to solve \textsf{Small Set Expansion} in quasipolynomial time on an interesting family of Cayley graphs,
and a new polynomial-time approximation algorithm for this problem on general graphs,
with the approximation guarantee depending on the dimension of the input graph's top eigenspace.

\subsection{Small-set expansion of Cayley graphs}






We consider the problem of solving the small set expansion problem on Cayley graphs over $\GF2^{\ell}$.
One reason to consider such graphs is that, until recently, the hardest looking instances for this problem were graphs of this type
(i.e., the noisy hypercube~\cite{KhotV05} and the ``short code'' graph~\cite{BarakGHMRS12}).
\cite{BarakBHKSZ12} showed that these instances can in fact be solved via constant rounds of the SOS hierarchy,
but we still do not have any other good candidate hard instances, and so it is natural to ask whether Cayley graphs can provide such candidates.
Also, since the SOS algorithm does not make use of the algebraic structure of Cayley graphs,
it is plausible that if this algorithm can efficiently solve the \textsf{Small-Set Expansion} on Cayley graphs, then it can in fact solve it on all graphs.

Let $G$ be a Cayley graph on $\GF2^\ell$ with $n=2^\ell$ vertices.
Let $V_{\ge \lambda}$ be the linear subspace spanned by the eigenfunctions of $G$ with eigenvalue at least $\lambda$.
(We identify $G$ here with its random-walk matrix.)
Let $P_\lambda$ be the degree-$4$ polynomial $P_\lambda(f)=\norm{\Pi_{\ge \lambda} f}_4^4$, where $\Pi_{\ge \lambda}$ is the projector into $V_{\ge \lambda}$.
We define $K_\lambda(G)=\norm{P_\lambda}_{\spectral}$.

In this section, we describe approximation algorithms with running times that depend on $K_\lambda(G)$.
The algorithms run in quasipolynomial time if $K_\lambda(G)$ is polylogarithmic.
We will show interesting families of graphs with $K_\lambda(G)=O(1)$.
(See \pref{thm:interesting-cayley}.)

The following theorem shows that low-degree sum-of-squares relaxations can detect $L_4/L_2$-sparse functions in the subspaces $V_{\ge \lambda}$ (in the case when $K_\lambda(G)$ is not too large).
This result follows from \pref{thm:nonneg} and the fact that the polynomial $P_\lambda$ has nonnegative coefficients in an appropriate basis.

\begin{theorem}
  \label{thm:cayley-norms}
  Sum-of-squares relaxations of degree $\e^{-O(1)}K_\lambda(G)^{O(1)}\log n$ provide an additive $\e$-approximation to the maximum of $\norm{f}_4/\norm{f}_2$ over all non-zero functions $f\in V_{\ge \lambda}$.
\end{theorem}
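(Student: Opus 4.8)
The plan is to read the statement off from the nonnegative–tensor optimization result \pref{thm:nonneg}, so essentially all the work is in exhibiting $P_\lambda$ as a degree-$4$ polynomial with nonnegative coefficients whose spectral norm is exactly $K_\lambda(G)$. First I would fix the Fourier basis: since $G$ is a Cayley graph on $\GF2^\ell$ (with $n=2^\ell$), its eigenfunctions are the characters $\chi_S$, $S\subseteq[\ell]$, so $V_{\ge\lambda}$ is the span of the collection $\mathcal{S}$ of characters with eigenvalue at least $\lambda$, and $d:=\dim V_{\ge\lambda}=\card{\mathcal{S}}\le 2^\ell=n$. Parametrizing $f\in V_{\ge\lambda}$ by its vector of Fourier coefficients $c=(c_S)_{S\in\mathcal{S}}\in\R^{\mathcal{S}}$, I would use orthonormality of the characters in the expectation inner product to get $\norm{f}_2^2=\sum_S c_S^2$, and Parseval to get
\[
P_\lambda(f)=\norm{f}_4^4=\E_x f(x)^4=\sum_{\substack{S_1,S_2,S_3,S_4\in\mathcal{S}\\ S_1\oplus S_2\oplus S_3\oplus S_4=\eset}}c_{S_1}c_{S_2}c_{S_3}c_{S_4},
\]
since $\E_x\chi_{S_1}(x)\chi_{S_2}(x)\chi_{S_3}(x)\chi_{S_4}(x)\in\{0,1\}$. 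Hence $P_\lambda$, viewed as a polynomial in the variables $(c_S)$, has nonnegative coefficients, and by the very definition of $K_\lambda(G)$ we have $\spectralnorm{P_\lambda}=K_\lambda(G)$.

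Next I would apply \pref{thm:nonneg} to $M=P_\lambda$ (with the ambient dimension $d\le n$, with $t=2$, and with accuracy parameter $\e'$ to be chosen): it yields, using $O(t^2\log d/(\e')^2)$ levels of SOS, a unit vector $c^*$ — equivalently a function $f^*=\sum_S c^*_S\chi_S\in V_{\ge\lambda}$ with $\norm{f^*}_2=1$ — satisfying $P_\lambda(c^*)\ge\nu^4-\e' K_\lambda(G)$, where $\nu:=\max_{0\ne f\in V_{\ge\lambda}}\norm{f}_4/\norm{f}_2$ and I have used $\nu^4=\max_{\norm{c}=1}P_\lambda(c)$; the same theorem also produces an SOS value $\tilde A$ with $\nu^4\le\tilde A\le\nu^4+\e' K_\lambda(G)$. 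The last step is to pass from an additive guarantee on the fourth power to one on the ratio. Here I would use that the measure on $\GF2^\ell$ is a probability measure, so $\norm{f}_4\ge\norm{f}_2$ for every $f$ and thus $\nu\ge1$; consequently $\nu^4-(\nu-\e)^4\ge\e\nu^3\ge\e$, so taking $\e'=\e/K_\lambda(G)$ forces $\e' K_\lambda(G)=\e\le\nu^4-(\nu-\e)^4$ and hence $(\nu^4-\e' K_\lambda(G))^{1/4}\ge\nu-\e$. Then $f^*$ (or the value $\tilde A^{1/4}$) gives the desired additive-$\e$ approximation, and the number of SOS levels is $O(t^2\log d/(\e')^2)=O\big(K_\lambda(G)^2\log n/\e^2\big)=\e^{-O(1)}K_\lambda(G)^{O(1)}\log n$, as claimed.

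I expect no deep obstacle here; the only point that genuinely needs care is that the applicability of \pref{thm:nonneg} hinges on the \emph{nonnegativity} of the coefficients of $P_\lambda$ in the character basis, which in turn uses the special feature of $\GF2^\ell$-Cayley graphs that the pointwise product of characters is again a character — for a general group (or a general graph) this expansion acquires signs and the reduction breaks, which is precisely why the statement is restricted to Cayley graphs over $\GF2^\ell$. Beyond that, one should only be mildly careful with the rescaling $\e\mapsto\e'=\e/K_\lambda(G)$ so that the final degree bound comes out in the advertised form, and note that what the algorithm returns is the explicit function $f^*$, not merely an approximation to the optimal value.
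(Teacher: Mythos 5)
Your proposal is correct and follows essentially the same route as the paper: expand $P_\lambda(f)=\norm{\Pi_{\ge\lambda}f}_4^4$ in the character basis of $\GF2^\ell$, observe that all coefficients are nonnegative (the pointwise product of characters is again a character), identify $\spectralnorm{P_\lambda}=K_\lambda(G)$, and invoke Theorem~\ref{thm:nonneg}. The only difference is that you spell out the conversion from an additive guarantee on $\nu^4$ to one on $\nu$ via $\nu\ge 1$ and the rescaling $\e'=\e/K_\lambda(G)$, a detail the paper leaves implicit with its remark that $\norm{f}_4\ge\norm{f}_2$.
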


\begin{proof}
  The problem of maximizing $\norm{f}_4/\norm{f}_2$ over the subspace $V_{\ge \lambda}$ is equivalent to maximizing the polynomial $P_\lambda$ over functions with norm $1$.
  (Also notice that $\norm{f}_4\ge \norm{f}_2$ for every function $f$.)
  In order to apply \pref{thm:nonneg}, we need to verify that $P_\lambda$ has nonnegative coefficients in an appropriate basis.
  Since $G$ is a Cayley graph over $\GF2^\ell$, we can take the characters $\{\chi_{\alpha}\}_{\alpha \in \GF2^\ell}$ as an eigenbasis.
  (Here, $\chi_\alpha(x)=(-1)^{\sum_i \alpha_ix_i}$.)
  If we represent $f=\sum_\alpha \hat f_\alpha \chi_\alpha$ in this eigenbasis and let $S_{\ge \lambda} = \{\alpha \mid \lambda_\alpha \ge \lambda\}$ be the indices of the eigenfunctions with eigenvalue at least $\lambda$, then
  \begin{displaymath}
    P_\lambda(f) = \norm{\Pi_{\ge \lambda}f}_4^4 = \E \Bigparen{\sum_{\alpha \in S_{\ge \lambda}} \hat f_\alpha \chi _\alpha}^4
    =  \sum_{\alpha,\beta,\alpha',\beta' \in S_{\ge \lambda}}  \hat f_\alpha \hat f_\beta \hat f_{\alpha'} \hat f_{\beta'}  \E\chi _\alpha\chi_\beta \chi_{\alpha'}\chi_{\beta'}
    = \sum_{\substack{\alpha,\beta,\alpha',\beta' \in S_{\ge \lambda}\\\alpha+\beta=\alpha'+\beta'}} \hat f_\alpha \hat f_\beta \hat f_{\alpha'} \hat f_{\beta'} \mper
  \end{displaymath}
  It follows that $P_{\lambda}$ has nonnegative coefficients in the monomial basis corresponding to the eigenfunctions of~$G$.
  By \pref{thm:nonneg}, sum-of-squares relaxations of degree $\e^{-O(1)}\norm{P_\lambda}_{\spectral}^{O(1)}\log n$ provides an additive approximation to  the maximum of $P_\lambda$ over functions with $\norm{f}^2=\sum_{\alpha}\hat f_\alpha^2=1$.
\end{proof}

Using the characterization of small-set expansion in terms of $L_4/L_2$-sparse functions \cite{BarakBHKSZ12}, \pref{thm:cayley-norms} implies the following approximation algorithm for small-set expansion on Cayley graphs.
This theorem implies \pref{cor:cayley-sse}.

\begin{theorem}
  \label{thm:cayley-sse}
  For some absolute constant $C\ge 1$ and all $\mu, \e>0$ small enough, sum-of-squares relaxations of degree $K_\lambda(G)^{O(1)}\log n$ can distinguish between the following two cases with $\lambda=1-C\varepsilon$.
  \begin{description}
  \item[Yes:] The Cayley graph $G$ contains a vertex set of measure at most  $\mu$ and expansion at most $\e$.
  \item[No:] All vertex sets of measure at most $C/\sqrt \mu$ in $G$ have expansion at least $1-1/C$.
  \end{description}
\end{theorem}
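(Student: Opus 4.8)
The plan is to combine the additive approximation of \pref{thm:cayley-norms} with the characterization of small-set expansion by $L_4/L_2$-sparsity of top eigenspaces from \cite{BarakBHKSZ12} (reproduced in \pref{app:sse-vs-norm}). Concretely, the algorithm sets $\lambda = 1-C\e$, runs the sum-of-squares relaxation of \pref{thm:cayley-norms} to compute a number $r$ with $\bigl|\, r - \max_{0\neq f\in V_{\ge \lambda}} \norm{f}_4/\norm{f}_2 \,\bigr| \le \e_0$ for a small absolute constant $\e_0$, and declares \textbf{Yes} iff $r$ exceeds a threshold $\tau$ placed strictly between the two cases. Since $\e_0$ is a constant, the degree used is $\e_0^{-O(1)} K_\lambda(G)^{O(1)}\log n = K_\lambda(G)^{O(1)}\log n$, as required. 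It then remains to show that $\max_{0\neq f\in V_{\ge \lambda}}\norm{f}_4/\norm{f}_2$ is large in the \textbf{Yes} case, small in the \textbf{No} case, with a gap exceeding $2\e_0$.

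For completeness, let $S$ be a vertex set with $\mu(S)\le \mu$ and $\Phi(S)\le\e$, so (writing $G$ for the random-walk operator and using the uniform measure) $\iprod{1_S, G1_S}\ge (1-\e)\norm{1_S}_2^2$. Expanding $1_S$ in an eigenbasis and using $\lambda_\alpha\le 1$ together with $\lambda = 1-C\e$ gives $\norm{\Pi_{\ge\lambda}1_S}_2^2\ge (1-1/C)\mu(S)$, i.e.\ a $(1-1/C)$-fraction of the $L_2$-mass of $1_S$ survives projection onto $V_{\ge\lambda}$. Hence $g := \Pi_{\ge\lambda}1_S\in V_{\ge\lambda}$ satisfies $\norm{g-1_S}_2^2 = \norm{1_S}_2^2 - \norm{g}_2^2 \le \mu(S)/C$, so the set of points where $|g-1_S|>1/2$ has measure below $4\mu(S)/C$; on the remaining part of $S$ one has $|g|\ge 1/2$, which gives $\norm{g}_4^4 \ge (1/2)^4(1-4/C)\mu(S) = \Omega(\mu(S))$ while $\norm{g}_2^2\le \norm{1_S}_2^2 = \mu(S)$. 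Therefore $\norm{g}_4/\norm{g}_2 \ge \Omega(\mu(S)^{-1/4}) \ge \Omega(\mu^{-1/4})$, which tends to infinity as $\mu\to 0$ and in particular exceeds $\tau + 2\e_0$ once $\mu$ is small enough.

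For soundness, suppose towards a contradiction that in the \textbf{No} case $\max_{0\neq f\in V_{\ge\lambda}}\norm{f}_4/\norm{f}_2 \ge \tau$, i.e.\ there is $f\in V_{\ge\lambda}$ with $\norm{f}_4^4\ge\tau^4\norm{f}_2^4$. A function in the $\ge\lambda$ eigenspace that is this ``spiky'' can be rounded — by thresholding $f$ at a suitable multiple of $\norm{f}_2$ and invoking the converse direction of the \cite{BarakBHKSZ12} characterization — to an actual vertex set $S'$ whose measure is at most the \textbf{No}-case threshold and whose expansion is bounded away from $1$ (at most $1-\Omega(1)$, since $\lambda=1-C\e$ with $\e$ small). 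This contradicts the \textbf{No} hypothesis once $C$ is fixed large enough, so in the \textbf{No} case the maximum ratio is below $\tau$. Combining the two cases, the additive-$\e_0$ guarantee of \pref{thm:cayley-norms} lets the algorithm distinguish ``$\ge\Omega(\mu^{-1/4})$'' from ``$<\tau$'' for all sufficiently small $\mu,\e$, proving the theorem (and hence \pref{cor:cayley-sse}).

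The hard part will be the soundness step: one must carefully instantiate the BBHKSZ small-set-expansion-versus-hypercontractivity characterization at the specific threshold $\lambda = 1-C\e$ and track its (inherently lossy, roughly quadratic in the sparsity) conversion from an $L_4/L_2$-sparse subspace vector into a genuine non-expanding set, so as to land exactly at the measure and expansion bounds asserted in the \textbf{No} case; one should also verify that this conversion is purely a rounding step using no sum-of-squares reasoning, so that it composes cleanly with the additive SDP guarantee, and that the single choice $\lambda = 1-C\e$ serves completeness (it must be close enough to $1$ that low-expansion sets survive the projection) and soundness simultaneously, with $K_\lambda(G)$ being precisely the quantity that controls the degree in \pref{thm:cayley-norms}. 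The completeness direction, by contrast, is a short and essentially tight spectral-plus-truncation computation as sketched above.
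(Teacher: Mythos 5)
Your proposal is correct and takes essentially the same route as the paper: both reduce the problem to estimating $\max_{0\ne f\in V_{\ge\lambda}}\norm{f}_4/\norm{f}_2$ via \pref{thm:cayley-norms}, handle completeness by projecting the set's indicator onto $V_{\ge\lambda}$ and truncating (the paper invokes \pref{lem:trunc} for this step), and handle soundness via the expansion-implies-norm-bound direction of \cite[Theorem 2.4]{BarakBHKSZ12}, which the paper applies directly rather than through your contrapositive/rounding phrasing. The one point to tighten is that the \textbf{No}-case upper bound is $O(\mu^{-1/4}/\sqrt{C})$ rather than an absolute constant, so your threshold $\tau$ must itself scale like $\mu^{-1/4}$; the test still works because the additive gap between the two cases is $\Omega(\mu^{-1/4})$, which swamps the constant additive SDP error once $\mu$ is small.
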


\begin{proof}
  We will show that the maximum of $\norm{f}_4/\norm{f}_2$ over $f\in V_{\ge \lambda}$ distinguishes the two cases (by a constant margin).
  Therefore, \pref{thm:cayley-norms} implies that we can distinguish between the cases using sum-of-squares relaxations.

  \textbf{Yes}-case:
  Let $f$ be the indicator function of a set with measure at most $\mu$ and expansion at most $\e$.
  Then, $\snorm{\Pi_{\ge \lambda} f}\ge 0.99 \snorm{f}$.
  It follows that $\norm{\Pi_{\ge \lambda}f}_4^4\ge 0.9 \norm{f}_4^4$.
  (See \pref{lem:trunc}.)
  Therefore, $\norm{\Pi_{\ge \lambda} f}_4^4/\norm{\Pi_{\ge \lambda} f}_2^4\ge \Omega(1) \norm{f}_4^4/\norm{f}_2^4= \Omega(1)\cdot 1/\mu$.

  \textbf{No}-case:
  Let $\mu'=C/\sqrt \mu$.
  By \cite[Theorem 2.4]{BarakBHKSZ12}, graphs with this kind of small-set expansion satisfy $\norm{f}_4^4/\norm{f}_2^4\le O(1)/(\mu')^2 \ll 1/\mu$ for all functions $f\in V_{\ge \lambda}(G)$.
\end{proof}

The following theorem shows that there are interesting Cayley graphs that satisfy $K_\lambda(G)=O(1)$ for $\lambda=\Omega(1)$.
We consider constructions based on the long code and the short code \cite{BarakGHMRS12}.
These constructions are parameterized by the size of the graph and its eigenvalue gap.
In the context of the Unique Games Conjecture and the Small-Set Expansion Hypothesis, the most relevant case is that the eigenvalue gap is a constant.
(The eigenvalue gap corresponds to the gap to perfect completeness.)

\begin{theorem}
\label{thm:interesting-cayley}
  Long-code and short-code based graphs with constant eigenvalue gap satisfy $K_\lambda(G)=O(1)$ for all $\lambda=\Omega(1)$.
\end{theorem}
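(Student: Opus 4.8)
The plan is to trace through the definition $K_\lambda(G)=\spectralnorm{P_\lambda}$ with $P_\lambda(f)=\norm{\Pi_{\ge\lambda}f}_4^4$, and to reduce the claim to a \emph{spectral} form of hypercontractivity for the top eigenspace $V_{\ge\lambda}$. Recall from the footnote preceding \pref{thm:approx-recovery} that, via Lemma~\ref{lem:spectral norm}, the bound $\spectralnorm{P_\lambda}\le C$ is equivalent to a degree-$4$ sum-of-squares proof of the inequality $\norm{\Pi_{\ge\lambda}f}_4^4\le C\norm{f}_2^4$ whose Gram matrix has spectral norm $O(C)$. So it suffices, for each of the two families, to produce such a proof with $C=O(1)$; equivalently, to certify that $V_{\ge\lambda}$ is $O(1)$-hypercontractive from $L_2$ to $L_4$ by a degree-$4$ sum-of-squares argument of bounded spectral norm.

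For the long-code based graphs, $G$ is a noisy-hypercube-type Cayley graph over $\GF2^\ell$, so the character $\chi_\alpha$ has eigenvalue $\rho^{\abs{\alpha}}$ for the noise parameter $\rho$; hence $S_{\ge\lambda}=\set{\alpha:\abs{\alpha}\le k}$ with $k=\lfloor\log\lambda/\log\rho\rfloor$, and when both the eigenvalue gap $1-\rho$ and $\lambda$ are $\Omega(1)$ this $k$ is a constant. I would then invoke the Bonami--Beckner inequality in its sum-of-squares form: there is a degree-$4$ sum of squares $S$ with $\norm{\Pi_{\le k}f}_4^4=9^{k}\norm{f}_2^4-S(f)$, and moreover the Gram matrix of this identity can be taken of spectral norm $O(9^{k})$, so that $\spectralnorm{\norm{\Pi_{\le k}f}_4^4}=O(9^{k})$. (The $2$-point case of hypercontractivity has such a proof via an explicit small Gram matrix, and tensorization over the $\ell$ coordinates preserves both the sum-of-squares structure and the $O(1)$ bound on the Gram matrix; this is the type of statement established in \cite{BarakBHKSZ12}.) Applied to $V_{\ge\lambda}$ this gives $K_\lambda(G)=O(9^{k})=O(1)$.

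For the short-code based graphs the argument is the same once the structural inputs of \cite{BarakGHMRS12} are plugged in: there $V_{\ge\lambda}$ is the image, under the short-code (Reed--Muller) encoding, of the degree-$\le k$ functions, and \cite{BarakGHMRS12} prove the corresponding hypercontractive inequality together with a degree-$4$ sum-of-squares witness of spectral norm $O(1)$, whence $K_\lambda(G)=O(1)$ as before. Combined with \pref{thm:cayley-norms} and \pref{thm:cayley-sse}, this is what turns those statements into a genuine quasipolynomial-time guarantee on these graph families, since then $K_\lambda(G)^{O(1)}\log n$ is polylogarithmic.

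The one genuinely delicate point is the word ``spectral'' in the first step: we need not merely that $\norm{\Pi_{\ge\lambda}f}_4/\norm{f}_2=O(1)$ on $V_{\ge\lambda}$, but that the witnessing sum-of-squares proof has a Gram matrix of $O(1)$ spectral norm. The obvious quadratic form representing $P_\lambda$ --- block-diagonalizing $\norm{\Pi_{\ge\lambda}f}_4^4=\sum_{\gamma}\bigl(\sum_{\alpha+\beta=\gamma,\ \alpha,\beta\in S_{\ge\lambda}}\hat f_\alpha\hat f_\beta\bigr)^2$ over the value $\gamma=\alpha+\beta$, and replacing the $\gamma=0$ block by the identity form --- only gives $\spectralnorm{P_\lambda}\le 1+\max_{\gamma\neq 0}\card{\set{(\alpha,\beta)\in S_{\ge\lambda}^2:\alpha+\beta=\gamma}}$, which for the degree-$\le k$ eigenspace is $\Theta((\log n)^{k-1})$ and thus superconstant once $k\ge 2$. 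Getting an honest constant forces one to use the multiplicative structure of the Bonami--Beckner proof (tensorization over single coordinates), and carrying that structure over to the short code is where essentially all the work sits; the remaining steps are bookkeeping against \pref{thm:cayley-norms}.
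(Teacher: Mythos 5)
Your proposal is correct and follows essentially the same route as the paper: reduce $K_\lambda(G)=O(1)$ via Lemma~\ref{lem:spectral norm} to exhibiting a decomposition $P_\lambda(f)=C\norm{f}_2^4-S(f)$ with $S$ a degree-$4$ sum of squares, and obtain that decomposition from the sum-of-squares--certified hypercontractivity of the low-degree eigenspaces of the long-code and short-code graphs, which the paper simply outsources to \cite[Lemma 5.1]{BarakBHKSZ12}. Your closing observation that the naive block-diagonal Gram matrix only yields a $\polylog(n)$ bound, so that the tensorized Bonami--Beckner certificate is genuinely needed, is accurate and correctly locates where the real work (done in the cited lemma) sits.
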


\begin{proof}
  By \cite[Lemma 5.1]{BarakBHKSZ12}, there exists a constant $C$ such that $P_\lambda(f) =  C\norm{f}_2^4 - S(f)$ where $S(\cdot)$ is a sum of squares (the same constant $C$ works for both graph constructions).
  Therefore, by Lemma~\ref{lem:spectral norm}, $\norm{P_\lambda}_{\spectral}\le C$.
\end{proof}



\subsection{Approximating small-set expansion using $\ASVP$}

The approximation algorithm for the analytical sparse vector problem (\pref{thm:ASVP}) implies the following approximation algorithm for small-set expansion.
An algorithm for the same problem with the factor $(\dim V_{\ge \lambda})^{1/3}$ replaced by a constant would refute the Small-Set Expansion Hypothesis \cite{RaghavendraS10, RaghavendraST12}.\footnote{
It's plausible that, under standard complexity assumptions such as $\mathbf{NP}\nsubseteq \mathbf{SUBEXP}$, even a smaller improvement to a $(\dim V_{\ge \lambda})^{o(1)}$ factor instead of $(\dim V_{\ge \lambda})^{1/3}$ would refute this hypothesis, though we have no proof of such an implication.}

\begin{theorem}
  For some absolute constant $C\ge 1$ and all $\mu,\e>0$ small enough, sum-of-squares relaxations with constant degree can solve the promise problem on regular graphs $G$:
  \begin{description}
  \item[Yes:] The graph contains a vertex with measure at most $\mu /(\dim V_{\ge \lambda})^{1/3}$ and expansion at most $\e$, where $\lambda=1-C\cdot \e$.
  \item[No:] All vertex sets of measure at most $C\sqrt \mu$ have expansion at least $1-1/C$.
  \end{description}
\end{theorem}

\begin{proof}
  Suppose $G$ satisfies the \textbf{Yes} property.
  Let $f$ be the indicator functions of a set with measure at most $\mu'=\mu/(\dim V_{\ge \lambda})^{1/3}$ and expansion at most $\e$.
  Then, $\norm{\Pi_{\ge \lambda} f}_2^2\ge (1-1/C')\norm{f}_2^2$, where we can make $C'$ as large as we like by making $C$ larger.
  By \pref{thm:ASVP}, constant-degree sum-of-squares relaxations allow us to find an $L_4/L_2$-sparse function $g\in V_{\ge \lambda}$, so that $\norm{g}_4^4\ge \Omega(1/\mu)\norm{g}_2^4$.
  By \cite[Theorem 2.4]{BarakBHKSZ12} (see also Appendix~\ref{app:sse-vs-norm}), such a function certifies that we are not in the \textbf{No} case.
\end{proof}




\section{Discussion and open questions}
\label{sec:discussion}

A general open question is to find other applications of our approach for rounding sum-of-squares relaxations.
Natural candidates would be problems where it seems that they do not display a ``dichotomy'' behavior, where beating some simple algorithm is likely to be
exponentially hard, but rather suggest more of a smooth tradeoff between time and performance.
As far as we are aware, all known ``robust''\footnote{For \textsc{knapsack}-like problems, there exist explicit lower bounds \cite{Grigoriev01b}, but here low-degree sum-of-squares proofs provide very good approximation (in this sense, the lower bound is not robust).}
lower-bound results for the sum-of-squares method are \emph{non-constructive}, i.e., they show that hard instances for the sos method exist but do not give an efficient way of constructing them.
More concretely, the results use the \emph{probabilistic method} and show that with high probability, random instances are hard for sum-of-squares relaxations~\cite{Grigoriev01,Schoenebeck08}.
Therefore, SOS seems promising for problems where random instances do not seem to be the most difficult, e.g., problems related to the Unique Games Conjecture.
A concrete problem of that type to look at is \textsf{Sparsest Cut}. In particular, can we obtain even a small improvement\footnote{An approach to obtain constant-factor approximations for \textsc{sparsest cut} in subexponential time is outlined in the dissertation~\cite[Chapter 9]{Steurer10d}. However, this approach also works with weaker hierarchies. An approach tailored to sum-of-squares would be interesting.} to \cite{AroraRV04}'s algorithm using more SOS levels?
In fact, we believe that even finding a natural reinterpretation of \cite{AroraRV04} result in our framework would be interesting.
That said, our result for finding a planted sparse vector shows that SOS can be useful for average-case problems as well, and in particular we believe SOS might be a strong tool
for solving  unsupervised learning problems, especially for nonlinear models.

A relaxation-based approximation algorithm can be thought of as having three components: the relaxation, the rounding algorithm, and its analysis.
In our approach  there is almost no creativity in choosing the relaxation, which is simply taken to be a sufficiently high level of the SOS hierarchy.
(Though there may be some flexibility in how we represent solutions.)
Can we similarly show a ``universal'' rounding algorithm, thus pushing all the creative choices into the \emph{analysis}?
A related question is whether one can formulate a theorem giving a translation from combining algorithms into rounding algorithms under sufficiently general conditions,
so that results like ours would follow as special cases, and as mentioned in Section~\ref{sec:relwork}, have already made some progress in this direction.

The notion of ``analytically sparse'' vectors seems potentially useful for more applications.
It would be interesting to explore the different choices for $L_q/L_p$ sparsity,
and what tradeoffs they yield in terms of computation time versus usefulness as a proxy for actual sparsity.
In particular, for the planted sparse vector question, it is natural to conjecture that there is an analytical relaxation that we can optimize over in $n^{O(\ell)}$ time,
and can detect sparse vectors in random subspaces of dimension $n^{1-1/\ell}$.

In the context of the Small-Set-Expansion Hypothesis / Unique Games Conjecture, the most important question is whether our results of Section~\ref{sec:asvp}
can be further improved. 
We do not know of any candidate hard instances for this problem (in the relevant range of parameters) and so conjecture that our algorithm (or at least our analysis of it) is not optimal and 
can be improved further. 

Related to the question of finding hard instances, our work suggests a different type of negative results for convex relaxations.
While integrality gaps are instances that are hard for a particular relaxation, regardless of the rounding algorithm,
one can consider the notion of ``combining gaps''.
These will be instances where there is a distribution of good solutions, but a particular combining algorithm $C$ fails to find one.
Hence, viewing $C$ as a rounding algorithm, such a result shows that $C$ will fail regardless of the relaxation used.
(Karloff's work~\cite{Karloff99} on hard instances for the \cite{GoemansW95} hyperplane cut rounding algorithms can be viewed as such an example.)
Studying such gaps can shed more light on our approach and computational difficulty in general.
In particular, it might be interesting to consider this question for random satisfiable instances of SAT or other constraint satisfaction problems.


\addreferencesection

\newcommand{\etalchar}[1]{$^{#1}$}
\providecommand{\bysame}{\leavevmode\hbox to3em{\hrulefill}\thinspace}
\providecommand{\MR}{\relax\ifhmode\unskip\space\fi MR }
\providecommand{\MRhref}[2]{%
  \href{http://www.ams.org/mathscinet-getitem?mr=#1}{#2}
}
\providecommand{\href}[2]{#2}

\appendix

\section{Pseudoexpectation toolkit} \label{app:toolkit}

We recall here the definition of pseudoexpectation from~\cite{BarakBHKSZ12} and prove some of its useful properties.
Some of these were already proven in~\cite{BarakBHKSZ12} but others are new.

\begin{definition}  Let $\pE$ be a functional that maps polynomial $P$ over $\R^n$ of degree at most $r$ into a real number which we denote by $\pE_x P(x)$ or $\pE P$ for short.  We say that $\pE$ is a \emph{level-$r$ pseudo-expectation functional} ($r$-\pef for short)  if it satisfies:
\begin{description}

\item[Linearity] For every polynomials $P,Q$ of degree at most $r$ and $\alpha,\beta \in \R$, $\pE( \alpha P + \beta Q) = \alpha \pE P + \beta \pE Q$.

\item[Positivity] For every polynomial $P$ of degree at most $r/2$, $\pE P^2 \geq 0$.

\item[Normalization] $\pE 1 = 1$ where on the RHS, $1$ denotes the degree-$0$ polynomial that is the constant $1$.
\end{description}
\end{definition}


The functional $\pE$  can be represented by a table of size $n^{O(r)}$ containing the pseudo-expectations of every monomial of degree at most $r$ (or some
other linear basis for polynomials of degree at most $r$). For a linear functional $\pE$, the map $P\mapsto \pE P^2$ is a quadratic form. Hence, $\pE$
satisfies the positivity condition if and only if the corresponding quadratic form is positive semidefinite. It follows that the convex set of level-$r$
pseudo-expectation functionals over $\R^n$ admits an $n^{O(r)}$-time separation oracle, and hence the $r$-round SoS relaxation can be solved up to accuracy
$\e$ in time $(mn\cdot \log(1/\e))^{O(r)}$.

For every random variable $X$ over $\R^n$, the functional $\pE P \seteq \E P(X)$ is a level-$r$ pseudo-expectation functional for every $r$.
As $r\ra\infty$, this hierarchy of pseudo-expectations will converge to the expectations of a true random variable~\cite{Lasserre01},
in general the convergence is not guaranteed to happen in a finite number of steps~\cite{KlerkL11}, although for most problems of interest
in TCS, $n$ levels would suffice for either exact convergence or sufficiently close approximation.




We now record various useful ways in which pseudoexpectations behave close to actual expectations.

For two polynomials $P$ and $Q$, we write $P \sle Q$ if $Q=P + \sum_{i=1}^m R_i^2$ for some polynomials $R_1,\ldots,R_m$.

If $P$ and $Q$ have degree at most $r$, then $P \sle Q$ implies that $\pE P \leq \pE Q$ every $r$-\pef $\pE$.  This follows using linearity and positivity, as
well as the (not too hard to verify) observation that if $Q-P = \sum_i R_i^2$ then it must hold that $\deg(R_i) \leq \max\{\deg(P),\deg(Q)\}/2$ for every $i$.

One of the most useful properties of pseudo-expectation is that it satisfies the Cauchy--Schwarz inequality:

\begin{lemma}[Pseudo Cauchy--Schwarz,\cite{BarakBHKSZ12}]
  \label{lem:pef-cauchy-schwarz}
  Let $P$ and $Q$ be two polynomials of degree at most $r$.
  Then, $\pE PQ\le \sqrt{\pE P^2}\cdot \sqrt{\pE Q^2}$ for any degree-$2r$
  pseudo-expectation functional $\pE$.
\end{lemma}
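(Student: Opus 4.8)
The plan is to run the classical Cauchy--Schwarz argument ``one level up'', replacing the inner product $\langle P,Q\rangle$ by the bilinear form $(P,Q)\mapsto \pE PQ$ and using only the three axioms of a pseudoexpectation. The key point is that although $\pE$ need not be the expectation of an actual random variable, the form $(P,Q)\mapsto \pE PQ$ restricted to the two-dimensional space $\mathrm{span}\{P,Q\}$ is genuinely positive semidefinite, and positive semidefiniteness of a $2\times2$ symmetric matrix is exactly the Cauchy--Schwarz inequality.

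Concretely, first I would fix arbitrary reals $\alpha,\beta$ and observe that $\alpha P+\beta Q$ is a polynomial of degree at most $r$, so $(\alpha P+\beta Q)^2$ has degree at most $2r$ and the positivity axiom of a degree-$2r$ \pef applies: $\pE(\alpha P+\beta Q)^2\ge 0$. Expanding via linearity gives
\[
\alpha^2\,\pE P^2 + 2\alpha\beta\,\pE PQ + \beta^2\,\pE Q^2 \ge 0
\]
for all $\alpha,\beta\in\R$. Equivalently, the symmetric matrix
\[
\begin{pmatrix} \pE P^2 & \pE PQ \\ \pE PQ & \pE Q^2 \end{pmatrix}
\]
is positive semidefinite. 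Taking determinants (which are nonnegative for PSD matrices) yields $\pE P^2\cdot\pE Q^2-(\pE PQ)^2\ge 0$, hence $\lvert \pE PQ\rvert\le\sqrt{\pE P^2}\sqrt{\pE Q^2}$, which is the claimed bound.

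The only point needing a word of care is the degenerate case: if, say, $\pE P^2=0$, the determinant inequality still gives $(\pE PQ)^2\le 0$, forcing $\pE PQ=0$, so the inequality holds with the right-hand side equal to zero as well; thus no nondegeneracy assumption is needed. I do not expect any real obstacle here — the content is entirely the positivity axiom plus elementary linear algebra — so the ``hard part'' is merely bookkeeping the degree: ensuring that every polynomial whose square we feed into $\pE$ has degree at most $r$, which is immediate since $\alpha P+\beta Q$ is a linear combination of $P$ and $Q$. (The same two-by-two-PSD trick, incidentally, will be reused later to derive the \Holder-type inequalities for pseudoexpectations.)
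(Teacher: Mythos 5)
Your proof is correct. It differs from the paper's argument only in packaging: the paper first normalizes so that $\pE P^2=\pE Q^2=1$ and then applies positivity to the single square $(P-Q)^2$, obtaining $2\pE PQ\le \pE P^2+\pE Q^2$, with a separate limiting argument (using $(P-\alpha Q)^2$ for all $\alpha>0$) to handle the degenerate case $\pE P^2=0$; you instead apply positivity to $(\alpha P+\beta Q)^2$ for all $\alpha,\beta$ and invoke the determinant criterion for a $2\times 2$ PSD matrix. Both proofs rest on exactly the same ingredient — positivity of $\pE$ on squares of elements of $\mathrm{span}\{P,Q\}$, which is legitimate since such elements have degree at most $r$ — so neither is more general. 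The one concrete advantage of your route is that the degenerate case requires no separate treatment: the determinant inequality directly forces $\pE PQ=0$ when $\pE P^2=0$, whereas the paper's normalization step breaks down there and must be patched. A very minor stylistic point: your argument yields the stronger two-sided bound $\lvert \pE PQ\rvert\le\sqrt{\pE P^2}\sqrt{\pE Q^2}$, which is also what is actually used elsewhere in the paper (e.g.\ in Corollary~\ref{cor:pef-constraint}), so there is no loss in stating it that way.
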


\begin{proof}
  We first consider the case $\pE P^2,\pE Q^2 >0$.
  Then, by linearity of $\pE$, we may assume that $\pE P^2=\pE Q^2=1$.
  Since $2 PQ\sle P^2 + Q^2$ (by expanding the square $(P-Q)^2$), it
  follows that $\pE PQ \le \tfrac12 \pE P^2 +\tfrac12\pE Q^2 =1$ as
  desired.
  It remains to consider the case $\pE P^2=0$.
  In this case, $2\alpha PQ \sle P^2+\alpha^2Q^2$ implies that $\pE PQ\le
  \alpha\cdot  \tfrac12\pE Q^2$ for all $\alpha>0$.
  Thus $\pE PQ=0$, as desired.
\end{proof}

In particular this implies the following corollary

\begin{corollary}[\cite{BarakBHKSZ12}]\label{cor:pef-constraint} If $P$ is a polynomial of degree $\leq r$, and $\pE_x$ is a $2r$-\pef such
that $\pE P(x)^2 = 0$, then $\pE P(x)Q(x) = 0$ for every $Q$ of degree $\leq r$.
\end{corollary}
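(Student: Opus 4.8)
The plan is to deduce this immediately from the pseudo Cauchy--Schwarz inequality (Lemma~\ref{lem:pef-cauchy-schwarz}), which is really the only ingredient. The first thing to check is that the degrees are compatible: since $P$ and $Q$ both have degree at most $r$, the products $PQ$, $P^2$, and $Q^2$ all have degree at most $2r$, so a $2r$-\pef can legitimately be applied to them, and the positivity condition (which requires degree at most $r=2r/2$) guarantees in particular that $\pE Q(x)^2 \ge 0$ is a well-defined nonnegative real. Thus Lemma~\ref{lem:pef-cauchy-schwarz} applies with its parameter ``$r$'' set to our $r$.

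Next I would apply the inequality twice. Applying it to the pair $(P,Q)$ gives $\pE P(x)Q(x) \le \sqrt{\pE P(x)^2}\cdot\sqrt{\pE Q(x)^2} = 0$, where we used the hypothesis $\pE P(x)^2 = 0$. Applying it to the pair $(-P,Q)$ — and using linearity of $\pE$ together with $(-P)^2 = P^2$ — gives $-\pE P(x)Q(x) \le \sqrt{\pE P(x)^2}\cdot\sqrt{\pE Q(x)^2} = 0$, i.e.\ $\pE P(x)Q(x) \ge 0$. Combining the two inequalities yields $\pE P(x)Q(x) = 0$, which is exactly the claim.

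There is no real obstacle here; the statement is a two-line consequence of the lemma. The only point worth a remark is that one does not need to treat the degenerate case $\pE Q(x)^2 = 0$ separately, since that case is already handled inside the proof of Lemma~\ref{lem:pef-cauchy-schwarz} (its ``$\pE P^2 = 0$'' branch, applied with the roles swapped). I would therefore present the corollary's proof in just three or four lines, invoking Lemma~\ref{lem:pef-cauchy-schwarz} once for $Q$ and once for $-Q$.
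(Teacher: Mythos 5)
Your proof is correct and follows the same route as the paper, which also deduces the corollary directly from Lemma~\ref{lem:pef-cauchy-schwarz}. In fact your version is slightly more complete: the paper's one-line proof only writes out $\pE PQ \le \sqrt{\pE P^2}\sqrt{\pE Q^2}=0$, leaving the symmetric application to $-P$ (which gives the reverse inequality $\pE PQ \ge 0$) implicit, whereas you spell it out.
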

\begin{proof} By Lemma~\ref{lem:pef-cauchy-schwarz},
\[
\pE PQ \leq \sqrt{\pE P^2}\sqrt{\pE Q^2} = 0
\]
\end{proof}

In this paper we also need the following variant of \Holder's inequality:

\begin{lemma}[Pseudoexpectation \Holder ] \label{lem:pseudo-holder} Let $d,c,k \in \N$,  $\cD$ be a level $\ell \geq 10dck$ pseudodistribution over $\R^n$, and  $P$ a sum of squares $n$-variate polynomial of degree $d$, then
\[
\pE_{X\sim \cD} P(X)^{r'} \geq \left( \pE_{X\sim \cD} P(X)^r\right)^{r'/r}
\]
where $r = ck$ and $r'=(c+1)k$.
\end{lemma}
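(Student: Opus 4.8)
The plan is to bootstrap from the pseudoexpectation Cauchy--Schwarz inequality (Lemma~\ref{lem:pef-cauchy-schwarz}), upgraded to allow an arbitrary sum-of-squares ``weight'', and then to invoke discrete log-convexity. Throughout write $a_j \defeq \pE_{X\sim\cD} P(X)^{jk}$ for $j=0,1,\dots,c+1$; since $r'/r=(c+1)/c$ the desired inequality is exactly $a_{c+1}^{\,c}\ge a_c^{\,c+1}$. Two preliminary observations: first, because $P$ is a sum of squares, every power $P^m$ is a sum of squares (a product of sums of squares is a sum of squares), so $\pE P^m\ge 0$ whenever $dm=\deg(P^m)\le\ell$, and all degrees appearing below are at most $d(c+1)k\le\ell$ (using $\ell\ge 10dck$ and $c\ge 1$); second, $a_0=\pE 1=1$ by normalization.

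The key step is a \emph{weighted} Cauchy--Schwarz inequality: if $A$ is a sum of squares and $T,U$ are polynomials with $\deg A+2\max(\deg T,\deg U)\le\ell$, then $\pE(ATU)^2\le \pE(AT^2)\cdot\pE(AU^2)$. This is proved exactly as Lemma~\ref{lem:pef-cauchy-schwarz}: for every $\lambda\in\R$ the polynomial $A(T+\lambda U)^2$ is a sum of squares of polynomials of degree $\le\ell/2$, so $\pE(AT^2)+2\lambda\,\pE(ATU)+\lambda^2\pE(AU^2)\ge 0$; a nonnegative quadratic form in $\lambda$ has nonpositive discriminant, which yields the inequality (and the degenerate case $\pE(AU^2)=0$ forces $\pE(ATU)=0$, as in that lemma). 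This is the step that uses the hypothesis that $P$ is a sum of squares: it lets us take the weight $A=P^{(j-1)k}$ legitimately, so that we never need a fractional power of $P$ (which a direct invocation of Lemma~\ref{lem:pef-cauchy-schwarz} would demand, and which is not a polynomial in general).

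Applying weighted Cauchy--Schwarz with $A=P^{(j-1)k}$, $T=P^{k}$, $U=1$ for each $1\le j\le c$ gives the log-convexity estimates $a_j^2\le a_{j-1}a_{j+1}$ (the relevant degrees being $d(j+1)k\le d(c+1)k\le\ell$). Rewriting these as $a_j\le\sqrt{a_{j-1}}\sqrt{a_{j+1}}$ and using $a_j\ge 0$ shows $a_{j-1}=0\Rightarrow a_j=0$; hence if $a_c>0$ then $a_j>0$ for all $0\le j\le c$, and then $a_c^2\le a_{c-1}a_{c+1}$ with $a_{c-1},a_c>0$ forces $a_{c+1}>0$ too. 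If instead $a_c=0$ the claimed inequality reads $a_{c+1}^{\,c}\ge 0$, which holds since $P^{(c+1)k}$ is a sum of squares; so we may assume $a_0,\dots,a_{c+1}$ are all strictly positive.

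Finally, set $\rho_j\defeq a_j/a_{j-1}>0$ for $1\le j\le c+1$. The estimates $a_j^2\le a_{j-1}a_{j+1}$ say precisely that $\rho_j\le\rho_{j+1}$, so $\rho_1\le\cdots\le\rho_{c+1}$. Since $a_0=1$ we have $a_c=\prod_{j=1}^{c}\rho_j$ and $a_{c+1}=a_c\,\rho_{c+1}$, and because $\rho_j\le\rho_{c+1}$ for every $j\le c$ we get $a_c=\prod_{j=1}^{c}\rho_j\le\rho_{c+1}^{\,c}$; multiplying by $a_c^{\,c}>0$ gives $a_c^{\,c+1}\le a_c^{\,c}\rho_{c+1}^{\,c}=(a_c\rho_{c+1})^{c}=a_{c+1}^{\,c}$, which is the claim. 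I expect the only genuinely nonobvious point to be the conceptual one already flagged: Hölder (equivalently $L^p$-norm monotonicity) is not available for pseudoexpectations as a black box, so one must observe that equally-spaced three-point log-convexity is a single weighted Cauchy--Schwarz and that, anchored at $a_0=1$, discrete log-convexity upgrades to the power-mean inequality; everything else is routine bookkeeping of degrees.
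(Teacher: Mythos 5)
Your proof is correct and takes essentially the same route as the paper's: the paper reweights the pseudodistribution by the sum-of-squares $P^{r-c}$ and applies pseudo-Cauchy--Schwarz to get the same three-point log-convexity inequality $\pE P^{a+b}\cdot\pE P^{a-b}\ge(\pE P^{a})^2$ that your weighted Cauchy--Schwarz gives, and then chains these by induction anchored at $\pE 1=1$, which is exactly your telescoping monotone-ratio argument unrolled. Your write-up is if anything more careful than the paper's (explicit degree bookkeeping, and handling of the degenerate case $\pE P^{r}=0$, which the paper's division-based induction silently assumes away).
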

\begin{proof}
We'll do the proof by induction on $r$. The base case is $r=c$ in which case this is simply the pseudoexpectation Cauchy Schwarz that $\pE P(X)^{2c} \geq (\pE P(X)^c )^2$.
Define $\cD'$ to be the pseudodistribution obtained by reweighing $\cD$ according to $P(X)^{r-c}$.
Using $\pE_{\cD'} P(X)^{2c} \geq \left( \pE_{\cD'} P(X)^c \right)^2$ we can write
\[
\frac{\pE_{\cD} P(X)^{r+x}}{\pE_{\cD} P(X)^{r-c}} = \frac{\pE_{\cD} P(X)^{r-c} P(X)^{2c}}{\pE_{\cD} P(X)^{r-c}} \geq \left( \frac{\pE_{\cD} P(X)^{r-c} P(X)^{c}}{\pE_{\cD} P(X)^{r-c}} \right)^2
\]
moving things around we get that
\[
\pE_{\cD} P(X)^{r+c} \geq \left( \pE_{\cD} P(X)^r \right)^2 / \pE_{\cD} P(X)^{r-c}
\]
which using our induction hypothesis on $r$ vs $r-c$, we can lower bound by
\[
\left( \pE_{\cD} P(X)^r \right)^2  / \left( \pE_{\cD} P(X)^{r}\right)^{(r-c)/r} = \left( \pE_{\cD} P(X)^r \right)^{(r+c)/r}
\]
\end{proof}

We sometime would need to extend a pseudoexpectation of one random variable to a pseudoexpectation of two independent copies of it.
The following lemma would be useful there

\begin{lemma}\label{lem:pef:independent} Suppose that $X$ and $Y$ are two pseudodistributions of level $\ell$.
Then we can define a level $\ell$ pseudoexpectation operator on $X,Y$ such that for every two polynomials $P$ $Q$ of degree at most $\ell/2$, $\pE P(X)Q(Y) = (\pE P(X) )(\pE Q(Y) )$.
\end{lemma}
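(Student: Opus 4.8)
The plan is to construct the product pseudoexpectation by literally forming the tensor product of the two moment tables. Concretely, let $\pE_X$ be the level-$\ell$ pseudoexpectation operator for $X$ on $\R^n$ and $\pE_Y$ the one for $Y$ on $\R^m$. A polynomial $R(x,y)$ of degree at most $\ell$ in the combined variables can be written (using linearity) as a linear combination of products $x^\alpha y^\beta$ with $\abs{\alpha}+\abs{\beta}\le \ell$. I would \emph{define} $\pE\, x^\alpha y^\beta \defeq (\pE_X x^\alpha)(\pE_Y y^\beta)$ and extend by linearity; this is well-defined since the monomials $x^\alpha y^\beta$ form a basis. Linearity and normalization ($\pE 1 = (\pE_X 1)(\pE_Y 1) = 1$) are then immediate, and the multiplicativity statement $\pE P(X)Q(Y) = (\pE_X P)(\pE_Y Q)$ follows by expanding $P,Q$ in monomials and using bilinearity of the definition. (Note the degree bookkeeping: if $\deg P,\deg Q\le \ell/2$ then every monomial $x^\alpha y^\beta$ arising in $P\cdot Q$ has $\abs\alpha\le \ell/2$ and $\abs\beta\le\ell/2$, well within the table.)

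The only real content is \textbf{positivity}: I must check $\pE R(x,y)^2 \ge 0$ for every $R$ of degree at most $\ell/2$. This is where I expect the main (though still mild) obstacle to lie. Write $R(x,y) = \sum_i x^{\gamma_i} g_i(y)$ where the $x^{\gamma_i}$ are the distinct $x$-monomials appearing in $R$ (so $\abs{\gamma_i}\le\ell/2$) and each $g_i$ is a polynomial in $y$ of degree at most $\ell/2$. Then $R^2 = \sum_{i,j} x^{\gamma_i}x^{\gamma_j} g_i(y)g_j(y)$, so $\pE R^2 = \sum_{i,j} (\pE_X x^{\gamma_i+\gamma_j})(\pE_Y g_i(y)g_j(y))$. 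The key observation is that both $M_{ij} \defeq \pE_X x^{\gamma_i+\gamma_j}$ and $N_{ij}\defeq \pE_Y g_i(y)g_j(y)$ are positive semidefinite matrices: $M$ is a principal submatrix of the moment matrix of $\pE_X$ (which is PSD by the positivity of $\pE_X$, since $\abs{\gamma_i}\le\ell/2$), and $N$ is PSD because for any vector $c$, $\sum_{ij} c_ic_j N_{ij} = \pE_Y \paren{\sum_i c_i g_i(y)}^2 \ge 0$ by positivity of $\pE_Y$ (again $\deg \sum_i c_i g_i \le \ell/2$). Now $\pE R^2 = \sum_{ij} M_{ij}N_{ij} = \Tr(M N^{\mathsf T}) = \iprod{M, N}$ is the entrywise (Hadamard) inner product of two PSD matrices, which is nonnegative by the Schur product theorem (the Hadamard product $M\circ N$ is PSD, hence has nonnegative trace, and $\iprod{M,N} = \Tr(M\circ N)\cdot$, more directly $\sum_{ij}M_{ij}N_{ij}\ge 0$ since $M\circ N\succeq 0$ and summing all entries of a PSD matrix, i.e. $\mathbf 1^{\mathsf T}(M\circ N)\mathbf 1$, is $\ge 0$). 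This gives $\pE R^2\ge 0$, completing the verification.

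So the write-up would be: (1) define $\pE$ on monomials as the product and extend linearly; (2) check linearity, normalization, and the multiplicativity conclusion by monomial expansion; (3) prove positivity via the Schur product theorem as above, being careful that all the degrees involved ($\abs{\gamma_i}$, $\deg g_i$) are at most $\ell/2$ so that both auxiliary matrices are genuine PSD submatrices/Gram matrices of the given pseudoexpectations. The whole argument is short; the one place to be careful is the degree accounting, ensuring we never invoke positivity of $\pE_X$ or $\pE_Y$ at a degree exceeding $\ell/2$.
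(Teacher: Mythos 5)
Your proposal is correct and follows essentially the same route as the paper's proof: define $\pE$ on monomials as the product of the two pseudoexpectations, extend linearly, and reduce positivity of $\pE R^2$ to the nonnegativity of the entrywise inner product $\sum_{i,j}M_{ij}N_{ij}$ of two PSD matrices built from $\pE_X$ and $\pE_Y$. The only cosmetic difference is that the paper decomposes $R$ into products of monomials in both variable groups, whereas you group by distinct $x$-monomials with polynomial $y$-coefficients; both yield the same PSD-inner-product argument.
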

\begin{proof} We define the pseudoexpectation operator in the obvious way---for every set of $\ell$ indices $i_1,\ldots,i_k,j_{k+1},\ldots,j_{\ell}$ we let $pE X_{i_1}\cdots X_{i_k}\cdot Y_{j_{k+1}} \cdots Y_{j_d} = ( \pE X_{i_1}\cdots X_{i_k} )\cdot ( \pE Y_{j_{k+1}} \cdots Y_{j_d} )$ and extend it linearly to all monomials.
Clearly $\pE 1 =1$ and so the only thing left to do is to prove that for every polynomial $P$ of degree $\leq \ell/2$ in the $X,Y$ variables $\pE P(X,Y)^2 \geq 0$.

Write $P(X,Y) = \sum M_i(X)N_i(Y)$ where $M_i,N_i$ are monomials, then $P(X,Y)^2 = \sum_{i,j} M_i(X)M_j(X)N_i(Y)N_j(Y)$ and so under our definition
\[
\pE P(X,Y)^2 = \sum_{i,j} (\pE M_i(X) M_j(X)) (\pE N_i(Y) N_j(Y) ) = \iprod{A,B}
\] where $A$ and $B$ are the matrices defined by $A_{i,j} = \pE M_i(X)M_j(X)$ and $B_{i,j} = \pE N_i(Y) N_j(Y)$.
But the pseudoexpectation conditions on $X,Y$ implies that both these matrices are p.s.d and so their dot product is nonnegative.
\end{proof}

We would like to understand how polynomials behave on linear subspaces of $\R^n$.
A map $P\from \R^n\to \R$ is \emph{polynomial} over a linear subspace $V\sse\R^n$ if $P$ restricted to $V$ agrees with a polynomial in the coefficients for
some basis of $V$.
Concretely, if $g_1,\ldots,g_m$ is an (orthonormal) basis of $V$, then $P$ is \emph{polynomial} over $V$ if $P(f)$ agrees with a polynomial in
$\iprod{f,g_1},\ldots,\iprod{f,g_m}$.
We say that $P\sle Q$ holds over a subspace $V$ if $P-Q$, as a polynomial over $V$, is a sum of squares. 
\begin{lemma}[\cite{BarakBHKSZ12}]
  Let $P$ and $Q$ be two polynomials over $\R^n$ of degree at most $r$,
  and let $B\from \R^n\to \R^k$ be a linear operator.
  Suppose that $P\sle Q$ holds over the kernel of $B$.
  Then, $\pE P \le \pE Q$ holds for any $r$-\pef $\pE$ over $\R^n$
  that satisfies $\pE_f \snorm{Bf}=0$.
\end{lemma}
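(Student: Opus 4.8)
The plan is to transfer the inequality from the subspace $V := \ker B$ to all of $\R^n$ by composing with the orthogonal projection $\Pi := \Pi_{V}$ onto $V$, and then to combine the already-recorded monotonicity of $\pE$ under $\sle$ (for honest sums of squares of polynomials of degree $\le r/2$) with the fact that $\pE$ is consistent with the linear constraints cutting out $V$. First I would note that $\snorm{Bf} = \sum_{i=1}^{k}(Bf)_i^2$ is a sum of squares of linear forms, so $\pE\snorm{Bf} = 0$ together with positivity forces $\pE (Bf)_i^2 = 0$ for each coordinate $i$; then Corollary~\ref{cor:pef-constraint} (or directly Lemma~\ref{lem:pef-cauchy-schwarz}) gives $\pE \big[(Bf)_i\, H(f)\big] = 0$ for every $i$ and every polynomial $H$ of bounded degree.

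Next I would show that $\pE$ cannot distinguish $f$ from $\Pi f$ on the polynomials of interest. Since the coordinates of $f - \Pi f$ lie in the row space of $B$, one has $f - \Pi f = B^{+}(Bf)$, so each $(f - \Pi f)_j$ is a fixed linear combination of the forms $(Bf)_i$. Substituting $x = f$, $y = \Pi f$ into the elementary identity $H(x) - H(y) = \sum_j (x_j - y_j)\,D_j(x,y)$ then yields, for $H \in \{P,Q\}$, a representation $H(f) - H(\Pi f) = \sum_i (Bf)_i\, G^H_i(f)$ with cofactors $G^H_i$ of degree strictly less than $\deg H \le r$; applying $\pE$ and the previous step term by term gives $\pE P(f) = \pE P(\Pi f)$ and $\pE Q(f) = \pE Q(\Pi f)$.

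Then I would lift the hypothesis to an honest polynomial identity over $\R^n$. Fixing an orthonormal basis $g_1,\dots,g_m$ of $V$, the statement ``$P \sle Q$ over $V$'' means precisely that there are polynomials $R_1,\dots,R_s$ with $Q(f) - P(f) = \sum_\ell R_\ell(\iprod{f,g_1},\dots,\iprod{f,g_m})^2$ for all $f\in V$. Because $\iprod{f,g_j} = \iprod{\Pi f,g_j}$ identically in $f$, and because $(\iprod{f,g_j})_j$ ranges over all of $\R^m$ as $f$ ranges over $V$, this forces the identity $Q(\Pi f) - P(\Pi f) = \sum_\ell R_\ell(\iprod{f,g_1},\dots,\iprod{f,g_m})^2$ to hold for \emph{all} $f\in\R^n$; its right-hand side is a genuine sum of squares of polynomials of degree $\le r/2$, so the monotonicity of $\pE$ under $\sle$ gives $\pE P(\Pi f) \le \pE Q(\Pi f)$. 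Chaining this with the two equalities from the previous step yields $\pE P \le \pE Q$.

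The only real friction I anticipate is degree bookkeeping: the cofactors $G^H_i$ produced in the second step can have degree as large as $r-1$, so killing the terms $(Bf)_i\,G^H_i$ via Corollary~\ref{cor:pef-constraint}/Lemma~\ref{lem:pef-cauchy-schwarz} needs $\pE$ to have level a constant factor above $r$ rather than exactly $r$ --- consistent with how the lemma is used in \cite{BarakBHKSZ12}, where only the order of the level matters. I would simply track the constants carefully (or, equivalently, phrase the hypothesis as ``$\pE$ is consistent with the constraints $(Bf)_i\equiv 0$'' to the degree needed); apart from this, every step is a routine manipulation.
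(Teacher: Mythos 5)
Your proof is correct and follows essentially the same route as the paper's: the paper simply asserts the global decomposition $Q(f)=P(f)+\sum_{i} R_i^2(f)+\sum_{j}(Bf)_j S_j(f)$ and then kills the ideal terms via Cauchy--Schwarz (using $\pE (Bf)_j^2=0$) and the square terms via positivity, exactly as you do. Your steps on $\Pi$ amount to an explicit construction of that asserted decomposition, and the degree caveat you flag (the cofactors have degree up to $r-1$, so the Cauchy--Schwarz step really needs level about $2r$ rather than $r$) applies equally to the paper's own one-line proof.
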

\begin{proof}
  Since $P\sle Q$ over the kernel of $B$, we can write
  $Q(f)=P(f)+\sum_{i=1}^m R_i^2(f)+\sum_{j=1}^k(B f)_j S_j(f)$ for
  polynomials $R_1,\ldots,R_m$ and $S_1,\ldots,S_k$ over $\R^n$.
  By positivity, $\pE_f R_i^2(f)\ge 0$ for all $i\in[m]$.
  We claim that $\pE_f (Bf)_j S_j(f)=0$ for all $j\in[k]$ (which would
  finish the proof).
  This claim follows from the fact that $\pE_f (B f)_j^2 =0$ for all
  $j\in[k]$ and \pref{lem:pef-cauchy-schwarz}.
\end{proof}



\begin{lemma}[\cite{BarakBHKSZ12}]
  \label{lem:boundedness-relation}
  The relation $P^2\sle P$ holds if and only if $0\sle P\sle 1$.
  Furthermore, if $P^2 \sle P$ and $0 \sle Q \sle P$, then $Q^2\sle Q$.
\end{lemma}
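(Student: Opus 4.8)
The plan is to argue entirely at the level of the sum-of-squares (SOS) cone, using only three elementary facts about the relation $A \sle B$ (which means $B - A$ is a sum of squares): (i) the SOS cone is closed under addition, so $\sle$ is additive and transitive — if $A \sle B$ and $B \sle C$ then $C - A = (C-B) + (B-A)$ is SOS; (ii) the SOS cone is closed under multiplication — if $A = \sum_i a_i^2$ and $B = \sum_j b_j^2$ then $AB = \sum_{i,j} (a_i b_j)^2$; and (iii) every square $R^2$ is trivially SOS. Degree bookkeeping causes no trouble: as noted in the excerpt, if $B - A$ is a sum of squares then the squared polynomials automatically have degree at most $\max\{\deg A, \deg B\}/2$, so all manipulations stay within the relevant degree.

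For the equivalence, I would first do the easy implication: assume $0 \sle P \sle 1$, so $P$ and $1-P$ are both SOS; then $P - P^2 = P(1-P)$ is a product of SOS polynomials, hence SOS by (ii), which is exactly $P^2 \sle P$. For the converse, assume $P^2 \sle P$, i.e. $P - P^2$ is SOS. Then the identity $P = P^2 + (P - P^2)$ writes $P$ as a square plus an SOS polynomial, so $P$ is SOS, giving $0 \sle P$; and the identity $1 - P = (1-P)^2 + (P - P^2)$ (which one checks by expanding: $1 - 2P + P^2 + P - P^2 = 1 - P$) writes $1-P$ as a square plus an SOS polynomial, so $1 - P$ is SOS, giving $P \sle 1$. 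Hence $0 \sle P \sle 1$.

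For the "furthermore" clause, I would apply the equivalence just proved to $P$: from $P^2 \sle P$ we obtain $P \sle 1$, so $1 - P$ is SOS. Since $Q \sle P$ by hypothesis, $P - Q$ is also SOS, and therefore $1 - Q = (1 - P) + (P - Q)$ is SOS by (i), i.e. $Q \sle 1$. Combining with the hypothesis $0 \sle Q$ gives $0 \sle Q \sle 1$, and applying the equivalence once more, this time to $Q$, yields $Q^2 \sle Q$.

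I do not expect a genuine obstacle here; the only points that deserve to be stated explicitly rather than waved through are the closure of the SOS cone under products (this is the one place the relation $\sle$ is used multiplicatively rather than just additively) and the verification of the algebraic identity $1 - P = (1-P)^2 + (P - P^2)$, both of which are routine.
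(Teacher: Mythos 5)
Your proof is correct and follows essentially the same route as the paper's: the forward direction via multiplying the SOS polynomial $1-P$ by the SOS polynomial $P$, and the converse via the identity $1-P=(1-P)^2+(P-P^2)$, with the furthermore clause reduced to the equivalence by transitivity. The only difference is that you spell out the closure properties of the SOS cone more explicitly, which the paper leaves implicit.
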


\begin{proof}
  If $P\sge 0$, then $P\sle 1$ implies $P^2\sle P$.
  (Multiplying both sides with a sum of squares preserves the order.)
  On the other hand, suppose $P^2\sle P$.
  Since $P^2\sge 0$, we also have $P\sge 0$.
  Since $1-P=P-P^2+(1-P)^2$, the relation $P^2\sle P$ also implies $P\sle 1$.

  For the second part of the lemma, suppose $P^2\sle P$ and $0\sle Q\sle
  P$.
  Using the first part of the lemma, we have $P\sle 1$.
  It follows that $0\sle Q\sle 1$, which in turn implies $Q^2\sle Q$ (using
  the other direction of the first part of the lemma).
\end{proof}

\begin{fact}
  If $f$ is a $d$-f.r.v. over $\R^\cU$ and $\set{P_v}_{v\in
    \cU}$ are polynomials of degree at most $k$, then $g$ with
  $g(v)=P_v(f)$ is a level-$(d/k)$ pseudodistribution over $\R^{\cU}$.
  (For a polynomial $Q$ of degree at most $d/k$, the pseudo-expectation is
  defined as
  \begin{math}
    \pE_g Q(\set{g(v)}_{v\in \cU}) \seteq \pE_f Q(\set{P_v(f)}_{v\in \cU})\mper
  \end{math}
  )
\end{fact}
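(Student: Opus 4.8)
The plan is to verify directly that the functional $\pE_g$ defined by the displayed substitution formula satisfies the three defining axioms of a level-$(d/k)$ pseudoexpectation functional; the only real content is the bookkeeping of degrees. Write $\ell = d/k$ (assuming, as is implicit, that $k$ divides $d$, or else replacing $\ell$ by $\lfloor d/k\rfloor$ throughout). First I would observe that the substitution map $\sigma\from \R[\set{g(v)}]_{\le \ell}\to \R[\set{f(u)}]_{\le d}$ sending a formal polynomial $Q$ in the indeterminates $\set{g(v)}_{v\in\cU}$ to $Q(\set{P_v(f)}_{v\in\cU})$ is well-defined and linear: it is the restriction of the ring homomorphism determined by $g(v)\mapsto P_v(f)$, and since each $P_v$ has degree at most $k$, a polynomial of degree at most $\ell$ is mapped to a polynomial of degree at most $k\ell = d$. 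Composing $\sigma$ with the level-$d$ \pef $\pE_f$ therefore yields a well-defined linear functional on $\R[\set{g(v)}]_{\le \ell}$, which is exactly $\pE_g$; linearity is then immediate, since both $\sigma$ and $\pE_f$ are linear.

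Next I would check normalization and positivity. Normalization is trivial: $\sigma(1)=1$, so $\pE_g 1 = \pE_f 1 = 1$. For positivity, let $Q$ be a formal polynomial in the $g$-indeterminates of degree at most $\ell/2 = d/(2k)$. Since $\sigma$ is a ring homomorphism, $\sigma(Q^2) = \sigma(Q)^2$, and $\sigma(Q)$ is a polynomial in the $f$-indeterminates of degree at most $k\cdot \ell/2 = d/2$. Applying the positivity property of the level-$d$ \pef $\pE_f$ to the square of this degree-$\le d/2$ polynomial gives $\pE_g Q^2 = \pE_f \sigma(Q)^2 \ge 0$. This establishes all three axioms, so $g$ is a level-$\ell$ pseudodistribution with the stated pseudoexpectation, which is the claim. (The $P_v$ need not be homogeneous; only the upper bound on their degrees is used.)

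The argument is essentially routine, so I do not expect a genuine obstacle. The one place that needs a moment's care is confirming that the degree budget is respected exactly at the boundary — substituting degree-$\le k$ polynomials into a polynomial of degree $\le \ell/2$ must land within degree $\le d/2$, which is precisely the range in which $\pE_f$'s positivity guarantee is available — and that $\sigma$ is genuinely well-defined on \emph{formal} polynomials, so that $\pE_g$ does not depend on any choice of monomial representation. Both points are handled by the ring-homomorphism viewpoint described above.
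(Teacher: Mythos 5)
Your verification is correct, and the paper in fact states this as a Fact without proof, so your direct check of linearity, normalization, and positivity via the substitution homomorphism (with the degree count $k\cdot(d/2k)=d/2$ placing $\sigma(Q)$ within the range of $\pE_f$'s positivity guarantee) is exactly the routine argument the paper leaves implicit. No gaps.
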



\begin{lemma}[\cite{BarakBHKSZ12}] \label{lem:linear-cs} For $f,g \in \L2(\cU)$,
\[
\iprod{f,g} \sle \tfrac{1}{2}\norm{f}^2 +\tfrac{1}{2}\norm{g}^2 \mper
\]
\end{lemma}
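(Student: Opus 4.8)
The plan is to reduce the claimed inequality to the elementary identity
$\tfrac12\norm f^2+\tfrac12\norm g^2-\iprod{f,g}=\tfrac12\norm{f-g}^2$ and then observe that the right-hand side is manifestly a sum of squares. First I would expand, using bilinearity of $\iprod{\cdot,\cdot}$ together with $\norm h^2=\iprod{h,h}$, to get $\norm{f-g}^2=\norm f^2-2\iprod{f,g}+\norm g^2$, and hence $\tfrac12\norm f^2+\tfrac12\norm g^2-\iprod{f,g}=\tfrac12\norm{f-g}^2$. So it only remains to certify that $\tfrac12\norm{f-g}^2$ is a sum of squares in the relevant sense.

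Next I would unwind the definition of $\sle$. When $f,g$ range over a subspace $V\subseteq\R^\univ$ with orthonormal basis $g_1,\ldots,g_m$, "polynomial in $f$ (and $g$)" means polynomial in the coordinates $\iprod{f,g_1},\ldots,\iprod{f,g_m}$ (and likewise for $g$), and $P\sle Q$ over $V$ means that $Q-P$, as such a polynomial, is a sum of squares. Writing $\norm h^2=\sum_{\elem\in\univ}\mu(\elem)\,h(\elem)^2$ and noting that each evaluation $h(\elem)=\iprod{h,\delta_\elem}$ is a linear form in these coordinates, every summand $\mu(\elem)\bigl(f(\elem)-g(\elem)\bigr)^2=\bigl(\sqrt{\mu(\elem)}\,(f(\elem)-g(\elem))\bigr)^2$ is the square of a linear form. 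Summing over $\elem\in\univ$ exhibits $\tfrac12\norm{f-g}^2$ as a sum of at most $\card\univ$ squares of linear forms, which is exactly the assertion $\iprod{f,g}\sle\tfrac12\norm f^2+\tfrac12\norm g^2$.

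There is essentially no obstacle here: the only point requiring care is the routine bookkeeping that the quantities involved are genuine degree-$2$ polynomials in the chosen basis coordinates, which is immediate since $h\mapsto h(\elem)$ is linear. In particular the argument is insensitive to whether we equip $\univ$ with the counting or the uniform measure, and since any degree-$2$ pseudoexpectation $\pE$ respects $\sle$, the inequality $\pE\iprod{f,g}\le\tfrac12\pE\norm f^2+\tfrac12\pE\norm g^2$ follows immediately for such $\pE$ as well.
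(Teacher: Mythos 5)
Your proposal is correct and follows essentially the same route as the paper, which simply observes that the right-hand side minus the left-hand side equals $\tfrac12\iprod{f-g,f-g}$; you additionally spell out why this quantity is a sum of squares of linear forms in the coordinates, which is a harmless (and accurate) elaboration of the same argument.
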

\begin{proof} The right-hand side minus the LHS equals the square polynomial $\tfrac{1}{2}\iprod{f-g,f-g}$
\end{proof}

Here is another form of the Cauchy--Schwarz inequality.  
\begin{lemma}[Function Cauchy--Schwarz inequality,\cite{BarakBHKSZ12}]\label{lem:pef-cauchy-Schwarz-inside}
  If $(f,g)$ is a level-$2$ \pd over $\R^\cU\times \R^{\cU}$,
  then
  \begin{displaymath}
    \pE_{f,g} \iprod{f,g}
    \le \sqrt{\pE_f \snorm{f}}\cdot \sqrt{\pE_g
      \snorm{g}}
    \mper
  \end{displaymath}
\end{lemma}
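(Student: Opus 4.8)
The plan is to mirror the proof of the pseudo Cauchy--Schwarz inequality (Lemma~\ref{lem:pef-cauchy-schwarz}), now working with the \emph{inner} inner product $\iprod{f,g}=\E_\elem f(\elem)g(\elem)$, which is itself an averaging operation over $\univ$. The key observation is that for every real $\alpha$ the quantity $\snorm{f-\alpha g}=\E_\elem\paren{f(\elem)-\alpha g(\elem)}^2$ is, as a polynomial in the variables defining the level-$2$ pseudodistribution $(f,g)$, a sum of squares of degree at most $2$, since each $f(\elem)-\alpha g(\elem)$ is a linear form. Hence $0\sle \snorm{f-\alpha g}$, and expanding the square gives the SOS-order inequality $2\alpha\iprod{f,g}\sle \snorm f+\alpha^2\snorm g$ (the case $\alpha=1$ being precisely twice Lemma~\ref{lem:linear-cs}). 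All polynomials appearing here have degree at most $2$, so a level-$2$ pseudoexpectation may legitimately be applied to them.

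First I would treat the generic case $\pE_f\snorm f>0$ and $\pE_g\snorm g>0$. By linearity of $\pE$ we may rescale $f$ and $g$ so that $\pE_f\snorm f=\pE_g\snorm g=1$; taking $\alpha=1$ in the inequality above and applying $\pE$ (using linearity and positivity) yields $2\pE_{f,g}\iprod{f,g}\le \pE_f\snorm f+\pE_g\snorm g=2$, i.e.\ $\pE_{f,g}\iprod{f,g}\le 1=\sqrt{\pE_f\snorm f}\sqrt{\pE_g\snorm g}$. Then I would handle the degenerate case, say $\pE_f\snorm f=0$: applying $\pE$ to $2\alpha\iprod{f,g}\sle\snorm f+\alpha^2\snorm g$ gives $2\alpha\pE_{f,g}\iprod{f,g}\le \alpha^2\pE_g\snorm g$ for every $\alpha>0$, hence $\pE_{f,g}\iprod{f,g}\le \tfrac\alpha2\pE_g\snorm g$, and letting $\alpha\to 0$ forces $\pE_{f,g}\iprod{f,g}\le 0=\sqrt{\pE_f\snorm f}\sqrt{\pE_g\snorm g}$, as desired.

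An alternative, completely equivalent route that I might present instead is to write $\pE_{f,g}\iprod{f,g}=\E_\elem\pE_{f,g} f(\elem)g(\elem)$, bound each summand by $\sqrt{\pE f(\elem)^2}\cdot\sqrt{\pE g(\elem)^2}$ using Lemma~\ref{lem:pef-cauchy-schwarz} applied to the degree-$1$ polynomials $f(\elem),g(\elem)$, and then apply the ordinary Cauchy--Schwarz inequality to the average over $\elem$ of the products of these nonnegative reals, since $\E_\elem\pE f(\elem)^2=\pE_f\snorm f$ and similarly for $g$. There is essentially no serious obstacle in either approach: the only points requiring any care are the degenerate case $\pE_f\snorm f=0$ and the bookkeeping that a level-$2$ pseudoexpectation suffices, both of which are routine.
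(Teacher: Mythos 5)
Your first argument is essentially the paper's own proof: normalize so that $\pE_f\snorm f=\pE_g\snorm g=1$ and apply the pseudoexpectation to the SOS relation $\iprod{f,g}\sle\tfrac12\snorm f+\tfrac12\snorm g$ of Lemma~\ref{lem:linear-cs}; you additionally treat the degenerate case $\pE_f\snorm f=0$, which the paper's proof silently skips. Both this and your alternative coordinate-wise route are correct.
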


\begin{proof}
  Let $\bar f =f/\sqrt{\pE_f \snorm{f}}$ and $\bar g=g/\sqrt{\pE_g
    \snorm{g}}$.
  Note $\pE_{\bar f} \snorm {\bar f} = \pE_{\bar g}\snorm{\bar g}=1$.
  Since by Lemma~\ref{lem:linear-cs}, $\iprod{\bar f,\bar g}\sle \half \snorm{\bar f} + \half \snorm{\bar
    g}$, we can conclude the desired inequality,
  \begin{displaymath}
    \pE_{f,g} \iprod{f,g} = \sqrt{\pE_f \snorm{f}}\cdot \sqrt{\pE_g
      \snorm{g}} \pE_{\bar f,\bar g} \iprod{\bar f,\bar g}
    \le \sqrt{\pE_f \snorm{f}}\cdot \sqrt{\pE_g  \snorm{g}}
    \cdot \underbrace{\Paren{\tfrac12 \pE_{\bar f} \snorm{\bar f} + \tfrac
        12\E_{\bar g} \snorm{\bar g}}}_{=1}
    \mper\qedhere
  \end{displaymath}
\end{proof}

And it implies another form of \Holder 's inequality
\begin{corollary}[Function \Holder's inequality,\cite{BarakBHKSZ12}]\label{cor:pef-holder-inside}
If $(f,g)$ is a level $4$ \pd over $\R^\cU\times \R^{\cU}$, then
\begin{align*}
\pE_{f, g}  \E_{\elem \in \cU} f(\elem)g(\elem)^3 \leq \left(\pE_f \norm{f}_4^4\right)^{1/4} \left(\pE_g \norm{g}_4^4 \right)^{3/4} .
\end{align*}
\end{corollary}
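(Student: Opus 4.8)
The plan is to derive the inequality by applying the Function Cauchy--Schwarz inequality (Lemma~\ref{lem:pef-cauchy-Schwarz-inside}) twice, imitating the classical proof that $\E_\elem f(\elem)g(\elem)^3 \le (\E_\elem f(\elem)^4)^{1/4}(\E_\elem g(\elem)^4)^{3/4}$ by interpolating through the quantity $\E_\elem f(\elem)^2 g(\elem)^2$.

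First I would rewrite $\E_{\elem\in\cU} f(\elem)g(\elem)^3 = \iprod{h_1,h_2}$, where $h_1(\elem)=f(\elem)g(\elem)$ and $h_2(\elem)=g(\elem)^2$. Since $(f,g)$ is a level-$4$ \pd over $\R^\cU\times\R^\cU$ and $h_1,h_2$ are obtained from $(f,g)$ by applying polynomials of degree $2$, the pair $(h_1,h_2)$ is a level-$2$ \pd over $\R^\cU\times\R^\cU$. Applying Lemma~\ref{lem:pef-cauchy-Schwarz-inside} to $(h_1,h_2)$, and using $\norm{h_1}^2=\E_\elem f(\elem)^2 g(\elem)^2$ and $\norm{h_2}^2=\norm{g}_4^4$, gives
\[
\pE_{f,g}\E_{\elem\in\cU} f(\elem)g(\elem)^3 \;\le\; \Paren{\pE_{f,g}\E_{\elem\in\cU} f(\elem)^2 g(\elem)^2}^{1/2}\Paren{\pE_g \norm{g}_4^4}^{1/2}.
\]

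Next I would bound the remaining factor the same way: write $\E_{\elem\in\cU} f(\elem)^2 g(\elem)^2 = \iprod{k_1,k_2}$ with $k_1(\elem)=f(\elem)^2$ and $k_2(\elem)=g(\elem)^2$, observe that $(k_1,k_2)$ is again a level-$2$ \pd (same reasoning, polynomials of degree $2$), and apply Lemma~\ref{lem:pef-cauchy-Schwarz-inside} to obtain $\pE_{f,g}\E_\elem f(\elem)^2 g(\elem)^2 \le (\pE_f\norm{f}_4^4)^{1/2}(\pE_g\norm{g}_4^4)^{1/2}$, using $\norm{k_1}^2=\norm{f}_4^4$ and $\norm{k_2}^2=\norm{g}_4^4$. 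Substituting this into the previous display yields
\[
\pE_{f,g}\E_{\elem\in\cU} f(\elem)g(\elem)^3 \;\le\; \Paren{\pE_f\norm{f}_4^4}^{1/4}\Paren{\pE_g\norm{g}_4^4}^{1/4+1/2},
\]
which is exactly the claimed bound.

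The only point requiring care — and the main, mild, obstacle — is the degree bookkeeping. One must check that every polynomial whose pseudoexpectation is taken has degree at most $4$ in the joint variables $(f,g)$, and that $(h_1,h_2)$ and $(k_1,k_2)$ really are legitimate level-$2$ pseudodistributions. The latter is the composition fact that pushing a level-$d$ \pd through a degree-$k$ polynomial map produces a level-$(d/k)$ \pd, applied with $d=4$, $k=2$; and indeed $\iprod{h_1,h_2}$, $\norm{h_1}^2$, $\norm{h_2}^2$ and their $k$-analogues are all degree-$4$ polynomials in $(f,g)$, so level $4$ is enough throughout. The nonnegativity hypotheses needed to invoke Lemma~\ref{lem:pef-cauchy-Schwarz-inside} (e.g. $\pE\norm{h_1}^2\ge 0$) hold automatically by positivity, since each such norm-squared is an average of squares.
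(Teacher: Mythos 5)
Your proof is correct and follows essentially the same route as the paper's: two applications of the (pseudo-)Cauchy--Schwarz inequality, interpolating through $\pE_{f,g}\E_\elem f(\elem)^2 g(\elem)^2$. The paper's proof is just the two-line chain of inequalities you derive; your extra care with the degree bookkeeping and the composition fact for pseudodistributions is a sound (if implicit in the paper) justification of why level $4$ suffices.
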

\begin{proof}
Using \pref{lem:pef-cauchy-schwarz} twice, we have
\begin{align*}
\pE_{f, g}  \E_{\elem \in \cU} f(\elem)g(\elem)^3 \leq & \left(\pE_{f, g} \E_{\elem \in \cU} f(\elem)^2 g(\elem)^2\right)^{1/2} \left(\pE_{g} \norm{g}_4^4\right)^{1/2}
\leq \left(\pE_f \norm{f}_4^4\right)^{1/4} \left(\pE_g \norm{g}_4^4 \right)^{3/4} .
\end{align*}
\end{proof}


\subsection{Spectral norm and SOS proofs} \label{sec:spectralnorm}

Here we note the following alternative characterization of the spectral norm of a polynomial:
\begin{lemma} \label{lem:spectral norm}
Let $P$ be a degree-$4$ homogenous polynomial , then  $\spectralnorm{P} \leq c$ if and only if there is a sum of squares degree $4$ polynomial $S$ such that
$P(x) = c\norm{x}_2^4 - S(x)$.
\end{lemma}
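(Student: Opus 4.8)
The plan is to unwind the definition of $\spectralnorm{P}$ and translate the semidefinite characterization of the spectral norm of a matrix into the language of sum-of-squares. Recall that $\spectralnorm{P}$ is defined as the minimum over all quadratic forms $Q$ on $(\R^n)^{\ot 2}$ with $Q(x^{\ot 2}) = P(x)$ for all $x$ of the operator norm of $Q$ (viewed as a symmetric $n^2 \times n^2$ matrix $A$). So $\spectralnorm{P} \le c$ exactly means there exists such a representing matrix $A$ with $A \preceq c\cdot I$ (using $A = A^\dagger$; note that for the direction we need, $-A \preceq c I$ is automatic once $A \preceq cI$ only if we are careful, but actually $\|A\|_{\mathrm{op}} \le c$ means $-cI \preceq A \preceq cI$, and it is the upper bound $A \preceq cI$ that matters for writing $c\|x\|^4 - P(x)$ as a PSD form). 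First I would note that $\|x\|_2^4 = (x^{\ot 2})^\dagger (x^{\ot 2})$, i.e.\ the identity matrix $I$ on $(\R^n)^{\ot 2}$ is a representing matrix for the polynomial $\|x\|_2^4$.

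For the ``only if'' direction: suppose $\spectralnorm{P} \le c$, witnessed by a symmetric matrix $A$ with $A(x^{\ot 2}, x^{\ot 2}) = P(x)$ and $A \preceq cI$. Then $cI - A$ is positive semidefinite, so it factors as $cI - A = \sum_i b_i b_i^\dagger$ for vectors $b_i \in (\R^n)^{\ot 2}$. Writing $b_i$ in the monomial basis gives quadratic forms $B_i(x) = \iprod{b_i, x^{\ot 2}}$, and then $S(x) := \sum_i B_i(x)^2 = (x^{\ot 2})^\dagger (cI - A)(x^{\ot 2}) = c\|x\|_2^4 - P(x)$, which is the desired degree-$4$ sum of squares. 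For the ``if'' direction: suppose $P(x) = c\|x\|_2^4 - S(x)$ with $S = \sum_i B_i^2$ a sum of squares of quadratics $B_i$. Each $B_i$ is a quadratic form in $x$, hence $B_i(x) = \iprod{b_i, x^{\ot 2}}$ for some $b_i$ (choosing, say, the symmetric representative); thus $S(x) = (x^{\ot 2})^\dagger \bigl(\sum_i b_i b_i^\dagger\bigr) x^{\ot 2}$, and $A := cI - \sum_i b_i b_i^\dagger$ satisfies $A(x^{\ot 2},x^{\ot 2}) = P(x)$ and $A \preceq cI$, so $\|A\|_{\mathrm{op}} \le c$ provided we also check $A \succeq -cI$ — but this holds automatically since $\sum_i b_i b_i^\dagger \succeq 0$ gives $A \preceq cI$ and we only need the operator norm bound in the form relevant to the statement; more carefully, one observes $P$ has real coefficients and the construction produces a symmetric $A$, and since $\sum b_ib_i^\dagger \succeq 0$ and also $c\|x\|^4 = $ value of $cI$ which dominates, one gets $\|A\|_{\mathrm{op}} \le c$ by noting $\sum_i b_i b_i^\dagger \preceq 2cI$ is not needed — I would simply use that $\spectralnorm{P}$ is a min over representatives so exhibiting one representative with $\|\cdot\|_{\mathrm{op}} \le c$ suffices, and $\|cI - N\|_{\mathrm{op}} \le c$ for $0 \preceq N \preceq 2cI$; the bound $N \preceq 2cI$ would follow if needed from $P(x) \ge -c\|x\|^4$, itself a consequence of $P$ being... hmm.

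The main obstacle I anticipate is precisely this last subtlety in the ``if'' direction: from $c\|x\|^4 - P(x) = \sum_i B_i(x)^2 \succeq 0$ as a form, we get $\sum_i b_ib_i^\dagger \succeq 0$ and $A = cI - \sum_i b_ib_i^\dagger$ is a valid representative with $A \preceq cI$, but to conclude $\|A\|_{\mathrm{op}} \le c$ we also need $A \succeq -cI$, i.e.\ $\sum_i b_ib_i^\dagger \preceq 2cI$. This is where I would use that the lemma is about the minimum over representatives: if the naive $A$ has operator norm larger than $c$ on the negative side, one can symmetrize/average it against other representatives (the representative is not unique — any $A'$ with $A'(x^{\ot2},x^{\ot2}) = 0$ identically can be added), or more simply invoke that the polynomial $\sum_i B_i(x)^2$ being a sum of squares of \emph{degree-$2$} forms means each $B_i$ only involves genuine quadratic monomials, and one picks the unique representative supported on the symmetric subspace of $(\R^n)^{\ot 2}$, on which $\|x\|^4$-type bounds are tight. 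I would handle this by restricting all matrices to the symmetric subspace $\mathrm{Sym}^2(\R^n)$ from the outset, where the correspondence between quadratic forms in $x$ and quadratic forms in $x^{\ot 2}$ is a bijection, making both directions clean and the operator-norm bookkeeping automatic.
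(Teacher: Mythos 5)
Your approach is the same as the paper's: pass to a representing $n^2\times n^2$ quadratic form and use the equivalence between positive semidefiniteness and sums of squares of quadratics. The ``only if'' direction is complete and matches the paper's argument exactly.

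For the ``if'' direction you have correctly put your finger on the real issue, which the paper's own proof waves away with ``one can easily verify'': the natural representative $A = cI - \sum_i b_i b_i^\dagger$ satisfies $A \preceq cI$, but nothing forces $A \succeq -cI$, and $\spectralnorm{\cdot}$ is defined via the two-sided operator norm. However, neither of your proposed repairs closes the gap. Restricting to $\mathrm{Sym}^2(\R^n)$ does not constrain $\lambda_{\min}$, and averaging against other representatives cannot help, because the implication is false as stated: take $P(x) = -N\norm{x}_2^4$ for large $N$. Then $c\norm{x}_2^4 - P(x) = (c+N)\bigl(\sum_i x_i^2\bigr)^2$ is a perfect square of a quadratic, yet every quadratic form $Q$ with $Q(x^{\ot 2}) = P(x)$ satisfies $Q(x^{\ot 2}) = -N\norm{x^{\ot 2}}_2^2$, so $\lambda_{\min}(Q) \le -N$ and $\spectralnorm{P} = N > c$. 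What your construction (and the paper's) actually establishes in this direction is the one-sided equivalence: $c\norm{x}_2^4 - P$ is a degree-$4$ sum of squares if and only if $P$ admits a representative $Q$ with $Q \preceq cI$. That is all the paper's applications require, since there the polynomial $P_\lambda(f) = \norm{\Pi_{\ge\lambda} f}_4^4$ also admits a positive semidefinite representative, giving $0 \preceq Q \preceq cI$ and hence the two-sided bound. So either restate and prove the lemma one-sidedly, or add a hypothesis (e.g., that $P$ has a PSD representative); as written, the last third of your ``if'' direction cannot be completed, and no choice of representative will rescue it.
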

\begin{proof}
Suppose that $\spectralnorm{P} \leq C$. Then there is an $n^2 \times n^2$ matrix $M$ such that $M\cdot x^{\ot 4} = P(x)$ for all $x$  $M  = cI - S$ where $I$ is the $n^2\times n^2$ identity and $S$ is
a positive semidefinite matrix. That is, $S = \sum \lambda_i Q_i^{\ot 2}$ for some $\lambda_i \geq 0$ and $Q_i \in \R^{n^2}$.
Now, if we consider $S$ as a degree $4$ polynomial $S(x) = S\cdot x^{\ot 4}$ then it equals $\sum \lambda_i (Q_i\cdot x^{\ot 2})^2$ and hence it is a sum of squares, and it satisfies
\[
P(x) = cI \cdot x^{\ot 4} - S(x) = c\sum_{i,j} x_i^2x_j^2  - S(x) = c\norm{x}_2^4 - S(x) \mper
\]
On the other hand, suppose that $P(x) = c\norm{x}_2^4 - \sum R_i(x)^2$ where the $R_i$'s are quadratic polynomials. We can let $r_i \in \R^{n^2}$ be such that $r_i \cdot x^{\ot 2} = R_i(x)$, and then let $M$ be the
quadratic operator on $R^{n^2}$ such that $M(y) = c\norm{y}_2^2 - \sum (r_i \cdot y)^2$ for every $y\in\R^{n^2}$. One can easily verify that the spectral norm of $M$ is at most $c$ and  $M\cdot x^{\ot 4} = P(x)$ for every $x\in\R^n$.
\end{proof}

\section{Low-Rank Tensor Optimization} \label{sec:lowrank}

For a vector $x\in \R^n$, let $\lVert x \rVert$ denote the Euclidean norm of $x$.
For a polynomial $P\in \R[X_1,\ldots,X_n]$, we define its \textbf{norm} as $\lVert P \rVert\defeq \max\{\lvert P(x)\rvert \mid \lVert x \rVert=1 \}$.

Consider an $n$-variate degree-$4$ polynomial $P$ of the form $P(x) = \sum_{i=1}^r Q_i(x)^2$ for quadratic polynomials $Q_1,\ldots,Q_r$ .

\begin{theorem} \label{thm:lowrank}
  There exists an algorithm that, given $P$ and $\e$, computes $\lVert P \rVert$ up to multiplicative error $\e$ in time $\exp(\poly(r,\e))$.
\end{theorem}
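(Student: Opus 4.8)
The plan is to reduce the problem to guessing, to accuracy $\e$, the \emph{direction} of the $r$-dimensional vector $q(x)=(Q_1(x),\dots,Q_r(x))\in\R^r$ at an optimal point, and then to solve one eigenvalue problem per guess. The key observation is that $P(x)=\snorm{q(x)}$, so $\norm{P}=\max_{\norm{x}=1}\snorm{q(x)}$; and if $x^*$ is a maximizer with $P(x^*)>0$ and $u^*:=q(x^*)/\norm{q(x^*)}\in S^{r-1}$, then
\[
\sqrt{P(x^*)}=\norm{q(x^*)}=\iprod{u^*,q(x^*)}=R_{u^*}(x^*)\mcom\qquad\text{where }R_u:=\sum_{i=1}^r u_iQ_i\mper
\]
Thus, once the unit vector $u^*$ is known, $P(x^*)$ is recovered as the square of the \emph{single} quadratic form $R_{u^*}$ evaluated at $x^*$, and $\max_{\norm{x}=1}R_{u^*}(x)^2$ equals $\lmax(M_{u^*})^2$, where $M_u$ is the symmetric matrix representing $R_u$ — a $\poly(n)$-time computation.

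The algorithm therefore builds an $\e$-net $N\subseteq S^{r-1}$ of size $(O(1/\e))^r=\exp(O(r\log(1/\e)))$, computes $\lambda(u):=\lmax(M_u)$ for each $u\in N$, and outputs $\max\{\,0,\ \max_{u\in N}\lambda(u)^2\,\}$. For the \emph{upper bound}, note that for every unit $x$ and unit $u$, Cauchy--Schwarz gives $R_u(x)^2=\iprod{u,q(x)}^2\le\snorm{u}\cdot\snorm{q(x)}=P(x)\le\norm{P}$; applying this to the top eigenvector of $M_u$ shows $\lambda(u)^2\le\norm{P}$, so the output never exceeds $\norm{P}$. For the \emph{lower bound}, pick $u\in N$ with $\norm{u-u^*}\le\e$; then
\[
R_u(x^*)=\iprod{u^*,q(x^*)}+\iprod{u-u^*,q(x^*)}\ge\norm{q(x^*)}-\e\norm{q(x^*)}=(1-\e)\sqrt{P(x^*)}\mcom
\]
whence $\lambda(u)^2\ge R_u(x^*)^2\ge(1-\e)^2\norm{P}$. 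Rescaling $\e$ by a constant converts this into the claimed multiplicative $(1\pm\e)$ guarantee, and the running time is $\exp(O(r\log(1/\e)))\cdot\poly(n)=\exp(\poly(r,1/\e))$ up to the polynomial dependence on the ambient dimension.

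There is essentially no hard step here: the entire content is the observation that the only feature of a candidate solution that matters is the $r$-dimensional vector $q(x)$, whose direction can be located by brute force over a net at cost exponential in $r$ alone — no SOS hierarchy is needed, which is precisely why this low-rank (equivalently, bounded Frobenius norm) regime is easier than the general spectral-norm setting of Theorem~\ref{thm:nonneg:intro}. The only points requiring mild care are (i) the approximation bookkeeping above, and (ii) if the $Q_i$ are allowed to be inhomogeneous quadratics, then $\max_{\norm{x}=1}R_u(x)^2$ is a trust-region subproblem rather than a plain eigenvalue computation; it is nonetheless solvable in polynomial time (its natural SDP relaxation is exact for a single quadratic constraint), so the argument carries over verbatim. \qed
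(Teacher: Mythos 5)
Your proposal is correct and is essentially the paper's own proof: both reduce the problem to guessing, over an $\e$-net in $\R^r$, the direction of the vector $(Q_1(x^*),\ldots,Q_r(x^*))$ at the optimum, solve a single quadratic-form (eigenvalue) maximization per net point, and use the same Cauchy--Schwarz upper bound together with the same perturbation argument for the lower bound. Your bookkeeping (working on $S^{r-1}$, using $\lmax(M_u)$ rather than the largest singular value) is a harmless cosmetic variation, since the relevant direction $u^*$ makes $R_{u^*}(x^*)$ nonnegative.
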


\begin{proof}
For $\lambda\in \R^n$, consider the polynomial $Q_\lambda(x) = \sum_{i=1}^r \lambda_i Q_i(x)$.

First, we claim that $\max_{ \lVert \lambda \rVert = 1 }\lVert Q_\lambda \rVert = \lVert P \rVert^{1/2}$.
On the one hand, $\lVert Q_\lambda \rVert \le \lVert \lambda \rVert \cdot \lVert P \rVert^{1/2}$ by Cauchy--Schwarz.
On the other hand, if $\lambda=\frac 1 {P(x^\ast)^{1/2}}( Q_1(x),\ldots,Q_r(x))$ for some vector $x^\ast\in\R^n$, then $Q_\lambda(x^\ast)=P(x^\ast)^{1/2}$.
Therefore, if we choose $x^\ast$ as a unit vector that maximizes $P$, then $\lVert Q_\lambda \rVert\ge P(x^\ast)^{1/2}=\lVert P \rVert^{1/2}$.

Next, we claim that we can compute $\max_{ \lVert \lambda \rVert = 1 }\lVert Q_\lambda \rVert $ up to error $\e$ in time $\exp(\poly(r,\e))$.
Since $Q_\lambda$ is quadratic, we can compute $\norm{Q_\lambda}$ in polynomial time.
(The norm of $Q_\lambda$ is equal to the largest singular value of the coefficient matrix of $Q_\lambda$.)
The idea is to compute $\lVert Q_\lambda \rVert$ for all vectors $\lambda\in N_\e$, where $N_\e$ is an $\e$-net of the unit ball in $\R^r$.
Let  $\lambda^*$  be the vector that achieves the maximum, $x^*$ be the corresponding input, and $u^*$ be the vector $(Q_1(x^*),\ldots,Q_r(x^*))$.
Thus $\max_{ \lVert \lambda \rVert = 1 }\lVert Q_\lambda \rVert^2  = \iprod{\lambda^*,u^*}^2$ and $\norm{u^*} = \norm{P}$.
Therefore, for every $\lambda$
\[
\norm{Q_\lambda}^2 \geq \norm{Q_\lambda(x^*)}^2 = \iprod{\lambda,u^*}^2 \mper
\]
But if $\norm{\lambda-\lambda^*} \leq \e$ then
\[
|\iprod{\lambda^*,u^*} - \iprod{\lambda,u^*}| \leq \norm{\lambda-\lambda^*}\norm{u^*}  = \norm{\lambda-\lambda^*}\norm{P} \mper
\]
Thus if $\norm{\lambda-\lambda'} \leq \e$ then we get a $1-O(\e)$ multiplicative approximation to $\norm{P}$.
\end{proof}

\begin{corollary}\label{cor:frobenius}
 If $M$ is a symmetric $n^2\times n^2$ PSD matrix with Frobenius norm at most $1$ then we can compute an $\e$ additive approximation to
\[
\max_{\norm{x}=1} \iprod{M,x^{\ot 4}}
\]
in $\poly(n)\exp(\poly(1/\e))$ time.
\end{corollary}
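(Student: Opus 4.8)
The plan is to reduce this to Theorem~\ref{thm:lowrank} by truncating the eigendecomposition of $M$. Since $M$ is symmetric PSD, write $M=\sum_i \lambda_i v_iv_i^T$ with $\lambda_i\ge 0$ and orthonormal $v_i\in\R^{n^2}$; then $\sum_i\lambda_i^2=\norm{M}_F^2\le 1$, so at most $1/\e^2$ of the eigenvalues satisfy $\lambda_i\ge\e$. Let $M'=\sum_{i\,:\,\lambda_i\ge\e}\lambda_i v_iv_i^T$ and $r=\rank M'\le 1/\e^2$. The matrix $M-M'$ is PSD with $\spectralnorm{M-M'}<\e$, so for every unit vector $x$,
\[
0\le \iprod{M-M',x^{\ot 4}} = (x\ot x)^T(M-M')(x\ot x) \le \spectralnorm{M-M'}\cdot\norm{x\ot x}_2^2 < \e,
\]
using $\norm{x\ot x}_2=\norm{x}_2^2=1$. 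Hence $\max_{\norm x=1}\iprod{M',x^{\ot 4}}$ and $\max_{\norm x=1}\iprod{M,x^{\ot 4}}$ differ by less than $\e$, so it suffices to additively $\e$-approximate the former.

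Next, $P(x)\defeq\iprod{M',x^{\ot 4}}$ is a sum of $r$ squares of quadratics, exactly the input shape of Theorem~\ref{thm:lowrank}: reshaping each $v_i\in\R^{n^2}$ into an $n\times n$ matrix $\tilde V_i$ with $\iprod{v_i,x\ot x}=x^T\tilde V_i x$, and setting $Q_i(x)=x^T\tilde V_i x$, we get $P(x)=\sum_{i=1}^r\lambda_i\iprod{v_i,x\ot x}^2=\sum_{i=1}^r\bigl(\sqrt{\lambda_i}\,Q_i(x)\bigr)^2$. Since $P$ is a sum of squares, $\norm{P}=\max_{\norm x=1}P(x)$, and $\norm{P}\le\spectralnorm{M'}\le\norm{M'}_F\le\norm{M}_F\le 1$. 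Applying Theorem~\ref{thm:lowrank} to $P$ with parameter $r\le 1/\e^2$ computes $\norm{P}$ up to a multiplicative factor $1\pm\e$ in time $\poly(n)\cdot\exp(\poly(1/\e))$ --- the $\exp(\poly(1/\e))$ coming from the $\e$-net over the unit ball of $\R^r$, and the $\poly(n)$ from computing the eigendecomposition of $M$ once and from each singular-value computation $\norm{Q_\lambda}$ during the net search. Because $\norm{P}\le 1$, multiplicative error $\e$ is also additive error at most $\e$; combining with the truncation estimate gives an additive $O(\e)$-approximation to $\max_{\norm x=1}\iprod{M,x^{\ot 4}}$, and rescaling $\e$ by a constant finishes the proof.

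The only step that genuinely needs care is the truncation: dropping the small eigendirections of $M$ must perturb the quartic $\iprod{M,x^{\ot 4}}$ by at most $\e$ \emph{uniformly} over the sphere, not merely pointwise. This is exactly where PSD-ness of $M$ (so that $M'$ and $M-M'$ are PSD, the discarded part is nonnegative, and $M-M'$ has small spectral norm) together with the identity $\norm{x\ot x}_2=\norm{x}_2^2$ is used to convert a spectral-norm bound on $M-M'$ into a bound on the quartic form. Everything else is bookkeeping plus the black-box invocation of Theorem~\ref{thm:lowrank}.
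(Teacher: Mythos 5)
Your proposal is correct and follows essentially the same route as the paper: truncate the eigendecomposition at threshold $\e$ to get a rank-$\le 1/\e^2$ PSD part $M'$, observe the discarded PSD part perturbs the quartic form by at most $\e$ uniformly on the sphere, and invoke Theorem~\ref{thm:lowrank} on the resulting sum of $r$ squares, converting its multiplicative guarantee to an additive one via $\norm{P}\le 1$. The paper states the truncation error bound without justification, so your explicit spectral-norm argument for $\iprod{M-M',x^{\ot 4}}<\e$ is a welcome filling-in of the same proof rather than a different one.
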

\begin{proof}
Write $M$ in its eigenbasis as $M = \sum \lambda_i Q_i^{\ot 2}$ for $n\times n$ matrices $\{Q_i\}$ with Frobenius norm at most $1$,
and let $M' = \sum_{\lambda_i \geq \e} \lambda_i Q_i^{\ot 2}$.
Since $\sum \lambda_i^2 = 1$ we know that the rank of $M'$ is at most $1/\e^2$,
and therefore we can compute a $1\pm \e$ multiplicative approximation to the maximum of
$\iprod{M',x^{\ot 4}}$ over unit $x$
 (which in particular implies an $\e$ additive approximation since this value is bounded by $1$).
But this implies an $\e$-additive approximation  for this maximum over $M$ since these quantities can differ by at most $\e$.
\end{proof}

\section{LOCC Polynomial Optimization}
\label{sec:locc}

Let $P\in \R[X]_4$ be a degree-$4$ homogeneous polynomial of the form $P(X)=A_1(X)\cdot B_1(X) + \cdots + A_m(X)\cdot B_m(X)$ for quadratic polynomials $A_1,\ldots,A_m\in \R[X]_2$ with $0\preceq A_i \preceq \lVert  X \rVert^2$ and quadratic polynomials $B_1,\ldots,B_m\in \R[X]_2$ with $B_i\succeq 0$ and $\sum_i B_i \preceq \lVert X \rVert^2$. 
Note that this corresponds to the tensor corresponding to $P$ having a one-way local operations and classical communication (LOCC) norm bounded by $1$~\cite{BrandaoCY11}. 
Without loss of generality, we may assume $\sum_i B_i = \lVert X \rVert^2$.
(We can choose $A_m=0$ and choose $B_m$ appropriately without changing $P$.)
Our goal is to compute the norm of $P$, defined as $\lVert  P \rVert=\max_{\lVert  x \rVert=1}\abs{P(x)}$. 
In the quantum setting, this corresponds to finding the maximum probability of acceptance by a separable state for the measurement operator $P$.

In this section, we will show that sum-of-squares relaxations provide good approximation for the norm of polynomials of the form above.
(For the case that the variables of $A_1,\ldots,A_m$ are disjoint from the variables for $B_1,\ldots,B_m$, the theorem is due to \Brandao, Christandl, and Yard~\cite{BrandaoCY11}.
Up to the Gaussian rounding step, the proof here is essentially the same as the proof by \Brandao and Harrow \cite{BrandaoH13}.)

\begin{theorem}
\label{thm:locc}
  Sum-of-squares relaxations with degree $d$ achieve the following approximations for the norm of degree-$4$ polynomials $P$ of the form above:
  \begin{itemize}
  \item the value of the relaxation is at most $3\lVert  P \rVert+\e$ for degree $d\ge O(1/\e^2)\cdot \log n$.
  \item in the case that the variables in $A_1,\ldots,A_m$ are disjoint from the variables in $B_1,\ldots,B_m$, the value of the relaxation is at most $\lVert  P \rVert+\e$ for degree $d\ge O(1/\e^2)\log n$.
  \end{itemize}
\end{theorem}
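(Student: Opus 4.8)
The plan is to prove \pref{thm:locc} in the combining‑algorithm style of the paper: reduce to a statement about the moments of a high‑level pseudoexpectation, massage those moments into a genuine near‑product object, and then bound its value by $\norm{P}$. Let $\pE$ be a level‑$d$ pseudoexpectation functional consistent with $\norm{X}^2=1$ and with $P(X)=\sigma$, where $\sigma$ is the value of the relaxation; we must show $\sigma\le 3\norm{P}+\epsilon$ in general and $\sigma\le\norm{P}+\epsilon$ when the variables of the $A_i$ are disjoint from those of the $B_i$. Two features of the problem are used throughout: since each $A_i,B_i$ is a PSD quadratic form, $P=\sum_i A_iB_i$ is \emph{pointwise nonnegative}, so $P(x)\le\norm{P}\cdot\norm{x}^4$ for all $x$; and $A_i\sle\norm{X}^2$, $B_i\sle\norm{X}^2$, so modulo the constraint $\norm{X}^2=1$ we get $A_i^2\sle A_i$ and $B_i^2\sle B_i$ (\pref{lem:boundedness-relation}).

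The first ingredient is a Gaussian‑rounding estimate valid for \emph{any} level‑$\ge 4$ \pef $\pE'$ consistent with $\norm{X}^2=1$. Put $\Sigma=\pE'[XX^{\top}]$, a genuine PSD matrix with $\tr\Sigma=\pE'[\norm{X}^2]=1$, and let $g\sim N(0,\Sigma)$. By Wick's formula and symmetry of the coefficient matrices $\bar A_i,\bar B_i$,
\[
\E_g P(g)=\sum_i\Paren{\pE'[A_i]\,\pE'[B_i]+2\tr(\bar A_i\Sigma\bar B_i\Sigma)}\ \ge\ \sum_i\pE'[A_i]\,\pE'[B_i],
\]
since $\tr(\bar A_i\Sigma\bar B_i\Sigma)=\tr\big((\Sigma^{1/2}\bar A_i\Sigma^{1/2})(\Sigma^{1/2}\bar B_i\Sigma^{1/2})\big)\ge0$. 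Using $P\ge0$ pointwise, $\E_g P(g)\le\norm{P}\cdot\E_g\norm{g}^4=\norm{P}\big((\tr\Sigma)^2+2\norm{\Sigma}_F^2\big)\le 3\norm{P}$, because $\norm{\Sigma}_F^2\le(\tr\Sigma)^2$. In the disjoint case $X=(Y,Z)$ I instead take $g_Y\sim N(0,\pE'[YY^{\top}])$ and $g_Z\sim N(0,\pE'[ZZ^{\top}])$ \emph{independent}, so that $\E[P(g_Y,g_Z)]=\sum_i\pE'[A_i]\,\pE'[B_i]$ exactly and $\E[P(g_Y,g_Z)]\le 4\norm{P}\cdot\E[\norm{g_Y}^2\norm{g_Z}^2]=4\norm{P}\,\tr\Sigma_Y\tr\Sigma_Z\le\norm{P}$, using $\norm{P}=\tfrac14\max_{\norm{y}=\norm{z}=1}\abs{P(y,z)}$ (homogeneity and AM--GM) together with $\tr\Sigma_Y+\tr\Sigma_Z=\pE'[\norm{X}^2]=1$. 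Thus $\sum_i\pE'[A_i]\pE'[B_i]$ is at most $3\norm{P}$ (general) or $\norm{P}$ (disjoint).

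The second ingredient produces such a \pef $\pE'$ with $\sum_i\pE'[A_i]\pE'[B_i]$ close to $\sigma=\sum_i\pE'[A_iB_i]$ — i.e.\ with the $A$‑ and $B$‑measurements $\epsilon$‑decorrelated. This is achieved by repeatedly \emph{conditioning} on outcomes of the POVM $\set{B_i}$: reweighting $\pE'$ by $B_{i_0}(X)$ is reweighting by a degree‑$2$ sum of squares, so it costs two levels and preserves consistency with $\norm{X}^2=1$ and with $P(X)=\sigma$. The potential is the entropy $\Psi(\pE')=H\big((\pE'[B_i])_i\big)\in[0,\log m]$ of the $B$‑outcome distribution (a genuine distribution, by consistency with $\norm{X}^2=1$). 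Lifting the one‑way‑LOCC quantum de Finetti argument of \cite{BrandaoCY11,BrandaoH13} to pseudoexpectations — just as the proof of \pref{thm:nonneg} lifts its information‑theoretic argument, using \pref{lem:hellinger}, \pref{lem:pef-cauchy-schwarz}, and the boundedness relations above — one obtains the dichotomy: either the current \pef already satisfies $\sum_i(\pE'[A_iB_i]-\pE'[A_i]\pE'[B_i])\le\epsilon$, or conditioning on some outcome $i_0$ with $\pE'[B_{i_0}]>0$ decreases $\Psi$ by $\Omega(\epsilon^2)$. Since $\Psi$ cannot drop below $0$, after $O(\log m/\epsilon^2)=O(\log n/\epsilon^2)$ conditionings the first alternative holds, and combining with the Gaussian estimate gives $\sigma=\sum_i\pE'[A_iB_i]\le\sum_i\pE'[A_i]\pE'[B_i]+\epsilon\le(1\text{ or }3)\norm{P}+\epsilon$. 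Choosing $d=O(\log n/\epsilon^2)$ large enough for these conditionings plus the degree‑$4$ moments used above finishes the proof (rescale $\epsilon$).

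The hard part is exactly the decorrelation dichotomy of the previous paragraph: showing, at the level of pseudoexpectations, that conditioning on $\set{B_i}$ enough times makes the $A$/$B$ correlation small while consuming only $O(\log m)$ rather than $\poly(n)$ levels. This is the pseudodistribution incarnation of the measurement (one‑way LOCC) de Finetti theorem, and the subtlety — absent in the disjoint case, where conditioning on the $Z$‑measurement directly decouples $Z$ from $Y$ — is that the $A_i$ need not form a POVM, so the naive "entropy of the $A$‑outcomes" potential is unavailable and one must argue through the $B$‑outcome entropy together with an informationally‑complete‑type refinement, or via a hybrid over virtual copies realized as reweightings by products $B_{i_1}(X)\cdots B_{i_k}(X)$. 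Every auxiliary inequality needed along the way (Wick expansion, $B_i^2\sle B_i$, pseudo--Cauchy--Schwarz, Hellinger vs.\ mutual information) is already in the toolkit of \pref{app:toolkit}, so what remains is purely the information‑theoretic accounting.
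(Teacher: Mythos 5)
Your first ingredient is sound and matches the paper's direct-rounding step: the Wick expansion gives $\E_g A_i(g)B_i(g)=\pE'[A_i]\pE'[B_i]+2\tr(\bar A_i\Sigma\bar B_i\Sigma)\ge\pE'[A_i]\pE'[B_i]$, and bounding $\E_g\norm{g}^4\le 3(\tr\Sigma)^2$ via pointwise nonnegativity of $P$ is a legitimate (and slightly cleaner) reorganization of the paper's ``$\E_{X'}\norm{X'}^4=3$ plus homogeneity'' argument; your treatment of the disjoint case via independent $g_Y,g_Z$ and $\tr\Sigma_Y\tr\Sigma_Z\le 1/4$ likewise reproduces the paper's Lemma on independent sampling without needing its extra constraint $\E\norm{Y'}^2=\E\norm{Z'}^2=1/2$.

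The gap is in your second ingredient, which is the heart of the theorem, and you have essentially deferred it. You posit a dichotomy (``either the $A$/$B$ correlation is at most $\e$, or conditioning on some $B_{i_0}$ drops the potential by $\Omega(\e^2)$'') with potential $\Psi(\pE')=H\bigl((\pE'[B_i])_i\bigr)$, the Shannon entropy of the classical $B$-outcome distribution. This is not the right potential, and no argument is given linking its decrease to the quantity you need to control, namely $\sum_i\bigl(\pE'[A_iB_i]-\pE'[A_i]\pE'[B_i]\bigr)$. The paper's argument instead uses the von Neumann entropy $H(\rho)$ of the second-moment matrix $\rho=\pE'[XX^{\top}]$ (bounded by $\log n$) as the potential. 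The link to the correlation goes through a specific construction: form the block-diagonal bipartite density matrix $\sum_i\beta_i\,\rho_i\otimes e_ie_i^{\top}$ with $\rho_i=\pE'[XX^{\top}B_i(X)]/\beta_i$, observe that $\sum_i\pE'[A_iB_i]-\sum_i\pE'[A_i]\pE'[B_i]=\tr A(\rho-\rho_A\otimes\rho_B)$ is at most the trace distance $\lVert\rho-\rho_A\otimes\rho_B\rVert_*$ (since each $A_i$ has spectral norm at most $1$), bound that by Pinsker's inequality in terms of the quantum mutual information, and use the chain rule to identify that mutual information with $H(\rho)-\sum_i\beta_iH(\rho_i)$ --- exactly the expected drop of $H(\rho)$ under conditioning on the $B$-measurement. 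Since $H(\pE XX^{\top})\le\log n$ and cannot go negative, some product $B_{i(1)}\cdots B_{i(d)}$ with $d=O(\e^{-2}\log n)$ yields a reweighting with correlation at most $\e$. Without this mechanism (or a worked-out substitute), the dichotomy you invoke is unsubstantiated, and the ``informationally-complete refinement'' and ``hybrid over virtual copies'' you gesture at are not needed once the correct potential is used: the quantity $H(\pE'[XX^\top])$ is computable from degree-$2$ moments of the (pseudo)distribution, so the whole argument lifts to pseudoexpectations with no further difficulty.
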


\paragraph{Direct Rounding}

As direct rounding for a distribution $\{X\}$, we choose a Gaussian variable with the same first two moments as $\{X\}$.
To analyze this rounding procedure, the following lemma is useful.

The lemma considers an arbitrary distribution over unit vectors in $\R^n$ (intended to maximize $P$).
We express the second moment $\rho=\E XX^\top$ of this distribution as convex combination $\rho=\sum_i \beta_i\rho_i$ for $\beta_i = \E_X B_i(X)$ and $\rho_i=\E_XXX^T B_i(X)/\beta_i$.
By the assumptions on the distribution $\{X\}$, the matrices $\rho$ and $\rho_1,\ldots,\rho_m$ are positive semidefinite and have trace $1$ (\emph{density matrices}).
The quantum entropy $H(\rho)=-\Tr \rho \log \rho$ is concave so that $H(\rho)\ge \sum_i \beta_i H(\rho_i)$.
The assumption of the lemma is that the inequality is approximately tight.
Roughly speaking, this condition means that the $\rho_i$ matrices are close $\rho$.
For the distribution $\{X\}$, this condition means that reweighing by the polynomials $B_i$ does not affect second moments of the distribution.
We say that the distribution has \emph{low global correlation} with respect to the polynomials $B_1,\ldots,B_m$.
(This notion is related to but distinct from the notion of global correlation in \cite{BarakRS11}).
The lemma asserts that if the distribution $\{X\}$ has low global correlation with respect to the polynomials $B_1,\ldots,B_m$, then sampling $X$ independently for the $A$-part and $B$-part of the polynomial $P$ gives roughly the same value as sampling $X$ in a correlated way.
(The next lemma explains why our direct rounding achieves at least the quantity corresponding to sampling $X$ independently for the two parts.)

\begin{lemma}
  \label{lem:locc-independent}
  Let $\{X\}$ be a distribution over $\R^n$ that satisfies the constraint $\lVert  X \rVert^2=1$.
  Suppose $\sum_i \beta_i H(\rho_i) \ge H(\sum_i \beta_i \rho_i) -\e^2 $ for $\rho_i=\E_X XX^\top B_i(X)/\beta_i$ and $\beta_i=\E_X B_i(X)$.
  Then,
  \begin{displaymath}
    \sum_i \E_X A_i(X)\cdot \E_X B_i(X)  \ge \sum_i \E_X  A_i(X)B_i(X)\quad -\e\mper
  \end{displaymath}
  Moreover, the statement holds if $\{X\}$ is a degree-$4$ pseudo-distribution.
\end{lemma}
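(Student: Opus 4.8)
The plan is to bound the gap between the "correlated" quantity $\sum_i \pE_X A_i(X)B_i(X)$ and the "independent" quantity $\sum_i (\pE_X A_i(X))(\pE_X B_i(X))$ by relating it to a sum of quantum relative entropies (or trace distances), which the hypothesis controls via concavity of von Neumann entropy. First I would rewrite the difference. Since $0 \preceq A_i \preceq \lVert X\rVert^2$ and $A_i$ is quadratic, write $A_i(X) = \Tr(\hat A_i XX^\top)$ for a matrix $0 \preceq \hat A_i \preceq I$ (using the constraint $\lVert X\rVert^2 = 1$). Then $\pE_X A_i(X)B_i(X) = \beta_i \Tr(\hat A_i \rho_i)$ and $(\pE_X A_i(X))(\pE_X B_i(X)) = \beta_i \Tr(\hat A_i \rho)$, where $\rho = \pE_X XX^\top = \sum_i \beta_i \rho_i$ (here using $\sum_i B_i = \lVert X\rVert^2$ so that the $\beta_i$ sum to $1$). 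Hence the difference is
\[
\sum_i \pE_X A_i(X)B_i(X) - \sum_i (\pE_X A_i(X))(\pE_X B_i(X)) = \sum_i \beta_i \Tr\bigl(\hat A_i (\rho_i - \rho)\bigr) \le \sum_i \beta_i \lVert \rho_i - \rho\rVert_1,
\]
using $0 \preceq \hat A_i \preceq I$ so that $\lvert \Tr(\hat A_i M)\rvert \le \lVert M\rVert_1$.

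Next I would invoke the entropy hypothesis. The quantity $H(\rho) - \sum_i \beta_i H(\rho_i)$ is the Holevo quantity of the ensemble $\{(\beta_i, \rho_i)\}$, which equals $\sum_i \beta_i D(\rho_i \,\|\, \rho)$ where $D(\cdot\,\|\,\cdot)$ is the quantum relative entropy. By quantum Pinsker's inequality, $D(\rho_i\,\|\,\rho) \ge \tfrac12 \lVert \rho_i - \rho\rVert_1^2$. Combining with Cauchy--Schwarz over the probability weights $\beta_i$,
\[
\sum_i \beta_i \lVert \rho_i - \rho\rVert_1 \le \Bigl(\sum_i \beta_i \lVert \rho_i - \rho\rVert_1^2\Bigr)^{1/2} \le \bigl(2\,(H(\rho) - \textstyle\sum_i \beta_i H(\rho_i))\bigr)^{1/2} \le \sqrt{2}\,\e,
\]
which (after absorbing the constant, or stating the bound as $\sqrt 2\e$ and rescaling $\e$) gives the claimed inequality.

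Finally I would check that every step survives the replacement of $\{X\}$ by a degree-$4$ pseudo-distribution. The matrices $\rho = \pE_X XX^\top$ and $\rho_i = \pE_X XX^\top B_i(X)/\beta_i$ are genuine PSD density matrices: $\rho$ is PSD because $\pE (v^\top X)^2 \ge 0$ for the linear form $v^\top X$, and $\rho_i$ is PSD because $B_i \succeq 0$ means $B_i = \sum_j C_{ij}^2$ for quadratics $C_{ij}$, so $\pE (v^\top X)^2 B_i(X) = \sum_j \pE (v^\top X \cdot C_{ij}(X))^2 \ge 0$, and all polynomials involved have degree at most $4$. The trace and linearity identities used to rewrite the difference are purely formal (linearity of $\pE$), and the matrix inequalities $\Tr(\hat A_i M)\le \lVert M\rVert_1$, Pinsker, concavity of $H$, and Cauchy--Schwarz are all statements about the \emph{actual} PSD matrices $\rho,\rho_i$ and the \emph{actual} scalars $\beta_i = \pE_X B_i(X) \ge 0$, so no further pseudo-expectation machinery is needed. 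The main obstacle I anticipate is purely bookkeeping: making sure the reduction $A_i(X) \leftrightarrow \hat A_i$ and $B_i(X) \leftrightarrow \rho_i$ respects the constraint $\lVert X\rVert^2 = 1$ consistently (so that traces are normalized and $\hat A_i \preceq I$ rather than merely $\hat A_i \preceq$ something), and confirming that the degree-$4$ pseudo-distribution indeed "sees" all the products $A_i B_i$, $XX^\top B_i$, etc., as degree-$\le 4$ objects — which it does.
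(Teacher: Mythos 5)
Your proof is correct and is essentially the paper's argument: the paper packages the ensemble $\{(\beta_i,\rho_i)\}$ into a single block-diagonal classical--quantum state, bounds $\lvert\Tr A(\rho-\rho_A\otimes\rho_B)\rvert$ by the trace distance, and applies Pinsker plus the chain rule, which is exactly your Holevo-quantity/Pinsker/Cauchy--Schwarz computation in unrolled form (note $\lVert\rho-\rho_A\otimes\rho_B\rVert_*=\sum_i\beta_i\lVert\rho_i-\rho\rVert_1$ for block-diagonal states). The $\sqrt2$ constant and the small slip that $B_i\succeq 0$ decomposes as a sum of squares of \emph{linear} forms (not quadratics) are harmless.
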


\begin{proof}

\newcommand{\Diag}{\mathop{\mathrm{Diag}}}

Consider the block-diagonal density matrix
\begin{math}
  \rho
  = \sum_i \beta_i \rho_i \otimes e_ie_i^\top
  \mcom
\end{math}
and the block-diagonal measurement matrix
\begin{math}
  A
  = \sum_i A_i\otimes e_ie_i^\top
  \mper
\end{math}
(In this construction, we identify the quadratic polynomial $A$ with its representation as a symmetric square matrix.)
Furthermore, consider the partial traces $\rho_A=\Tr_B \rho=\sum_i \beta_i \rho_i$ and $\rho_B=\Tr_A \rho=\sum_i\beta_ie_ie_i^T$.
We can express the two sides of the conclusion of the lemma as follows,
\begin{align*}
  & \sum_i \E_{X} A_i(X)\cdot B_i(X) = \Tr A \rho\mcom
  \\
  &\sum_i  \E_{X} A_i(X)\cdot \E_X B_i(X) = \Tr A (\rho_A\otimes \rho_B)\mper
\end{align*}
Since $A$ has spectral norm at most $1$, we can bound the difference by$\lvert \Tr A (\rho - \rho_A\otimes\rho_B)\rvert\le \lVert  \rho-\rho_A\otimes\rho_B \rVert_*$.
(Here, $\lVert  \cdot  \rVert_*$ is the trace norm---the dual of the spectral norm.)
By Pinsker's inequality, $\lVert  \rho-\rho_A\otimes \rho_B \rVert^2\le H(\rho_A) + H(\rho_B)-H(\rho)$.
By the chain rule, $H(\rho)=H(\rho_B)+\sum_i \beta_i H(\rho_i)$.
The assumption of the lemma allows us to bound the trace norm by $\lVert  \rho-\rho_A\otimes \rho_B \rVert^2\le H(\sum_i \beta_i\rho_i)-\sum_i\beta_iH(\rho_i)\le \e^2$.
At this point, the conclusion of the lemma follows from the bound $\lvert \Tr A (\rho - \rho_A\otimes\rho_B)\rvert\le \lVert  \rho-\rho_A\otimes \rho_B \rVert_*\le \e$.
\end{proof}

The following lemma shows that  Gaussian rounding achieves a value at least as large as the value achieved by sampling $\{X\}$ independently for the $A$-part and $B$-part of the polynomial $P$.

\begin{lemma}
  Let $\{X\}$ be a distribution over $\R^n$ that satisfies the constraint $\lVert X \rVert^2=1$.
  Suppose $\{X'\}$ is a Gaussian distribution with the same first two moments as $\{X\}$.
  Then, $\{X'\}$ satisfies $\E_{X'}\lVert  X' \rVert^2=1$, $\E_{X'}\lVert  X' \rVert^4=3$, and
  \begin{displaymath}
    \E_{X'} \sum_i A_i(X')\cdot B_i(X') \ge \sum_i \E_X A_i(X) \cdot \E_X B_i(X)\mper
  \end{displaymath}
  Moreover, the statement holds if $\{X\}$ is a degree-$4$ pseudo-distribution.
\end{lemma}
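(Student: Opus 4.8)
The plan is to reduce the whole statement to the single positive semidefinite matrix $\rho\seteq\pE_X XX^\top$ together with a two-line Isserlis (Wick) computation. First I would record the basic normalizations. For a genuine distribution $\rho\succeq0$ is trivial; for a degree-$4$ pseudo-distribution it follows from positivity applied to the degree-$1$ polynomials $\sum_i c_iX_i$. The constraint $\lVert X\rVert^2=1$ gives $\Tr\rho=\pE_X\lVert X\rVert^2=1$, and linearity of $\pE$ on quadratics gives $\Tr(A_i\rho)=\pE_X A_i(X)$ and $\Tr(B_i\rho)=\pE_X B_i(X)$, identifying each quadratic form with its symmetric matrix. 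Since $A_i,B_i$ and the functionals $\lVert X'\rVert^2,\lVert X'\rVert^4$ are all even, only the second moments of $\{X'\}$ ever enter, so I may and will take $\{X'\}$ to be the centered Gaussian with covariance $\rho$; then $\E_{X'}\lVert X'\rVert^2=\Tr\rho=1$ is immediate.

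The core step is Isserlis' theorem: for the centered Gaussian $X'$ with covariance $\rho$ one has $\E_{X'}X'_aX'_bX'_cX'_d=\rho_{ab}\rho_{cd}+\rho_{ac}\rho_{bd}+\rho_{ad}\rho_{bc}$, hence for any symmetric matrices $A,B$,
\[
\E_{X'}\bigl(X'^\top A X'\bigr)\bigl(X'^\top B X'\bigr)=\Tr(A\rho)\Tr(B\rho)+2\Tr(A\rho B\rho),
\]
where symmetry of $B$ is used to merge the two ``cross'' pairings into the single term $2\Tr(A\rho B\rho)$. Taking $A=B=I$ gives $\E_{X'}\lVert X'\rVert^4=(\Tr\rho)^2+2\Tr(\rho^2)=1+2\Tr(\rho^2)\le 3$, using $\Tr(\rho^2)\le(\Tr\rho)^2=1$ for a trace-one PSD matrix (with equality exactly when $\rho$ is rank one, so the stated value $3$ is the extremal case; only this upper bound is used downstream, and it is where the factor $3$ in \pref{thm:locc} originates).

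Taking $A=A_i$, $B=B_i$ and summing over $i$,
\[
\E_{X'}\sum_i A_i(X')B_i(X')=\sum_i \pE_X A_i(X)\cdot\pE_X B_i(X)+2\sum_i\Tr(A_i\rho B_i\rho),
\]
so it suffices to show $\Tr(A_i\rho B_i\rho)\ge 0$ for each $i$. This is exactly where the structural hypotheses $A_i\succeq 0$ and $B_i\succeq 0$ enter: writing $\Tr(A_i\rho B_i\rho)=\Tr\bigl((\rho^{1/2}A_i\rho^{1/2})(\rho^{1/2}B_i\rho^{1/2})\bigr)$ exhibits it as the Frobenius inner product of two positive semidefinite matrices, which is nonnegative since $\Tr(MN)=\Tr(M^{1/2}NM^{1/2})\ge 0$ whenever $M,N\succeq 0$. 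This yields the asserted inequality.

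Finally, the pseudo-distribution case needs no extra work: the right-hand side and the matrix $\rho$ depend only on the degree-$\le 2$ part of $\pE_X$, the Gaussian $\{X'\}$ is an ordinary distribution to which Isserlis applies verbatim, and every positivity step used only $A_i,B_i,\rho\succeq 0$. The only points demanding care are the Wick bookkeeping (getting the coefficient $2$ right via symmetry of $B_i$) and the harmless convention that $\{X'\}$ is centered, which is legitimate because every polynomial in sight is even, so the odd moments of $\{X\}$ never appear; there is no substantive obstacle, the content being a short PSD-trace computation.
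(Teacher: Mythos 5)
Your proof is correct and is essentially the paper's one-line argument spelled out: the paper invokes exactly the fact that centered jointly Gaussian $P,Q$ satisfy $\E P^2Q^2=\E P^2\,\E Q^2+2(\E PQ)^2\ge \E P^2\,\E Q^2$, which, applied to $P=\iprod{a_k,X'}$ and $Q=\iprod{b_l,X'}$ after writing $A_i=\sum_k a_ka_k^\top$ and $B_i=\sum_l b_lb_l^\top$, is the same content as your Wick identity together with $\Tr(A_i\rho B_i\rho)=\Tr\bigl((\rho^{1/2}A_i\rho^{1/2})(\rho^{1/2}B_i\rho^{1/2})\bigr)\ge 0$. Two small remarks: you are right that the computation really gives $\E_{X'}\norm{X'}^4=1+2\Tr(\rho^2)\le 3$ rather than $=3$ (only the upper bound is used downstream), and your centering convention needs a slightly different justification than ``only second moments enter'' --- the fourth moments of a Gaussian are not determined by its second-moment matrix alone, and for a non-centered Gaussian the inequality $\E P^2Q^2\ge\E P^2\E Q^2$ can fail, so one should instead note that $\set{X}$ may be taken symmetric (hence $\pE X=0$) since the objective and constraints are even --- a caveat that applies equally to the paper's own proof.
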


\begin{proof}
  Using the assumption $A_i,B_i\succeq 0$, the lemma follows from the fact that Gaussian variables $P,Q$ satisfy $\E P^2Q^2\ge \E P^2 \E Q^2$.
\end{proof}

The previous two lemmas together yield the following corollary.

\begin{corollary}
  \label{cor:locc-direct}
  Let $\{X\}$ be a distribution over $\R^n$ that satisfies the constraint $\lVert X \rVert^2=1$.
  Suppose $\{X'\}$ and $\e>0$ are as in the previous two lemmas, that is,
  $\{X'\}$ is a Gaussian distribution with the same first two moments as $\{X\}$
  and $\sum_i \beta_i H(\rho_i) \ge H(\sum_i \beta_i \rho_i) -\e^2 $ for $\rho_i=\E_X XX^\top B_i(X)/\beta_i$ and $\beta_i=\E_X B_i(X)$.
  Then,
  \begin{displaymath}
    \E_{X'} P(X') \ge \E_{X} P(X) - \e \text{ and }\E_{X'}\lVert  X' \rVert^4 = 3\mper
  \end{displaymath}
  Moreover, the statement holds if $\{X\}$ is a degree-$4$ pseudo-distribution.
\end{corollary}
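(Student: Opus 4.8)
The plan is to obtain \pref{cor:locc-direct} by simply composing the two lemmas that precede it; no new argument is needed beyond bookkeeping. Recall that $P(X)=\sum_i A_i(X)B_i(X)$ with all $A_i,B_i$ quadratic, so that $\E_X P(X)=\sum_i \E_X A_i(X)B_i(X)$ and $\E_{X'}P(X')=\sum_i \E_{X'} A_i(X')B_i(X')$, and these are genuine degree-$4$ quantities for which a degree-$4$ (pseudo-)expectation is defined.

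First I would invoke \pref{lem:locc-independent}: under the hypothesis $\sum_i\beta_iH(\rho_i)\ge H(\sum_i\beta_i\rho_i)-\e^2$, it gives
\[
\sum_i \E_X A_i(X)\cdot \E_X B_i(X)\ \ge\ \sum_i \E_X A_i(X)B_i(X)\ -\ \e\ =\ \E_X P(X)-\e\mper
\]
Next I would invoke the Gaussian-rounding lemma immediately preceding the corollary, applied to the Gaussian variable $X'$ matching the first two moments of $X$: it yields $\E_{X'}\lVert X'\rVert^2=1$, $\E_{X'}\lVert X'\rVert^4=3$, and
\[
\E_{X'} P(X')\ =\ \sum_i \E_{X'} A_i(X')B_i(X')\ \ge\ \sum_i \E_X A_i(X)\cdot \E_X B_i(X)\mper
\]
Chaining the two displays gives $\E_{X'}P(X')\ge \E_X P(X)-\e$, together with $\E_{X'}\lVert X'\rVert^4=3$, which is precisely the conclusion.

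For the ``moreover'' part I would note that both invoked lemmas are already stated for degree-$4$ pseudo-distributions, so the only thing to check is that the objects in the hypothesis remain meaningful when $\E_X$ is replaced by a degree-$4$ pseudo-expectation $\pE$: we have $\beta_i=\pE B_i(X)\ge 0$ since $B_i\succeq 0$ means $B_i$ is a sum of squares and $\pE$ is positive; $\Tr \pE XX^\top B_i(X)=\pE \lVert X\rVert^2 B_i(X)=\pE B_i(X)=\beta_i$ using consistency with $\lVert X\rVert^2=1$; and $\pE XX^\top B_i(X)\succeq 0$ because for any $v$ the polynomial $v^\top XX^\top B_i(X)v=(v^\top X)^2 B_i(X)$ is a sum of squares of degree at most $4$. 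Hence $\rho_i=\pE XX^\top B_i(X)/\beta_i$ is a genuine density matrix and the entropy terms are well defined. There is essentially no obstacle here; the only mild subtlety worth flagging is confirming that every polynomial whose (pseudo-)expectation is taken has degree at most $4$ — which holds since the $A_i,B_i$ are quadratic — so that a degree-$4$ pseudo-expectation suffices throughout, and that the inequalities supplied by the two lemmas in their pseudo-distribution forms can be chained verbatim.
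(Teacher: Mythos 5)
Your proposal is correct and matches the paper exactly: the paper derives this corollary by simply chaining \pref{lem:locc-independent} with the Gaussian-rounding lemma, which is precisely what you do. Your additional verification that the $\rho_i$ remain genuine density matrices under a degree-$4$ pseudo-expectation is a reasonable bit of due diligence that the paper leaves implicit.
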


\paragraph{Making progress}

The following lemma shows that there exists a low-degree polynomial so that reweighing by the polynomial results in a distribution that has low global correlation with respect to the polynomials $B_1,\ldots,B_m$.

\begin{lemma}
\label{lem:locc-progress}
  Let $\{X\}$ be a distribution over $\R^n$ that satisfies the constraint $\lVert X \rVert^2=1$.
  Then, there exists a polynomial  $B\in\R[X]_{2d}$ of the form $B=B_{i(1)}\cdots B_{i(d)}$ with $d=O(1/\e^2)\log n$
  such that $\sum_i \beta_i H(\rho_i) \ge H(\sum_i \beta_i \rho_i) -\e^2 $ for $\rho_i=\E_X XX^\top B(X) B_i(X)/\beta_i$ and $\beta_i=\E_X B(X) B_i(X)$.
  Moreover, the statement holds if $\{X\}$ is a degree-$d+4$ pseudo-distribution.
\end{lemma}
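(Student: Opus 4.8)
The plan is to run an entropy--potential iteration, directly parallel to the ``making progress'' arguments of \pref{sec:non-neg} (\pref{lem:making-progress}) and \pref{sec:asvp} (\pref{lem:conditioning}), with the von Neumann entropy of the second-moment matrix playing the role of Shannon entropy. For any (pseudo-)distribution $\{Y\}$ over $\R^n$ consistent with $\lVert Y\rVert^2 \equiv 1$, the second moment $\rho(\{Y\}) = \E_Y YY^\top$ is a density matrix, so the potential $\Phi(\{Y\}) \defeq H(\rho(\{Y\}))$ lies in $[0,\log n]$. I would build $B$ greedily. Start with $B^{(0)} = 1$. Given $B^{(t)} = B_{i(1)}\cdots B_{i(t)}$, which is a product of positive-semidefinite quadratic forms and hence a sum of squares, let $\{X^{(t)}\}$ be $\{X\}$ reweighted and renormalized by $B^{(t)}$, and for this distribution form $\rho^{(t)} = \rho(\{X^{(t)}\})$ together with $\beta_i = \E_{X^{(t)}} B_i(X^{(t)})$ and $\rho_i = \E_{X^{(t)}} X^{(t)}{X^{(t)}}^\top B_i(X^{(t)})/\beta_i$; since $\sum_i B_i \equiv 1$ the $\beta_i$ form a probability vector and $\rho^{(t)} = \sum_i \beta_i \rho_i$. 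Up to the global rescaling by $\E_X B^{(t)}(X)$ these are exactly the quantities appearing in the statement.

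The whole iteration rests on concavity of $H$, i.e. $\sum_i \beta_i H(\rho_i) \le H(\rho^{(t)})$. If $\sum_i \beta_i H(\rho_i) \ge H(\rho^{(t)}) - \e^2$, then $B = B^{(t)}$ already satisfies the conclusion and I stop. Otherwise $\sum_i \beta_i H(\rho_i) < H(\rho^{(t)}) - \e^2$, and since the $\beta_i$ with $\beta_i > 0$ form a probability vector, some index $i$ has $H(\rho_i) \le H(\rho^{(t)}) - \e^2$; I append this factor, $B^{(t+1)} = B^{(t)} B_i$, whose reweighted second moment is exactly $\rho_i$, so $\Phi(\{X^{(t+1)}\}) \le \Phi(\{X^{(t)}\}) - \e^2$. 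Because $\Phi \in [0,\log n]$ throughout, this ``bad'' branch can be taken at most $\log n/\e^2$ times, so the process halts with $B$ a product of $d = O(\e^{-2}\log n)$ of the $B_j$'s. (One should keep $\E_X B^{(t)}(X) > 0$ along the way; if it ever vanished, pseudo-Cauchy--Schwarz (\pref{lem:pef-cauchy-schwarz}) forces $\rho^{(t)}$ and every $\rho_i$ to be $0$ and the stopping condition holds trivially, and in any case the averaging step only ranges over indices with $\beta_i > 0$.)

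For the pseudo-distribution strengthening, I would check that every object above is robust. A product of sums of squares is a sum of squares, so each $B^{(t)}$ and $B^{(t)}B_i$ is an SOS polynomial of degree $O(\e^{-2}\log n)$; the standard reweighting fact --- reweighting a pseudo-distribution by a degree-$k$ sum of squares reduces the level by $k$ (cf. \pref{app:toolkit} and the conditioning steps of \pref{sec:non-neg} and \pref{sec:asvp}) --- applies, and since the construction only ever evaluates pseudo-expectations of polynomials of degree at most $O(\e^{-2}\log n)$, namely those of the form $XX^\top B^{(t)}(X) B_i(X)$, a pseudo-distribution of the level claimed in the statement is enough. Positivity of the pseudo-expectation on squares, applied to $(v^\top X)^2 B^{(t)}(X) B_i(X)$, shows that $\rho^{(t)}$ and the $\rho_i$ are honest positive-semidefinite trace-one matrices, so $H(\cdot)$ is the ordinary von Neumann entropy and its concavity is a genuine matrix inequality; no step uses anything about $\{X\}$ beyond linearity, positivity, and $\lVert X\rVert^2 \equiv 1$, exactly as in \pref{lem:locc-independent}. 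I expect the only point needing real care to be this bookkeeping: confirming that reweighting by the growing product $B^{(t)}$ never exhausts the available levels before $\rho^{(t)}$ and the $\rho_i$ can be read off, and disposing of the degenerate case $\E_X B^{(t)}(X) = 0$ cleanly. The entropy side --- concavity plus the one-line averaging argument --- is entirely standard and mirrors the Shannon-entropy version used earlier.
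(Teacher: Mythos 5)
Your proposal is correct and follows essentially the same argument as the paper: a greedy entropy-decrement iteration on the von Neumann entropy of the reweighted second-moment matrix, with concavity of $H$ and the bound $H(\rho)\in[0,\log n]$ forcing termination within $O(\e^{-2}\log n)$ steps (the paper merely phrases it by contraposition rather than as a forward iteration, and is similarly brief about the degenerate-normalization and degree-bookkeeping points you flag).
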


\begin{proof}
  By contraposition, suppose that $\sum_i \beta_i H(\rho_i) < H(\sum_i \beta_i \rho_i) -\eta$ holds for all polynomials $B$ of the form $B=B_{i(1)}\cdots B_{i(d')}$ with $d'\le d= 10/\e^2 \cdot \log n$.
  Then, we can greedily construct a sequence of polynomial $B_{i^\ast(1)},\ldots,B_{i^\ast(d)}$ such that in each step the entropy decreases by at least $\eta$.
  In particular, $H(\rho^\ast)\le H(\rho) - \eta \cdot d$ for $\rho^\ast \propto \E_X XX^\top B_{i^*(1)}\cdots B_{i^\ast (d)}(X)$ and $\rho=\E_X XX^\top$.
  Since $H(\rho)\le \log n$ and $H(\rho^\ast)\ge 0$, we have $\eta\ge 1/d \cdot \log n = \e^2/10$.
  As desired it follows that there exists a polynomial $B$ of the desired form such that $\sum_i \beta_i H(\rho_i) \ge H(\sum_i \beta_i \rho_i) -\e^2$.
\end{proof}

\paragraph{Putting things together}

The following lemma combines the conclusion about direct-rounding and making-progress.

\begin{lemma}
  Let $\{X\}$ be a distribution over $\R^n$ that satisfies the constraints $\lVert  X \rVert^2=1$ and $P(X)\ge c$.
  Then, there exists a polynomial $B\in\R[X]_{2d}$ of the form $B=B_{i(1)}\cdots B_{i(d)}$ with $d=O(1/\e^2)\log n$ such that the Gaussian distribution $X'$ that matches the first two moments of $\{X\}$ reweighted by $B(X)$ satisfies
  \begin{displaymath}
    \E_{X'} P(X)\ge c-\e \text{ and } \E_{X'} \lVert  X' \rVert^4 = 3\mper
  \end{displaymath}
  (Concretely, $\set{X'}$ is the Gaussian distribution that satisfies $\E_{X'} Q(X')=\E_{X}Q(X)B(X)/\E_XB(X)$ for quadratic polynomial $Q$).
  Moreover, the statement holds for degree-$2d+4$ pseudo-distributions.
\end{lemma}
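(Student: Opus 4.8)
The plan is to simply chain together the two ingredients already established: the making-progress lemma (\pref{lem:locc-progress}) and the direct-rounding corollary (\pref{cor:locc-direct}). First I would apply \pref{lem:locc-progress} to $\{X\}$, obtaining a polynomial $B=B_{i(1)}\cdots B_{i(d)}$ of degree $2d$ with $d=O(1/\e^2)\log n$ such that, for $\rho_i=\E_X XX^\top B(X)B_i(X)/\beta_i$ and $\beta_i=\E_X B(X)B_i(X)$, we have $\sum_i \beta_i H(\rho_i)\ge H(\sum_i\beta_i\rho_i)-\e^2$. Writing $\{X_B\}$ for the reweighting of $\{X\}$ by $B(X)$ (normalized by $\E_X B(X)$), and noting that the entropy inequality is scale invariant since $\rho_i$ and $\sum_i\beta_i\rho_i$ are density matrices regardless of normalization, this says exactly that $\{X_B\}$ has low global correlation with respect to $B_1,\ldots,B_m$ in the sense required by \pref{cor:locc-direct}.

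Next I would verify that $\{X_B\}$ still satisfies the remaining hypotheses. Each $B_i$ is PSD as a quadratic form, so $B_i(X)$ is a sum of squares, hence so is the product $B(X)$; therefore reweighting by $B(X)$ preserves consistency with $\norm{X}^2=1$, and $\E_{X_B}P(X)=\E_X P(X)B(X)/\E_X B(X)\ge c$ since $(P(X)-c)B(X)$ is consistent with a sum of squares. Applying \pref{cor:locc-direct} to $\{X_B\}$ then gives that the Gaussian $\{X'\}$ matching the first two moments of $\{X_B\}$ satisfies $\E_{X'}P(X')\ge \E_{X_B}P(X)-\e\ge c-\e$ and $\E_{X'}\norm{X'}^4=3$. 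Finally, the Gaussian matching the first two moments of $\{X_B\}$ is precisely the distribution in the statement, i.e.\ the one with $\E_{X'}Q(X')=\E_X Q(X)B(X)/\E_X B(X)$ for every quadratic $Q$, so this is the desired $\{X'\}$.

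For the pseudo-distribution version, the same argument goes through with degree bookkeeping: if $\{X\}$ is a degree-$(2d+4)$ pseudo-distribution, then $\pE_X(\cdot)B(X)$ is, after normalization, a degree-$4$ pseudo-distribution, since reweighting by the degree-$2d$ sum-of-squares polynomial $B$ drops the level by $2d$; it remains consistent with $\norm{X}^2=1$ and with $P(X)\ge c$ (using that $(P(X)-c)B(X)$ has degree $4+2d\le 2d+4$, inside the domain of $\pE_X$); \pref{lem:locc-progress} needs only degree $d+4\le 2d+4$, and \pref{cor:locc-direct} applied to this degree-$4$ reweighted pseudo-distribution finishes the proof. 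The only point requiring any care — and it is routine, not deep — is this degree accounting together with the nondegeneracy observation that one should assume (or obtain by a limiting argument) $\pE_X B(X)>0$ so that the reweighting is well defined; otherwise the statement is vacuous. Everything else is a direct concatenation of the two preceding results.
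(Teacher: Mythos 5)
Your proposal is correct and follows exactly the paper's own (three-line) proof: take $B$ from Lemma~\ref{lem:locc-progress}, reweigh $\{X\}$ by $B$, and apply Corollary~\ref{cor:locc-direct} to the reweighted distribution. The extra details you supply — the scale invariance of the entropy condition, the preservation of the constraints $\lVert X\rVert^2=1$ and $P(X)\ge c$ under reweighting by the sum-of-squares polynomial $B$, and the degree bookkeeping for pseudo-distributions — are all correct elaborations of steps the paper leaves implicit.
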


\begin{proof}
  Take the polynomial $B$ as in \pref{lem:locc-progress}.
  Reweigh the distribution $\{X\}$ by the polynomial $B$.
  Apply \pref{cor:locc-direct} to the resulting distribution.
\end{proof}

At this time, we have all ingredients for the proof of \pref{thm:locc}.

\begin{proof}[Proof of \pref{thm:locc}]
  Let $\{X\}$ be a degree-$d+4$ pseudo-distribution over $\R^n$ that satisfies the constraints $\lVert  X \rVert^2=1$ and $P(X)\ge c$ for $d=O(1/\e^2)\log n$.
  By the previous lemma, there exists a distribution $\{X'\}$ over $\R^n$ such that $\E_{X'} P(X')/\E_{X'}\lVert  X' \rVert^4\ge c/3-\e$.
  It follows that there exists a vector $x\in\R^n$ with $P(x)/\lVert x\rVert^4\ge c/3-\e$.
  (We can also find such a vector efficiently because we can sample from the distribution $\{X'\}$ efficiently and the random variables $P(X')$ and $\lVert X'\rVert$ are well-behaved.)
  By homogeneity, we get $\lVert  P \rVert\ge c/3-\e$.

  In the case that the variables $Y$ in $A_1,\ldots,A_m$ are disjoint from the variables $Z$ in $B_1,\ldots,B_m$, we can modify the direct-rounding distribution $\{X'\}=\{(Y',Z')\}$ slightly and sample the variables $Y'$ for the $A_i$ polynomials independently from the variables $Z'$ for the $B_i$ polynomials.
  By \pref{lem:locc-independent}, we still have $\E_{X'} P(X')\ge c-\e$.
  We can assume that $\E\lVert Y'\rVert ^2=\E\lVert  Z' \rVert^2=1/2$ (by adding the corresponding constraint to the sos relaxation).
  Therefore, $\E\lVert  Y' \rVert^2\lVert  X' \rVert^2=1/4$.
  It follows that there exists a vector $x=(y,z)$ in $\R^n$ with $P(y,z)/(\lVert  y \rVert^2\cdot \lVert  z \rVert^2)\ge 4(c-\e)$.
  By homogeneity, we can assume that $\lVert  y \rVert^2=1/2$ and $\lVert  z \rVert^2=1/2$.
  In this case, $\lVert x\rVert^2=1$ and $P(x)\g c-\e$ as desired.
\end{proof}

\section{The 2-to-q norm and small-set expansion} \label{app:sse-vs-norm}

This appendix  reproduces from \cite{BarakBHKSZ12} the proof that  a graph is a \emph{small-set expander} if and only if the projector to the subspace of its adjacency matrix's top eigenvalues has
a bounded $2\to q$ norm for even $q \geq 4$.
We also note that while \cite{BarakBHKSZ12} stated their result for the decision question, it does yield an efficient algorithm to transform a vector in the top eigenspace with large $4$ norm
into a small set that does not expand.

\paragraph{Notation} For a regular graph $G=(V,E)$ and a subset $S \subseteq V$, we define the \emph{measure} of $S$ to be $\mu(S)=|S|/|V|$
and we define $G(S)$ to be the distribution obtained by picking a random $x\in S$ and then outputting a random neighbor $y$ of $x$. We define the
\emph{expansion} of $S$, to be  $\bd_G(S)=\Pr_{y \in G(S)}[ y\not\in S]$, where $y$ is a random neighbor of $x$. For $\delta \in (0,1)$, we define
$\bd_G(\delta)=\min_{S\subseteq V: \mu(S)\leq \delta} \bd_G(S)$. We often drop the subscript $G$ from $\bd_G$ when it is clear from context. We identify $G$
with its normalized adjacency (i.e., random walk) matrix. For every $\lambda \in [-1,1]$, we denote by $V_{\ge \lambda}(G)$ the subspace spanned by the
eigenvectors of $G$ with eigenvalue at least $\lambda$.
The projector into this subspace is denoted $P_{\ge \lambda}(G)$.
For a distribution $D$, we let $\cp(D)$ denote the collision probability of $D$ (the probability that two independent samples from $D$ are identical).

Our main theorem of this section is the following:

\begin{theorem} \label{thm:sse-hyper} For every regular graph $G$,  $\lambda >0$ and even $q$,
\begin{enumerate}
\item \emph{(Norm bound implies expansion)} For all $\delta>0 ,\e>0$, $\Vert P_{\ge\lambda}(G) \Vert_{2 \to q} \leq \e/\delta^{(q-2)/2q}$ implies that
 $\bd_G(\delta) \geq 1-\lambda - \e^2$.

\item \emph{(Expansion implies norm bound)} There is a constant $c$ such that for all $\delta>0$, $\bd_G(\delta) > 1 - \lambda 2^{- cq}$ implies $\Vert
    P_{\ge\lambda}(G) \Vert_{2 \rightarrow q} \leq 2/\sqrt{\delta}$.
    Moreover there is an efficient algorithm such that given a function  $f\in V_{\geq \lambda}(G)$ such that
$\norm{f}_q > 2\norm{f}_2/\sqrt{\delta}$ finds a set $S$ of measure less than $\delta$ such that $\bd_G(S) \leq 1 - \lambda 2^{-cq}$.
\end{enumerate}
\end{theorem}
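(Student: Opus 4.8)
The plan is to prove the two implications separately: the first ("norm bound $\Rightarrow$ expansion") by a short spectral argument testing the $2\!\to\!q$ norm on indicators of small sets, and the second ("expansion $\Rightarrow$ norm bound", which subsumes the algorithmic statement) by the contrapositive, turning a function in $V_{\ge\lambda}(G)$ with a large $q$-to-$2$ norm ratio into a small non-expanding super-level set. Throughout I would put the uniform measure on $V$, so that $\norm{1_S}_2^2=\mu(S)$ and $\langle 1_S,G1_S\rangle=\mu(S)(1-\bd_G(S))$, and use freely that $G$ commutes with $P_{\ge\lambda}(G)$ and that the invariant subspaces "eigenvalue $\ge\lambda$" and "eigenvalue $<\lambda$" are orthogonal.

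For the first part, fix $S$ with $\sigma:=\mu(S)\le\delta$ and set $g:=P_{\ge\lambda}(G)1_S$. Splitting $1_S=g+(1_S-g)$ along the two invariant subspaces kills the cross terms, so
\[
\mu(S)(1-\bd_G(S))=\langle g,Gg\rangle+\langle 1_S-g,G(1_S-g)\rangle\le \norm g_2^2+\lambda\sigma ,
\]
using that the eigenvalues are at most $1$ on the first subspace and at most $\lambda$ on the second and that $\norm{1_S-g}_2\le\norm{1_S}_2$. Since $P_{\ge\lambda}(G)$ is an orthogonal projection, $\norm g_2^2=\langle 1_S,g\rangle$, and by \Holder and idempotence $\langle 1_S,g\rangle\le\norm{1_S}_{q/(q-1)}\norm g_q\le\sigma^{(q-1)/q}\norm{P_{\ge\lambda}(G)}_{2\to q}\norm g_2$, whence $\norm g_2^2\le\sigma^{2(q-1)/q}\norm{P_{\ge\lambda}(G)}_{2\to q}^2$. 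Dividing by $\sigma$ and invoking $\sigma\le\delta$ and the hypothesis gives $1-\bd_G(S)\le\sigma^{(q-2)/q}\norm{P_{\ge\lambda}(G)}_{2\to q}^2+\lambda\le\epsilon^2+\lambda$; minimizing over $S$ finishes this direction. This part is routine.

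For the second part I would take $f\in V_{\ge\lambda}(G)$ with $\norm f_q>2\norm f_2/\sqrt\delta$ (such $f$ exists whenever $\norm{P_{\ge\lambda}(G)}_{2\to q}>2/\sqrt\delta$, by applying the projection to a near-extremal vector), normalize $\norm f_2=1$, and note that replacing $f$ by $\abs f$ only increases the Rayleigh quotient, $\langle\abs f,G\abs f\rangle\ge\langle f,Gf\rangle\ge\lambda$, so we may treat $f\ge 0$. For thresholds $t$, set $S_t=\{x:f(x)\ge t\}$; Chebyshev gives $\mu(S_t)\le t^{-2}$, so every $S_t$ with $t\ge 1/\sqrt\delta$ has measure at most $\delta$, while $\E[f^q\mathbf 1_{f<t}]\le t^{q-2}$ shows that for $t=1/\sqrt\delta$ a $(1-2^{-q})$-fraction of $\norm f_q^q$ sits on $S_t$. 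The key structural input is that $f\in V_{\ge\lambda}(G)$ forces the probability distribution $\nu\propto f^2$ to be $G$-stable: writing $x\sim\nu$, $y$ a uniformly random neighbour of $x$, one checks $\E[f(y)^2/f(x)^2]=1$ and $\E[f(y)/f(x)]=\langle f,Gf\rangle\ge\lambda$, so by Paley--Zygmund one walk step keeps $f(y)\ge(\lambda/2)f(x)$ with probability $\Omega(\lambda^2)$; iterating this and combining it with a dyadic decomposition of $S_t$ into the $O(q)$ scales carrying the bulk of $\norm f_q^q$, I would extract a single scale $j^\ast$ such that the super-level set $S':=S_{2^{j^\ast}t}$ has $\mu(S')\le\delta$ and $\langle 1_{S'},G1_{S'}\rangle\ge\lambda 2^{-cq}\mu(S')$, i.e.\ $\bd_G(S')\le 1-\lambda 2^{-cq}$. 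For the algorithm: the good threshold is among the at most $\abs V$ distinct values of $f$, so one simply evaluates $\bd_G(S_t)$ for each such $t$ and outputs the best set. (For $q=4$ an equivalent packaging works with the collision probability $\cp(\nu)=\norm f_4^4/\abs V$, relating "$\nu$ does not spread under $G$" directly to non-expansion of its effective support.)

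The main obstacle is exactly that last step of the second part: converting "$\abs f$ has a large Rayleigh quotient under $G$ and a heavy $q$-tail" into "some genuine super-level set of measure $\le\delta$ is non-expanding", while losing only a $2^{-O(q)}$ factor. This is where the dyadic analysis over scales, the Paley--Zygmund estimate, and the passage from $\nu$-weighted statements back to uniform-measure statements all have to be combined carefully; it is also where the $2^{-cq}$ factor enters, and why the argument is confined to finite even $q$ (the bound degenerates as $q\to\infty$, matching the failure of the statement for $q=\infty$, where a single heavy coordinate gives a fully expanding singleton). Everything else --- Chebyshev, \Holder, the spectral split in Part~1, and the layer-cake bookkeeping --- is standard.
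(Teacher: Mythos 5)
Your Part~1 is correct and is essentially the paper's own argument: the paper phrases the key estimate through the dual norm $\Vert P_{\ge\lambda}(G)\Vert_{q/(q-1)\to 2}$, while you get the same bound $\norm{P_{\ge\lambda}\Ind_S}_2^2\le \mu(S)^{2(q-1)/q}\Vert P_{\ge\lambda}\Vert_{2\to q}^2$ directly from H\"older plus idempotence of the projector; these are interchangeable.

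Part~2, however, has a genuine gap, and it is not merely that the step you flag as ``the main obstacle'' is delicate: the reduction you perform \emph{before} that step discards exactly the information needed to carry it out. After replacing $f$ by $\abs{f}$, the only properties you retain are $f\ge 0$, $\norm{f}_2=1$, $\norm{f}_q>2/\sqrt{\delta}$, and $\iprod{f,Gf}\ge\lambda$ (your identities $\E[f(y)^2/f(x)^2]=1$ and $\E[f(y)/f(x)]\ge\lambda$ follow from regularity and the Rayleigh quotient alone, and iterating one walk step does not recover anything further). These properties do \emph{not} imply that some super-level set of measure at most $\delta$ fails to expand. Concretely, let $G$ be an $n$-vertex graph in which every set of measure at most $\delta$ has expansion $1-o(1)$ (e.g.\ a random $d$-regular graph), and set $f=\mathbf{1}+a\cdot\Ind_{\{v_0\}}$ for a single vertex $v_0$ with $a=n^{0.4}$. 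Then $f\ge 0$, $\norm{f}_2^2=1+O(n^{-0.2})$, the Rayleigh quotient is $1-o(1)\ge\lambda$, and $\norm{f}_4^4\ge a^4/n=n^{0.6}$, so $\norm{f}_4/\norm{f}_2\gg 2/\sqrt{\delta}$ for, say, $\delta=n^{-0.1}$; yet the only nonempty super-level set of $f$ of measure at most $\delta$ is $\{v_0\}$, whose expansion is $1$. (Your own closing remark about ``a single heavy coordinate'' for $q=\infty$ already contains this obstruction; superposing the heavy coordinate on a benign spread-out part that carries the Rayleigh quotient makes it lethal for every finite $q$ as well.) Of course this $f$ is not in $V_{\ge\lambda}(G)$ --- $\Ind_{\{v_0\}}$ lives almost entirely in the low eigenspaces --- which is consistent with the theorem, but your proof cannot tell the difference. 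The paper's argument uses membership in $V_{\ge\lambda}(G)$ essentially: it writes $f=Gg$ with $g\in V_{\ge\lambda}(G)$ and $\norm{g}_2\le\norm{f}_2/\lambda$, and shows (Lemma~\ref{lem:ssetonorm}) that if every level band $U_i=\{x: f(x)\in[c^i/\sqrt{\delta},c^{i+1}/\sqrt{\delta})\}$ of measure at most $\delta$ had small collision probability $\cp(G(U_i))\le 1/(e\abs{U_i})$, then $\norm{g}_q$ would exceed $(10/\lambda)\norm{f}_q$, contradicting extremality; the non-expanding set is then extracted from the offending band via the local Cheeger bound applied to $G\Ind_{U_i}$, not as a super-level set of $\abs{f}$. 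Your algorithmic claim (threshold over the values of $f$) inherits the same gap.
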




\begin{corollary} \label{thm:ssetohyper} If there is a polynomial-time computable relaxation $\cR$ yielding good approximation for the $2\to q$, 
then the \emph{Small-Set Expansion Hypothesis} of \cite{RaghavendraS10} is false.
\end{corollary}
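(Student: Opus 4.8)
The plan is to use \pref{thm:sse-hyper} to convert a hypothetical polynomial-time relaxation $\cR$ that approximates the $2\to q$ norm of projectors into a polynomial-time algorithm for the gap form of small-set expansion that the Small-Set Expansion Hypothesis posits to be intractable, thereby refuting the hypothesis. Given a regular graph $G$ on $n$ vertices, the algorithm fixes a constant eigenvalue threshold, say $\lambda=\tfrac14$, computes by an eigendecomposition the projector $P_{\ge\lambda}(G)$ onto the span of the eigenfunctions of $G$'s random-walk matrix with eigenvalue at least $\lambda$, runs $\cR$ on $P_{\ge\lambda}(G)$, and thresholds its output. Everything reduces to showing that $\norm{P_{\ge\lambda}(G)}_{2\to q}$ is $O(1)$ on the \textbf{No} instances and $\omega(1)$ on the \textbf{Yes} instances.

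For the \textbf{No} case --- every subset of measure at most a fixed constant $\delta_0$ has expansion at least $1-\eta$ --- part~2 of \pref{thm:sse-hyper} applies as soon as $\eta$ is a small enough constant that $1-\eta>1-\lambda2^{-cq}$ (a condition on $\eta$ alone, since $q$ and $\lambda$ are fixed), yielding $\norm{P_{\ge\lambda}(G)}_{2\to q}\le 2/\sqrt{\delta_0}=O(1)$. For the \textbf{Yes} case --- there is a set $S$ with $\mu(S)\le\delta$ and $\bd_G(S)\le\eta$ --- I would invoke the contrapositive of part~1 of \pref{thm:sse-hyper}: taking its free parameter to be $1/\sqrt2$ and $\eta<1-\lambda-\tfrac12=\tfrac14$, the bound $\bd_G(\delta)\le\eta<1-\lambda-(1/\sqrt2)^2$ forces $\norm{P_{\ge\lambda}(G)}_{2\to q}>(1/\sqrt2)\,\delta^{-(q-2)/(2q)}=\Omega(\delta^{-(1/2-1/q)})$. (One can also see this by hand, without part~1: the normalized indicator $1_S$ has Rayleigh quotient at least $1-\eta$ with respect to $G$, so $P_{\ge\lambda}1_S$ retains a $1-O(\eta)$ fraction of its $2$-norm, and then \pref{lem:trunc} --- used with $q$ in place of $4$ --- gives $\norm{P_{\ge\lambda}1_S}_q^q=\Omega(\mu(S))=\Omega(\norm{1_S}_q^q)$, hence $\norm{P_{\ge\lambda}(G)}_{2\to q}\ge\norm{P_{\ge\lambda}1_S}_q/\norm{1_S}_2=\Omega(\mu(S)^{1/q-1/2})\ge\Omega(\delta^{-(1/2-1/q)})$.)

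Finally I would combine the two estimates. Since $\tfrac12-\tfrac1q>0$, for $\delta$ small enough the \textbf{Yes}-case lower bound $\Omega(\delta^{-(1/2-1/q)})$ beats $C$ times the \textbf{No}-case upper bound $O(1)$, where $C\ge1$ is the approximation factor of $\cR$; so comparing $\cR(P_{\ge\lambda}(G))$ to the appropriate threshold decides the promise problem in polynomial time, contradicting its hardness under the Small-Set Expansion Hypothesis. The one point that needs care --- and the main obstacle --- is the scale asymmetry: this separation genuinely requires the \textbf{No} instances to be small-set expanders down to a \emph{constant} measure $\delta_0$, while the \textbf{Yes} instances need only contain a non-expanding set at the far smaller scale $\delta=\delta(\eta)\to0$ (if the two cases lived at the same scale $\delta$, the bounds $\Omega(\delta^{-(1/2-1/q)})$ and $2\delta^{-1/2}$ would fail to separate, since $\delta^{-(1/2-1/q)}<\delta^{-1/2}$). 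This asymmetric gap problem is equivalent to the basic Small-Set Expansion Hypothesis of \cite{RaghavendraS10} (see also \cite{RaghavendraST12}), and it is precisely this equivalence, together with the freedom to take $\eta$ (and hence $\delta$) arbitrarily small, that makes the reduction go through for any fixed approximation factor $C$.
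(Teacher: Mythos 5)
Your argument is correct and takes essentially the same route as the paper: both invoke the \cite{RaghavendraST12} equivalence to pass to a gap problem whose \textbf{No} instances expand at scales much larger than the \textbf{Yes}-case scale $\delta$, and then apply the contrapositive of part~1 of \pref{thm:sse-hyper} in the \textbf{Yes} case and part~2 in the \textbf{No} case to separate the two norm values by a super-constant factor. The only (immaterial) difference is the scale at which the \textbf{No}-case expansion guarantee is used --- the paper takes the intermediate scale $\delta^{0.4}$, giving bounds $\delta^{-1/4}/10$ versus $2\delta^{-0.2}$, while you take a constant scale $\delta_0$, giving $\Omega(\delta^{-(1/2-1/q)})$ versus $O(1)$; either separation suffices against any constant-factor approximation.
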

\begin{proof}
Using~\cite{RaghavendraST12}, to refute the small-set expansion hypothesis it is enough to come up with an efficient algorithm that given an input graph $G$ and sufficiently small $\delta>0$,
can distinguish between the \emph{Yes} case: $\bd_G(\delta) < 0.1$ and the \emph{No} case $\bd_G(\delta') > 1-2^{-c\log(1/\delta')}$ for any $\delta'\geq \delta$ and some constant $c$.
In particular for all $\eta>0$ and constant $d$, if $\delta$ is small enough then in the \emph{No} case $\bd_G(\delta^{0.4}) > 1-\eta$.
Using Theorem~\ref{thm:sse-hyper}, in the \emph{Yes} case we know $\tfnorm{V_{1/2}(G)} \geq 1/(10\delta^{1/4})$,
while in the \emph{No} case, if we choose $\eta$ to be smaller then $\eta(1/2)$ in the Theorem,
then we know that $\tfnorm{V_{1/2}(G)} \leq 2/\sqrt{\delta^{0.2}}$.
Clearly, if we have a good approximation for the $2\to 4$ norm then, for sufficiently small $\delta$ we can distinguish between these two cases.
\end{proof}

The first (easier) part of Theorem~\ref{thm:sse-hyper} is proven in Section~\ref{app:hyper-imp-sse}. The second part will follow from the following lemma:

\begin{lemma}\label{lem:ssetonorm} Set $e = e(\lambda, q) := 2^{c q} /\lambda$, with a constant $c \leq 100$.
Then for every $\lambda>0$ and $1 \geq \delta \geq 0$, if $G$ is a graph that satisfies
\begin{equation}
\cp(G(S)) \leq 1/(e|S|) \label{eq:cp-expansion}
\end{equation}
 for all $S$ with $\mu(S)\leq \delta$,
then $\norm{f}_q \leq 2\norm{f}_2/\sqrt{\delta}$ for all $f\in V_{\geq \lambda}(G)$.
Moreover, there is an efficient algorithm that given a function $f\in V_{\geq \lambda}(G)$ such that
$\norm{f}_q > 2\norm{f}_2/\sqrt{\delta}$ finds a set $S$ that violates (\ref{eq:cp-expansion}).
\end{lemma}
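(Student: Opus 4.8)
The plan is to build the non-expanding set $S$ directly out of the level sets of $f$. I would start with two reductions. Since the statement is homogeneous in $f$, assume $\norm f_2=1$; and since $G$ has non-negative entries, $\iprod{\abs f,G\abs f}\ge \iprod{f,Gf}\ge\lambda$ (the last step because $f\in V_{\ge\lambda}(G)$), while $\norm{\abs f}_q=\norm f_q$ and $\norm{\abs f}_2=1$, so we may also assume $f\ge 0$. Next I would reformulate the goal: unwinding the definition of $\cp$, for any set $S$ one has $\cp(G(S)) > 1/(e\card S)$ if and only if $\norm{G\mathbf 1_S}_2^2 > \tfrac1e\,\mu(S)=\tfrac1e\norm{\mathbf 1_S}_2^2$ (all norms being expectation norms) --- i.e.\ we must exhibit a set of measure at most $\delta$ that the one-step walk operator shrinks by a factor less than $\sqrt e$ in $L_2$. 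Moreover, Cauchy--Schwarz gives $\norm{G\mathbf 1_S}_2^2\ge \iprod{\mathbf 1_S,G\mathbf 1_S}^2/\norm{\mathbf 1_S}_2^2$, so it suffices to find $S$ with $\mu(S)\le\delta$ and $\iprod{\mathbf 1_S,G\mathbf 1_S} > \tfrac1{\sqrt e}\,\mu(S)$; equivalently, a small set of expansion $\bd_G(S) < 1-1/\sqrt e$, i.e.\ one in which the walk-density is at least $1/\sqrt e$.

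To produce such an $S$, I would look at the level sets $S_T=\Set{\elem\in\univ : f(\elem)\ge T}$. Markov's inequality gives $\mu(S_T)\le \norm f_2^2/T^2=1/T^2$, so every $S_T$ with $T\ge 1/\sqrt\delta$ already has measure at most $\delta$. Now use the layer-cake identity $f=\int_0^\infty \mathbf 1_{S_T}\,dT$, so that $\iprod{f,Gf}=\int_0^\infty\!\int_0^\infty\iprod{\mathbf 1_{S_s},G\mathbf 1_{S_t}}\,ds\,dt\ge\lambda$, and bound each term using non-negativity of $G$ and nestedness of the $S_T$, namely $\iprod{\mathbf 1_{S_s},G\mathbf 1_{S_t}}\le\min\bigl\{\iprod{\mathbf 1_{S_{\min(s,t)}},G\mathbf 1_{S_{\min(s,t)}}},\ \mu(S_{\max(s,t)})\bigr\}$, together with the hypothesis $\pref{eq:cp-expansion}$ applied to the sets $S_{\min(s,t)}$ once their measure drops below $\delta$. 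This should reduce matters to a quantitative comparison showing that the contribution to $\iprod{f,Gf}$ coming from scales $T\lesssim 1/\sqrt\delta$ is small next to $\lambda$ --- and this is where the assumption $\norm f_q>2/\sqrt\delta$ enters, since it forces $\norm f_q^q=q\int_0^\infty T^{q-1}\mu(S_T)\,dT$ to be concentrated on the dyadic scales $T\gtrsim 1/\sqrt\delta$, which pins down the low-scale mass of $f$. Combining the pieces, some dyadic level set $S_{2^j}$ with $\mu(S_{2^j})\le\delta$ must satisfy $\iprod{\mathbf 1_{S_{2^j}},G\mathbf 1_{S_{2^j}}} > \tfrac1{\sqrt e}\,\mu(S_{2^j})$ once $c$ in $e=2^{cq}/\lambda$ is a large enough absolute constant (replacing $f$ on a single dyadic band by the indicator of that band loses only an $O(1)$ factor, absorbed into $c$), and this is the desired certificate. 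The algorithmic part then comes for free: there are at most $\card\univ$ distinct level sets $S_T$, each of $\mu(S_T)$ and $\cp(G(S_T))$ is computable in polynomial time, and the argument guarantees that at least one dyadic level set violates $\pref{eq:cp-expansion}$; the algorithm outputs it.

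The hard part will be the quantitative bookkeeping in the second paragraph. A crude estimate --- replacing the double integral by $\bigl(\int_0^\infty\norm{G\mathbf 1_{S_T}}_2\,dT\bigr)^2$ via the triangle/\Holder\ inequality at the level of the walk operator --- loses a factor that grows with the number of scales involved, which a priori can be as large as $\Theta(\log\card\univ)$, whereas the lemma permits only a loss of $2^{O(q)}$ with no dependence on $\card\univ$. Getting the clean bound requires using the $\min$-type estimate above rather than Cauchy--Schwarz at the walk operator, exploiting the nesting of the $S_T$ to control the cross-scale terms, and using the $q$-norm hypothesis to truncate to an interval of scales on which the mass of $f$ is genuinely concentrated; carrying this out while tracking constants (so that $c<100$ suffices) is the technical core of the proof.
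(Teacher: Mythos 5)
Your plan has a fatal gap at the very first step: after passing to $\abs{f}$ you retain only the two facts $\iprod{f,Gf}\ge\lambda\norm{f}_2^2$ and $\norm{f}_q>2\norm{f}_2/\sqrt\delta$, discarding the membership $f\in V_{\ge\lambda}(G)$. The resulting statement ("every nonnegative $f$ with $\norm f_2=1$, $\iprod{f,Gf}\ge\lambda$ and $\norm f_q>2/\sqrt\delta$ yields a violating set") is false. Take $G$ to be the complete graph and $f=a\mathbf 1+\text{spike}$, i.e.\ $f(\elem)=a\approx 1$ everywhere except on a single vertex where $f=n^{0.4}$, normalized so $\norm f_2=1$. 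Then $f\ge 0$, $\iprod{f,Gf}\approx a^2\approx 1\ge\lambda$, and $\norm f_q^q\ge n^{0.4q-1}$ is arbitrarily large; yet the complete graph satisfies $\cp(G(S))\approx 1/n\le 1/(e\abs S)$ for all $S$ with $\mu(S)\lesssim 1/e$, so no violating set exists. (The lemma itself is safe because here $V_{\ge\lambda}(G)$ consists only of constants.) So the global correlation $\iprod{f,Gf}\ge\lambda$ is far too weak a surrogate for $f\in V_{\ge\lambda}(G)$: the correlation can be carried entirely by the low part of $f$ while the upper level sets, which carry the $q$-norm, have no interaction with $G$ at all. Even granting the reduction, your own worry about the cross-scale bookkeeping is also substantive: the mass of $\iprod{f,Gf}$ can be spread over pairs of widely separated scales, and nothing forces any single dyadic level set to have internal walk-density $1/\sqrt e$.

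The paper uses the subspace membership in a much stronger, and different, way. It takes $f$ to be the \emph{maximizer} of $\norm{f}_q/\norm{f}_2$ over $V_{\ge\lambda}(G)$ and writes $f=Gg$ with $g=\sum_i(\alpha_i/\lambda_i)\chi_i\in V_{\ge\lambda}(G)$, so that $\norm g_2\le\norm f_2/\lambda$ and, pointwise, $f(x)=\E_{y\sim x}g(y)$. Large values of $f$ on an upper level set $U_i\subseteq U$ (with $\mu(U)\le\delta$) then force $g$ to be large on the support of $G(U_i)$, and the collision-probability hypothesis guarantees that support is spread over a set $e$ times larger; summing over a carefully pruned family of level sets shows $\norm g_q\ge 10\norm f_q/\lambda$, contradicting the maximality of $f$. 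The level sets of $f$ do appear, but the amplification is between $f$ and its preimage $g=G^{-1}f$ inside the subspace, not between $f$ and an indicator of one of its level sets. If you want to salvage your approach you would need some mechanism that uses $f\in V_{\ge\lambda}(G)$ beyond the single scalar $\iprod{f,Gf}$; as written the argument cannot be repaired.
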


\paragraph{Proving the second part of Theorem~\ref{thm:sse-hyper} from Lemma~\ref{lem:ssetonorm}}
We use the variant of the local Cheeger bound obtained in~\cite[Theorem 2.1]{Steurer10c},
stating that if $\bd_{G}(\delta) \geq 1 - \eta$ then for every $f\in \L2(V)$ satisfying $\norm{f}_1^2 \leq \delta \norm{f}_2^2$, $\norm{Gf}_2^2 \leq
c\sqrt{\eta}\norm{f}_2^2$.
The proof follows by noting that for every set $S$, if $f$ is the characteristic function of $S$ then $\norm{f}_1 = \norm{f}_2^2 =
\mu(S)$, and $\cp(G(S)) = \norm{Gf}_2^2 / (\mu (S) |S|)$.
Because this local Cheeger bound is algorithmic (and transforms a function with large $L_2/L_1$ ratio into a set by simply using a threshold cut),
this part is algorithmic as well.
\qed

\begin{proof}[Proof of Lemma~\ref{lem:ssetonorm}]  Fix $\lambda>0$.
We assume that the graph satisfies the condition of the Lemma with $e = 2^{c q} /\lambda$, for a constant $c$ that we'll set later.
Let $G=(V,E)$ be such a graph, and $f$ be function in $V_{\geq \lambda}(G)$ with $\norm{f}_2=1$ that maximizes $\norm{f}_q$.
We write $f = \sum_{i=1}^m \alpha_i \chi_i$ where $\chi_1,\ldots,\chi_m$ denote the eigenfunctions of $G$ with values $\lambda_1,\ldots,\lambda_m$ that are at least $\lambda$.
Assume towards a contradiction that $\norm{f}_q>2/\sqrt{\delta}$. We'll prove that  $g= \sum_{i=1}^m (\alpha_i/\lambda_i)\chi_i$ satisfies $\norm{g}_q \geq 10\norm{f}_q/\lambda$.
This is a contradiction since (using $\lambda_i \in [\lambda,1]$) $\norm{g}_2 \leq \norm{f}_2/\lambda$, and we assumed $f$ is a function in $V_{ \geq \lambda}(G)$ with a maximal ratio of $\norm{f}_q/\norm{f}_2$.
(To prove the ``moreover'' part, where we don't assume $f$ is the maximal function, we repeat this process with $g$ until we get stuck.)

Let $U \subseteq V$ be the set of vertices such that $|f(x)| \geq 1/\sqrt{\delta}$ for all $x\in U$. Using Markov and the fact that $\E_{x\in V}[ f(x)^2 ] =
1$, we know that $\mu(U)=|U|/|V| \leq \delta$, meaning that under our assumptions any subset $S\subseteq U$ satisfies $\cp(G(S))\leq 1/(e|S|)$. On the other
hand, because $\norm{f}_q^q \geq 2^q/\delta^{q/2}$, we know that $U$ contributes at least half of the term $\norm{f}_q^q = \E_{x\in V} f(x)^q$. That is, if we
define $\alpha$ to be $\mu(U)\E_{x\in U} f(x)^q$ then $\alpha \geq \norm{f}_q^q/2$. We'll prove the lemma by showing that $\norm{g}_q^q \geq 10\alpha/\lambda$.

Let $c$ be a sufficiently large constant ($c=100$ will do). We define $U_i$ to be the set $\{x \in U : f(x) \in [c^i/\sqrt{\delta},c^{i+1}/\sqrt{\delta}) \}$,
and let $I$ be the maximal $i$ such that $U_i$ is non-empty. Thus, the sets $U_0,\ldots,U_I$ form a partition of $U$ (where some of these sets may be empty).
We let $\alpha_i$ be the contribution of $U_i$ to $\alpha$. That is, $\alpha_i = \mu_i\E_{x\in U_i} f(x)^q$, where $\mu_i=\mu(U_i)$. Note that $\alpha =
\alpha_0 + \cdots + \alpha_I$. We'll show that there are some indices $i_1,\ldots,i_J$ such that:

\begin{description}

\item[(i)] $\alpha_{i_1} + \cdots + \alpha_{i_J} \geq \alpha/(2c^{10})$.

\item[(ii)] For all $j\in [J]$, there is a nonnegative function $g_j:V\to\R$ such that $\E_{x\in V}g_j(x)^q \geq e\alpha_{i_j}/(10c^2)^{q/2}$.

\item[(iii)] For every $x\in V$, $g_1(x)  + \cdots + g_J(x) \leq |g(x)|$.
\end{description}

Showing these will complete the proof, since it is easy to see that for two nonnegative functions and even $q$, $g',g''$, $\E(g'(x)+g''(x))^q \geq \E g'(x)^q
+ \E g''(x)^q$, and hence \textbf{(ii)} and \textbf{(iii)} imply that
\begin{equation}
\norm{g}_4^4= \E g(x)^4 \geq (e/(10c^2)^{q/2}) \sum_j \alpha_{i_j} \;. \label{eq:ssetonorm-finalbound}
\end{equation}
Using \textbf{(i)} we conclude that for $e \geq (10c)^{q}/\lambda$, the right-hand side of (\ref{eq:ssetonorm-finalbound}) will be larger than
$10\alpha/\lambda$.

We find the indices $i_1,\ldots,i_J$ iteratively. We let $\cI$ be initially the set $\{0..I\}$ of all indices. For $j=1,2,...$ we do the following as long as
$\cI$ is not empty:
\begin{enumerate}
\item Let $i_j$ be the largest index in $\cI$.

\item Remove from $\cI$ every index $i$ such that $\alpha_i \leq c^{10}\alpha_{i_j}/2^{i-i_j}$.

\end{enumerate}

We let $J$ denote the step when we stop. Note that our indices $i_1,\ldots,i_J$ are sorted in descending order. For every step $j$, the total of the
$\alpha_i$'s for all indices we removed is less than $c^{10}\alpha_{i_j}$ and hence we satisfy \textbf{(i)}. The crux of our argument will be to show
\textbf{(ii)} and \textbf{(iii)}. They will follow from the following claim:

\begin{claim}\label{clm:sse-norm}  Let $S\subseteq V$ and $\beta>0$ be such that $|S| \leq \delta$ and $|f(x)| \geq \beta$ for all $x\in S$. Then there is a set $T$ of size at least $e|S|$ such that $\E_{x\in T} g(x)^2 \geq \beta^2/4$.
\end{claim}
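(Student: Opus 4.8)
The plan is to exploit the fact that $g$ is by construction a preimage of $f$ under the random-walk operator: since $G\chi_i=\lambda_i\chi_i$ and $\lambda_i\ge\lambda>0$ for every eigenfunction appearing in $f=\sum_i\alpha_i\chi_i$, we get $Gg=\sum_i(\alpha_i/\lambda_i)\lambda_i\chi_i=f$. Thus $|f(x)|=|(Gg)(x)|$, and the hypothesis $|f(x)|\ge\beta$ on $S$ says precisely that the one-step averages of $g$ are large throughout $S$. I will show that this forces $G\Ind_S$ to have large correlation with $|g|$, while the collision-probability hypothesis forces $G\Ind_S$ to have small $\ell_2$ mass; combining these via Cauchy--Schwarz, after truncating $g$ at level $\beta/2$, will produce $T$.

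First I would translate the expansion hypothesis into an $\ell_2$ bound. Writing $\psi=G\Ind_S$ in the uniform-measure normalization, one has $\psi\ge 0$, and $\norm{\psi}_1=\E_y\psi(y)=\iprod{\Ind_S,G\mathbf{1}}=\mu(S)$ because $G$ is doubly stochastic. Moreover $\Pr[G(S)=y]=\psi(y)/|S|$, so $\cp(G(S))=\norm{\psi}_2^2/(\mu(S)|S|)$; the assumption $\cp(G(S))\le 1/(e|S|)$ then gives $\norm{\psi}_2^2\le\mu(S)/e$. The same assumption, together with the trivial bound $\cp(G(S))\ge 1/|V|$, also forces $e|S|\le|V|$, a fact I will use at the end.

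Next, for the correlation bound I would deal with signs. Put $\sigma(x)=\sign f(x)$ and $h=\sigma\Ind_S$. Then $\iprod{h,f}=\E_x h(x)f(x)=\tfrac{1}{|V|}\sum_{x\in S}|f(x)|\ge\mu(S)\beta$, and by symmetry of $G$ and $f=Gg$ we have $\iprod{h,f}=\iprod{Gh,g}$; since $|h|\le\Ind_S$ it follows that $|Gh|\le G\Ind_S=\psi$ pointwise, so $\iprod{\psi,|g|}\ge\iprod{Gh,g}\ge\mu(S)\beta$. Now truncate: the portion of this inner product contributed by $\{y:|g(y)|<\beta/2\}$ is at most $(\beta/2)\norm{\psi}_1=\mu(S)\beta/2$, so with $B:=\{y:|g(y)|\ge\beta/2\}$ we get $\iprod{\psi\Ind_B,|g|}\ge\mu(S)\beta/2$. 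Cauchy--Schwarz and $\norm{\psi}_2^2\le\mu(S)/e$ yield $\mu(S)\beta/2\le\norm{\psi}_2\cdot(\E_y g(y)^2\Ind_B(y))^{1/2}\le(\mu(S)/e)^{1/2}(\E_y g(y)^2\Ind_B(y))^{1/2}$, i.e. $\sum_{y\in B}g(y)^2\ge e|S|\beta^2/4$.

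Finally I would extract $T$ with both required properties at once. The difficulty is that Cauchy--Schwarz bounded only the product $|B|\cdot\E_{y\in B}g(y)^2$ from below, not the two quantities separately; I resolve this by letting $T$ be the set of the $e|S|$ vertices of largest $|g|$ (legitimate since $e|S|\le|V|$). If $|B|\le e|S|$ then $B\subseteq T$, so $\sum_{y\in T}g(y)^2\ge\sum_{y\in B}g(y)^2\ge e|S|\beta^2/4=|T|\beta^2/4$; if $|B|>e|S|$ then $T\subseteq B$, so $g(y)^2\ge\beta^2/4$ for every $y\in T$. Either way $\E_{y\in T}g(y)^2\ge\beta^2/4$ and $|T|=e|S|\ge e|S|$, which is the claim. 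The main obstacle is exactly this last step — turning a product bound into a set-size bound and an average bound simultaneously, via the monotone-rearrangement choice of $T$; the sign bookkeeping through $h=\sigma\Ind_S$ with the domination $|Gh|\le G\Ind_S$ is a minor but genuinely necessary ingredient, since $g$ and $f$ need not be sign-compatible.
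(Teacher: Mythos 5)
Your proof is correct. The opening moves coincide with the paper's: you use $f=Gg$ to convert the hypothesis $|f|\ge\beta$ on $S$ into the statement that $|g|$ has large average under the one-step distribution $G(S)$ (the paper writes this as $\E|g(D)|\ge\beta$ for $D=G(S)$, you as $\iprod{\psi,|g|}\ge\mu(S)\beta$ with $\psi=G\Ind_S$ — the same inequality after the same appeal to the triangle inequality for the inner average), and you both then play this off against the collision-probability bound via Cauchy--Schwarz. Where you genuinely diverge is in how the set $T$ is extracted. The paper isolates this in a standalone lemma (Lemma~\ref{lem:sse-norm}) about an arbitrary distribution with $\cp(D)\le 1/N$, sorts the atoms by \emph{probability}, and splits into two cases according to whether the mass beyond the top $N$ atoms carries half of $\E|g(D)|$; the first case needs an extra argument decomposing a max-probability-$1/N$ distribution into a convex combination of flat distributions on $N$-sets. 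You instead truncate $g$ by \emph{value} at $\beta/2$, absorb the low part using $\norm{\psi}_1=\mu(S)$, apply Cauchy--Schwarz once to get $\sum_{y\in B}g(y)^2\ge e|S|\beta^2/4$, and then take $T$ to be the $e|S|$ vertices of largest $|g|$, with the clean dichotomy $B\subseteq T$ or $T\subseteq B$. Your route avoids the convex-decomposition step and is arguably more self-contained; the paper's buys a reusable general lemma about distributions with small collision probability. Both lose the same factor of $4$ from a ``half the mass'' split, and your observation that $e|S|\le|V|$ (needed so that $T$ exists) follows from $\cp(G(S))\ge 1/|V|$ is a small point the paper leaves implicit.
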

The claim will follow from the following lemma:

\begin{lemma} \label{lem:sse-norm} Let $D$ be a distribution with $\cp(D) \leq 1/N$ and $g$ be some function. Then there is a set $T$ of size $N$ such that $\E_{x\in T} g(x)^2 \geq (\E  g(D))^2/4$.
\end{lemma}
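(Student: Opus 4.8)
The plan is to prove the lemma by splitting the distribution $D$ into its ``heavy'' and ``light'' atoms and handling each part with a Cauchy--Schwarz estimate. First I would reduce to the case $g\ge 0$: the conclusion $\E_{x\in T}g(x)^2\ge(\E g(D))^2/4$ has a left-hand side depending only on $|g|$, and $(\E g(D))^2\le(\E|g|(D))^2$, so it suffices to treat $|g|$. Write $\mu=\E_{x\sim D}g(x)=\sum_x D(x)g(x)$ and set $B=\{x:D(x)>1/N\}$. Since $\cp(D)=\sum_x D(x)^2\ge\sum_{x\in B}D(x)^2>|B|/N^2$ while $\cp(D)\le 1/N$, we get $|B|<N$; moreover $\cp(D)\le 1/N$ forces $D$, hence the ambient universe, to have support of size at least $N$ (by Cauchy--Schwarz, $1=(\sum_x D(x))^2\le|\mathrm{supp}(D)|\cdot\cp(D)$), so sets of size $N$ exist. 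Now split $\mu=\mu_B+\mu_{\bar B}$ with $\mu_B=\sum_{x\in B}D(x)g(x)$ and $\mu_{\bar B}=\sum_{x\notin B}D(x)g(x)$; since $g,D\ge 0$, at least one of $\mu_B,\mu_{\bar B}$ is $\ge\mu/2$, and I would choose $T$ according to which case holds.

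Then I would treat the two cases. If $\mu_B\ge\mu/2$, take any $T\supseteq B$ with $|T|=N$ (possible since $|B|<N$). By Cauchy--Schwarz, $\mu_B\le\big(\sum_{x\in B}D(x)^2\big)^{1/2}\big(\sum_{x\in B}g(x)^2\big)^{1/2}\le N^{-1/2}\big(\sum_{x\in B}g(x)^2\big)^{1/2}$, so $\sum_{x\in T}g(x)^2\ge\sum_{x\in B}g(x)^2\ge N\mu_B^2\ge N\mu^2/4$, i.e.\ $\E_{x\in T}g(x)^2\ge\mu^2/4$. If instead $\mu_{\bar B}\ge\mu/2$, every atom outside $B$ has mass $\le 1/N$; I would list the points outside $B$ in nonincreasing order of $g$-value and let $T$ consist of the first $N$ of them (padding with arbitrary points of the universe if fewer than $N$ lie outside $B$). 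Since $\{D(x)\}_{x\notin B}$ is a nonnegative weighting bounded by $1/N$ with total mass $\le 1$, a fractional-knapsack bound gives $\mu_{\bar B}\le\tfrac1N\sum_{x\in T}g(x)$, and then Cauchy--Schwarz gives $\sum_{x\in T}g(x)^2\ge\tfrac1N\big(\sum_{x\in T}g(x)\big)^2\ge N\mu_{\bar B}^2\ge N\mu^2/4$, again yielding $\E_{x\in T}g(x)^2\ge\mu^2/4$.

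The argument is essentially routine, so I do not expect a genuine obstacle; the two points that need a little care are the counting estimate $|B|<N$ (which is exactly what lets all heavy atoms be absorbed into a set of size $N$ in the first case) and the fractional-knapsack step in the second case, which is precisely where the uniform cap $D(x)\le 1/N$ on the light atoms is used, together with the degenerate possibility that fewer than $N$ points lie outside $B$, handled by padding $T$ with points of nonnegative $g$-value. If $N$ is not an integer in the intended application one replaces it by $\lceil N\rceil$ throughout, which changes the constant $4$ only negligibly; this is the same rounding already implicit in Claim~\ref{clm:sse-norm}, which asks only for a set of size at least $e|S|$.
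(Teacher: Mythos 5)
Your proof is correct and follows the same two-case strategy as the paper's: at least half of $\E g(D)$ comes either from atoms whose squared masses are controlled by $\cp(D)\le 1/N$ (handled by Cauchy--Schwarz) or from atoms of individual mass at most $1/N$. The only differences are cosmetic: the paper splits at the $N$-th largest probability and handles the light case by decomposing a capped distribution into a convex combination of flat distributions over $N$-sets, whereas you split at the threshold $D(x)>1/N$ and handle the light case by an explicit greedy/fractional-knapsack selection of the top $N$ values of $g$; your version is a bit more constructive and more careful about degenerate cases (existence of $N$-sets, padding), but the substance is the same.
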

\begin{proof} Identify the support of $D$ with the set $[M]$ for some $M$, we let $p_i$ denote the probability that $D$ outputs $i$, and sort the $p_i$'s such that $p_1\geq p_2 \cdots p_M$. We let $\beta'$ denote $\E g(D)$; that is, $\beta'=\sum_{i=1}^M p_ig(i)$. We separate to two cases. If $\sum_{i>N} p_ig(i) \geq \beta'/2$, we define the distribution $D'$ as follows: we set $\Pr [ D' = i]$ to be $p_i$ for $i>N$, and we let all $i\leq N$ be equiprobable (that is be output with probability $(\sum_{i=1}^N p_i)/N$). Clearly, $\E  |g(D')|  \geq \sum_{i>N} p_ig(i) \geq \beta'/2$, but on the other hand, since the maximum probability of any element in $D'$ is at most $1/N$, it can be expressed as a convex combination of flat distributions over sets of size $N$, implying that one of these sets $T$ satisfies $\E_{x\in T} |g(x)|  \geq \beta'/2$, and hence $\E_{x\in T} g(x)^2  \geq \beta'^2/4$.

The other case is that $\sum_{i=1}^N p_ig(i) \geq \beta'/2$. In this case we use Cauchy--Schwarz and argue that
\begin{equation}
\beta'^2/4 \leq \left(\sum_{i=1}^N p_i^2\right)\left(\sum_{i=1}^N g(i)^2 \right) \;. \label{eq:cs-sse-norm}
\end{equation}
But using our bound on the collision probability, the right-hand side of (\ref{eq:cs-sse-norm}) is upper bounded by  $\tfrac{1}{N}\sum_{i=1}^N g(i)^2 =
\E_{x\in[N]} g(x)^2$.
\end{proof}
\begin{proof}[Proof of Claim~\ref{clm:sse-norm} from Lemma~\ref{lem:sse-norm}] By construction $f=Gg$, and hence we know that for every $x$, $f(x)=\E_{y\sim x} g(y)$. This means that if we let $D$ be the distribution $G(S)$ then
\[
\E |g(D)| = \E_{x\in S} \E_{y\sim x} |g(y)|  \geq  \E_{x\in S} |\E_{y\sim x}  g(y) | = \E_{x\in S} |f(x)| \geq \beta \;.
\]
By the expansion property of $G$, $\cp(D) \leq 1/(e|S|)$ and thus by Lemma~\ref{lem:sse-norm} there is a set $T$ of size $e|S|$ satisfying  $\E_{x\in T} g(x)^2
\geq \beta^2/4$.
\end{proof}

We will construct the functions $g_1,\ldots,g_J$ by applying iteratively Claim~\ref{clm:sse-norm}. We do the following for $j=1,\ldots,J$:

\begin{enumerate}

\item Let $T_j$ be the set of size $e|U_{i_j}|$ that is obtained by applying Claim~\ref{clm:sse-norm} to the function $f$ and the set $U_{i_j}$. Note that
    $\E_{x \in T_j} g(x)^2 \geq \beta_{i_j}^2/4$, where we let $\beta_i = c^i/\sqrt{\delta}$ (and hence for every $x\in U_i$, $\beta_i \leq |f(x)| \leq
    c\beta_i$).

\item Let $g'_j$ be the function on input $x$ that outputs $\gamma\cdot |g(x)|$ if $x\in T_j$ and $0$ otherwise, where $\gamma \leq 1$ is a scaling factor
    that ensures that $\E_{x \in T_j} g'(x)^2 $  equals exactly $\beta_{i_j}^2/4$.

\item We define $g_j(x) = \max \{ 0 , g'_j(x) - \sum_{k<j} g_k(x) \}$.

\end{enumerate}

Note that the second step ensures that $g'_j(x) \leq |g(x)|$, while the third step ensures that $g_1(x) + \cdots + g_j(x) \leq g'_j(x)$ for all $j$, and in
particular $g_1(x) + \cdots + g_J(x) \leq |g(x)|$.  Hence the only thing left to prove is the following:

\begin{claim} $\E_{x\in V} g_j(x)^q \geq e\alpha_{i_j}/(10c)^{q/2}$
\end{claim}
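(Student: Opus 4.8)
The plan is to deduce the bound for $g_j$ from a corresponding bound for the auxiliary function $g'_j$, in two stages: first show that $\E_{x\in V}g'_j(x)^q$ is already comfortably larger than the target, and then argue that passing from $g'_j$ to $g_j$ through the subtraction $g_j=\max\{0,g'_j-\sum_{k<j}g_k\}$ costs only a bounded factor, so that the slack from the first stage survives.

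\emph{Stage 1 (the raw bound for $g'_j$).} By construction $g'_j=\gamma_j\,|g|\,\mathbb 1_{T_j}$, with $|T_j|=e|U_{i_j}|$ and $\gamma_j\le 1$ chosen so that $\E_{x\in T_j}g'_j(x)^2=\beta_{i_j}^2/4$; this is legitimate because Claim~\ref{clm:sse-norm}, applied to $f$ and $S=U_{i_j}$ (on which $|f|\ge\beta_{i_j}$), supplies a set with $\E_{x\in T_j}g(x)^2\ge\beta_{i_j}^2/4$, and thinning it to the $e|U_{i_j}|$ heaviest coordinates of $|g|$ only raises this average. Since $g'_j$ is supported on $T_j$ and $q\ge 2$, the power-mean inequality gives $\E_{x\in T_j}g'_j(x)^q\ge(\E_{x\in T_j}g'_j(x)^2)^{q/2}=\beta_{i_j}^q/2^q$, hence $\E_{x\in V}g'_j(x)^q=\mu(T_j)\,\E_{x\in T_j}g'_j(x)^q\ge e\,\mu_{i_j}\beta_{i_j}^q/2^q$, where $\mu_{i_j}=\mu(U_{i_j})$. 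Every $x\in U_{i_j}$ satisfies $\beta_{i_j}\le|f(x)|<c\beta_{i_j}$, so $\alpha_{i_j}=\mu_{i_j}\E_{x\in U_{i_j}}f(x)^q<\mu_{i_j}(c\beta_{i_j})^q$, i.e.\ $\mu_{i_j}\beta_{i_j}^q>\alpha_{i_j}/c^q$. Combining, $\E_{x\in V}g'_j(x)^q>e\,\alpha_{i_j}/(2c)^q=e\,\alpha_{i_j}/(4c^2)^{q/2}$, which exceeds the desired $e\,\alpha_{i_j}/(10c^2)^{q/2}$ by a factor $(5/2)^{q/2}$.

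\emph{Stage 2 (surviving the subtraction).} An easy induction on the construction shows $\sum_{k\le j}g_k(x)\le\max_{k\le j}g'_k(x)$ for every $x$; in particular $\sum_{k<j}g_k$ vanishes off $\bigcup_{k<j}T_k$, so $g_j=g'_j$ on $T_j\setminus\bigcup_{k<j}T_k$. It therefore suffices to choose $T_j$ essentially disjoint from the earlier sets while losing at most a constant factor in $\E_{x\in T_j}g^2$. This is possible because those sets are negligibly small: the selection rule removes at step $j{-}1$ every index $i$ with $\alpha_i\le c^{10}\alpha_{i_{j-1}}2^{\,i_{j-1}-i}$, forcing $\alpha_{i_j}>2c^{10}\alpha_{i_{j-1}}$ and hence $\alpha_{i_k}<\alpha_{i_j}/(2c^{10})^{\,j-k}$ for $k<j$; together with $\mu_{i_k}\le\alpha_{i_k}/\beta_{i_k}^q$, $\beta_{i_k}\ge\beta_{i_j}$ and $\mu_{i_j}\ge\alpha_{i_j}/(c\beta_{i_j})^q$ this makes $\sum_{k<j}\mu(T_k)=e\sum_{k<j}\mu_{i_k}$ a tiny fraction of $\mu(T_j)=e\mu_{i_j}$ (provided the constant in the selection rule is taken large enough relative to $q$, which is costless since $e=2^{cq}/\lambda$ dwarfs all the resulting $c^{O(q)}$ factors). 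Re-running the proof of Claim~\ref{clm:sse-norm} with the distribution $G(U_{i_j})$ restricted away from $\bigcup_{k<j}T_k$ — the forbidden set being far smaller than $N=e|U_{i_j}|$, so the collision-probability hypothesis of Lemma~\ref{lem:sse-norm} degrades only by a $(1-o(1))$ factor — then produces such a $T_j$, and the Stage 1 estimate applies to $g_j$ verbatim, giving $\E_{x\in V}g_j(x)^q\ge e\,\alpha_{i_j}/(Cc^2)^{q/2}$ for an absolute constant $C$; for $c$ a sufficiently large absolute constant this is the asserted $\E_{x\in V}g_j(x)^q\ge e\,\alpha_{i_j}/(10c^2)^{q/2}$, which is exactly property \textbf{(ii)} of the outer argument.

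\emph{The main obstacle.} The one genuinely delicate point is this disjointification in Stage 2: a priori the $q$-th moment of $g'_j$ could concentrate on the small-measure overlap $T_j\cap\bigcup_{k<j}T_k$, where $g_j$ is annihilated, so the subtraction cannot simply be ignored. The restriction of $G(U_{i_j})$ described above is the clean fix; alternatively one keeps the original $T_j$ and bounds directly the $L^q$-mass that $g'_j$ puts on the overlap, via a Markov tail estimate for $g'_j$ on $T_j$ against the geometric smallness of the $\mu_{i_k}$, paying a further constant factor (depending on $q$ but not on $c$) that is again absorbed into the denominator. Everything else — Stage 1, the inequality $\sum_{k\le j}g_k\le\max_{k\le j}g'_k$, and the arithmetic tying together $\mu_{i_j}$, $\beta_{i_j}$, $\alpha_{i_j}$ and the selection rule — is routine.
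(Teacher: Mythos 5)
Your Stage 1 is fine and is essentially the paper's own reduction (power mean on $T_j$ together with $\mu_{i_j}\beta_{i_j}^q\ge\alpha_{i_j}/c^q$). The gap is in Stage 2, exactly at the point you flag, and neither of your two fixes goes through. For the first fix: conditioning the distribution $G(U_{i_j})$ away from $F=\bigcup_{k<j}T_k$ indeed barely changes its collision probability, but Lemma~\ref{lem:sse-norm} also needs the hypothesis $\E\abs{g(D)}\ge\beta_{i_j}$, and $g$ is unbounded: all that is known about $F$ is a measure bound, so the $\abs{g}$-weighted mass of $D$ may be carried almost entirely by $F$, in which case the conditioned distribution violates the expectation hypothesis and no suitable $T_j$ disjoint from $F$ is guaranteed to exist. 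For the second fix: the only quantitative control on $g'_j$ is $\E_{x\in T_j}g'_j(x)^2=\beta_{i_j}^2/4$ plus the smallness of $\abs{T_j\cap F}/\abs{T_j}$; since $g'_j$ has no pointwise upper bound, essentially all of its $L^2$ (hence $L^q$) mass on $T_j$ may sit inside the overlap, so no Markov-type tail estimate based on the second moment can certify that the overlap carries only a bounded fraction of the $q$-mass. Thus both routes can fail on admissible inputs and the claim is left unproved. (A smaller issue: your chain $\alpha_{i_j}>2c^{10}\alpha_{i_{j-1}}$, hence $\alpha_{i_k}<\alpha_{i_j}/(2c^{10})^{j-k}$, takes the removal rule literally as printed; under that reading the justification of property \textbf{(i)} breaks, while under the rule actually used in the surrounding proof, $\alpha_i\le c^{10}\alpha_{i_j}/2^{\,i_j-i}$, the selected $\alpha$'s need not increase, and the smallness of $\sum_{k<j}\mu_{i_k}$ has to be extracted instead from the growth $\beta_{i_k}^q=c^{q(i_k-i_j)}\beta_{i_j}^q$.)

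The missing idea — and what the paper actually does — is to bound the subtrahend rather than avoid or excise the overlap. Since $0\le g_k\le g'_k$ pointwise, $g'_k$ is supported on $T_k$, and step 2 of the construction deliberately rescales $g'_k$ so that $\E_{x\in T_k}g'_k(x)^2$ equals exactly $\beta_{i_k}^2/4$, one has the upper bound $\E_{x\in T_j}g'_k(x)^2\le(\abs{T_k}/\abs{T_j})\,\beta_{i_k}^2/4$; the survival condition $\alpha_{i_j}\ge c^{10}\alpha_{i_k}/2^{\,i_k-i_j}$ combined with $\beta_{i_k}=c^{\,i_k-i_j}\beta_{i_j}$ makes this geometrically small in $i_k-i_j$ relative to $\beta_{i_j}^2/4$. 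The triangle inequality in $L^2(T_j)$ then gives $\bigl(\E_{x\in T_j}g_j(x)^2\bigr)^{1/2}\ge\bigl(\E_{x\in T_j}g'_j(x)^2\bigr)^{1/2}-\sum_{k<j}\bigl(\E_{x\in T_j}g'_k(x)^2\bigr)^{1/2}\ge 0.8\,\beta_{i_j}/2$, after which your Stage 1 computation applied to $g_j$ finishes the claim. This argument is insensitive to where on $T_j$ the mass of $g'_j$ lives: even if it concentrates on the overlap, the earlier $g_k$'s are in aggregate too small there (in $L^2(T_j)$, thanks to the rescaling) to cancel it — which is precisely the scenario your two proposed fixes cannot handle.
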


\begin{proof} Recall that for every $i$, $\alpha_i = \mu_i \E_{x\in U_i} f(x)^q$, and hence  (using $f(x) \in [\beta_i,c\beta_i)$ for $x\in U_i$):
\begin{equation}
 \mu_i\beta_i^q \leq \alpha_i \leq \mu_ic^q\beta_i^q \;.  \label{eq:alpha-i-bound}
 \end{equation}

Now fix $T=T_j$. Since $\E_{x\in V} g_j(x)^q$ is at least (in fact equal) $\mu(T)\E_{x\in T} g_j(x)^q$ and $\mu(T) = e\mu(U_{i_j})$, we can use
(\ref{eq:alpha-i-bound}) and $\E_{x\in T} g_j(x)^q \geq (E_{x\in T} g_j(x)^2)^{q/2}$, to reduce proving the claim to showing the following:
\begin{equation}
\E_{x\in T} g_j(x)^2  \geq (c\beta_{i_j})^2/(10c^2) = \beta_{i_j}^2/10 \;. \label{eq:lower-bound-gj}
\end{equation}

We know that  $\E_{x\in T} g'_j(x)^2  = \beta_{i_j}^2/4$. We claim  that (\ref{eq:lower-bound-gj}) will follow by showing that for every $k<j$,
\begin{equation}
\E_{x\in T} g'_k(x)^2  \leq 100^{-i'}\cdot \beta_{i_j}^2/4 \;, \label{eq:upper-bound-gk}
\end{equation}
where $i' = i_k - i_j$. (Note that $i'>0$ since in our construction the indices $i_1,\ldots,i_J$ are sorted in descending order.)

Indeed, (\ref{eq:upper-bound-gk}) means that if we let momentarily $\norm{g_j}$ denote $\sqrt{\E_{x\in T} g_j(x)^2}$ then
\begin{equation}
  \norm{g_j} \geq \norm{g'_j} - \norm{\tsum_{k<j} g_k} \geq
  \norm{g'_j}-\sum_{k<j}\norm{g_k}\ge  \norm{g'_j}(1-  \sum_{i'=1}^{\infty} 10^{-i'}) \geq  0.8\norm{g'_j} \;. \label{eq:norm-subtract}
\end{equation}
The first inequality holds because we can write $g_j$ as $g'_j - h_j$, where $h_j=\min\set{g'_j,\sum_{k<j}g_k}$.
Then, on the one hand, $\norm{g_j}\ge \norm{g'_j}-\norm{h_j}$, and on the other hand, $\norm{h_j}\le \norm{\sum_{k<j}g_k}$ since $g'_j\ge 0$.
The second inequality holds because $\norm{g_k}\le \norm{g'_k}$.
By squaring (\ref{eq:norm-subtract}) and plugging in the value of $\norm{g'_j}^2$ we get (\ref{eq:lower-bound-gj}).

\paragraph{Proof of (\ref{eq:upper-bound-gk})} By our construction, it must hold that
\begin{equation}
c^{10}\alpha_{i_k}/2^{i'} \leq \alpha_{i_j}  \;, \label{eq:sse-norm-relate-levelsets}
\end{equation}
since otherwise the index $i_j$ would have been removed from the $\cI$ at the $k^{th}$ step. Since $\beta_{i_k} = \beta_{i_j}c^{i'}$, we can plug
(\ref{eq:alpha-i-bound}) in (\ref{eq:sse-norm-relate-levelsets}) to get
\[
\mu_{i_k}c^{10+4i'}/2^{i'} \leq c^4\mu_{i_j}
\]
or
\[
\mu_{i_k} \leq \mu_{i_j}(2/c)^{4i'}c^{-6} \;.
\]

Since $|T_i| = e|U_i|$ for all $i$, it follows that $|T_{k}|/|T| \leq (2/c)^{4i'}c^{-6}$. On the other hand, we know that $\E_{x\in T_{k}} g'_k(x)^2 =
\beta_{i_k}^2/4 = c^{2i'}\beta^2_{i_j}/4$. Thus,
\[
\E_{x\in T} g'_k(x)^2  \leq 2^{4i'}c^{2i'-4i'-6}\beta_{i_j}^2/4 \leq (2^4/c^2)^{i'}\beta_{i_j}^2/4 \;,
\] and now we just choose $c$ sufficiently large so that $c^2/2^4 > 100$.
\end{proof}
\end{proof}

\subsection{Norm bound implies small-set expansion} \label{app:hyper-imp-sse}

In this section, we show that an upper bound on $2\to q$ norm of the projector to the top eigenspace of a graph implies that the graph is a small-set expander.
This proof appeared elsewhere implicitly~\cite{KhotV05,ODonnell07} or explicitly~\cite{BarakGHMRS12,BarakBHKSZ12} and is presented here only for completeness.
Fix a graph $G$ (identified with its normalized adjacency matrix), and $\lambda \in (0,1)$, letting $V_{ \geq
\lambda}$ denote the subspace spanned by eigenfunctions with eigenvalue at least $\lambda$.

If $p,q$ satisfy $1/p+1/q=1$ then $\norm{x}_p = \max_{y: \norm{y}_q \leq 1} |\iprod{x,y}|$. Indeed, $|\iprod{x,y}| \leq \norm{x}_p\norm{y}_q$ by \Holder's
inequality, and by choosing $y_i = \sign(x_i)|x_i|^{p-1}$ and normalizing one can see this equality is tight. In particular, for every $x \in L(\cU)$,
$\norm{x}_q = \max_{y:\norm{y}_{q/(q - 1)}\leq 1} |\iprod{x,y}|$ and $\norm{y}_{q/(q - 1)} = \max_{\norm{x}_q \leq 1} |\iprod{x,y}|$. As a consequence
\[
\norm{A}_{2\to q} = \max_{\norm{x}_2 \leq 1} \norm{Ax}_q = \max_{\norm{x}_2 \leq 1, \norm{y}_{q/(q - 1)} \leq 1} |\iprod{Ax,y}| = \max_{\norm{y}_{q/(q - 1)}\leq 1} |\iprod{A^Ty,x}| = \norm{A^T}_{q/(q- 1)\to 2}
\]

Note that if $A$ is a projection operator, $A=A^T$. Thus, part~1 of Theorem~\ref{thm:sse-hyper} follows from the following lemma:

\begin{lemma}\label{lem:hyper-to-sse} Let $G=(V,E)$ be  regular graph and $\lambda \in (0,1)$. Then, for every $S \subseteq V$,
\[
\bd(S) \geq 1-\lambda - \norm{V_{\lambda}}_{q/(q-1) \to 2}^2\mu(S)^{(q-2)/q}
\]
\end{lemma}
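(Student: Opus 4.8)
The plan is to run the classical argument that an upper bound on the $2\to q$ norm of the top-eigenspace projector forces small-set expansion, applied with the test vector $f=1_S$, the $0/1$ indicator of $S$, measured under the uniform distribution so that $\norm{f}_1=\norm{f}_2^2=\mu(S)$ (here $P_{\geq\lambda}$ denotes the projector onto $V_{\geq\lambda}$, written $V_\lambda$ in the statement). First I would rewrite the non-expansion as a quadratic form: by definition $G(S)$ picks $x\in S$ uniformly and then a uniformly random neighbour $y$, so $\Pr_{y\in G(S)}[y\in S]=\tfrac1{|S|}\sum_{x\in S}(Gf)(x)=\iprod{f,Gf}/\norm{f}_1$, and hence $1-\bd(S)=\iprod{f,Gf}/\mu(S)$.

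Second, I would decompose $f=g+h$ with $g=P_{\geq\lambda}f\in V_{\geq\lambda}$ and $h\perp V_{\geq\lambda}$. Expanding $\iprod{f,Gf}=\sum_i\lambda_i\hat f_i^2$ in an eigenbasis and bounding $\lambda_i\le 1$ on the top part while $\lambda_i<\lambda$ on the rest (negative eigenvalues there only help), one gets $\iprod{f,Gf}\le\norm{g}_2^2+\lambda\norm{h}_2^2\le\norm{g}_2^2+\lambda\mu(S)$. Dividing by $\mu(S)$ yields $1-\bd(S)\le\lambda+\norm{g}_2^2/\mu(S)$.

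Third — the one genuinely norm-theoretic step — I would control $\norm{g}_2=\norm{P_{\geq\lambda}f}_2$ using the duality identity recalled just above the lemma: since $P_{\geq\lambda}$ is a symmetric projection, $\norm{P_{\geq\lambda}}_{2\to q}=\norm{P_{\geq\lambda}}_{q/(q-1)\to 2}$. Hence $\norm{g}_2\le\norm{P_{\geq\lambda}}_{q/(q-1)\to 2}\cdot\norm{f}_{q/(q-1)}$, and since $f$ is $0/1$, $\norm{f}_{q/(q-1)}=\mu(S)^{(q-1)/q}$. Squaring and dividing by $\mu(S)$ gives $\norm{g}_2^2/\mu(S)\le\norm{P_{\geq\lambda}}_{q/(q-1)\to 2}^2\,\mu(S)^{2(q-1)/q-1}=\norm{P_{\geq\lambda}}_{q/(q-1)\to 2}^2\,\mu(S)^{(q-2)/q}$, and combining with the previous step yields $\bd(S)\ge 1-\lambda-\norm{P_{\geq\lambda}}_{q/(q-1)\to 2}^2\,\mu(S)^{(q-2)/q}$, as claimed.

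There is no real obstacle here: the whole content is the quadratic-form bound plus one \Holder-type inequality packaged through the projection-norm duality. The only things to be careful about are the measure conventions (the $1$-norm versus the square of the $2$-norm of a $0/1$ function, both equal to $\mu(S)$) and the elementary exponent arithmetic $2(q-1)/q-1=(q-2)/q$. Finally, I would note that feeding the hypothesis $\norm{P_{\geq\lambda}}_{2\to q}\le\e/\delta^{(q-2)/2q}$ of part~1 of Theorem~\ref{thm:sse-hyper} into this inequality, together with $\mu(S)\le\delta$, immediately gives $\bd_G(\delta)\ge 1-\lambda-\e^2$.
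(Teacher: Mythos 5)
Your proof is correct and is essentially identical to the paper's: the same identity $1-\bd(S)=\iprod{f,Gf}/\mu(S)$ for the indicator $f=1_S$, the same decomposition into the top eigenspace and its complement with the bound $\iprod{f,Gf}\le\norm{P_{\ge\lambda}f}_2^2+\lambda\mu(S)$, and the same use of the duality $\norm{P_{\ge\lambda}}_{2\to q}=\norm{P_{\ge\lambda}}_{q/(q-1)\to 2}$ together with $\norm{f}_{q/(q-1)}=\mu(S)^{(q-1)/q}$. Your write-up is, if anything, slightly more explicit than the paper's about why the expansion identity and the sign of the low eigenvalues work out.
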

\begin{proof} Let $f$ be the characteristic function of $S$, and write $f = f' + f''$ where $f'\in V_{\lambda}$ and  $f''=f-f'$ is the projection to the eigenvectors with value less than $\lambda$. Let $\mu = \mu(S)$. We know that
\begin{equation}
\bd(S) = 1 - \iprod{f,Gf}/\normt{f}^2 = 1 - \iprod{f,Gf}/\mu \;, \label{eq:expansioneval}
\end{equation}
And $\norm{f}_{q/(q-1)} = \left( \E f(x)^{q/(q-1)} \right)^{(q-1)/q} = \mu^{(q-1)/q}$, meaning that $\norm{f'} \leq \norm{V_{\lambda}}_{q/(q-1)\to 2}
\mu^{(q-1)/q}$. We now write
\begin{eqnarray}
\iprod{f,Gf} =  \iprod{f',Gf'} + \iprod{f'',Gf''} \leq \normt{f'}^2 + \lambda\normt{f''}^2 &\leq&  \norm{\cV}_{q/(q-1)\to 2}^2\norm{f}_{q/(q-1)}^2 + \lambda\mu \nonumber \\ &\leq& \norm{\cV}_{2\to q}^2 \mu^{2(q-1)/q} + \lambda \mu\mper
\end{eqnarray}
Plugging this into (\ref{eq:expansioneval}) yields the result.
\end{proof}

\end{document}